
\documentclass[%
reprint,
nofootinbib,
amsmath,amssymb,
aps,
amsthm,
pra,
]{revtex4-1}

\usepackage{ifthen}
\newboolean{arxiv}
\setboolean{arxiv}{true}

\usepackage{theorem} 
\usepackage{ascmac}
\usepackage{bbm}
\usepackage{bm} 
\usepackage{braket}
\usepackage{dcolumn} 
\usepackage{enumerate}
\usepackage{enumitem}
\usepackage{framed}
\usepackage{graphicx} 
\ifthenelse{\boolean{arxiv}}{\usepackage{hyperref}}{\usepackage[dvipdfmx]{hyperref}}
\usepackage{longtable}
\usepackage{mathtools}
\usepackage{multirow}
\ifthenelse{\boolean{arxiv}}{}{\usepackage{pxjahyper}}
\usepackage{txfonts} 
\usepackage{xparse}

\usepackage{color}
\usepackage[most]{tcolorbox}

\ifthenelse{\boolean{arxiv}}{}{\mathtoolsset{showonlyrefs}}

\newcommand{\hypercolor}{blue}
\hypersetup{
  colorlinks,
  citecolor=\hypercolor,
  linkcolor=\hypercolor,
  urlcolor=\hypercolor,
  bookmarksopen,
  bookmarksopenlevel=4,
  bookmarksnumbered
}

\definecolor{memo}{RGB}{128,0,255}
\definecolor{gray}{RGB}{128,128,128}

\newcommand{\NumSymSharp}{1}
\newcommand{\NumCompletelyMixed}{2}
\newcommand{\NumFilter}{3}
\newcommand{\NumProcEq}{4}
\newcommand{\NumClassical}{5C}
\newcommand{\NumFullQuantum}{5Q}
\newcommand{\poslink}[2]{#2}

\newcommand{\pagetarget}[2]{%
 \phantomsection%
 \label{#1}%
 \hypertarget{#1}{#2}%
}

\newboolean{figure}
\setboolean{figure}{true}

\usepackage{bm}


\theorembodyfont{\upshape}
\newtheorem{thm}{Theorem}
\newtheorem{proposition}[thm]{Proposition}

\newtheorem{postulateno}{Postulate}

\newtheorem{lemma}[thm]{Lemma}



\newcounter{proof}

\ExplSyntaxOn
\NewDocumentEnvironment{proof}{o}
 {
  \par\medskip
  \noindent
  \textbf{Proof~}
 }
 {\QED\par\smallskip}
\ExplSyntaxOff




\newcounter{postulate}
\renewcommand{\thepostulate}{\arabic{postulate}}
\ExplSyntaxOn
\NewDocumentEnvironment{postulate}{oo}
 {
  \refstepcounter{postulate}
  \begin{postulateno}
  \textbf{\hspace{-0.5em}\IfNoValueTF{#2}{\thepostulate}{#2} ~(\IfNoValueTF{#1}{}{#1})}
 }
 {
  \end{postulateno}
 }
\ExplSyntaxOff

\newcounter{ex}

\ExplSyntaxOn
\NewDocumentEnvironment{ex}{o}
 {
  \refstepcounter{ex}
  \par\medskip
  \noindent
  \textbf{\IfNoValueTF{#1}{Example}{Example~of~#1}~}
 }
 {\par\smallskip}
\ExplSyntaxOff




\newcommand{\mB}{\mathcal{B}}
\newcommand{\mC}{\mathcal{C}}

\newcommand{\mF}{\mathcal{F}}

\newcommand{\mK}{\mathcal{K}}
\newcommand{\mL}{\mathcal{L}}
\newcommand{\mM}{\mathcal{M}}

\newcommand{\mP}{\mathcal{P}}

\newcommand{\mS}{\mathcal{S}}
\newcommand{\mT}{\mathcal{T}}

\newcommand{\mX}{\mathcal{X}}
\newcommand{\mY}{\mathcal{Y}}



\newcommand{\ident}{\hat{1}}

\newcommand{\Real}{\mathbf{R}}
\newcommand{\Complex}{\mathbf{C}}


\newcommand{\QED}{\hspace*{0pt}\hfill $\blacksquare$}


\newcommand{\Tr}{{\rm Tr}}
\newcommand{\rank}{{\rm rank}}
\newcommand{\supp}{{\rm supp}}

\def\gauss_sym#1{{\lfloor #1 \rfloor}}

\newcommand{\logeq}{\mathrel{\vcentcolon\Leftrightarrow}}
\renewcommand{\Pr}{\mathrm{Pr}}
\renewcommand{\ident}{\mathbbm{1}}
\newcommand{\id}{\mathrm{id}}
\newcommand{\gdis}{\raisebox{-.1em}{\includegraphics[scale=0.5]{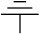}}}
\newcommand{\cross}{\times}

\renewcommand{\i}{\mathbf{i}}

\newcommand{\N}{\mathrm{N}}
\renewcommand{\P}{\mathrm{P}}
\newcommand{\F}{\mathrm{F}}
\newcommand{\System}{\mathbf{Syst}}
\newcommand{\Proc}{\mathbf{Proc}}
\newcommand{\ProcF}{\mathbf{Proc}^\F}
\newcommand{\ProcD}{\mathbf{Proc}^\mathrm{D}}
\newcommand{\St}{\mathbf{St}}
\newcommand{\StP}{\mathbf{St}^\P}
\newcommand{\StF}{\mathbf{St}^\F}
\newcommand{\StN}{\mathbf{St}^\N}
\newcommand{\StNP}{\mathbf{St}^{\N\P}}
\newcommand{\Eff}{\mathbf{Eff}}
\newcommand{\EffF}{\mathbf{Eff}^\F}
\newcommand{\EffM}{\mathbf{Eff}^\mathrm{M}}
\newcommand{\Scalar}{\mathbf{Scalar}}
\newcommand{\ScalarF}{\mathbf{Scalar}^\F}
\newcommand{\PDS}{\mathbf{PDS}}
\newcommand{\MPDS}{\mathbf{MPDS}}
\newcommand{\Test}{\mathbf{Test}}
\newcommand{\Meas}{\mathbf{Meas}}
\newcommand{\EJA}{\mathbf{E}}
\renewcommand{\Vec}{\mathbf{V}}
\newcommand{\Mat}{\mM}
\newcommand{\StMat}{\mathbf{M}}
\newcommand{\tStMat}{\tilde{\StMat}}

\newcommand{\Face}{\mF}
\renewcommand{\ker}{\mK}

\newcommand{\NA}{{N_A}}
\newcommand{\NB}{{N_B}}

\newcommand{\tphi}{\tilde{\phi}}
\newcommand{\tvarphi}{\tilde{\varphi}}

\newcommand{\tE}{\tilde{E}}
\newcommand{\tF}{\tilde{F}}

\newcommand{\tf}{\tilde{f}}

\newcommand{\Realpp}{\Real_{++}}

\newcommand{\Field}{\mathbf{F}}
\newcommand{\Hermite}{\mathbf{H}}
\newcommand{\Oct}{\mathbf{O}}
\newcommand{\Spin}{\mathbf{Spin}}
\newcommand{\tr}{\mathrm{tr}}

\renewcommand{\ol}{\overline}
\newcommand{\eqlocal}{\overset{\mathrm{local}}{=}}

\newcommand{\tPhi}{\tilde{\Phi}}
\newcommand{\Endash}{\text{\textendash}}
\newcommand{\termdef}[1]{\textit{#1}}

\setlist[enumerate]{label=(\arabic*), leftmargin=3em, itemsep=0pt, parsep=0pt, labelwidth=5em}
\setlist[description]{labelindent=2em, leftmargin=3em, itemsep=0pt, parsep=0pt, labelwidth=5em}
\setlist[itemize]{leftmargin=2em, itemsep=0pt, parsep=0pt, labelwidth=4em}

\newcommand{\craket}[1]{\mathinner{({#1})}}
\newcommand{\cra}[1]{{\mathinner{({#1}|}}}
\newcommand{\cket}[1]{{\mathinner{|{#1})}}}
\let\protect\relax
{\catcode`\|=\active
  \xdef\Craket{\protect\expandafter\noexpand\csname Craket \endcsname}
  \expandafter\gdef\csname Craket \endcsname#1{\begingroup
     \ifx\SavedDoubleVert\relax
       \let\SavedDoubleVert\|\let\|\BraDoubleVert
     \fi
     \mathcode`\|32768\let|\BraVert
     \left({#1}\right)\endgroup}
}

\newcommand{\Memo}[1]{}
\newcommand{\Disappear}[1]{}

\begin{document}

\preprint{APS/123-QED}

\title{Derivation of quantum theory with superselection rules}

\affiliation{%
 Quantum Information Science Research Center, Quantum ICT Research Institute, Tamagawa University,
 Machida, Tokyo 194-8610, Japan
}%

\author{Kenji Nakahira}
\affiliation{%
 Quantum Information Science Research Center, Quantum ICT Research Institute, Tamagawa University,
 Machida, Tokyo 194-8610, Japan
}%

\date{\today}

\begin{abstract}
 We reconstruct finite-dimensional quantum theory with superselection rules,
 which can describe hybrid quantum-classical systems,
 from four purely operational postulates:
 symmetric sharpness, complete mixing, filtering, and local equality.
 It has been shown that each of the classical and fully quantum theories is singled out
 by an additional postulate.
\end{abstract}

\pacs{03.67.Hk}
\maketitle



\section{Introduction}

Although quantum theory has enjoyed great success as an approach for explaining
the behavior of the microscopic world,
we still lack a deep and intuitive understanding of its principles.
This is in contrast to special and general relativity, which are based on simple physical principles.
In 1932, von Neumann presented an abstract mathematical formulation of quantum theory
based on complex Hilbert spaces \cite{Neu-1932}.
However, as he pointed out ``
I do not believe in Hilbert space anymore'' \cite{Bir-1961},
it is thus noteworthy that this formulation is far from a clear understanding of
the physical structure of quantum theory.
For a proper understanding of the reason why the complex Hilbert space
(or $C^*$-algebraic) formalism is relevant in describing the microscopic world,
Birkoff and von Neumann\cite{Bir-Neu-1936}, Zierler\cite{Zie-1961}, Mackey\cite{Mac-1963},
Jauch and Piron\cite{Jau-Pir-1963}, Ludwig\cite{Lud-1983},
and many other researchers have investigated the reconstruction of
the mathematical structure of quantum theory
(simply referred to as the derivation of quantum theory)
from physically meaningful principles.

Around 2000, a research program was launched by Fuchs and Brassard (e.g., \cite{Fuc-2001}),
the goal of which was to reconstruct quantum theory from a few simple information-theoretic principles.
Since then, an increasing number of studies have adopted an information-theoretic approach
\cite{Har-2001,Cli-Bub-Hal-2003,Bar-2007,Dar-2007,Dak-Bru-2009,Mas-Mul-2011,Chi-Dar-Per-2011,Har-2011,
Wil-2012-axioms,Zao-2012,Bar-Mul-Udu-2014,Har-2016,Tul-2018,
Goy-2008,Rau-2009,Fuc-2011,Rau-2011,Fiv-2012,Hoh-Wev-2017,Kru-Bar-Bar-Mul-2017,
Wet-2018,Wet-2018-sequential,Wet-2019}.
Many of these studies are based on the operational probabilistic theory (OPT)
or other similar theories such as the generalized probabilistic theory
\cite{Har-2001,Cli-Bub-Hal-2003,Bar-2007,Dar-2007,Dak-Bru-2009,Mas-Mul-2011,Chi-Dar-Per-2011,Har-2011,
Wil-2012-axioms,Zao-2012,Bar-Mul-Udu-2014,Har-2016,Tul-2018}.
An OPT offers a powerful approach for ensuring a deeper understanding of the operational
and information-theoretic aspects of quantum processes.
In 2001, Hardy proposed a novel approach for deriving ordinary quantum theory,
i.e., quantum theory without superselection rules
(which we denote as fully quantum theory),
using five informational axioms \cite{Har-2001}.
However, some of these axioms are expressed in mathematical terms that
have no clear physical underpinning.
Subsequently, Dakic and Bruckner \cite{Dak-Bru-2009} and Masanes and M\"{u}ller \cite{Mas-Mul-2011}
attempted reconstruction approaches based on more sophisticated axioms.
Chiribella et al. \cite{Chi-Dar-Per-2011} and Hardy \cite{Har-2011}
succeeded in reconstructing fully quantum theory using purely operational postulates.
To avoid technical difficulties resulting from infinite dimensions,
many studies have focused on finite-dimensional quantum theory,
which still exhibits all the essential quantum phenomena.

In quantum information science, effective use of classical systems
in combination with fully quantum systems can be crucial.
For instance, the control of a fully quantum system using classical information,
such as the outcomes of measurements, is crucial in many fields
(e.g., local operations and classical communication,
quantum teleportation, and one-way quantum computing).
Furthermore, fully quantum theory is arguably not self-contained
as the outcome of a measuring process is essentially classical in nature.
To deal with classical and quantum information in a unified way
requires the application of quantum theory to handle hybrid quantum-classical systems.
Several studies have been conducted on hybrid quantum-classical systems
and we will mention only a few examples of such studies herein:
the interaction between quantum and classical systems
(e.g., \cite{Fra-Bel-And-Com-2012,Leg-Fra-Soa-Hor-2015}),
formulation of hybrid quantum-classical dynamics
(e.g., \cite{Hal-Reg-2005,Elz-2012,Bar-Car-Gar-Gom-2012}),
quantumness of correlations in quantum states (e.g., \cite{Hen-Ved-2001,Pia-Hor-Hor-2008}),
and quantum coherence (e.g., \cite{Chi-Str-Ran-Ber-2016,Str-Ade-Ple-2017}).
However, it is noteworthy that all the results mentioned in the previous paragraph
cannot be applied to a quantum theory having hybrid quantum-classical systems.
Therefore, in this paper, we will derive quantum theory with superselection rules,
which can describe classical systems, fully quantum systems, and hybrid quantum-classical systems,
from purely operational postulates only.

Some properties that can be used as postulates (or axioms) to single out fully quantum theory
are not satisfied in classical systems:
an instance of this is the so-called purification postulate \cite{Chi-Dar-Per-2010,Chi-Dar-Per-2011,Tul-2018},
which claims that every state has a purification.
Moreover, there exist postulates that can be used to derive
both fully quantum theory and classical probability theory (which we refer to as classical theory)
but are not satisfied in hybrid quantum-classical systems;
instances of such postulates are as follows:
(i) there exists a deterministic reversible process
between any pair of normalized pure states \cite{Har-2001,Dak-Bru-2009,Mas-Mul-2011,Zao-2012};
(ii) each system is characterized by a natural number that is referred to as dimension or capacity
\cite{Har-2001,Mas-Mul-2011};
and (iii) there exists a deterministic reversible process producing any given permutation
of any given maximal set of perfectly distinguishable normalized pure states
\cite{Har-2011,Bar-Mul-Udu-2014}.
The derivation of quantum theory with superselection rules from an OPT
is presented by Barnum and Wilce \cite{Bar-Wil-2014},
Selby et al. \cite{Sel-Sca-Coe-2018}, Wilce \cite{Wil-2018}, and Jia \cite{Jia-2018}.
However, in Refs.~\cite{Bar-Wil-2014,Jia-2018}, mathematical assumptions that are
not stated in operational terms were adopted as postulates.
Also, some postulates adopted in Refs.~\cite{Sel-Sca-Coe-2018,Wil-2018}
(e.g., symmetric purification or the existence of a conjugate system)
seem to be difficult to intuitively comprehend,
at least for readers that are unfamiliar with quantum theory%
\footnote{Certainly, Wilce noted that there remains some mystery as to the proper interpretation of
the conjugate system \cite{Wil-2018}.
Based on some postulates, he showed that each effect space is a symmetric cone;
however, he did not provide the complete derivation of quantum theory.
Also, in Ref.~\cite{Sel-Sca-Coe-2018}, there seems to be a gap in the proof of Theorem~4.14,
as presented in the footnote \ref{fn:Sel} of this paper.}.
In contrast, the set of our postulates is stated in operational terms
and provides an intuitive interpretation;
in particular, each of our four postulates is easy to intuitively understand
in the context of classical theory.
Moreover, we show that each of the fully quantum and classical theories can be singled out
using an additional operational postulate.
In this paper, we consider only finite-dimensional systems.

The paper is organized as follows.
In Sec.~\ref{sec:OPT}, we present a review of the framework of OPTs.
In Sec.~\ref{sec:Pustulates}, we present a set of four operational postulates
and overview the derivation of quantum theory with superselection rules
(which we simply refer to as quantum theory subsequently).
In Secs.~\ref{sec:derive_PDS} and \ref{sec:derive_symmetry},
we consider an OPT satisfying the first three postulates
and show that each state space is the cone of squares of a Euclidean Jordan algebra (EJA).
In Sec.~\ref{sec:derive_EJA}, we present a review of the fundamental properties
of EJAs that will be used in Sec.~\ref{sec:derive_Quantum}.
In Sec.~\ref{sec:derive_Quantum},
we discuss an OPT with the four postulates and
show that quantum theory is the only theory consistent with these postulates.

\section{Operational probabilistic theory (OPT)} \label{sec:OPT}

In this section, we introduce the framework of OPTs.
The framework can be explained in several ways, leading to essentially almost the same formalism.
Some basic facts about an OPT are reviewed.
The proofs of some of the results are not presented in this paper;
for the details, we refer the reader to Refs.~\cite{Har-2001,Cli-Bub-Hal-2003,Bar-2007,Dar-2007,
Dak-Bru-2009,Mas-Mul-2011,Chi-Dar-Per-2011,Har-2011,Wil-2012-axioms,Zao-2012,
Bar-Mul-Udu-2014,Har-2016,Tul-2018,Chi-Dar-Per-2010}.
To provide an intuitive grasp of an operational interpretation, we will use
diagrammatic representations, which is motivated by the work of Coecke, Abramsky,
and others (see, e.g., \cite{Coe-2003,Abr-Coe-2004,Coe-Kis-2017}).
Any diagrammatic representation can be faithfully and rigorously described in a mathematical formula.

\subsection{Operational theory}

First, we introduce an operational theory.
This theory describes the compositional structure of physical processes.

\subsubsection{Processes, states, and effects}

An operational theory consists of a collection of \termdef{systems} and
a collection of \termdef{processes}.
Systems could represent a physical system such as a photon.
Processes could represent a particular behavior of a physical device such as a beam splitter.
Systems are labeled by capital letters from the beginning of the alphabet ($A, B, \ldots$).
$\System$ is defined as the set of all systems.
Each process has an input system and an output system;
a process having an input system $A$ and an output system $B$ is called
a process from $A$ to $B$.
$\Proc_{A \to B}$ is defined as the set of all processes from $A$ to $B$.

We introduce an `empty' system, called a \termdef{trivial system} and denoted by $I$.
A process from $I$ to $A$ (which is a process with no input) is called a \termdef{state preparation},
or simply a \termdef{state}, of $A$ and is denoted like $\cket{\rho}$ in analogy to Dirac's bra-ket notation
(note that $\cket{\rho}$ could be a mixed state, which will be defined later).
$\St_A \coloneqq \Proc_{I \to A}$ is called the \termdef{state space} of system $A$.
Similarly, a process from $A$ to $I$ (i.e., a process with no output) is
called an \termdef{effect} of $A$ and is denoted like $\cra{e}$.
$\Eff_A \coloneqq \Proc_{A \to I}$ is called the \termdef{effect space} of system $A$.
An effect represents an event associated with a particular outcome of a measurement.
A process from $I$ to $I$ (i.e., a process with no input and no output)
is called a \termdef{scalar}.
Let $\Scalar \coloneqq \Proc_{I \to I}$.

In diagrammatic terms,
a process $f \in \Proc_{A \to B}$, a state $\cket{\rho} \in \St_A$,
an effect $\cra{e} \in \Eff_A$, and a scalar $p \in \Scalar$ are depicted as
\begin{alignat}{1}
 \includegraphics[scale=1.0]{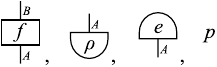} ~\raisebox{.2em}{.}
\end{alignat}
Systems are represented by labeled wires (labels are often omitted).
The trivial system $I$ is represented by `no wire'.
Processes are represented by boxes that have an input wire at the bottom
and an output wire at the top.
For a scalar, the box will be omitted.
Diagrammatic representations can be considered as something like data flow diagrams
with time increasing from the bottom to the top.

\begin{ex}[quantum theory]
 The state space $\St_A$ and the effect space $\Eff_A$ are
 isomorphic to $\bigoplus_{i=1}^k \mS_+(\Complex^{n_i})$,
 where $\Complex$ is the set of all complex numbers
 and $\mS_+(\Complex^n)$ is the set of all complex positive semidefinite matrices of order $n$.
 The natural numbers $k, n_1, \ldots, n_k$ are determined by the system $A$.
 $\NA \coloneqq \sum_{i=1}^k n_i$ is called the \termdef{rank} of $A$.
 System $A$ is called \termdef{classical} if $k = \NA$
 (i.e., $\St_A \cong \Eff_A \cong \bigoplus_{i=1}^\NA \mS_+(\Complex) \cong \Real_+^\NA$)
 holds,
 where $\mX \cong \mY$ denotes that $\mX$ is isomorphic to $\mY$,
 and $\Real_+$ is the set of all nonnegative real numbers.
 Classical theory is a special case of quantum theory in which every system is classical.
 System $A$ is called \termdef{fully quantum} if $k = 1$
 (i.e., $\St_A \cong \Eff_A \cong \mS_+(\Complex^\NA)$) holds.
 Fully quantum theory is a special case of quantum theory in which every system is fully quantum.
 Note that a system $A$ with $\NA = 1$ is classical and fully quantum.
 The state space of the trivial system, $\St_I = \Scalar$,
 is isomorphic to $\mS_+(\Complex) \cong \Real_+$.
 When $\St_A \cong \bigoplus_{i=1}^{k_A} \mS_+(\Complex^{m_i})$ and
 $\St_B \cong \bigoplus_{j=1}^{k_B} \mS_+(\Complex^{n_j})$ hold,
 $\Proc_{A \to B}$ is isomorphic to the space of all completely positive (CP)
 maps from $\bigoplus_{i=1}^{k_A} \mS(\Complex^{m_i})$ to
 $\bigoplus_{j=1}^{k_B} \mS(\Complex^{n_j})$,
 where $\mS(\Complex^n)$ is the set of all complex Hermitian matrices of order $n$.
 In the examples of quantum theories, we will identify a process with its corresponding CP map;
 in particular, we will identify a state (or effect) with the corresponding
 positive semidefinite matrix.
\end{ex}

\subsubsection{Sequential / parallel composition of processes}

Two processes can be composed sequentially if the output system of one is
equal to the input system of the other.
The sequential composition of $f \in \Proc_{A \to B}$ and $g \in \Proc_{B \to C}$
is a process from $A$ to $C$, which is denoted as $g \circ f \in \Proc_{A \to C}$.
The composition, $\cra{e} \circ \cket{\rho} \in \Scalar$,
of a state $\cket{\rho} \in \St_A$ and an effect $\cra{e} \in \Eff_A$
is also denoted by $\cra{e} \cket{\rho}$ or $\craket{e|\rho}$.
$g \circ f$ and $\craket{e|\rho}$ are respectively depicted as
\begin{alignat}{1}
 \includegraphics[scale=1.0]{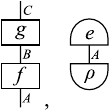} ~\raisebox{.2em}{.}
 \label{eq:process_circ}
\end{alignat}
Whenever we write $g \circ f$ for two processes $f$ and $g$,
we always assume that the output system of $f$ and the input system of $g$ are equal.
Sequential composition is associative, i.e.,
$h \circ (g \circ f) = (h \circ g) \circ f$ holds for
any $f \in \Proc_{A \to B}$, $g \in \Proc_{B \to C}$, and $h \in \Proc_{C \to D}$,
where $f = f'$ denotes that $f$ and $f'$ are indistinguishable in an operational theory.

Two systems $A$ and $B$ can be composed in parallel to yield a new system, denoted $A \otimes B$.
The composition is associative, i.e.,
$A \otimes (B \otimes C) = (A \otimes B) \otimes C$ holds for any $A,B,C \in \System$.
$I \otimes A$ and $A \otimes I$ are identified with $A$ itself.
The parallel composition of $f \in \Proc_{A \to B}$ and $g \in \Proc_{C \to D}$,
denoted as $f \otimes g$, is a process from $A \otimes C$ to $B \otimes D$.
Diagrammatically, $f \otimes g$ is depicted as a pair of processes arranged side by side:
\begin{alignat}{1}
 \includegraphics[scale=1.0]{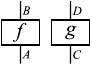} ~\raisebox{.5em}{.}
 \label{eq:process_otimes}
\end{alignat}
The parallel composition of processes is also associative,
i.e., $f \otimes (g \otimes h) = (f \otimes g) \otimes h$ holds
for processes $f$, $g$, and $h$.
A collection of connected processes will be called a \termdef{diagram}.
Assume that
\begin{alignat}{1}
 (g_1 \otimes g_2) \circ (f_1 \otimes f_2) &= (g_1 \circ f_1) \otimes (g_2 \circ f_2)
 \label{eq:process_product2}
\end{alignat}
holds for four any processes $f_1$, $f_2$, $g_1$, and $g_2$,
which is diagrammatically represented as
\begin{alignat}{1}
 \includegraphics[scale=1.0]{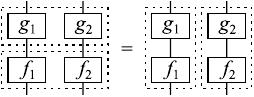} ~\raisebox{.5em}{,}
 \label{eq:process_product2_diagram}
\end{alignat}
where the auxiliary lines (dashed lines) are only intended to guide the eye.

\begin{ex}[quantum theory]
 The sequential composition, $g \circ f$, of two processes $f \in \Proc_{A \to B}$
 and $g \in \Proc_{B \to C}$ is
 the CP map defined as $(g \circ f)[\cket{\rho}] \coloneqq g[f[\cket{\rho}]]$
 for any $\cket{\rho} \in \St_A$.
 In particular, $\craket{e|\rho} = \Tr[\cra{e} \cdot \cket{\rho}]$ holds,
 where $\cdot$ indicates the matrix product.
 When $\St_A \cong \bigoplus_{i=1}^{k_A} \mS_+(\Complex^{m_i})$ and
 $\St_B \cong \bigoplus_{j=1}^{k_B} \mS_+(\Complex^{n_j})$ hold,
 $\St_{A \otimes B} \cong \bigoplus_{i=1}^{k_A} \bigoplus_{j=1}^{k_B} \mS_+(\Complex^{m_in_j})$ holds.
 The parallel composition, $\cket{\rho} \otimes \cket{\sigma}$, of two states
 $\cket{\rho}$ and $\cket{\sigma}$ is
 the tensor product of the matrices $\cket{\rho}$ and $\cket{\sigma}$.
 The parallel composition, $f \otimes h$, of two processes $f \in \Proc_{A \to B}$
 and $h \in \Proc_{C \to D}$ is
 the CP map defined as $(f \otimes h)[\cket{\rho} \otimes \cket{\sigma}] \coloneqq
 f[\cket{\rho}] \otimes h[\cket{\sigma}]$
 for any $\cket{\rho} \in \St_A$ and $\cket{\sigma} \in \St_C$.

 We present two examples of systems that are neither classical nor fully quantum.
 The first one is hybrid quantum-classical systems.
 Consider a classical system $A$ and a fully quantum one $B$ with $\NA,\NB > 1$.
 Since $\St_A \cong \bigoplus_{i=1}^\NA \mS_+(\Complex)$ and
 $\St_B \cong \mS_+(\Complex^\NB)$ hold,
 the state space of the composite system $A \otimes B$ satisfies
 $\St_{A \otimes B} \cong \bigoplus_{i=1}^\NA \mS_+(\Complex^\NB)$.
 The second one is a system $C$ with superselection rules for the total number of particles,
 In such a system, coherent superpositions between states with a different number of particles
 cannot be observed.
 Assume that $C$ consists of at most $K$ particles.
 The state space $\St_C$ can be decomposed $\St_C = \bigoplus_{k=0}^K \St_C^{(k)}$,
 where $\St_C^{(k)}$ is the subspace with the number of particles equal to $k$.
 If each subspace $\St_C^{(k)}$ can be thought of as a fully quantum system,
 then $\St_C^{(k)} \cong \mS_+(\Complex^{n_k})$ holds for some natural number $n_k$.
 In this case, $\St_C$ is isomorphic to $\bigoplus_{k=0}^K \mS_+(\Complex^{n_k})$.
\end{ex}

\subsubsection{Identity processes and swap processes}

Assume that, for each system $A$, there exists a `do nothing' process on $A$,
which is called an \termdef{identity process} and denoted by $\id_A \in \Proc_{A \to A}$
or simply $\id$.
Diagrammatically, $\id_A$ is depicted as
\begin{alignat}{1}
 \includegraphics[scale=1.0]{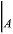} ~\raisebox{.5em}{.}
 \label{eq:process_id_def}
\end{alignat}
$\id_I$ is depicted as the `empty space'.
Identity processes satisfy the following conditions
for any systems $A$ and $B$ and any process $f \in \Proc_{A \to B}$:
\begin{enumerate}
 \item $f \circ \id_A = f = \id_B \circ f$.
 \item $\id_A \otimes \id_B = \id_{A \otimes B}$.
 \item $f \otimes \id_I = f = \id_I \otimes f$.
\end{enumerate}
It follows that there is a unique identity process for each system.
The above property~(1) is depicted as:
\begin{alignat}{1}
 \includegraphics[scale=1.0]{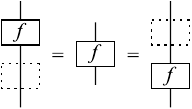} ~\raisebox{.0em}{,}
 \label{eq:process_id_f}
\end{alignat}
where the auxiliary boxes indicate the identity processes.
It is easily seen that, for any processes $f$ and $g$,
$(\id \circ f) \otimes (g \circ \id) = f \otimes g = (f \circ \id) \otimes (\id \circ g)$
holds, which is diagrammatically represented as
\begin{alignat}{1}
 \includegraphics[scale=1.0]{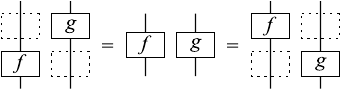} ~\raisebox{.5em}{.}
 \label{eq:process_fg}
\end{alignat}
Intuitively, this means that the vertical shifts of processes do not affect diagrams.

We also assume that, for any systems $A$ and $B$,
there exists a process $\cross_{A \otimes B} \in \Proc_{A \otimes B \to B \otimes A}$,
called a \termdef{swap process} and diagrammatically depicted as
\begin{alignat}{1}
 \includegraphics[scale=1.0]{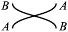}
 ~\raisebox{.0em}{,}
 \label{eq:cross_def}
\end{alignat}
such that
\begin{alignat}{1}
 \includegraphics[scale=1.0]{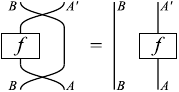} \label{eq:cross_f}
\end{alignat}
(i.e., $\cross_{A',B} \circ (f \otimes \id_B) \circ \cross_{B,A} = \id_B \otimes f$)
holds for any systems $A$, $A'$, and $B$ and any process $f \in \Proc_{A \to A'}$%
\footnote{More strictly, we also assume that $\cross_{A,I} = \id_A$ and
$\cross_{A,B \otimes C} = (\id_B \otimes \cross_{A,C}) \circ (\cross_{A,B} \otimes \id_C)$
hold for any systems $A$, $B$, and $C$.}.

\begin{ex}[quantum theory]
 The identity process is the identity map.
 The swap process $\cross_{A,B}$ is the process satisfying
 $\cross_{A,B} \circ [\cket{\rho} \otimes \cket{\sigma}] =
 \cket{\sigma} \otimes \cket{\rho}$
 for any $\cket{\rho} \in \St_A$ and $\cket{\sigma} \in \St_B$.
\end{ex}

Note that although no knowledge of category theory is required to read this paper,
this theory provides a suitable framework to describe an operational theory.
Specifically, an operational theory can be considered as a category with
systems as objects and processes as morphisms.
In particular, this category is a strict symmetric monoidal category with the tensor unit $I$
and the tensor product $\otimes$.

\subsubsection{Properties of scalars}

We introduce the following diagram
consisting of two processes $u_1 \in \Proc_{C \to A \otimes E}$ and
$u_2 \in \Proc_{B \otimes E \to D}$
\begin{alignat}{1}
 \includegraphics[scale=1.0]{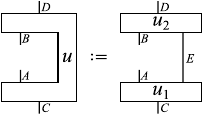} ~\raisebox{.5em}{.}
 \label{eq:u}
\end{alignat}
This diagram, denoted as $u:\Proc_{A \to B} \to \Proc_{C \to D}$,
maps $f \in \Proc_{A \to B}$ to
$u(f) \coloneqq u_2 \circ (f \otimes \id_E) \circ u_1 \in \Proc_{C \to D}$.
Any process from $C$ to $D$ that includes $f \in \Proc_{A \to B}$
is expressed in the form $u(f) \in \Proc_{C \to D}$
with some diagram $u:\Proc_{A \to B} \to \Proc_{C \to D}$;
for example,
\begin{alignat}{1}
 \includegraphics[scale=1.0]{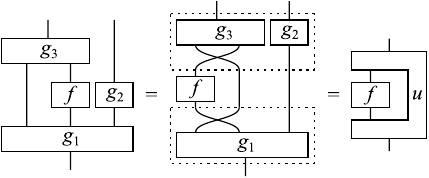} ~\raisebox{1.0em}{,}
 \label{eq:uf_example}
\end{alignat}
where $u_1$ and $u_2$ are, respectively,
the processes enclosed by the lower and upper auxiliary boxes.

For a scalar $a$ and a process $f$, $a \otimes f$ is denoted by $af$ or $a \cdot f$.
A process expressed in the form $af$ with a scalar $a$ and a process $f$ will be
referred to as \termdef{scalar multiplication} of $f$.
Two processes $f, g \in \Proc_{A \to B}$ will be referred to as \termdef{proportional},
denoted by $f \propto g$, if there exists a scalar $a$ satisfying
either $f = a g$ or $a f = g$.

It follows that, for any scalar $a$, process $f \in \Proc_{A \to B}$, and diagram
$u:\Proc_{A \to B} \to \Proc_{C \to D}$,
$u(a f) = a \cdot u(f)$ holds%
\footnote{Proof: We obtain $u(a f) = u_2 \circ (a f \otimes \id_E) \circ u_1
 = (\id_I \otimes u_2) \circ (a \otimes f \otimes \id_E) \circ (\id_I \otimes u_1)
 = (\id_I \circ a \circ \id_I) \otimes [u_2 \circ (f \otimes \id_E) \circ u_1]
 = a \cdot u(f)$,
 where the third equality follows from Eq.~\eqref{eq:process_product2_diagram}.},
which is diagrammatically represented as
\begin{alignat}{1}
 \includegraphics[scale=1.0]{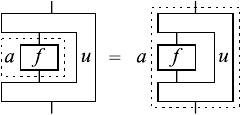} ~\raisebox{.5em}{.}
 \label{eq:af_u0}
\end{alignat}
This implies that any scalar can freely move around a diagram;
for example, $[(af) \circ g] \otimes h = [f \circ (ag)] \otimes h = (f \circ g) \otimes (ah)
= a [(f \circ g) \otimes h]$ holds for any processes $f$, $g$, $h$ and any scalar $a$.
Consider the particular case in which $A = B = C = D = E = I$
and $u_2 = \id_I$ and let $b \coloneqq u_1 \in \Scalar$; then,
$u:\Scalar \to \Scalar$ satisfies $u(a) = a \circ b$ for any $a \in \Scalar$.
Also, from $a = a \cdot \id_I$ and Eq.~\eqref{eq:af_u0},
$u(a) = a \cdot u(\id_I) = a b$ holds.
Thus, for any two scalars $a$ and $b$, $a \circ b = a b$ holds,
i.e., the sequential and parallel compositions of two scalars are equal.

\begin{ex}[quantum theory]
 For any $a,b \in \Scalar = \Real_+$,
 $a \cdot b = a \otimes b = a \circ b$ is the multiplication of two nonnegative real numbers
 $a$ and $b$.
 $a f$ with $a \in \Scalar$ and $f \in \Proc_{A \to B}$ is
 the CP map that satisfies $(af)[\cket{\rho}] = a \cdot f[\cket{\rho}]$
 for any $\cket{\rho} \in \St_A$.
\end{ex}

\subsection{OPT} \label{subsec:OPT_OPT}

Next, we review an OPT.
This theory is an operational theory that assigns probabilities for processes.

\subsubsection{Tests and feasible processes}

A \termdef{test} is a mathematical model that represents the behavior of a physical device.
A test having an input system $A$ and an output system $B$
consists of $k$ processes $f_1,\ldots,f_k \in \Proc_{A \to B}$
that represent mutually exclusive and collectively exhaustive events.
$f \in \Proc_{A \to B}$ is called \termdef{deterministic} if $\{ f \}$ is a test.
Intuitively, a deterministic process represents an event that always happens.
$\Test_{A \to B}$ and $\ProcD_{A \to B}$ are, respectively, defined as the sets of
all tests and deterministic processes from $A$ to $B$.
Each element of $\Meas_A \coloneqq \Test_{A \to I}$ is called
a \termdef{measurement} of $A$.
Assume that, for each system $A$, there exist a deterministic state of $A$
and a deterministic effect of $A$.

As well as processes, tests can be composed sequentially and in parallel.
The sequential composition of $\{ f_j \in \Proc_{A \to B} \}_{j=1}^J \in \Test_{A \to B}$
and
$\{ g_l \in \Proc_{B \to C} \}_{l=1}^L \in \Test_{B \to C}$
is $\{ g_l \circ f_j \}_{(j,l)=(1,1)}^{(J,L)} \in \Test_{A \to C}$.
The parallel composition of $\{ f_j \in \Proc_{A \to B} \}_{j=1}^J \in \Test_{A \to B}$
and $\{ h_m \in \Proc_{C \to D} \}_{m=1}^M \in \Test_{C \to D}$ is
$\{ f_j \otimes h_m \}_{(j,m)=(1,1)}^{(J,M)} \in \Test_{A \otimes C \to B \otimes D}$.
Since the sequential composition, $\{ g \circ f \}$, of two tests $\{ f \}$ and $\{ g \}$
with $f \in \ProcD_{A \to B}$ and $g \in \ProcD_{B \to C}$ is
a test, the sequential composition of two deterministic processes is deterministic.
Also, the parallel composition of two deterministic processes is deterministic.

We will call a process $f$ \termdef{feasible} if
there exists a test including $f$.
Intuitively, a feasible process represents a randomly occurring event.
Let $\ProcF_{A \to B}$ be the set of all feasible processes from $A$ to $B$;
then, $\ProcD_{A \to B} \subseteq \ProcF_{A \to B} \subseteq \Proc_{A \to B}$
obviously holds.
Also, we define $\StF_A \coloneqq \ProcF_{I \to A}$,
$\EffF_A \coloneqq \ProcF_{A \to I}$,
and $\ScalarF \coloneqq \ProcF_{I \to I}$.

\begin{ex}[quantum theory]
 A set of processes (i.e., CP maps), $\{ f_j \}_{j=1}^k$, is a test if and only if
 the CP map $\sum_{j=1}^k f_j$ is trace-preserving (TP).
 A set of effects $\Pi \coloneqq \{ \cra{e_j} \in \Eff_A \}_{j=1}^k$
 is a measurement if and only if $\sum_{j=1}^k \cra{e_j} = \ident_\NA$ holds,
 where $\ident_n$ is the identity matrix of order $n$.
 This means that $\Pi$ is a positive operator-valued measure.
 $f \in \ProcD_{A \to B}$ means that $f$ is a TP-CP map.
 Also, $f \in \ProcF_{A \to B}$ means that $f$ is a trace non-increasing CP map.
 In particular, $\cket{\rho} \in \StF_A$ is equivalent to $\Tr~\cket{\rho} \le 1$,
 and $\cra{e} \in \EffF_A$ is equivalent to $\cra{e} \le \ident_\NA$,
 where $H \le H'$ (or $H' \ge H$) with Hermitian matrices $H$ and $H'$ denotes that
 $H' - H$ is positive semidefinite.
\end{ex}

\subsubsection{Assigning Probabilities}

Assume that, for each feasible state $\cket{\rho}$,
a probability for $\cket{\rho}$ to happen is assigned
and is dependent only on $\cket{\rho}$.
Let us denote this probability by $\Pr[\cket{\rho}]$,
where $\Pr$ is a function from $\StF_A$ to $[0,1]$
(where $[0,1]$ denotes the set of all real numbers between $0$ and $1$, inclusive).
Also, assume that each $p \in \ScalarF$ is identified with $\Pr(p)$
and that $\ScalarF = [0,1]$ holds.
This means that any feasible scalar can be interpreted as a probability.
A feasible state $\cket{\rho}$ satisfying $\Pr[\cket{\rho}] = 1$ is called
a \termdef{normalized state}.
Let $\StN_A$ be the set of all normalized states of $A$.

Assume that, for any $A,B,E \in \System$,
$f \in \ProcF_{A \to B}$, and $\cket{\sigma} \in \StF_{A \otimes E}$,
the probability of the joint occurrence of the state $\cket{\sigma}$ and the process $f$
is $\Pr[f, \cket{\sigma}] \coloneqq \Pr[(f \otimes \id_E) \circ \cket{\sigma}]$.
Also, assume that, for any two feasible scalars $p$ and $q$,
the probability of the joint occurrence of $p$ and $q$
(i.e., $\Pr(p,q) = \Pr(p q) = p q$)
is the product of probabilities $p$ and $q$,
which means that $p q$ is the product of two real numbers $p$ and $q$.
This can be interpreted to indicate that an event that $p$ happens and one that $q$ happens
to occur independently.
The case $q = \id_I$ yields $p \cdot \id_I = p$, which gives $\id_I = 1$.

It is natural to assume that
$f_1,\ldots,f_k \in \Proc_{A \to B}$ represent
mutually exclusive and collectively exhaustive events
if, for any feasible state $\cket{\sigma} \in \StF_{A \otimes E}$,
the sum of the probabilities of the joint occurrence of $\cket{\sigma}$ and $f_j$
is equal to the probability of occurrence of $\cket{\sigma}$.
This assumption implies
\begin{alignat}{1}
 \lefteqn{ \{ f_j \}_{j=1}^k \in \Test_{A \to B} } \nonumber \\
 &\Leftrightarrow\quad
 \forall E \in \System, \cket{\sigma} \in \StF_{A \otimes E}, \;
 \sum_{j=1}^k \Pr[f_j,\cket{\sigma}] = \Pr[\cket{\sigma}].
 \label{eq:process_test_nas0}
\end{alignat}

\begin{ex}[quantum theory]
 $\Pr[\cket{\rho}] = \Tr~\cket{\rho}$ and
 $\Pr[f,\cket{\sigma}] = \Tr[(f \otimes \id_E)[\cket{\sigma}]]$ hold.
\end{ex}

\subsubsection{Deterministic effects}

We denote one of the deterministic effects of a system $A$ as $\cra{\gdis_A}$
or simply $\cra{\gdis}$, which is depicted as
\begin{alignat}{1}
 \includegraphics[scale=1.0]{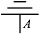} ~\raisebox{.3em}{.}
 \label{eq:discard}
\end{alignat}
From Eq.~\eqref{eq:process_test_nas0},
$\craket{\gdis_A|\rho} = \Pr[\cket{\rho}]$ obviously holds
for any $\cket{\rho} \in \StF_A$.
This can be interpreted that any deterministic effect always happens.
In particular, $\cra{\gdis_I} = 1$ holds.
For any systems $A$ and $B$,
we choose $\cra{\gdis_{A \otimes B}} \coloneqq \cra{\gdis_A} \otimes \cra{\gdis_B}$,
which is depicted as
\begin{alignat}{1}
 \includegraphics[scale=1.0]{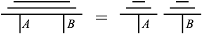} ~\raisebox{.3em}{.}
 \label{eq:discard_AB}
\end{alignat}

\begin{ex}[quantum theory]
 $\cra{\gdis_A}$ is the identity matrix $\ident_\NA$.
\end{ex}

\subsubsection{Unfeasible processes}

In this paper, for the sake of mathematical convenience, we assume $\Scalar = \Real_+$.
A real number larger than 1 is an unfeasible scalar.
Unfeasible scalars are not quite intuitive
since they cannot be interpreted as probabilities; however,
they allow for a simple mathematical analysis,
as will be seen throughout this paper.
For any $a \in \Scalar$ and $f \in \ProcF_{A \to B} \subseteq \Proc_{A \to B}$,
$a f \in \Proc_{A \to B}$ obviously holds.
Assume that any unfeasible process is expressed as
scalar multiplication of a feasible process, i.e.,
\begin{alignat}{1}
 \Proc_{A \to B} &= \{ a f : a \in \Scalar, f \in \ProcF_{A \to B} \}
 \label{eq:procecc_Proc_def}
\end{alignat}
holds.
Such $\Proc_{A \to B}$ can be interpreted as the smallest conceivable process space
that is consistent with an operational theory satisfying $\Scalar = \Real_+$.

Recall that $\Pr[\cket{\rho'}] = \craket{\gdis_A|\rho'}$ holds
for any $\cket{\rho'} \in \StF_A$.
We here extend the domain of the function $\Pr$
by letting $\Pr[\cket{\rho}] \coloneqq \craket{\gdis_A|\rho}$ for any $\cket{\rho} \in \St_A$.
Note that $\Pr[\cket{\rho}]$ can be larger than 1.
Also, let $\Pr[f,\cket{\sigma}] \coloneqq \Pr[(f \otimes \id_E) \circ \cket{\sigma}]
= \cra{\gdis} \circ (f \otimes \id_E) \circ \cket{\sigma}$
for any $f \in \Proc_{A \to B}$ and $\cket{\sigma} \in \St_{A \otimes E}$.

\subsubsection{Properties of tests and feasible processes}

Since any $\cket{\sigma} \in \St_{A \otimes E}$ is expressed as
$\cket{\sigma} = a \cket{\sigma'}$ with $a \in \Scalar$ and $\cket{\sigma'} \in \StF_{A \otimes E}$,
$\Pr[\cket{\sigma}] = \Pr[a \cket{\sigma'}] = a \cdot \Pr[\cket{\sigma'}]$ and
$\Pr[f,\cket{\sigma}] = \Pr[f,a \cket{\sigma'}] = a \cdot \Pr[f,\cket{\sigma'}]$ obviously hold.
Thus, Eq.~\eqref{eq:process_test_nas0} also holds
if $\cket{\sigma} \in \StF_{A \otimes E}$
is replaced by $\cket{\sigma} \in \St_{A \otimes E}$.
Moreover,
\begin{alignat}{2}
 \Pr[f,\cket{\sigma}] &\le \Pr[\cket{\sigma}], &\quad
 &\forall E \in \System, \cket{\sigma} \in \St_{A \otimes E}
 \label{eq:process_feasible_Pr}
\end{alignat}
holds for any $f \in \ProcF_{A \to B}$.
Considering the particular case of $B = E = I$,
we have that, for any $\cra{e} \in \EffF_A$,
\begin{alignat}{2}
 \craket{e|\sigma} &\le \craket{\gdis|\sigma}, &\quad
 &\forall \cket{\sigma} \in \St_A.
 \label{eq:effect_feasible_Pr}
\end{alignat}

We have that, for any $\cket{\rho} \in \St_A$,
\footnote{Proof: Since $\Pr[\cket{\rho},\cket{\sigma}]
= \craket{\gdis_A|\rho} \craket{\gdis_E|\sigma}$ holds
for any $\cket{\rho} \in \St_A$ and $\cket{\sigma} \in \St_{I \otimes E} = \St_E$,
Eq.~\eqref{eq:process_test_nas0} gives that
$\cket{\rho} \in \ProcD_{I \to A}$ is equivalent to $\craket{\gdis|\rho} = 1$.
Also, from the definition of a normalized state,
$\craket{\gdis|\rho} = 1$ is equivalent to $\cket{\rho} \in \StN_A$
(note that if $\craket{\gdis|\rho} = 1$ holds, then $\cket{\rho} \in \ProcD_{I \to A}$ holds,
and thus $\cket{\rho}$ is feasible).}
\begin{alignat}{2}
 \cket{\rho} \in \ProcD_{I \to A} &\quad\Leftrightarrow\quad
 \craket{\gdis|\rho} = 1 &&\quad\Leftrightarrow\quad
 \cket{\rho} \in \StN_A.
 \label{eq:process_state_determine}
\end{alignat}
Using Eq.~\eqref{eq:process_state_determine},
one can easily see that Eq.~\eqref{eq:process_test_nas0} can be replaced with%
\footnote{Proof: For any $f \in \Proc_{A \to B}$ and
$\cket{\sigma} \in \St_{A \otimes E}$,
$\Pr[f,\cket{\sigma}]
= [\cra{\gdis_A} \otimes \cra{\gdis_E}] \circ (f \otimes \id_E) \circ \cket{\sigma}
= \cra{\gdis_A} \circ f \circ \cket{\sigma_A} = \Pr[f,\cket{\sigma_A}]$ holds,
where $\cket{\sigma_A} \coloneqq [\id_A \otimes \cra{\gdis_E}] \circ \cket{\sigma} \in \St_A$.
Also, $\Pr[\cket{\sigma}] = \craket{\gdis_{A \otimes E}|\sigma} = \craket{\gdis_A|\sigma_A}
= \Pr[\cket{\sigma_A}]$ holds.
Substituting these relations into Eq.~\eqref{eq:process_test_nas0} gives Eq.~\eqref{eq:process_test_nas}.}
\begin{alignat}{1}
 \{ f_j \}_{j=1}^k \in \Test_{A \to B}
 &\quad\Leftrightarrow\quad
 \cket{\rho} \in \St_A, \;
 \sum_{j=1}^k \Pr[f_j,\cket{\rho}] = \Pr[\cket{\rho}].
 \label{eq:process_test_nas}
\end{alignat}
We also have
\begin{alignat}{2}
 \cket{\rho} \in \StF_A &\quad\Leftrightarrow\quad
 \craket{\gdis|\rho} \le 1,
 \label{eq:process_feasible_nas_state}
\end{alignat}
where $\Leftarrow$ follows from the fact that
$\{ \cket{\rho}, [1 - \craket{\gdis|\rho}] \cket{\sigma} \}$
with $\cket{\sigma} \in \StN_A$ is a test.

\subsubsection{Equality and local equality of processes} \label{subsubsec:OPT_OPT_eq}

Recall that any scalar that includes $f \in \Proc_{A \to B}$
is expressed in the form $u(f)$ with some diagram $u:\Proc_{A \to B} \to \Scalar$.
In an OPT,
two processes $f, f' \in \Proc_{A \to B}$ are said to be equal and denoted by $f = f'$ if
\begin{alignat}{1}
 u(f) = u(f'), &\quad \forall u:\Proc_{A \to B} \to \Scalar
 \label{eq:proc_unique}
\end{alignat}
holds.
$f = f'$ means that $f$ and $f'$ are indistinguishable from a
probabilistic point of view.
It follows that $f = f'$ holds if and only if
\begin{alignat}{1}
 u(f) = u(f'), &\quad \forall C,D \in \System, \; u:\Proc_{A \to B} \to \Proc_{C \to D}
 \label{eq:proc_unique_CD}
\end{alignat}
holds%
\footnote{Proof: The case $C = D = I$ immediately yields the ``if'' part.
To prove the ``only if'' part, we assume that $u(f) \neq u(f')$ holds
for some diagram $u:\Proc_{A \to B} \to \Proc_{C \to D}$
and prove $f \neq f'$.
From the definition of equality of processes,
there exists a diagram $u':\Proc_{C \to D} \to \Scalar$
such that $u'[u(f)] \neq u'[u(f')]$.
Let $v:\Proc_{A \to B} \to \Scalar$ be the diagram
satisfying $v(g) \coloneqq u'[u(g)]$ $~(\forall g \in \Proc_{A \to B})$;
then, $v(f) \neq v(f')$ holds.
Thus, $f \neq f'$ holds.}.

A process $\emptyset \in \Proc_{A \to B}$ that satisfies
$u(\emptyset) = 0$ for any $u:\Proc_{A \to B} \to \Scalar$
is called the \termdef{zero process}.
Note that, for any systems $A$ and $B$, there is a unique zero process from $A$ to $B$.
It follows from Eq.~\eqref{eq:af_u0} that
$0 \cdot f = \emptyset$ holds for any $f \in \Proc_{A \to B}$
since $u(0 \cdot f) = 0 \cdot u(f) = 0 = u(\emptyset)$ holds for any $u:\Proc_{A \to B} \to \Scalar$.
The zero process from $I$ to $A$ is called the \termdef{zero state} and denoted by $\cket{\emptyset} \in \St_A$.
Similarly, the zero process from $A$ to $I$ is called the \termdef{zero effect}
and denoted by $\cra{\emptyset} \in \Eff_A$.

Let us discuss the equality of states.
For any diagram $u:\St_B \to \Scalar$ and $\cket{\rho} \in \St_B$,
$u_1$ of Eq.~\eqref{eq:u} is in $\St_{I \otimes E} = \St_E$,
and thus
\begin{alignat}{1}
 \includegraphics[scale=1.0]{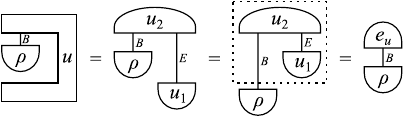}
 \label{eq:state_effect_unique}
\end{alignat}
holds, where the second equality follows from Eq.~\eqref{eq:process_fg},
and the effect enclosed by the auxiliary box,
$\cra{u_2} \circ [\id_B \otimes \cket{u_1}] \in \Eff_B$, is denoted by
$\cra{e_u}$.
Thus, a diagram $u:\St_B \to \Scalar$ can be expressed by the effect $\cra{e_u} \in \Eff_B$.
It follows from this and Eq.~\eqref{eq:proc_unique} that
two states $\cket{\rho}, \cket{\rho'} \in \St_B$ are equal if
\begin{alignat}{1}
 \craket{e|\rho} = \craket{e|\rho'}, &\quad \forall \cra{e} \in \Eff_B
 \label{eq:st_unique}
\end{alignat}
holds.
It also follows that $\cket{\rho} \in \St_A$ is the zero state if and only if
$\craket{\gdis|\rho} = 0$ holds%
\footnote{Proof: The ``only if'' part is obvious.
For the ``if'' part, assume $\craket{\gdis|\rho} = 0$.
From Eq.~\eqref{eq:effect_feasible_Pr},
any $\cra{e'} \in \EffF_A$ satisfies $0 \le \craket{e'|\rho} \le \craket{\gdis|\rho} = 0$,
i.e., $\craket{e'|\rho} = 0$.
For any $\cra{e} \in \Eff_A$,
which is in the form $\cra{e} = a \cra{e'}$
with $a \in \Scalar$ and $\cra{e'} \in \EffF_A$,
we have $\craket{e|\rho} = a \craket{e'|\rho} = 0 = \craket{e|\emptyset}$.
Thus, Eq.~\eqref{eq:st_unique} gives $\cket{\rho} = \cket{\emptyset}$.}.
Any $\cket{\rho} \in \St_A$ is expressed in the form%
\footnote{Proof: When $\cket{\rho} = \cket{\emptyset}$ holds,
from $\craket{\gdis|\rho} = 0$,
Eq.~\eqref{eq:rho_normalize} holds for any $\cket{\rho_0} \in \StN_A$.
Consider the case $\cket{\rho} \neq \cket{\emptyset}$.
Let $a \coloneqq \craket{\gdis|\rho} \neq 0$ and
$\cket{\rho_0} \coloneqq a^{-1} \cket{\rho}$;
then, Eq.~\eqref{eq:rho_normalize} obviously holds.
Moreover, from $\craket{\gdis|\rho_0} = a^{-1} \craket{\gdis|\rho} = 1$,
Eq.~\eqref{eq:process_state_determine} gives $\cket{\rho_0} \in \StN_A$.}
\begin{alignat}{1}
 \includegraphics[scale=1.0]{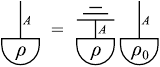}
 \label{eq:rho_normalize}
\end{alignat}
with a certain $\cket{\rho_0} \in \StN_A$.
It follows from Eq.~\eqref{eq:rho_normalize} that,
for any nonzero state $\cket{\rho}$,
$\craket{\gdis|\rho}^{-1} \cket{\rho}$ is a normalized state.
Intuitively, a nonzero feasible state $\cket{\rho}$ represents
the process of preparing the normalized state $\craket{\gdis|\rho}^{-1} \cket{\rho}$
with probability $\craket{\gdis|\rho}$.
We can discuss the equality of effects as well as that of states.
Specifically, two effects $\cra{e}, \cra{e'} \in \Eff_A$ are equal if
\begin{alignat}{1}
 \craket{e|\rho} = \craket{e'|\rho}, &\quad \forall \cket{\rho} \in \St_A
 \label{eq:eff_unique}
\end{alignat}
holds.
It is easy to verify that $\cra{\gdis_A}$ is the unique deterministic effect of $A$.
Indeed, assume $\cra{e} \in \Eff_A$ is deterministic;
then, $\craket{e|\rho} = \craket{\gdis_A|\rho}$ holds for any $\cket{\rho} \in \St_A$,
and thus $\cra{e} = \cra{\gdis_A}$ holds%
\footnote{Some papers (e.g., \cite{Chi-Dar-Per-2010,Chi-Dar-Per-2011})
adopt an OPT in which each system does not necessarily have a unique deterministic effect.
Such an OPT has non-fixed causal structure.
In this paper, we restrict our attention to fixed causal structure.}.

Two processes $f,f' \in \Proc_{A \to B}$ are said to be \termdef{locally equal}
and denoted by $f \eqlocal f'$ if
\begin{alignat}{1}
 \cra{e} \circ f \circ \cket{\rho} = \cra{e} \circ f' \circ \cket{\rho},
 &\quad \forall \cra{e} \in \Eff_B, \cket{\rho} \in \St_A
\end{alignat}
holds.
We have that, from Eq.~\eqref{eq:st_unique},
\begin{alignat}{1}
 f \eqlocal f' &\quad\Leftrightarrow\quad
 f \circ \cket{\rho} = f' \circ \cket{\rho}, \quad \forall \cket{\rho} \in \St_A.
 \label{eq:eqlocal_frho}
\end{alignat}
$f = f'$ is a sufficient condition for $f \eqlocal f'$, but not necessary.
Note that, in the particular case of $A = I$ or $B = I$,
the equality and local equality are equivalent,
as already shown in Eqs.~\eqref{eq:st_unique} and \eqref{eq:eff_unique}.
We can easily obtain
\begin{alignat}{1}
 f = f' &\quad\Leftrightarrow\quad
 f \otimes \id_E \eqlocal f' \otimes \id_E, \quad \forall E \in \System.
 \label{eq:eq_ffp}
\end{alignat}

A process $f \in \Proc_{A \to B}$ is called \termdef{reversible} if
there exists $\tf \in \Proc_{B \to A}$, called an \termdef{inverse} of $f$, such that
$\tf \circ f \eqlocal \id_A$ and $f \circ \tf \eqlocal \id_B$.
$\tf$ is also reversible.
Note that, for a reversible process $f$, there may be more than one inverse of $f$.

\begin{ex}[quantum theory]
 In quantum theory, two processes $f,f' \in \Proc_{A \to B}$ are equal
 if and only if they are locally equal.
 The ``only if'' part is obvious, so we prove only ``if'' part.
 Assume $f \eqlocal f'$.
 From Eq.~\eqref{eq:eqlocal_frho},
 $f[\cket{\rho}] = f'[\cket{\rho}]$ holds for any $\cket{\rho} \in \St_A$.
 It is well known that any $\cket{\sigma} \in \St_{A \otimes E}$ can be expressed in the form
 $\cket{\sigma} = \sum_{i=1}^l c_i \cket{\rho_i} \otimes \cket{\rho'_i}$,
 where $c_1,\ldots,c_l \in \Real$, $\cket{\rho_1},\ldots,\cket{\rho_l} \in \St_A$,
 and $\cket{\rho'_1},\ldots,\cket{\rho'_l} \in \St_E$ hold.
 Thus, we have
 $(f \otimes \id_E)[\cket{\sigma}] = \sum_{i=1}^l c_i f[\cket{\rho_i}] \otimes \cket{\rho'_i}
 = \sum_{i=1}^l c_i f'[\cket{\rho_i}] \otimes \cket{\rho'_i} = (f' \otimes \id_E)[\cket{\sigma}]$.
 Therefore, $f \otimes \id_E \eqlocal f' \otimes \id_E$ holds for any system $E$,
 which gives $f = f'$ from Eq.~\eqref{eq:eq_ffp}.

 A process $f \in \Proc_{A \to B}$ is reversible if and only if
 $\NA = \NB$ holds and
 $f$ is written in the form $f[\cket{\rho}] = E \cdot \cket{\rho} \cdot E^\dagger$
 with an invertible matrix $E$ of order $\NA$,
 where $^\dagger$ denotes the conjugate transpose.
 Its inverse, $\tf$, is written as
 $\tf[\cket{\rho}] = E^{-1} \cdot \cket{\rho} \cdot (E^{-1})^\dagger$.
\end{ex}


\subsubsection{Sum of processes}

Assume that, for any two feasible processes $g_1,g_2 \in \ProcF_{A \to B}$
and any $p \in \ScalarF$,
there exists a feasible process $h \in \ProcF_{A \to B}$ satisfying
\begin{alignat}{1}
 u(h) &= p \cdot u(g_1) + (1-p) \cdot u(g_2), \quad \forall u:\Proc_{A \to B} \to \Scalar.
 \nonumber \\
 \label{eq:process_weighted_sum}
\end{alignat}
Such a process $h$, denoted by $p g_1 + (1-p) g_2$,
can be interpreted as a probabilistic mixture of
$g_1$ and $g_2$ with probabilities $p$ and $1-p$.
For any two processes $f_1,f_2 \in \Proc_{A \to B}$,
$f_\mathrm{sum} \in \Proc_{A \to B}$ is
called the \termdef{sum of $f_1$ and $f_2$} and denoted by $f_1 + f_2$
if it satisfies
\begin{alignat}{2}
 u(f_\mathrm{sum}) &= u(f_1) + u(f_2), &\quad &\forall u:\Proc_{A \to B} \to \Scalar.
 \label{eq:process_sum2}
\end{alignat}
Since the right-hand side of Eq.~\eqref{eq:process_weighted_sum} equals to $u(p g_1) + u[(1-p)g_2]$,
the feasible process $p g_1 + (1-p) g_2$ equals to the sum of
two processes $p g_1$ and $(1-p)g_2$.
It is easily seen that, for any $f_1,f_2 \in \Proc_{A \to B}$,
the process $f_1 + f_2$ always exists%
\footnote{Proof: For each $i \in \{1,2\}$, $f_i$ is expressed in the form $f_i = a_i f'_i$
with $a_i \in \Scalar$ and $f'_i \in \ProcF_{A \to B}$.
We assume, without loss of generality, that $a_1,a_2 > 0$.
Substituting
$g_1 = f'_1$, $g_2 = f'_2$, and $p = \frac{a_1}{a}$ (where $a \coloneqq a_1 + a_2$)
into Eq.~\eqref{eq:process_weighted_sum}
gives $h \coloneqq \frac{a_1}{a} f'_1 + \frac{a_2}{a} f'_2 \in \ProcF_{A \to B}$.
Since $u(a h) = a \cdot u(h) = a \cdot [\frac{a_1}{a} u(f'_1) + \frac{a_2}{a} u(f'_2)]
= u(f_1) + u(f_2)$ holds, we have $f_1 + f_2 = a \cdot h \in \Proc_{A \to B}$.},
i.e., the sum of two processes is always a process.
In particular, the sum of two scalars $p_1, p_2 \in \Scalar$ (i.e., $p_1 + p_2$) is
equal to the sum of the real numbers.

It is easily seen that
any $f_1,\ldots,f_k \in \Proc_{A \to B}$ and $a_1,\ldots,a_k \in \Scalar$ satisfy
\begin{alignat}{2}
 u\left( \sum_{j=1}^k a_j f_j \right) &= \sum_{j=1}^k a_j u(f_j)
 &\quad &\forall u:\Proc_{A \to B} \to \Proc_{C \to D},
\end{alignat}
which is diagrammatically depicted as
\begin{alignat}{1}
 \includegraphics[scale=1.0]{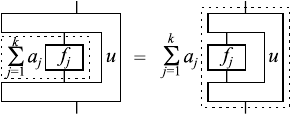} ~\raisebox{.5em}{.}
 \label{eq:fsum_u}
\end{alignat}
This implies that the diagram $u$ distributes over addition.

We consider a set of process $F \coloneqq \{ f_j \in \Proc_{A \to B} \}_{j=1}^k$.
It follows from Eq.~\eqref{eq:process_test_nas0} that
$F \in \Test_{A \to B}$ and
$\left\{ \sum_{j=1}^k f_j \right\} \in \Test_{A \to B}$ are equal,
which yields
\begin{alignat}{1}
 \{ f_j \}_{j=1}^k \in \Test_{A \to B}
 &\quad\Leftrightarrow\quad
 \sum_{j=1}^k f_j \in \ProcD_{A \to B}.
 \label{eq:process_test_sum}
\end{alignat}
In the particular case of $B = I$, we have
\begin{alignat}{1}
 \{ \cra{e_j} \}_{j=1}^k \in \Meas_A
 &\quad\Leftrightarrow\quad
 \sum_{j=1}^k \cra{e_j} = \cra{\gdis}.
 \label{eq:process_eff_sum}
\end{alignat}

\begin{ex}[quantum theory]
 The sum of processes is equal to the sum of CP maps.
 In particular, the sum of states (or effects) is the sum of matrices.
\end{ex}

\subsubsection{Process spaces are convex cones}

Each process space $\Proc_{A \to B}$
can be regarded as a subset of a real vector space.
Let $\Vec_{A \to B}$ be the set of all formal sums
of the form $\sum_i a_i f_i$ with $a_i \in \Real$ and $f_i \in \Proc_{A \to B}$,
where $\sum_i a_i f_i$ is the element satisfying
$u(\sum_i a_i f_i) = \sum_i a_i u(f_i)$ for any diagram $u:\Proc_{A \to B} \to \Scalar$.
It follows that $\Vec_{A \to B}$ is the vector space
spanned by the process space $\Proc_{A \to B}$.
In this paper, we only consider the case where $\Vec_{A \to B}$ is finite-dimensional
for any systems $A$ and $B$.
Let us define $\Vec_A \coloneqq \Vec_{I \to A}$ and $\Vec_A^* \coloneqq \Vec_{A \to I}$.
We will refer to an element of $\Vec_{A \to B}$ as an \termdef{extended process}
from $A$ to $B$.
Similarly, we will refer to elements of $\Vec_A$ and $\Vec_A^*$
as an \termdef{extended state} of $A$ and an \termdef{extended effect} of $A$, respectively.
Since each extended effect can be described as a linear functional on an extended state,
$\Vec_A^*$ can be regarded as the dual vector space of $\Vec_A$.
We use an overline, $\ol{\phantom{x}}$, to denote extended processes such as $\ol{f}$.
Any extended process $\ol{f} \in \Vec_{A \to B}$,
which can be expressed in the form $\ol{f} = \sum_i a_i f_i$
with $a_i \in \Real$ and $f_i \in \Proc_{A \to B}$,
satisfies $\ol{f} = f_+ - f_-$, where $f_+ \coloneqq \sum_{\{i:a_i > 0\}} a_i f_i \in \Proc_{A \to B}$
and $f_- \coloneqq \sum_{\{i:a_i < 0\}} (-a_i) f_i \in \Proc_{A \to B}$.
As well as processes, extended processes can be composed sequentially and in parallel.
Specifically, for any $\ol{f} \coloneqq \sum_i a_i f_i \in \Vec_{A \to B}$,
$\ol{g} \coloneqq \sum_j b_j g_j \in \Vec_{B \to C}$,
and $\ol{h} \coloneqq \sum_k c_k h_k \in \Vec_{C \to D}$
with $a_i, b_j, c_k \in \Real$, $f_i \in \Proc_{A \to B}$, $g_j \in \Proc_{B \to C}$,
and $h_k \in \Proc_{C \to D}$,
$\ol{g} \circ \ol{f} = \sum_i \sum_j a_i b_j (g_j \circ f_i)$ and
$\ol{f} \otimes \ol{h} = \sum_i \sum_k a_i c_k (f_i \otimes h_k)$ hold.
Diagrammatically, we have
\begin{alignat}{1}
 \includegraphics[scale=1.0]{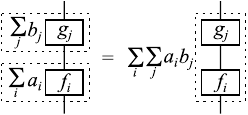}
 \label{eq:process_ex_sequential}
\end{alignat}
and
\begin{alignat}{1}
 \includegraphics[scale=1.0]{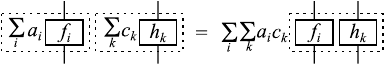} ~\raisebox{.5em}{.}
 \label{eq:process_ex_parallel}
\end{alignat}

Since $a f + b g \in \Proc_{A \to B}$ holds for
any $a, b \in \Scalar = \Real_+$ and $f, g \in \Proc_{A \to B}$,
$\Proc_{A \to B}$ is a convex cone.
Moreover, $\Proc_{A \to B}$ is salient, i.e.,
$\Proc_{A \to B}$ does not contain both $\ol{f}$ and $-\ol{f}$
for any $\ol{f} \neq \emptyset$.

We here recall some basic properties of convex cones.
For a salient convex cone $\mC$ with $\emptyset \in \mC$ in a real vector space $\Vec$,
the partial ordering on $\Vec$ is defined as follows:
for $x, y \in \Vec$,
\begin{alignat}{1}
 x \le y ~(\mbox{or}~ y \ge x) &\quad\logeq\quad y - x \in \mC.
 \label{eq:le}
\end{alignat}
For $x \in \mC$, the set defined as
\begin{alignat}{1}
 \Face_x &\coloneqq \{ y \in \mC :
 \exists \delta \in \Realpp, \delta y \le x \}
 \label{eq:Face}
\end{alignat}
is called the \termdef{face} of $x$,
where $\Realpp$ is the set of all positive real numbers.
$x \in \mC$ is called an \termdef{interior point} if $\Face_x = \mC$ holds.
$x \in \mC$ is called \termdef{atomic} if
$\Face_x = \{ p x : p \in \Real_+ \}$ holds.

Since each process space $\Proc_{A \to B}$ is a convex cone,
the partial ordering on $\Vec_{A \to B}$ and the face of
$f \in \Proc_{A \to B}$ can be, respectively, defined by
Eqs.~\eqref{eq:le} and \eqref{eq:Face} with $\mC = \Proc_{A \to B}$.
One can easily verify that $u(\ol{f}) \le u(\ol{g})$ holds
for any $u:\Proc_{A \to B} \to \Proc_{C \to D}$ and
$\ol{f},\ol{g} \in \Vec_{A \to B}$ satisfying $\ol{f} \le \ol{g}$.
From the definition of faces, we have that, for any $f,g \in \Proc_{A \to B}$,
\begin{alignat}{1}
 f \in \Face_g &\quad\Leftrightarrow\quad
 \exists~\delta \in \Realpp, \; \delta f \le g.
\end{alignat}
We call a state or an effect \termdef{pure} if it is atomic,
\termdef{mixed} if it is not atomic,
and \termdef{completely mixed} if it is an interior point.
$\StP_A$ is defined as the set of all pure states of system $A$.
Let $\StNP_A \coloneqq \StN_A \cap \StP_A$,
which is the set of all normalized pure states of $A$.

Recall that an effect $\cra{e}$ is feasible if and only if
there exists a measurement including $\cra{e}$.
Using Eq.~\eqref{eq:process_eff_sum}, we have
\begin{alignat}{1}
 \cra{e} \in \EffF_A &\quad\Leftrightarrow\quad
 \cra{e} \le \cra{\gdis_A}.
 \label{eq:process_eff_feasible}
\end{alignat}
From Eq.~\eqref{eq:process_eff_sum}, the following relation also holds:
\begin{alignat}{1}
 \cra{e} \in \EffF_A
 &\quad\Leftrightarrow\quad \{ \cra{e}, \cra{\gdis} - \cra{e} \} \in \Meas_A \nonumber \\
 &\quad\Leftrightarrow\quad \cra{\gdis} - \cra{e} \in \EffF_A.
 \label{eq:process_EffF_nas}
\end{alignat}
It follows that $\cra{\gdis}$ is completely mixed.
Indeed, for any $\cra{e} \in \Eff_A$, there exists $\delta \in \Realpp$
such that $\delta \cra{e} \in \EffF_A$,
which implies from Eq.~\eqref{eq:process_eff_feasible} that $\delta \cra{e} \le \cra{\gdis}$.

\begin{lemma} \label{lemma:FaceZero}
 If $\cket{\rho} \in \St_A$ and $\cra{e} \in \Eff_A$ satisfy $\craket{e|\rho} = 0$,
 then $\craket{e|\sigma} = 0$ holds for any $\cket{\sigma} \in \Face_\cket{\rho}$.
\end{lemma}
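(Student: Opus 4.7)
The plan is to unpack the face definition, apply $\cra{e}$ as an order-preserving diagram, and read off the conclusion. There is essentially no obstacle here; the statement is a direct consequence of the cone structure already set up in the excerpt.

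First, I would fix an arbitrary $\cket{\sigma} \in \Face_{\cket{\rho}}$. By the definition in Eq.~\eqref{eq:Face} applied to the salient cone $\St_A$, this means there exists $\delta \in \Realpp$ such that $\delta \cket{\sigma} \le \cket{\rho}$, i.e., $\cket{\rho} - \delta \cket{\sigma} \in \St_A$.

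Next, I would use the order-preserving property of diagrams noted just after Eq.~\eqref{eq:Face}: for any diagram $u:\Proc_{A \to B} \to \Proc_{C \to D}$ and any $\ol{f}, \ol{g} \in \Vec_{A \to B}$ with $\ol{f} \le \ol{g}$, one has $u(\ol{f}) \le u(\ol{g})$. Taking $u$ to be composition with $\cra{e}$, i.e., the diagram $\St_A \to \Scalar$ sending $\cket{\tau} \mapsto \craket{e|\tau}$, yields
\begin{alignat}{1}
 \delta \craket{e|\sigma} \le \craket{e|\rho}.
\end{alignat}
Since $\craket{e|\rho} = 0$ by hypothesis, we obtain $\delta \craket{e|\sigma} \le 0$.

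Finally, I would invoke the fact that $\craket{e|\sigma} \in \Scalar = \Real_+$ (it is a scalar, hence a nonnegative real number), so $\craket{e|\sigma} \ge 0$. Combined with $\delta > 0$ and $\delta \craket{e|\sigma} \le 0$, this forces $\craket{e|\sigma} = 0$, which is the required conclusion.
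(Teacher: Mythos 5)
Your proof is correct and follows essentially the same route as the paper's: both extract $\delta \in \Realpp$ with $\delta \cket{\sigma} \le \cket{\rho}$ from the definition of $\Face_\cket{\rho}$, use that applying the effect $\cra{e}$ preserves the partial order, and combine $0 \le \craket{e|\sigma} \le \delta^{-1}\craket{e|\rho} = 0$. The only difference is that you spell out the order-preservation and nonnegativity steps explicitly, which the paper compresses into a single chain of inequalities.
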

\begin{proof}
 From the definition of $\Face_\cket{\rho}$, there exists $\delta \in \Realpp$
 such that $\delta \cket{\sigma} \le \cket{\rho}$.
 This gives $0 \le \craket{e|\sigma} \le \delta^{-1} \craket{e|\rho} = 0$.
\end{proof}

\begin{lemma} \label{lemma:FaceOne}
 If $\cket{\rho} \in \St_A$ and $\cra{e} \in \EffF_A$ satisfy
 $\craket{e|\rho} = \craket{\gdis|\rho}$,
 then $\craket{e|\sigma} = \craket{\gdis|\sigma}$ holds for any
 $\cket{\sigma} \in \Face_\cket{\rho}$.
\end{lemma}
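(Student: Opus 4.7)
The plan is to reduce Lemma~\ref{lemma:FaceOne} to Lemma~\ref{lemma:FaceZero} by passing to the complementary effect $\cra{\gdis_A} - \cra{e}$. The key observation is that the hypothesis $\craket{e|\rho} = \craket{\gdis|\rho}$ can be rewritten as $(\cra{\gdis} - \cra{e})\cket{\rho} = 0$, which is precisely the vanishing hypothesis of the previous lemma applied to a different effect.

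First, I would invoke Eq.~\eqref{eq:process_EffF_nas} to conclude that, since $\cra{e} \in \EffF_A$, the complementary effect $\cra{e'} \coloneqq \cra{\gdis_A} - \cra{e}$ is also a feasible effect, and in particular is a legitimate element of $\Eff_A$. Next, using bilinearity (via Eq.~\eqref{eq:fsum_u} or directly from the definition of sums of processes), I compute
\begin{alignat}{1}
 \craket{e'|\rho} = \craket{\gdis|\rho} - \craket{e|\rho} = 0,
\end{alignat}
where the final equality uses the hypothesis. At this point Lemma~\ref{lemma:FaceZero} applies with the effect $\cra{e'}$ and the state $\cket{\rho}$, yielding $\craket{e'|\sigma} = 0$ for every $\cket{\sigma} \in \Face_\cket{\rho}$. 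Rearranging gives $\craket{e|\sigma} = \craket{\gdis|\sigma}$ for all such $\cket{\sigma}$, which is the desired conclusion.

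There is no real obstacle here: the only subtle point is ensuring that $\cra{\gdis} - \cra{e}$ is genuinely an effect (so that Lemma~\ref{lemma:FaceZero} is applicable and the nonnegativity $\craket{e'|\sigma} \ge 0$ used implicitly in that lemma's proof is valid), but this is immediate from Eq.~\eqref{eq:process_EffF_nas}, which was established precisely to make such complementary-effect arguments available. Bilinearity of the pairing $\craket{\cdot|\cdot}$ in its effect argument, needed to split $\craket{\gdis - e|\rho}$, follows from Eq.~\eqref{eq:fsum_u} applied to the diagram $u(\cra{f}) = \cra{f} \circ \cket{\rho}$.
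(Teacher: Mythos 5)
Your proposal is correct and follows essentially the same route as the paper's own proof: both pass to the complementary feasible effect $\cra{e'} = \cra{\gdis} - \cra{e}$ via Eq.~\eqref{eq:process_EffF_nas}, observe $\craket{e'|\rho} = 0$, and apply Lemma~\ref{lemma:FaceZero}. No issues.
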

\begin{proof}
 From Eq.~\eqref{eq:process_EffF_nas},
 $\cra{e'} \coloneqq \cra{\gdis} - \cra{e} \in \EffF_A$ holds.
 From $\craket{e'|\rho} = 0$ and Lemma~\ref{lemma:FaceZero},
 we have $\craket{e'|\sigma} = 0$.
 Therefore, $\craket{e|\sigma} = \craket{\gdis|\sigma}$ holds.
\end{proof}

It is easily seen that $\StN_A$ is a convex set.
This paper assumes that $\StN_A$ is closed;
in this case, $\St_A$ is a closed convex cone.
From Carath\'{e}odory's theorem,
any $\cket{\rho'} \in \StN_A$ can be expressed in the form
$\cket{\rho'} = \sum_{i=1}^l p_i \cket{\psi_i}$,
where $\cket{\psi_1},\ldots,\cket{\psi_l} \in \StNP_A$,
$p_1,\ldots,p_l \in \Realpp$, and $\sum_{i=1}^l p_i = 1$ hold.
From Eq.~\eqref{eq:rho_normalize}, any $\cket{\rho} \in \St_A$ is proportional to
some normalized state and thus is expressed in the form
$\cket{\rho} = \sum_{i=1}^l c_i \cket{\psi_i}$
with $\cket{\psi_1},\ldots,\cket{\psi_l} \in \StNP_A$ and $c_1,\ldots,c_l \in \Realpp$.

\begin{ex}[quantum theory]
 Recall that $\St_A \cong \bigoplus_{i=1}^{k_A} \mS_+(\Complex^{m_i})$
 and $\St_B \cong \bigoplus_{j=1}^{k_B} \mS_+(\Complex^{n_j})$ hold
 for some natural numbers $k_A,m_1,\ldots,m_{k_A},k_B,n_1,\ldots,n_{k_B}$;
 in this case, $\Vec_{A \to B}$ is isomorphic to the space of all linear maps
 from $\bigoplus_{i=1}^{k_A} \mS(\Complex^{m_i})$ to
 $\bigoplus_{j=1}^{k_B} \mS(\Complex^{n_j})$
 (which are often called Hermitian-preserving maps).
 $\Vec_A \cong \Vec_A^* \cong \bigoplus_{i=1}^{k_A} \mS(\Complex^{m_i})$ also holds.
 For extended processes $\ol{f}, \ol{g} \in \Vec_{A \to B}$,
 $\ol{f} \le \ol{g}$ holds if and only if $\ol{g} - \ol{f}$ is a CP map.
 In particular, for any $\cket{\ol{x}}, \cket{\ol{y}} \in \Vec_A$,
 $\cket{\ol{x}} \le \cket{\ol{y}}$ holds if and only if
 $\cket{\ol{y}} - \cket{\ol{x}}$ is a positive semidefinite matrix.
 The same holds for effects.
 For two states $\cket{\rho},\cket{\sigma} \in \St_A$,
 $\cket{\sigma} \in \Face_\cket{\rho}$ and
 $\supp~\cket{\sigma} \subseteq \supp~\cket{\rho}$ are equivalent,
 where $\supp$ denotes the support of a matrix.
 A state $\cket{\rho}$ is completely mixed if and only if the matrix $\cket{\rho}$ has
 full rank.
 A state $\cket{\psi}$ is pure if and only if $\rank~\cket{\psi} \le 1$ holds.
 \end{ex}

\subsubsection{Perfectly distinguishable states}

A set of states $\Phi \coloneqq \{ \cket{\rho_i} \in \St_A \}_{i=1}^k$ is said to be
\termdef{perfectly distinguishable} if there exists a measurement
$\Pi \coloneqq \{ \cra{e_i} \}_{i=1}^m \in \Meas_A$ $~(m \ge k)$ such that
\begin{alignat}{2}
 \craket{e_i|\rho_i} &= \craket{\gdis|\rho_i}, &\quad \forall i &\in \{1,\ldots,k\},
 \label{eq:distinguishable_scalar}
\end{alignat}
in which case, we say that $\Pi$ \termdef{perfectly distinguishes} between $\Phi$.
Equation~\eqref{eq:distinguishable_scalar} implies
$\craket{e_j|\rho_i} = \delta_{i,j}$ for any $i \in \{1,\ldots,k\}$ and
$j \in \{1,\ldots,m\}$, where $\delta_{i,j}$ is the Kronecker delta.
The \termdef{kernel} of $\cket{\sigma} \in \St_A$, denoted by $\ker_\cket{\sigma}$, is defined as
the set of all states of $A$ that is perfectly distinguishable from $\cket{\sigma}$.
$\cket{\rho} \in \ker_\cket{\sigma}$ and $\cket{\sigma} \in \ker_\cket{\rho}$
are obviously equivalent.

Let us refer to a set of perfectly distinguishable normalized pure states as a \termdef{PDS}%
\footnote{PDS stands for a Perfectly Distinguishable Set of normalized pure states.}.
We also refer to a PDS $\Phi \coloneqq \{ \cket{\phi_i} \in \StNP_A \}_{i=1}^n$
as an \termdef{MPDS} (which stands for a maximal PDS)
if there exists no normalized pure state $\cket{\psi}$ such that
$\{ \cket{\phi_1},\ldots,\cket{\phi_n},\cket{\psi} \}$ are perfectly distinguishable.
Let $\PDS_A$ and $\MPDS_A$ be, respectively, the sets of all PDSs and MPDSs of $A$.
The maximum number of elements of an MPDS for $A$ is called the \termdef{rank} of $A$
and denoted by $\NA$.
Since there exist normalized states for any system $A$, $\NA \ge 1$ holds.
For an MPDS $\Phi$, a measurement $\Pi \in \Meas_A$ with $|\Pi| = |\Phi|$
that perfectly distinguishes between $\Phi$ is called \termdef{maximal},
where $|\mX|$ is the number of elements in a set $\mX$.
An effect $\cra{e} \in \Eff_A$ is called \termdef{maximal}
if there exists a maximal measurement including $\cra{e}$.
Let $\EffM_A$ be the set of all maximal effects of $A$.
$\EffM_A \subseteq \EffF_A \subseteq \Eff_A$ obviously holds.

\begin{ex}[quantum theory]
 A set of states $\{ \cket{\rho_i} \in \St_A \}_{i=1}^k$ is perfectly distinguishable
 if and only if
 $\cket{\rho_i} \cdot \cket{\rho_j} = \cket{\emptyset}$ holds
 for any distinct $i,j \in \{1,\ldots,k\}$,
 where $\cket{\emptyset}$ is the zero square matrix of order $\NA$.
 It follows that $\cket{\rho} \in \ker_\cket{\sigma}$ and
 $\cket{\rho} \cdot \cket{\sigma} = \cket{\emptyset}$ are equivalent.
 $\Psi \coloneqq \{ \cket{\varphi_i} \in \StNP_A \}_{i=1}^k$ is a PDS if and only if
 $k \le \NA$ and $\cket{\varphi_i} = \ket{\varphi_i} \bra{\varphi_i}$
 $~(\forall i \in \{1,\ldots,k\})$ hold
 for some orthonormal basis (ONB) $\{ \ket{\varphi_i} \}_{i=1}^\NA$ of $\Complex^\NA$.
 In particular, $\Psi$ is an MPDS if and only if $k = \NA$ holds.
 For any $\Phi \coloneqq \{ \cket{\phi_i} \}_{i=1}^\NA \in \MPDS_A$,
 there exists a unique maximal measurement $\{ \cra{e_i} \}_{i=1}^\NA$
 that perfectly distinguishes between $\Phi$,
 where $\cra{e_i}$ and $\cket{\phi_i}$ are the same matrix.
 An effect is maximal if and only if
 it is in the form $\ket{\psi}\bra{\psi}$ with some unit vector $\ket{\psi} \in \Complex^\NA$.
\end{ex}

\section{Four postulates} \label{sec:Pustulates}

In this section, we present the postulates used in our derivation.
For each postulate, examples in classical and quantum theories are also presented.
It is noteworthy that all these postulates are satisfied in quantum theory;
conversely, quantum theory is uniquely singled out from these postulates.

\subsection{Postulates} \label{subsec:Postulates}

\subsubsection*{\NumSymSharp. Symmetric sharpness} \label{subsubsec:derive_SymSharp}

Our first postulate concerns the duality between normalized pure states and
maximal effects.
\begin{postulate}[\pagetarget{pos:SymSharp}{Symmetric sharpness}][\NumSymSharp] \label{postulate:SymSharp}
 To every normalized pure state $\cket{\phi}$,
 there corresponds one and only one maximal effect, denoted by $\cra{\phi^\dagger}$,
 giving unit probability (i.e., $\craket{\phi^\dagger|\phi} = 1$).
 Furthermore, for any two normalized pure states $\cket{\varphi}$ and $\cket{\psi}$
 and their corresponding maximal effects $\cra{\varphi^\dagger}$ and $\cket{\psi^\dagger}$,
 respectively,
 the probability of the joint occurrence of $\cket{\varphi}$ and $\cra{\psi^\dagger}$ is
 equal to the probability of the joint occurrence of $\cket{\psi}$ and $\cra{\varphi^\dagger}$
 (i.e., $\craket{\psi^\dagger|\varphi} = \craket{\varphi^\dagger|\psi}$).
\end{postulate}

In any OPT, for any $\cra{e} \in \EffM_A$,
there exists $\cket{\phi} \in \StNP_A$ such that $\craket{e|\phi} = 1$.
The \poslink{pos:SymSharp}{symmetric sharpness postulate} states that
each $\cra{e} \in \EffM_A$ corresponds to one and only one $\cket{\phi} \in \StNP_A$
such that $\craket{e|\phi} = 1$.
This postulate also states that the probability of the joint occurrence of
any normalized pure state and any maximal effect is invariant
under the exchange of the normalized pure state and
its corresponding maximal effect.

\begin{ex}[classical theory]
 As previously mentioned, $\St_A \cong \Real_+^\NA$ holds.
 Regarding the examples of classical theory, without loss of generality,
 we will identify a state of $A$ with the corresponding element of $\Real_+^\NA$,
 which is an $\NA$-dimensional nonnegative column vector.
 Each system $A$ has exactly $\NA$ normalized pure states
 $\cket{1} \coloneqq \ket{1},\ldots,\cket{\NA} \coloneqq \ket{\NA}$,
 where $\{ \ket{n} \}_{n=1}^\NA$ is the standard ONB of $\Real^\NA$.
 Each state $\cket{\rho}$ of $A$ can be expressed in the form
 $\cket{\rho} = \sum_{i=1}^\NA p_i \ket{i}$ with $p_1,\ldots,p_\NA \in \Real_+$.
 We will also identify a process from $A$ to $B$ with its corresponding
 $\NB \times \NA$ nonnegative matrix.
 Particularly, an effect of $A$ can be expressed as an $\NA$-dimensional
 nonnegative row vector.
 $\cra{i^\dagger} \coloneqq \bra{i}$ is the unique maximal effect
 satisfying $\craket{i^\dagger|i} = \braket{i|i} = 1$.
 $\craket{i^\dagger|j} = \delta_{i,j} = \craket{j^\dagger|i}$ also holds.
\end{ex}

\begin{ex}[quantum theory]
 For any $\cket{\phi} \in \StNP_A$,
 which can be expressed in the form $\cket{\phi} = \ket{\phi} \bra{\phi}$ with some
 unit vector $\ket{\phi} \in \Complex^\NA$,
 $\cra{\phi^\dagger} \coloneqq \ket{\phi} \bra{\phi}$ represents the unique maximal effect
 that satisfies $\craket{\phi^\dagger|\phi} = |\braket{\phi|\phi}|^2 = 1$.
 $\craket{\psi^\dagger|\varphi} = |\braket{\psi|\varphi}|^2 = \craket{\varphi^\dagger|\psi}$
 also holds for any $\cket{\psi} = \ket{\psi} \bra{\psi} \in \StNP_A$
 and $\cket{\varphi} = \ket{\varphi} \bra{\varphi} \in \StNP_A$.
\end{ex}

\subsubsection*{\NumCompletelyMixed. Complete mixing} \label{subsubsec:derive_CompletelyMixed}

The second postulate provides a sufficient condition for a state to be completely mixed.
\begin{postulate}[\pagetarget{pos:CompletelyMixed}{Complete mixing}][\NumCompletelyMixed]
 \label{postulate:CompletelyMixed} 
 Every state for which there exists no maximal effect giving zero probability
 is completely mixed.
\end{postulate}

In any OPT, the converse of this postulate is also true,
i.e., for every completely mixed state $\cket{\rho}$, there exists no maximal effect
giving zero probability.
Indeed, for any $\cra{e} \in \EffM_A$,
there exists $\cket{\psi} \in \StNP_A$ satisfying $\craket{e|\psi} = 1$.
From $\cket{\psi} \in \St_A = \Face_\cket{\rho}$,
$\delta \cket{\psi} \le \cket{\rho}$ holds for some $\delta \in \Realpp$,
which gives $\craket{e|\rho} \ge \delta \craket{e|\psi} = \delta > 0$.

\begin{ex}[classical theory]
 Every $\cket{\rho} \in \St_A$ is in the form $\cket{\rho} = \sum_{i=1}^\NA p_i \ket{i}$
 with $p_1,\ldots,p_\NA \in \Real_+$.
 If $\craket{j^\dagger|\rho} = p_j > 0$ holds for each $\cra{j^\dagger} = \bra{j} \in \EffM_A$,
 then $\cket{\rho}$ is completely mixed.
\end{ex}

\begin{ex}[quantum theory]
 The arguments for classical theory can readily be extended to quantum theory.
 Every $\cket{\rho} \in \St_A$ can be expressed in the form
 $\cket{\rho} = \sum_{i=1}^\NA p_i \ket{\phi_i}\bra{\phi_i}$
 with $p_1,\ldots,p_\NA \in \Real_+$ and an ONB $\{ \ket{\phi_i} \}_{i=1}^\NA$ of $\Complex^\NA$.
 If $\craket{\phi_j^\dagger|\rho} = p_j > 0$ holds
 for each $\cra{\phi_j^\dagger} = \ket{\phi_j}\bra{\phi_j} \in \EffM_A$,
 then $\cket{\rho}$ is completely mixed.
\end{ex}

\subsubsection*{\NumFilter. Filtering} \label{subsubsec:derive_Filter}

The third postulate entails the existence of what we refer to as filters.
Herein, we will first define filters before stating the postulate.
For any normalized pure state $\cket{\phi} \in \StNP_A$ and any $c \in [0,1]$,
a \termdef{filter} $F_\cket{\phi}^c$ is defined as a process in $\ProcF_{A \to A}$
that satisfies
\begin{alignat}{1}
 \includegraphics[scale=1.0]{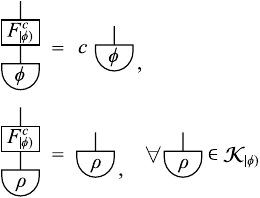} ~\raisebox{1.6em}{;}
 \label{eq:derive_filter}
\end{alignat}
that is, $F_\cket{\phi}^c$ transforms $\cket{\phi}$ to $c \cket{\phi}$
and leaves any state that is perfectly distinguishable from $\cket{\phi}$.
Furthermore, suppose that $F_\cket{\phi}^c$ is reversible if $c > 0$ holds.
We say that $\cket{\phi} \in \StNP_A$ \termdef{can be arbitrarily filtered} if
there exists a filter $F_\cket{\phi}^c$ for every $c \in [0,1]$.

\begin{postulate}[\pagetarget{pos:Filter}{Filtering}][\NumFilter] \label{postulate:Filter}
 Any normalized pure state can be arbitrarily filtered.
\end{postulate}

\begin{ex}[classical theory]
 For any $\cket{j} \in \StNP_A$ and $c \in [0,1]$,
 the filter $F_\cket{j}^c \in \ProcF_{A \to A}$ can be expressed as the following matrix:
 \begin{alignat}{2}
  F_\cket{j}^c &= c \ket{j}\bra{j} + \Upsilon_j,
  \label{eq:FilterClassical}
 \end{alignat}
 where $\Upsilon_j \coloneqq \ident_\NA - \ket{j}\bra{j}$.
 For any $\cket{\sigma} \in \St_A$, which is in the form $\cket{\sigma} = \sum_{i=1}^\NA p_i \ket{i}$,
 we have $F_\cket{j}^c \circ \cket{\sigma} = \sum_{i=1}^\NA p_i \ket{i} - (1 - c) p_j \ket{j}$.
 Since $\{ F_\cket{j}^c, \ident_\NA - F_\cket{j}^c \}$ is a test, $F_\cket{j}^c$ is feasible.
 It is easy to verify that $F_\cket{j}^c$ satisfies Eq.~\eqref{eq:derive_filter}.
 Indeed, $F_\cket{j}^c \circ \cket{j} = c \cket{j}$ obviously holds.
 Also, for any $\cket{\rho} \in \ker_\cket{j}$,
 since $\cket{\rho}$ can be expressed in the form
 $\cket{\rho} = \sum_{i=1}^\NA q_i \ket{i}$ with $q_j = 0$,
 we have $F_\cket{j}^c \circ \cket{\rho} = \cket{\rho}$.
 Let $\tF_\cket{\phi}^c \coloneqq \ket{j}\bra{j} + c \Upsilon_j$; then,
 $\tF_\cket{\phi}^c \circ F_\cket{\phi}^c = c \cdot \id_A
 = F_\cket{\phi}^c \circ \tF_\cket{\phi}^c$ obviously holds
 (note that $\id_A = \ident_\NA$ holds).
 Thus, for any $c > 0$, $F_\cket{\phi}^c$ is reversible and its inverse is
 $c^{-1} \tF_\cket{\phi}^c$.
\end{ex}

\begin{ex}[quantum theory]
 For any $\cket{\phi} = \ket{\phi}\bra{\phi} \in \StNP_A$ and $c \in [0,1]$,
 the CP map $F_\cket{\phi}^c$ defined as
 \begin{alignat}{1}
  F_\cket{\phi}^c \circ \cket{\rho} &= E_\cket{\phi}^c \cdot \cket{\rho} \cdot [E_\cket{\phi}^c]^\dagger,
  \quad \cket{\rho} \in \St_A, \nonumber \\
  E_\cket{\phi}^c &\coloneqq \sqrt{c} \ket{\phi} \bra{\phi}
  + \Upsilon'_{\ket{\phi}}
  \label{eq:FilterQuantum}
 \end{alignat}
 is a filter, where $\Upsilon'_{\ket{\phi}} \coloneqq \ident_\NA - \ket{\phi} \bra{\phi}$.
 Since $[E_\cket{\phi}^c]^\dagger E_\cket{\phi}^c \le \ident_\NA$ holds,
 $F_\cket{\phi}^c$ is trace non-increasing and thus feasible.
 Again, Eq.~\eqref{eq:derive_filter} can easily be verified.
 Also, let $\tF_\cket{\phi}^c$ be the following CP map:
 \begin{alignat}{1}
  \tF_\cket{\phi}^c \circ \cket{\rho} &= \tE_\cket{\phi}^c \cdot \cket{\rho} \cdot [\tE_\cket{\phi}^c]^\dagger,
  \quad \cket{\rho} \in \St_A, \nonumber \\
  \tE_\cket{\phi}^c &\coloneqq \ket{\phi} \bra{\phi} + \sqrt{c} \Upsilon'_{\ket{\phi}};
  \label{eq:FilterQuantumInv}
 \end{alignat}
 then, $\tF_\cket{\phi}^c \circ F_\cket{\phi}^c = c \cdot \id_A
 = F_\cket{\phi}^c \circ \tF_\cket{\phi}^c$ holds.
 Thus, for any $c > 0$, $F_\cket{\phi}^c$ is reversible
 and its inverse is $c^{-1} \tF_\cket{\phi}^c$.
\end{ex}

\subsubsection*{\NumProcEq. Local equality} \label{subsubsec:derive_ProcEq}

The fourth postulate requires that two processes are equal
if they are locally equal.
\begin{postulate}[\pagetarget{pos:ProcEq}{Local equality}][\NumProcEq]
 \label{postulate:ProcEq} 
 Any two locally equal processes are equal.
\end{postulate}
Recall that, in any OPT, the converse is true,
i.e., any two equal processes are locally equal.
Thus, this postulate implies that, for any $f, f' \in \Proc_{A \to B}$,
$f \eqlocal f'$ and $f = f'$ are equivalent.

We here consider the following scenario.
Suppose that two processes $f,f' \in \Proc_{A \to B}$ are equal and
we want to prove it, where $A$ and $B$ refer to systems in our laboratory.
From the definition of equality of processes,
we need to show that two scalars
$\cra{e} \circ (f \circ \id_E) \circ \cket{\rho}$ and
$\cra{e} \circ (f' \circ \id_E) \circ \cket{\rho}$ are equal
for any $E \in \System$, $\cket{\rho} \in \St_{A \otimes E}$,
and $\cra{e} \in \Eff_{B \otimes E}$.
If the \poslink{pos:ProcEq}{local equality postulate} does not hold, then
since $E$ could be an extremely large system (such as the universe),
it may be practically impossible to prove $f = f'$.
If the \poslink{pos:ProcEq}{local equality postulate} holds, then we only need to show
that $\cra{e} \circ f \circ \cket{\rho} = \cra{e} \circ f' \circ \cket{\rho}$
holds for any $\cket{\rho} \in \St_A$ and $\cra{e} \in \Eff_B$,
which can be accomplished only in the laboratory.
Thus, we can say that this postulate significantly reduces the amount of information required to
identify a process.

\begin{ex}[classical theory]
 Any process $f \in \Proc_{A \to B}$ can be expressed in the form
 $f = \sum_{i=1}^\NB \sum_{j=1}^\NA f_{i,j} \ket{i_B} \bra{j_A}$
 with the standard ONBs $\{ \cket{i_B} \coloneqq \ket{i_B} \}_{i=1}^\NB$ of $\Real^\NB$
 and $\{ \cket{j_A} \coloneqq \ket{j_A} \}_{j=1}^\NA$ of $\Real^\NA$,
 where $f_{i,j} \coloneqq \braket{i_B|f|j_A} =
 \cra{i_B^\dagger} \circ f \circ \cket{j_A} \in \Real_+$.
 Hence, if two processes $f,f' \in \Proc_{A \to B}$ satisfy
 $\cra{i_B^\dagger} \circ f \circ \cket{j_A} = \cra{i_B^\dagger} \circ f' \circ \cket{j_A}$
 for any $i \in \{1,\ldots,\NB\}$ and $j \in \{1,\ldots,\NA\}$,
 then $f = f'$ obviously holds.
\end{ex}

\begin{ex}[quantum theory]
 As we have already shown in Subsubsec.~\ref{subsubsec:OPT_OPT_eq},
 any quantum theory enjoys the \poslink{pos:ProcEq}{local equality postulate}.
\end{ex}

The above four postulates can be thought of as approaches describing
the behavior of feasible processes.
(Recall that any process is proportional to some feasible process.)
Indeed, it can certainly be stated that
the \poslink{pos:CompletelyMixed}{complete mixing postulate} is equivalent to
the statement that every feasible state for which there exists no maximal effect
giving zero probability is completely mixed,
and that the \poslink{pos:ProcEq}{local equality postulate} is equivalent to
the statement that any two feasible and locally equal processes are equal.
Regarding the \poslink{pos:Filter}{filtering postulate},
we note that a filter $F_\cket{\phi}^c \in \ProcF_{A \to A}$ is reversible
if and only if there exist a feasible process $\tF_\cket{\phi}^c \in \ProcF_{A \to A}$
and a nonzero feasible scalar $p$ such that
$\tF_\cket{\phi}^c \circ F_\cket{\phi}^c \eqlocal p \cdot \id_A
\eqlocal F_\cket{\phi}^c \circ \tF_\cket{\phi}^c$.

It is noteworthy that, roughly speaking, each of our four postulates is essentially the same as,
or similar to, that used in previous studies.
Specifically, in studies that focus on reconstructing quantum theory from operational postulates,
postulates on symmetric sharpness \cite{Sel-Sca-Coe-2018}
(or sharpness \cite{Har-2011,Wil-2012-axioms,Wil-2018}),
complete mixing \cite{Chi-Dar-Per-2011,Zao-2012},
and filtering \cite{Wil-2012-axioms,Wil-2012,Wil-2018} have been used.
Also, several studies have used the so-called local tomography
(or local discriminability) postulate
\cite{Har-2001,Dak-Bru-2009,Mas-Mul-2011,Chi-Dar-Per-2011,Har-2011,Zao-2012,Sel-Sca-Coe-2018},
which is highly related to the \poslink{pos:ProcEq}{local equality postulate}.
Indeed, it has been shown that the \poslink{pos:ProcEq}{local equality postulate} holds
for any OPT having the local tomography postulate \cite{Chi-Dar-Per-2010}.
We will show later that, conversely, the local tomography postulate holds
for any OPT having the \poslink{pos:ProcEq}{local equality postulate},
which means that these two postulates are equivalent.
Although each of the four postulates is not new in itself, combining them
provides a new way to reconstruct quantum theory.

\subsection{Overview of derivation} \label{subsec:derive_abstract_stream}

In this paper, we show that an OPT having the four proposed postulates
satisfies the following properties:
\begin{enumerate}[label=(\Alph*)]
 \item \pagetarget{property:A}{}
       For each system $A$, the state space $\St_A$ is isomorphic
       to a direct sum of spaces of complex positive semidefinite matrices,
       i.e., $\St_A \cong \bigoplus_{i=1}^k \mS_+(\Complex^{n_i})$,
       where $k, n_1, \ldots, n_k$ refer to some natural numbers that satisfy
       $\sum_{i=1}^k n_i = \NA$ (Theorem~\ref{thm:Complex}).
 \item \pagetarget{property:B}{}
       For any systems $A$ and $B$ with
       $\St_A \cong \bigoplus_{i=1}^{k_A} \mS_+(\Complex^{m_i})$
       and $\St_B \cong \bigoplus_{j=1}^{k_B} \mS_+(\Complex^{n_j})$,
       $\St_{A \otimes B} \cong \bigoplus_{i=1}^{k_A}
       \bigoplus_{j=1}^{k_B} \mS_+(\Complex^{m_in_j})$ holds
       (Theorem~\ref{thm:ComplexOtimes}).
 \item \pagetarget{property:C}{}
       Let $\StMat_A^+ \coloneqq \{ \Mat^A_\cket{\sigma} : \cket{\sigma} \in \St_A \}
       \cong \St_A$,
       where $\Mat^A_\cket{\sigma}$ refers to a certain full and faithful matrix
       representation of $\cket{\sigma}$.
       Also, let $\StMat_A$ be the real vector space spanned by $\StMat_A^+$
       and $\mathbf{CP}_{A \to B}$ be the set of all CP maps from $\StMat_A$ to $\StMat_B$.
       Then, for any systems $A$ and $B$, each process space $\Proc_{A \to B}$ is isomorphic to
       $\mathbf{CP}_{A \to B}$ as convex cones.
       Also, there exists an isomorphism
       $\mL:\Proc_{A \to B} \ni f \mapsto \mL_f \in \mathbf{CP}_{A \to B}$ such that
       (i) $\Mat^B_{f \circ \cket{\rho}} = \mL_f[\Mat^A_\cket{\rho}]$
       for any $f \in \Proc_{A \to B}$ and $\cket{\rho} \in \St_A$;
       (ii) $f \in \Proc_{A \to B}$ is deterministic if and only if $\mL_f$ is TP;
       and (iii) $f \in \Proc_{A \to B}$ is feasible if and only if $\mL_f$
       is trace non-increasing (Theorems~\ref{thm:Proc} and \ref{thm:ProcFeasible}).
\end{enumerate}
It is evident that quantum theory is uniquely singled out from the above three properties.
We present a brief comment on these properties.
The key step in the derivation of quantum theory is to prove Property~(\hyperlink{property:A}{A}),
which characterizes the state space for each system.
Properties~(\hyperlink{property:B}{B}) and (\hyperlink{property:C}{C}) can be easily derived
from our postulates and Property~(\hyperlink{property:A}{A}).
Property~(\hyperlink{property:B}{B}) represents the relation between the state space of
a composite system and those of its subsystems.
Property~(\hyperlink{property:C}{C}) implies that each process is fully and faithfully represented
by a CP map.
This property also characterizes deterministic processes and feasible processes.

We now present an overview of our approach to the derivation of quantum theory.
The first half of our approach focuses on analyzing the behavior of an individual system
using the first three postulates presented in the previous subsection
and shows that the following property holds:
\begin{enumerate}[label=($\diamondsuit$)]
 \item \pagetarget{property:diamondsuit}{}
       For each system $A$, $\St_A$ is a symmetric cone and $\Eff_A$ is its dual cone.
\end{enumerate}
We will give the definition of a symmetric cone and a dual cone
in Subsec.~\ref{subsec:derive_abstract_symmetric}.
It is a well-known fact that a symmetric cone can be decomposed as a direct sum of five types of
irreducible symmetric cones, one of which is
the set of complex positive semidefinite matrices, $\mS_+(\Complex^n)$,
of a certain order $n$ \cite{Jor-Neu-Wig-1934} (see also Theorem~\ref{thm:Neumann}).
The second half of our approach assumes that
Property~(\hyperlink{property:diamondsuit}{$\diamondsuit$}) and
the \poslink{pos:ProcEq}{local equality postulate} hold.
By analyzing the behavior of a composite system, we can conclude that $\St_A$ is isomorphic to
a direct sum of spaces of complex positive semidefinite matrices,
which results in Property~(\hyperlink{property:A}{A}).
Properties~(\hyperlink{property:B}{B}) and (\hyperlink{property:C}{C}) can also be derived.
Our approach consists of the following three steps
(where the first half is further divided into two steps):
\begin{enumerate}[label=Step~\arabic*):,leftmargin=4em]
 \item Derive some basic properties of PDSs (Sec.~\ref{sec:derive_PDS}).
 \item Derive Property~(\hyperlink{property:diamondsuit}{$\diamondsuit$})
       (Sec.~\ref{sec:derive_symmetry}).
 \item Derive Properties~(\hyperlink{property:A}{A})--(\hyperlink{property:C}{C}),
       i.e., single out quantum theory (Sec.~\ref{sec:derive_Quantum}).
\end{enumerate}
A schematic flow chart of our derivation of quantum theory is shown in Fig.~\ref{fig:flow}.
\begin{figure}[ht]
 \centering
 \includegraphics[scale=0.8]{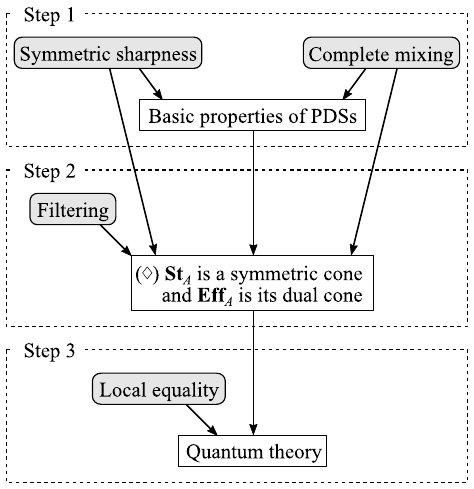}
 \caption{A flow chart of our derivation of quantum theory.}
 \label{fig:flow}
\end{figure}

We will briefly explain each step.

\subsubsection*{Step~1}

Firstly, we consider an OPT with
the \poslink{pos:SymSharp}{symmetric sharpness} and
\poslink{pos:CompletelyMixed}{complete mixing} postulates
and derive some basic properties of PDSs.
For instance, we show that, to every MPDS, there corresponds one and only one maximal measurement
and that each PDS has certain symmetries.

\subsubsection*{Step~2}

Secondly, we use the \poslink{pos:Filter}{filtering postulate}
in addition to the first two postulates.
In this step, we first show that every state has a spectral decomposition.
Using this result, we then show that every state space is a symmetric cone
and that the effect space of $A$ is the dual cone of the state space of $A$.

It has been shown that any symmetric cone is the cone of squares of a certain
EJA \cite{Koe-1957,Vin-1960}.
Since the state space $\St_A$ is a symmetric cone,
$\St_A$ is the cone of squares of some EJA $\EJA_A$.
We will give the definition of EJAs and describe their basic properties in Sec.~\ref{sec:derive_EJA}.

\subsubsection*{Step~3}

Lastly, we derive Properties~(\hyperlink{property:A}{A})--(\hyperlink{property:C}{C}) from
Property~(\hyperlink{property:diamondsuit}{$\diamondsuit$}) and
the \poslink{pos:ProcEq}{local equality postulate}.
This implies that quantum theory can be derived from the four postulates.
As will be presented later,
a necessary and sufficient condition that a state space $\St_A$ is isomorphic to
$\bigoplus_{i=1}^k \mS_+(\Complex^{n_i})$ (i.e., Property~(\hyperlink{property:A}{A}) holds)
is that the corresponding EJA $\EJA_A$ is isomorphic to
$\bigoplus_{i=1}^k \mS(\Complex^{n_i})$.
To derive Property~(\hyperlink{property:A}{A}),
we show $\EJA_A \cong \bigoplus_{i=1}^k \mS(\Complex^{n_i})$
using the correspondence between the dimensions and the ranks of EJAs.
Subsequently, we derive Properties~(\hyperlink{property:B}{B}) and (\hyperlink{property:C}{C}).

The results in this step indicate that
an OPT satisfying Property~(\hyperlink{property:diamondsuit}{$\diamondsuit$})
and the \poslink{pos:ProcEq}{local equality postulate} is quantum theory.
Also, as we already mentioned in Subsec.~\ref{subsec:Postulates},
quantum theory satisfies the four postulates that we propose.
Thus, one can easily see that, in an OPT $\mT$, the following statements are all equivalent:
\begin{enumerate}[label=(\alph*)]
 \item $\mT$ satisfies the four postulates described in Subsec.~\ref{subsec:Postulates}.
 \item $\mT$ satisfies Property~(\hyperlink{property:diamondsuit}{$\diamondsuit$}) and the
       \poslink{pos:ProcEq}{local equality postulate}.
 \item $\mT$ is quantum theory.
\end{enumerate}

\section{Basic properties of a PDS} \label{sec:derive_PDS}

In this section, we present an OPT satisfying
the \poslink{pos:SymSharp}{symmetric sharpness}
and \poslink{pos:CompletelyMixed}{complete mixing} postulates
and derive some basic properties of PDSs.
We list here the main results of our study in this section:
\begin{enumerate}
 \item To every MPDS, there corresponds one and only one maximal measurement
       (Lemma~\ref{lemma:MPDSMeas}).
 \item Every MPDS of system $A$ has exactly $\NA$ elements (Lemma~\ref{lemma:Rank}).
       Also, the sum of all elements is the same for every MPDS of $A$
       (Lemma~\ref{lemma:Chi}).
 \item For every PDS $\Phi \coloneqq \{ \cket{\phi_i} \}_{i=1}^k$,
       the kernel of $\cket{\chi_\Phi} \coloneqq \sum_{i=1}^k \cket{\phi_i}$
       is equal to the set of all states $\cket{\rho}$ satisfying
       $\craket{\chi_\Phi^\dagger|\rho} = 0$,
       where $\cra{\chi_\Phi^\dagger} \coloneqq \sum_{i=1}^k \cra{\phi_i^\dagger}$
       (Lemma~\ref{lemma:PerpPDS}).
       Also, for every PDS $\Phi$, the face of $\cket{\chi_\Phi}$
       is equal to the set of all states $\cket{\rho}$ satisfying
       $\craket{\chi_\Phi^\dagger|\rho} = \craket{\gdis|\rho}$
       (Lemma~\ref{lemma:FacePhi}).
\end{enumerate}

\subsection{Results about symmetric sharpness}

From the \poslink{pos:SymSharp}{symmetric sharpness postulate},
each maximal effect corresponds to one and only one normalized pure state.
Let us define $\dagger$ as the map $\StNP_A \ni \cket{\phi} \mapsto
\cket{\phi}^\dagger \coloneqq \cra{\phi^\dagger} \in \EffM_A$,
where $\cra{\phi^\dagger}$ satisfies $\craket{\phi^\dagger|\phi} = 1$.
By a slight abuse of notation, we will use the same symbol $\dagger$
for the inverse map $\EffM_A \ni \cra{e} \mapsto
\cra{e}^\dagger \coloneqq \cket{e^\dagger} \in \StNP_A$,
where $\cket{e^\dagger}$ satisfies $\craket{e|e^\dagger} = 1$.
We can represent as
\begin{alignat}{1}
 \includegraphics[scale=1.0]{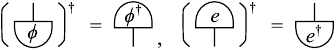} ~\raisebox{.2em}{.}
\end{alignat}

The following lemma ensures that, to every MPDS, there corresponds one and only one
maximal measurement.
\begin{lemma} \label{lemma:MPDSMeas}
 For any MPDS $\Phi \coloneqq \{ \cket{\phi_i} \}_{i=1}^n$,
 there exists a unique maximal measurement
 $\Pi \coloneqq \{ \cra{\phi_i^\dagger} \}_{i=1}^n$ that perfectly distinguishes between $\Phi$.
 Furthermore, $\Phi$ is the unique MPDS that is perfectly distinguished by $\Pi$.
\end{lemma}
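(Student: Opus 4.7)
The plan is to split the statement into its two assertions. For the first---existence and uniqueness of the maximal measurement---I would begin with the fact that $\Phi$ is a PDS, so some measurement $\{\cra{e'_j}\}_{j=1}^m$ with $m\ge n$ satisfies $\craket{e'_i|\phi_i}=1$ for $i\le n$. Collapsing the surplus effects by setting $\cra{e_n}:=\sum_{j\ge n}\cra{e'_j}$ and $\cra{e_i}:=\cra{e'_i}$ for $i<n$ yields, via Eq.~\eqref{eq:process_eff_sum}, a measurement of size exactly $|\Phi|$ that still perfectly distinguishes $\Phi$; since its size equals $|\Phi|$, it is maximal by definition, so each of its effects is a maximal effect. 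The \poslink{pos:SymSharp}{symmetric sharpness postulate} applied to each $\cket{\phi_i}$ then forces $\cra{e_i}=\cra{\phi_i^\dagger}$, and since the identical argument applies to any maximal measurement distinguishing $\Phi$, uniqueness follows as well.

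For the second assertion, suppose $\Phi'=\{\cket{\psi_j}\}_{j=1}^k$ is another MPDS perfectly distinguished by $\Pi$, so $k\le n$. If $k<n$, then $\sum_j\craket{\phi_j^\dagger|\psi_i}=\craket{\gdis|\psi_i}=1$ together with $\craket{\phi_i^\dagger|\psi_i}=1$ forces $\craket{\phi_j^\dagger|\psi_i}=0$ for $j\ne i$; in particular $\craket{\phi_n^\dagger|\psi_i}=0$ for every $i\le k$, so $\Phi'\cup\{\cket{\phi_n}\}$ is still perfectly distinguished by $\Pi$, contradicting the MPDS property of $\Phi'$. Hence $k=n$ and $\craket{\phi_i^\dagger|\psi_i}=1$ for each $i$. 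The symmetry clause of \poslink{pos:SymSharp}{symmetric sharpness} promotes this to $\craket{\psi_i^\dagger|\phi_i}=1$, whereupon its uniqueness clause applied to $\cket{\phi_i}$ yields $\cra{\psi_i^\dagger}=\cra{\phi_i^\dagger}$ for every $i$.

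The delicate step---and, I expect, the main obstacle---is to upgrade the effect-level equality $\cra{\psi_i^\dagger}=\cra{\phi_i^\dagger}$ to the state-level equality $\cket{\psi_i}=\cket{\phi_i}$; equivalently, to establish injectivity of the dagger map $\StNP_A\to\EffM_A$. My plan is to argue by contradiction: assuming $\cket{\psi_i}\ne\cket{\phi_i}$, I form the mixed state $\cket{\rho}:=\tfrac{1}{2}\cket{\phi_i}+\tfrac{1}{2}\cket{\psi_i}$ and exploit the orthogonalities $\craket{\phi_j^\dagger|\rho}=0$ for $j\ne i$ through Lemma~\ref{lemma:FaceZero} to constrain the pure states appearing in $\Face_\cket{\rho}$, then combine this with the \poslink{pos:CompletelyMixed}{complete mixing postulate} and the MPDS maximality of $\Phi$ to extract a contradiction. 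This is the only part of the argument that requires more than the symmetric sharpness postulate alone, since the postulates as stated only guarantee uniqueness of the maximal effect attached to a given normalized pure state, not the converse.
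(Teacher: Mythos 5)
Your first claim (existence and uniqueness of the maximal measurement) is correct and is essentially the paper's argument; the explicit construction of a size-$n$ measurement by summing the surplus effects $\cra{e'_j}$, $j\ge n$, onto $\cra{e'_n}$ is a detail the paper leaves implicit, and your verification that the collapsed effect still gives $\craket{e_n|\phi_n}=1$ is sound. Your derivation of $k=n$ for a second MPDS $\Phi'$ distinguished by $\Pi$ is also fine and is needed, since Lemma~\ref{lemma:Rank} is not yet available at this point.

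The genuine gap is in the last step. The paper's proof of the second claim is one line precisely because it invokes the \emph{converse} uniqueness: $\cket{\phi_i}$ is the unique normalized pure state satisfying $\craket{\phi_i^\dagger|\phi_i}=1$, so $\craket{\phi_i^\dagger|\psi_i}=1$ immediately forces $\cket{\psi_i}=\cket{\phi_i}$. This converse is not something the paper derives; it is declared to be part of the content of the \poslink{pos:SymSharp}{symmetric sharpness postulate} in the paragraph immediately following its statement (``each $\cra{e} \in \EffM_A$ corresponds to one and only one $\cket{\phi} \in \StNP_A$ such that $\craket{e|\phi} = 1$''), i.e.\ the postulate is read as a bijective correspondence $\StNP_A \leftrightarrow \EffM_A$. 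By insisting on the one-directional reading, you route yourself through $\cra{\psi_i^\dagger}=\cra{\phi_i^\dagger}$ and then need injectivity of $\dagger$ on $\StNP_A$ --- which, given the forward uniqueness, is logically equivalent to the very converse you declined to assume. Your proposed resolution is only a plan, not an argument: forming $\cket{\rho}=\tfrac12\cket{\phi_i}+\tfrac12\cket{\psi_i}$ and applying Lemma~\ref{lemma:FaceZero} tells you that every pure state in $\Face_\cket{\rho}$ is annihilated by $\cra{\phi_j^\dagger}$ for $j\ne i$, but the \poslink{pos:CompletelyMixed}{complete mixing postulate} then says nothing useful ($\cket{\rho}$ \emph{is} killed by a maximal effect whenever $\NA\ge 2$), and no contradiction is extracted. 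Moreover, even if such an argument could be completed, it would import the complete mixing postulate into a lemma that the paper proves --- and places --- under ``Results about symmetric sharpness'' alone. The clean fix is to adopt the paper's bijective reading of the postulate, after which your detour via the symmetry clause is unnecessary.
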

\begin{proof}
 There exists a maximal measurement, denoted by $\Pi' \coloneqq \{ \cra{e_i} \}_{i=1}^n$,
 that perfectly distinguishes between $\Phi$.
 From the \poslink{pos:SymSharp}{symmetric sharpness postulate},
 $\cra{\phi_i^\dagger}$ is the unique maximal effect that satisfies
 $\craket{\phi_i^\dagger|\phi_i} = 1$,
 which indicates $\cra{e_i} = \cra{\phi_i^\dagger}$.
 Thus, we have $\Pi' = \Pi$.
 Moreover, since $\cket{\phi_i}$ is the unique normalized pure state that satisfies
 $\craket{\phi_i^\dagger|\phi_i} = 1$,
 $\Phi$ is obviously the unique MPDS that is perfectly distinguished by $\Pi$.
\end{proof}

\begin{lemma} \label{lemma:Perp}
 We have that, for $\cket{\phi} \in \StNP_A$ and $\cket{\rho} \in \St_A$,
 \begin{alignat}{1}
  \cket{\rho} \in \ker_\cket{\phi} &\quad\Leftrightarrow\quad \craket{\phi^\dagger|\rho} = 0.
 \end{alignat}
\end{lemma}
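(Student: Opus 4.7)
The two directions are handled separately. For the ``if'' direction, assuming $\craket{\phi^\dagger|\rho} = 0$, I would exhibit the explicit distinguishing measurement $\{\cra{\phi^\dagger}, \cra{\gdis} - \cra{\phi^\dagger}\}$: maximal effects are feasible (by $\EffM_A \subseteq \EffF_A$ combined with Eq.~\eqref{eq:process_eff_feasible}), so Eq.~\eqref{eq:process_EffF_nas} promotes this pair to a valid measurement. Direct computation gives $\craket{\phi^\dagger|\phi} = \craket{\gdis|\phi} = 1$ and $(\cra{\gdis} - \cra{\phi^\dagger}) \cket{\rho} = \craket{\gdis|\rho}$, witnessing $\cket{\rho} \in \ker_\cket{\phi}$.

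The ``only if'' direction is the substantive part. Given $\cket{\rho} \in \ker_\cket{\phi}$, there is a measurement $\{\cra{e_1}, \cra{e_2}\}$ (two-element without loss of generality, by grouping outcomes) with $\craket{e_1|\phi} = 1$ and $\craket{e_2|\rho} = \craket{\gdis|\rho}$, whence $\craket{e_1|\rho} = 0$. My plan is to reduce to the pure-state case by using the conic decomposition $\cket{\rho} = \sum_i c_i \cket{\psi_i}$ with $c_i > 0$ and $\cket{\psi_i} \in \StNP_A$ (available from the excerpt via Carath\'{e}odory's theorem applied after normalizing $\cket{\rho}$). Nonnegativity of each $\craket{e_1|\psi_i}$ combined with $\sum_i c_i \craket{e_1|\psi_i} = 0$ forces $\craket{e_1|\psi_i} = 0$ for every $i$, so each $\{\cket{\phi}, \cket{\psi_i}\}$ is itself a two-element PDS (distinguished by $\{\cra{e_1}, \cra{\gdis} - \cra{e_1}\}$); in particular $\cket{\psi_i} \neq \cket{\phi}$ because $\craket{e_1|\phi} = 1$.

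The decisive step is then to extend each pair $\{\cket{\phi}, \cket{\psi_i}\}$ to an MPDS $\{\cket{\phi}, \cket{\psi_i}, \cket{\phi_3}, \ldots, \cket{\phi_n}\} \in \MPDS_A$, which is possible by the finite dimensionality of $\Vec_A$ (perfectly distinguishable states are linearly independent, so PDS sizes are bounded and the enlargement terminates). Lemma~\ref{lemma:MPDSMeas} then pins down the unique maximal measurement distinguishing this MPDS as $\{\cra{\phi^\dagger}, \cra{\psi_i^\dagger}, \cra{\phi_3^\dagger}, \ldots, \cra{\phi_n^\dagger}\}$, and perfect distinguishability immediately yields $\craket{\phi^\dagger|\psi_i} = 0$. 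Summing over $i$ gives $\craket{\phi^\dagger|\rho} = \sum_i c_i \craket{\phi^\dagger|\psi_i} = 0$, as required.

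The main obstacle I anticipate is the MPDS-extension step, which leans on finite dimensionality of $\Vec_A$ together with Lemma~\ref{lemma:MPDSMeas} (whose proof crucially invokes the uniqueness clause of the \poslink{pos:SymSharp}{symmetric sharpness postulate}). The direct attempt to show $\cra{\phi^\dagger} \leq \cra{e_1}$ at the effect level (which would also close the argument) seems harder to establish abstractly, whereas the reduction-plus-extension route converts the problem into one where the rigidity supplied by Lemma~\ref{lemma:MPDSMeas} can be applied. Notably, the \poslink{pos:CompletelyMixed}{complete mixing postulate} is not needed for this particular lemma; only symmetric sharpness and the cone structure of $\St_A$ enter.
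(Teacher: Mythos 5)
Your proof is correct and follows essentially the same route as the paper: the ``if'' direction uses the identical measurement $\{\cra{\phi^\dagger},\cra{\gdis}-\cra{\phi^\dagger}\}$, and the ``only if'' direction reduces to normalized pure states, forms a two-element PDS with $\cket{\phi}$, extends it to an MPDS, and applies Lemma~\ref{lemma:MPDSMeas} to conclude $\craket{\phi^\dagger|\psi_i}=0$. The only (cosmetic) difference is that the paper ranges over pure states in $\Face_{\cket{\rho}}$ via Lemma~\ref{lemma:FaceOne}, whereas you fix a conic decomposition of $\cket{\rho}$ and use positivity of each term directly; your closing observation that the \poslink{pos:CompletelyMixed}{complete mixing postulate} is not needed here also matches the paper's organization.
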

\begin{proof}
 The case $\cket{\rho} = \cket{\emptyset}$ is obvious;
 suppose $\cket{\rho} \neq \cket{\emptyset}$.

 $\Rightarrow$:
 Let $\{ \cra{e_\phi}, \cra{e_\rho} \}$ be a measurement that perfectly distinguishes
 between $\{ \cket{\phi}, \cket{\rho} \}$.
 Arbitrarily choose $\cket{\varphi} \in \StNP_A \cap \Face_{\cket{\rho}}$.
 From $\craket{e_\rho|\rho} = \craket{\gdis|\rho}$ and Lemma~\ref{lemma:FaceOne},
 we have $\craket{e_\rho|\varphi} = \craket{\gdis|\varphi} = 1$.
 Hence, the set of the pure states $\{ \cket{\phi}, \cket{\varphi} \}$ is
 perfectly distinguished by the measurement $\{ \cra{e_\phi}, \cra{e_\rho} \}$ and thus is a PDS.
 (Note that this implies $\NA \ge 2$.)
 Consider an MPDS $\{ \cket{\phi}, \cket{\varphi}, \cket{\psi_1},\ldots,\cket{\psi_k} \}$
 that includes $\cket{\phi}$ and $\cket{\varphi}$, where $k$ is a nonnegative integer.
 From Lemma~\ref{lemma:MPDSMeas},
 $\{ \cra{\phi^\dagger}, \cra{\varphi^\dagger}, \cra{\psi_1^\dagger}, \ldots, \cra{\psi_k^\dagger} \}$
 is a measurement that perfectly distinguishes between this MPDS,
 which indicates $\craket{\phi^\dagger|\varphi} = 0$.
 Since $\craket{\phi^\dagger|\varphi} = 0$ holds for any normalized pure state
 $\cket{\varphi} \in \Face_{\cket{\rho}}$,
 we have $\craket{\phi^\dagger|\rho} = 0$.

 $\Leftarrow$:
 Let $\Pi \coloneqq \{ \cra{\phi^\dagger}, \cra{\gdis} - \cra{\phi^\dagger} \}$.
 Since $\cra{\phi^\dagger}$ is feasible, $\Pi$ is a measurement (see Eq.~\eqref{eq:process_EffF_nas}).
 From $\craket{\phi^\dagger|\phi} = 1$ and
 $[\cra{\gdis} - \cra{\phi^\dagger}] \cket{\rho} = \craket{\gdis|\rho}$,
 $\cket{\phi}$ and $\cket{\rho}$ are perfectly distinguished by $\Pi$.
\end{proof}


\begin{lemma} \label{lemma:SharpMix}
 Consider $\{ c_i \in \Real_+ \}_{i=1}^l$, $\{ d_j \in \Real_+ \}_{j=1}^m$,
 $\{ \cket{\psi_i} \in \StNP_A \}_{i=1}^l$, and $\{ \cket{\phi_j} \in \StNP_A \}_{j=1}^m$.
 Then, we have
 \begin{alignat}{1}
  \includegraphics[scale=1.0]{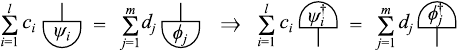} ~\raisebox{.5em}{.}
  \nonumber \\
  \label{eq:derive_sharpmix}
 \end{alignat}
\end{lemma}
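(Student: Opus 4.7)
The plan is to expand both sides of the diagrammatic equation using the distributivity of sequential composition over sums, as captured by equation~\eqref{eq:fsum_u}. Pulling out the nonnegative scalars $c_i$ and $d_j$ and distributing the composition over the two sums, I would reduce the left-hand side to $\sum_{i,j} c_i d_j \craket{\psi_i^\dagger|\phi_j}$. I expect the right-hand side of the figure to be the mirror expression in which the roles of the $\cket{\psi_i}$'s and $\cket{\phi_j}$'s are interchanged, reducing analogously to $\sum_{i,j} c_i d_j \craket{\phi_j^\dagger|\psi_i}$.

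The two expressions are then identified term by term using the second clause of the \poslink{pos:SymSharp}{symmetric sharpness postulate}, which asserts $\craket{\psi_i^\dagger|\phi_j} = \craket{\phi_j^\dagger|\psi_i}$ for every pair of normalized pure states. This is the one substantive step; it is essentially why the lemma is named \emph{SharpMix} — it lifts the pointwise symmetry between a pure state and its sharp effect to arbitrary nonnegative linear combinations (``mixtures'') thereof.

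I do not anticipate a genuine obstacle beyond careful diagrammatic bookkeeping. In particular, the nonnegativity of the $c_i$ and $d_j$ plays no role in the algebra (the identity extends to any real coefficients via the vector space structure of $\Vec_{A \to B}$), and the pure states $\cket{\psi_i}$ and $\cket{\phi_j}$ need not be mutually distinguishable nor form PDSs. The lemma therefore functions as a convenient linearization tool, and I expect it to be the mechanism by which the pointwise statement of Lemma~\ref{lemma:Perp} is later promoted to the statements about $\cket{\chi_\Phi}$ and $\cra{\chi_\Phi^\dagger}$ used in Lemmas~\ref{lemma:PerpPDS} and~\ref{lemma:FacePhi}.
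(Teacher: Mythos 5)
Your proposal proves a true identity, but not the one the lemma asserts. Equation~\eqref{eq:derive_sharpmix} is not an unconditional ``mirror'' identity between two scalars; it is the implication
$\sum_{i=1}^l c_i \cket{\psi_i} = \sum_{j=1}^m d_j \cket{\phi_j} \;\Rightarrow\; \sum_{i=1}^l c_i \cra{\psi_i^\dagger} = \sum_{j=1}^m d_j \cra{\phi_j^\dagger}$,
i.e., the statement that the assignment $\cket{\rho} \mapsto \cra{\rho^\dagger}$ is well defined independently of the chosen pure decomposition --- which is exactly how the lemma is used immediately afterwards, and why the paper can meaningfully remark that the \emph{converse} of Eq.~\eqref{eq:derive_sharpmix} need not hold (a remark that would be vacuous for a symmetric unconditional identity). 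Your argument never invokes the hypothesis that the two mixtures are equal as states, and its conclusion is an equality of numbers, $\sum_{i,j} c_i d_j \craket{\psi_i^\dagger|\phi_j} = \sum_{i,j} c_i d_j \craket{\phi_j^\dagger|\psi_i}$, rather than an equality of effects; so as written it does not establish the lemma.

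That said, your key ingredient --- the second clause of the \poslink{pos:SymSharp}{symmetric sharpness postulate} combined with linearity --- is precisely what the paper uses; what is missing is how it is deployed. The paper tests both candidate effects against an arbitrary $\cket{\varphi} \in \StNP_A$: by symmetric sharpness, $\sum_i c_i \craket{\psi_i^\dagger|\varphi} = \sum_i c_i \craket{\varphi^\dagger|\psi_i} = \cra{\varphi^\dagger} \circ \left[ \sum_i c_i \cket{\psi_i} \right]$; the hypothesis then lets one replace $\sum_i c_i \cket{\psi_i}$ by $\sum_j d_j \cket{\phi_j}$ under $\cra{\varphi^\dagger}$, and flipping back with symmetric sharpness gives $\sum_j d_j \craket{\phi_j^\dagger|\varphi}$. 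Since every state is a nonnegative combination of normalized pure states, the two effects agree on all of $\St_A$ and are therefore equal by Eq.~\eqref{eq:eff_unique}. Your peripheral observations (the coefficients may be taken real; no distinguishability among the $\cket{\psi_i}$ and $\cket{\phi_j}$ is needed) are correct, but the lemma's downstream role is to make $\cra{\rho^\dagger}$ and $\cra{\chi_\Phi^\dagger}$ well defined, not to linearize Lemma~\ref{lemma:Perp}.
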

\begin{proof}
 From the \poslink{pos:SymSharp}{symmetric sharpness postulate},
 we have that, for any $\cket{\varphi} \in \StNP_A$,
 \begin{alignat}{1}
  \includegraphics[scale=1.0]{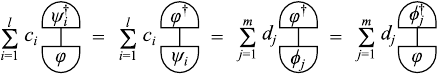} ~\raisebox{.5em}{.}
  \nonumber \\
  \label{eq:derive_sharpmix_proof}
 \end{alignat}
 Since any state can be written as a weighted sum of
 normalized pure states,
 Eq.~\eqref{eq:derive_sharpmix_proof} gives
 $\sum_{i=1}^l c_i \craket{\psi_i^\dagger|\rho} = \sum_{j=1}^m d_j \craket{\phi_j^\dagger|\rho}$
 for any $\cket{\rho} \in \St_A$.
 Therefore, $\sum_{i=1}^l c_i \cra{\psi_i^\dagger} = \sum_{j=1}^m d_j \cra{\phi_j^\dagger}$
 holds.
\end{proof}

For every $\cket{\rho} \in \St_A$, which can be expressed in the form
$\cket{\rho} = \sum_{i=1}^l c_i \cket{\psi_i}$ with $c_1,\ldots,c_l \in \Realpp$
and $\cket{\psi_1},\ldots,\cket{\psi_l} \in \StNP_A$, $\cra{\rho^\dagger} \in \Eff_A$ is defined as
\begin{alignat}{1}
 \includegraphics[scale=1.0]{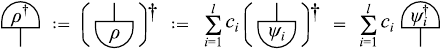} ~\raisebox{.5em}{.}
 \label{eq:derive_dagger_rho}
\end{alignat}
$\cra{\emptyset^\dagger} = \cra{\emptyset}$ obviously holds.
Lemma~\ref{lemma:SharpMix} guarantees that $\cra{\rho^\dagger}$ is
uniquely determined, regardless of how $\cket{\rho}$ is decomposed.

Lemma~\ref{lemma:SharpMix} can be immediately generalized to extended states.
Indeed, it is clear that Eq.~\eqref{eq:derive_sharpmix} holds
even if $c_i$ and $d_j$ are any real numbers.
Thus, for any extended state $\cket{\ol{v}} \in \Vec_A$, which is expressed in the form
$\cket{\ol{v}} = \sum_{i=1}^l c_i \cket{\psi_i}$ with $c_1,\ldots,c_l \in \Real$
and $\cket{\psi_1},\ldots,\cket{\psi_l} \in \StNP_A$,
we can define
$\cra{\ol{v}^\dagger} \coloneqq \cket{\ol{v}}^\dagger \coloneqq
\sum_{i=1}^l c_i \cra{\psi_i^\dagger}$.
Clearly, $\cra{\ol{v}^\dagger}$ is uniquely determined,
regardless of how $\cket{\ol{v}}$ is decomposed.
One can easily verify that the map $\dagger:\Vec_A \ni \cket{\ol{v}} \mapsto
\cket{\ol{v}}^\dagger \in \Vec_A^*$ is linear.
We should note that the converse of Eq.~\eqref{eq:derive_sharpmix} does not
necessarily hold.

For every PDS $\Phi \coloneqq \{ \cket{\phi_i} \}_{i=1}^k \in \PDS_A$,
$\cket{\chi_\Phi}$ is defined as
\begin{alignat}{1}
 \includegraphics[scale=1.0]{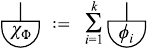} ~\raisebox{.5em}{.}
 \label{eq:derive_chi}
\end{alignat}
Since $\craket{\gdis|\chi_\Phi} = k$ holds, $\cket{\chi_\Phi}$ is not feasible
if $k$ is larger than 1.

\begin{lemma} \label{lemma:ChiOne}
 For any $\Phi \in \MPDS_A$ and $\cra{e} \in \EffM_A$, we have
 \begin{alignat}{1}
  \includegraphics[scale=1.0]{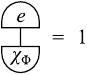} ~\raisebox{1.0em}{.}
 \end{alignat}
\end{lemma}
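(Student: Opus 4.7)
The plan is to reduce the claim $\craket{e|\chi_\Phi}=1$ to three facts already in hand: the duality $\dagger$ between normalized pure states and maximal effects, Lemma~\ref{lemma:MPDSMeas}, and the normalization of states on which $\cra{\gdis}$ evaluates to $1$. Write $\Phi=\{\cket{\phi_i}\}_{i=1}^n$ so that $\cket{\chi_\Phi}=\sum_{i=1}^n\cket{\phi_i}$ by Eq.~\eqref{eq:derive_chi}. Since $\cra{e}\in\EffM_A$, there exists a maximal measurement containing $\cra{e}$; by Lemma~\ref{lemma:MPDSMeas} such a measurement has the form $\{\cra{\psi_j^\dagger}\}$ for some MPDS $\{\cket{\psi_j}\}$, so $\cra{e}=\cra{\psi^\dagger}$ for a unique $\cket{\psi}\in\StNP_A$.

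Next, I would expand the scalar linearly: $\craket{e|\chi_\Phi}=\sum_{i=1}^n\craket{\psi^\dagger|\phi_i}$. The second half of the \poslink{pos:SymSharp}{symmetric sharpness postulate} lets me flip each summand, giving
\begin{alignat}{1}
 \craket{e|\chi_\Phi}=\sum_{i=1}^n\craket{\phi_i^\dagger|\psi}.
\end{alignat}
Now Lemma~\ref{lemma:MPDSMeas} guarantees that $\{\cra{\phi_i^\dagger}\}_{i=1}^n$ is a (maximal) measurement of $A$, so by Eq.~\eqref{eq:process_eff_sum} one has $\sum_{i=1}^n\cra{\phi_i^\dagger}=\cra{\gdis_A}$. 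Therefore the right-hand side becomes $\craket{\gdis|\psi}$, which equals $1$ because $\cket{\psi}$ is normalized (see Eq.~\eqref{eq:process_state_determine}).

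There is no real obstacle here; the proof is essentially a bookkeeping exercise that assembles symmetric sharpness, Lemma~\ref{lemma:MPDSMeas}, and normalization. The only subtlety worth flagging is that the MPDS $\Phi$ associated with $\cket{\chi_\Phi}$ and the MPDS $\{\cket{\psi_j}\}$ associated with $\cra{e}$ need not coincide or share any structural compatibility; symmetric sharpness is exactly what allows the mismatch between these two MPDSs to be bypassed, by transferring the computation from $\sum_i\craket{\psi^\dagger|\phi_i}$ (which a priori looks like it depends on both) to $\sum_i\craket{\phi_i^\dagger|\psi}$, where the sum collapses against the single normalized vector $\cket{\psi}$ via the deterministic-effect relation.
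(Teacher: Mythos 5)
Your proof is correct and follows essentially the same route as the paper's: expand $\craket{e|\chi_\Phi}$ as a sum over the $\cket{\phi_i}$, use the symmetry part of the \poslink{pos:SymSharp}{symmetric sharpness postulate} to rewrite each term as $\craket{\phi_i^\dagger|e^\dagger}$, and collapse the sum via $\sum_i\cra{\phi_i^\dagger}=\cra{\gdis}$ from Lemma~\ref{lemma:MPDSMeas} together with normalization of $\cket{e^\dagger}$. The only cosmetic difference is that you obtain $\cra{e}=\cra{\psi^\dagger}$ by routing through the maximal measurement containing $\cra{e}$, whereas the paper invokes the $\dagger$ bijection between $\EffM_A$ and $\StNP_A$ directly; both are valid.
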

\begin{proof}
 Let $\Phi \coloneqq \{ \cket{\phi_i} \}_{i=1}^n$;
 then, we have
 \begin{alignat}{1}
  \includegraphics[scale=1.0]{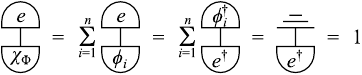} ~\raisebox{1.0em}{.}
 \end{alignat}
 The second equality follows from the \poslink{pos:SymSharp}{symmetric sharpness postulate}.
 The third equality follows from $\sum_{i=1}^n \cra{\phi_i^\dagger} = \cra{\gdis}$,
 which is obtained by the fact that, from Lemma~\ref{lemma:MPDSMeas},
 $\{ \cra{\phi_i^\dagger} \}_{i=1}^n$ is a maximal measurement.
\end{proof}

\begin{lemma} \label{lemma:Rank}
 Every MPDS of system $A$ has exactly $\NA$ elements.
\end{lemma}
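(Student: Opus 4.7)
The plan is to prove that any two MPDSs of $A$ have the same cardinality; since $\NA$ is defined as the \emph{maximum} size of an MPDS, this immediately forces every MPDS to have exactly $\NA$ elements. So let $\Phi \coloneqq \{\cket{\phi_i}\}_{i=1}^n$ and $\Psi \coloneqq \{\cket{\psi_j}\}_{j=1}^N$ be any two MPDSs of $A$, and I will show $n = N$ by a double-counting argument on the matrix of overlaps $\craket{\phi_i^\dagger|\psi_j}$.

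First I would invoke Lemma~\ref{lemma:MPDSMeas} twice: the unique maximal measurement perfectly distinguishing $\Phi$ is $\{\cra{\phi_i^\dagger}\}_{i=1}^n$, and similarly for $\Psi$. Because these are measurements, Eq.~\eqref{eq:process_eff_sum} yields the key identities
\begin{alignat}{1}
 \sum_{i=1}^n \cra{\phi_i^\dagger} \;=\; \cra{\gdis} \;=\; \sum_{j=1}^N \cra{\psi_j^\dagger}.
\end{alignat}

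Next I would compute the double sum $S \coloneqq \sum_{i=1}^n \sum_{j=1}^N \craket{\phi_i^\dagger|\psi_j}$ in two orders. Summing over $i$ first and using the left identity above gives $S = \sum_j \craket{\gdis|\psi_j} = N$, since each $\cket{\psi_j}$ is a normalized state. For the other order, I would apply the \poslink{pos:SymSharp}{symmetric sharpness postulate}, which asserts $\craket{\phi_i^\dagger|\psi_j} = \craket{\psi_j^\dagger|\phi_i}$; summing over $j$ first and using the right identity then yields $S = \sum_i \craket{\gdis|\phi_i} = n$. Thus $n = N$, so all MPDSs of $A$ have equal cardinality, and this common value must be $\NA$.

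There is essentially no obstacle to this argument: it is a clean symmetry-plus-double-counting computation whose two ingredients (the dagger-swap identity and the fact that the dagger images of an MPDS form a measurement summing to $\cra{\gdis}$) are both already in hand. The only subtlety worth flagging is that $\NA \ge 1$ is guaranteed by the existence of normalized states, so an MPDS realizing the maximum does exist and can play the role of $\Psi$ in the comparison.
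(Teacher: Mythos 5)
Your proposal is correct and is essentially the paper's own argument: the paper fixes an MPDS of maximal size $\NA$, uses Lemma~\ref{lemma:MPDSMeas} to get $\sum_i \cra{\phi_i^\dagger} = \cra{\gdis}$, and evaluates $\craket{\gdis|\chi_\Psi}$ two ways, with your dagger-swap double count packaged as Lemma~\ref{lemma:ChiOne} ($\craket{e|\chi_\Psi}=1$ for every maximal effect $\cra{e}$). The only cosmetic difference is that you unpack that intermediate lemma inline rather than citing it.
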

\begin{proof}
 There exists an MPDS $\Phi \coloneqq \{ \cket{\phi_i} \}_{i=1}^\NA \in \MPDS_A$
 with $|\Phi| = \NA$.
 From Lemma~\ref{lemma:ChiOne},
 for any $\{ \cket{\psi_i} \}_{i=1}^n \in \MPDS_A$, we have
 \begin{alignat}{1}
  \includegraphics[scale=1.0]{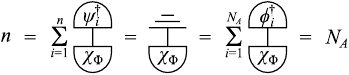} ~\raisebox{1.0em}{.}
 \end{alignat}
\end{proof}


\begin{lemma} \label{lemma:PDSChi}
 For any $\Phi \in \PDS_A$, both $\cra{\chi_\Phi^\dagger}$ and
 $\cra{\gdis_A} - \cra{\chi_\Phi^\dagger}$ are feasible effects.
\end{lemma}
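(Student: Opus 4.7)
The plan is to reduce this to the statement $\cra{\chi_\Phi^\dagger} \le \cra{\gdis_A}$, which by Eq.~\eqref{eq:process_eff_feasible} implies $\cra{\chi_\Phi^\dagger} \in \EffF_A$, and then invoke Eq.~\eqref{eq:process_EffF_nas} to get $\cra{\gdis_A} - \cra{\chi_\Phi^\dagger} \in \EffF_A$ automatically. So the real content is the inequality $\cra{\chi_\Phi^\dagger} \le \cra{\gdis_A}$.

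First I would extend the given PDS $\Phi = \{\cket{\phi_i}\}_{i=1}^k$ to an MPDS. This extension exists: if $\Phi$ is not already maximal, we can by definition adjoin some normalized pure state $\cket{\psi}$ keeping perfect distinguishability, and iterate. The procedure must terminate because any MPDS of $A$ has exactly $\NA$ elements by Lemma~\ref{lemma:Rank}, so we can adjoin at most $\NA - k$ times before reaching an MPDS $\Phi' = \{\cket{\phi_1},\ldots,\cket{\phi_k},\cket{\phi_{k+1}},\ldots,\cket{\phi_\NA}\}$.

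Next I would apply Lemma~\ref{lemma:MPDSMeas} to $\Phi'$: the unique maximal measurement distinguishing $\Phi'$ is $\{\cra{\phi_i^\dagger}\}_{i=1}^\NA$. Since a measurement must sum to $\cra{\gdis_A}$ (Eq.~\eqref{eq:process_eff_sum}), we get $\sum_{i=1}^\NA \cra{\phi_i^\dagger} = \cra{\gdis_A}$. Therefore
\begin{alignat}{1}
 \cra{\gdis_A} - \cra{\chi_\Phi^\dagger}
 = \sum_{i=1}^\NA \cra{\phi_i^\dagger} - \sum_{i=1}^k \cra{\phi_i^\dagger}
 = \sum_{i=k+1}^\NA \cra{\phi_i^\dagger} \in \Eff_A,
\end{alignat}
since the right-hand side is a sum of effects and process spaces are convex cones. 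This exhibits $\cra{\chi_\Phi^\dagger} \le \cra{\gdis_A}$, which by Eq.~\eqref{eq:process_eff_feasible} yields $\cra{\chi_\Phi^\dagger} \in \EffF_A$, and then Eq.~\eqref{eq:process_EffF_nas} gives $\cra{\gdis_A} - \cra{\chi_\Phi^\dagger} \in \EffF_A$.

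The only step that requires some care is the extension of $\Phi$ to an MPDS; everything else is bookkeeping using the previously established characterizations of feasible effects. The extension argument is essentially finite combinatorics guarded by Lemma~\ref{lemma:Rank}, so I do not expect a substantial obstacle.
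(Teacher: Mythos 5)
Your proof is correct and follows essentially the same route as the paper's: extend $\Phi$ to an MPDS, invoke Lemma~\ref{lemma:MPDSMeas} to get $\sum_{i=1}^\NA \cra{\phi_i^\dagger} = \cra{\gdis_A}$, and conclude feasibility of both effects from the coarse-graining, with Eq.~\eqref{eq:process_EffF_nas} handling the complement. The only difference is that you spell out the MPDS-extension and the passage through Eq.~\eqref{eq:process_eff_feasible} explicitly, which the paper leaves implicit.
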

\begin{proof}
 Let $\Phi \coloneqq \{ \cket{\phi_i} \}_{i=1}^k$.
 Consider an MPDS $\Phi_\mathrm{ex} \coloneqq \{ \cket{\phi_i} \}_{i=1}^\NA \supseteq \Phi$.
 Since $\{ \cra{\phi_i^\dagger} \}_{i=1}^\NA$ is a measurement from Lemma~\ref{lemma:MPDSMeas},
 $\cra{\chi_\Phi^\dagger}$ is a feasible effect.
 Also, from Eq.~\eqref{eq:process_EffF_nas}, $\cra{\gdis_A} - \cra{\chi_\Phi^\dagger}$ is
 a feasible effect.
\end{proof}

The face of a PDS $\Phi \in \PDS_A$ is defined as
$\Face_\Phi \coloneqq \Face_\cket{\chi_\Phi}$.
Also, the kernel of $\Phi$ is defined as
$\ker_\Phi \coloneqq \ker_\cket{\chi_\Phi}$.
They can also be expressed by
\begin{alignat}{2}
 \Face_\Phi &= \{ \cket{\rho} \in \St_A : \exists \delta \in \Realpp,
 \delta \cket{\rho} \le \cket{\chi_\Phi} \}, \nonumber \\
 \ker_\Phi &= \{ \cket{\rho} \in \St_A : \cket{\rho}
 ~\mbox{is perfectly distinguishable from}~ \cket{\chi_\Phi} \}.
 \label{eq:derive_face_ker}
\end{alignat}
Note that $\Face_\Phi$ and $\ker_\Phi$ depend only on $\cket{\chi_\Phi}$,
which implies that,
for any $\Phi, \Psi \in \PDS_A$ satisfying $\cket{\chi_\Phi} = \cket{\chi_\Psi}$,
$\Face_\Phi = \Face_\Psi$ and $\ker_\Phi = \ker_\Psi$ hold.

Lemma~\ref{lemma:Perp} can be generalized as follows:
\begin{lemma} \label{lemma:PerpPDS}
 We have that, for $\Phi \in \PDS_A$ and $\cket{\rho} \in \St_A$,
 \begin{alignat}{1}
  \cket{\rho} \in \ker_\Phi &\quad\Leftrightarrow\quad \craket{\chi_\Phi^\dagger|\rho} = 0.
 \end{alignat}
\end{lemma}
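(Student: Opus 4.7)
My plan is to prove the two directions separately, using the already-established Lemma~\ref{lemma:Perp} (the single-pure-state version) together with the facial structure provided by Lemmas~\ref{lemma:FaceZero}, \ref{lemma:FaceOne}, and \ref{lemma:PDSChi}.

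For the ``$\Leftarrow$'' direction, suppose $\craket{\chi_\Phi^\dagger|\rho}=0$. The natural distinguishing measurement to try is $\Pi \coloneqq \{\cra{\chi_\Phi^\dagger},\,\cra{\gdis}-\cra{\chi_\Phi^\dagger}\}$, and Lemma~\ref{lemma:PDSChi} together with Eq.~\eqref{eq:process_eff_sum} already tells us that $\Pi$ is indeed a measurement. It remains to check $\craket{\chi_\Phi^\dagger|\chi_\Phi}=\craket{\gdis|\chi_\Phi}$. Writing $\Phi=\{\cket{\phi_i}\}_{i=1}^k$ and embedding $\Phi$ into an MPDS, Lemma~\ref{lemma:MPDSMeas} yields that $\{\cra{\phi_i^\dagger}\}$ is the associated maximal measurement, which forces $\craket{\phi_i^\dagger|\phi_j}=\delta_{i,j}$ for $i,j\le k$. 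Summing gives $\craket{\chi_\Phi^\dagger|\chi_\Phi}=k=\craket{\gdis|\chi_\Phi}$. Combined with the hypothesis $\craket{\chi_\Phi^\dagger|\rho}=0$, this shows that $\Pi$ perfectly distinguishes $\cket{\chi_\Phi}$ from $\cket{\rho}$, so $\cket{\rho}\in\ker_\Phi$.

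For the ``$\Rightarrow$'' direction, suppose $\cket{\rho}\in\ker_\Phi$, and let $\{\cra{e_\chi},\cra{e_\rho}\}$ be a measurement that perfectly distinguishes $\cket{\chi_\Phi}$ from $\cket{\rho}$, so that $\craket{e_\chi|\chi_\Phi}=\craket{\gdis|\chi_\Phi}$ and $\craket{e_\rho|\rho}=\craket{\gdis|\rho}$. The key observation is that each $\cket{\phi_i}$ satisfies $\cket{\phi_i}\le\cket{\chi_\Phi}$ and hence lies in $\Face_\cket{\chi_\Phi}$. Applying Lemma~\ref{lemma:FaceOne} to $\cra{e_\chi}$ then gives $\craket{e_\chi|\phi_i}=\craket{\gdis|\phi_i}=1$ for every $i$. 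Thus the same measurement $\{\cra{e_\chi},\cra{e_\rho}\}$ also perfectly distinguishes $\cket{\phi_i}$ from $\cket{\rho}$, i.e.\ $\cket{\rho}\in\ker_\cket{\phi_i}$. By Lemma~\ref{lemma:Perp}, $\craket{\phi_i^\dagger|\rho}=0$ for every $i$, and summing over $i$ yields $\craket{\chi_\Phi^\dagger|\rho}=0$.

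I do not expect a serious obstacle: once one sees that $\Pi$ is a legitimate measurement (supplied by Lemma~\ref{lemma:PDSChi}) and that $\cket{\phi_i}$ inherits the distinguishing effect from $\cket{\chi_\Phi}$ via the facial property (Lemma~\ref{lemma:FaceOne}), both implications reduce to the single-state case already handled in Lemma~\ref{lemma:Perp}. The only thing to be slightly careful about is the trivial case $\cket{\rho}=\cket{\emptyset}$, which is immediate on both sides, and the verification that the orthogonality $\craket{\phi_i^\dagger|\phi_j}=\delta_{i,j}$ (needed for the forward computation of $\craket{\chi_\Phi^\dagger|\chi_\Phi}$) is invoked via the extension of $\Phi$ to an MPDS.
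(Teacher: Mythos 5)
Your proposal is correct and follows essentially the same route as the paper's own proof: the forward direction transfers the distinguishing effect from $\cket{\chi_\Phi}$ to each $\cket{\phi_i}$ via Lemma~\ref{lemma:FaceOne} and then invokes Lemma~\ref{lemma:Perp}, while the backward direction uses the measurement $\{\cra{\chi_\Phi^\dagger},\cra{\gdis}-\cra{\chi_\Phi^\dagger}\}$ (feasible by Lemma~\ref{lemma:PDSChi}) together with $\craket{\phi_i^\dagger|\phi_j}=\delta_{i,j}$. The only cosmetic difference is that you justify the orthogonality relation explicitly by extending $\Phi$ to an MPDS and citing Lemma~\ref{lemma:MPDSMeas}, which the paper leaves implicit.
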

\begin{proof}
 Let $\Phi \coloneqq \{ \cket{\phi_i} \}_{i=1}^k$.
 The case $k = 0$ is obvious since $\cket{\chi_\Phi} = \cket{\emptyset}$ and
 $\ker_\Phi = \St_A$ hold.
 Suppose $k \ge 1$.

 $\Rightarrow$:
 Let $\Pi \coloneqq \{ \cra{e_\Phi}, \cra{e_\rho} \}$ be a measurement
 that perfectly distinguishes between $\{ \cket{\chi_\Phi}, \cket{\rho} \}$.
 For each $i \in \{1,\ldots,k\}$,
 we have $\cket{\phi_i} \in \Face_\Phi$, and thus
 $\craket{e_\Phi|\phi_i} = \craket{\gdis|\phi_i} = 1$ holds from Lemma~\ref{lemma:FaceOne}.
 This implies that $\Pi$ perfectly distinguishes between $\{ \cket{\phi_i}, \cket{\rho} \}$,
 i.e., from Lemma~\ref{lemma:Perp}, $\craket{\phi_i^\dagger|\rho} = 0$ holds.
 Therefore, $\craket{\chi_\Phi^\dagger|\rho} = \sum_{i=1}^k \craket{\phi_i^\dagger|\rho} = 0$
 holds.

 $\Leftarrow$:
 Since $\cra{\chi_\Phi^\dagger}$ is feasible from Lemma~\ref{lemma:PDSChi},
 $\Pi' \coloneqq \{ \cra{\chi_\Phi^\dagger}, \cra{\gdis} - \cra{\chi_\Phi^\dagger} \}$
 is a measurement.
 We have
 \begin{alignat}{3}
  \craket{\chi_\Phi^\dagger|\chi_\Phi} &= \sum_{i=1}^k \sum_{j=1}^k \craket{\phi_i^\dagger|\phi_j}
  &&= k &&= \craket{\gdis|\chi_\Phi}, \nonumber \\
  [\cra{\gdis} - \cra{\chi_\Phi^\dagger}] \cket{\rho} &= \craket{\gdis|\rho},
 \end{alignat}
 where the second equality in the first line follows from
 $\craket{\phi_i^\dagger|\phi_j} = \delta_{i,j}$.
 Therefore, $\cket{\chi_\Phi}$ and $\cket{\rho}$ are perfectly distinguished by $\Pi'$.
\end{proof}
Lemma~\ref{lemma:Perp} is the special case of this lemma where $\Phi = \{ \cket{\phi} \}$.
From Lemma~\ref{lemma:PerpPDS}, $\ker_\Phi$ is rewritten as
$\ker_\Phi = \left\{ \cket{\rho} \in \St_A : \craket{\chi_\Phi^\dagger|\rho} = 0 \right\}$.
It is easily seen that $\craket{\rho^\dagger|\sigma} = 0$ holds
for any $\cket{\rho} \in \ker_\Phi$ and $\cket{\sigma} \in \Face_\Phi$.

\begin{lemma} \label{lemma:PDSJoin}
 For any $\Phi_1, \Phi_2 \in \PDS_A$ with $\craket{\chi_{\Phi_1}^\dagger|\chi_{\Phi_2}} = 0$,
 $\Phi_1 \cup \Phi_2 \in \PDS_A$ holds.
\end{lemma}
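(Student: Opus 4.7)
The plan is to exhibit an explicit measurement that perfectly distinguishes the elements of $\Phi_1 \cup \Phi_2$. Writing $\Phi_1 = \{\cket{\phi_i}\}_{i=1}^{k_1}$ and $\Phi_2 = \{\cket{\varphi_j}\}_{j=1}^{k_2}$, the first step is to exploit the hypothesis. Since $\cra{\phi_i^\dagger}$ is a feasible effect by Lemma~\ref{lemma:PDSChi} and $\cket{\varphi_j}$ is a state, every summand in the expansion $\craket{\chi_{\Phi_1}^\dagger|\chi_{\Phi_2}} = \sum_{i,j}\craket{\phi_i^\dagger|\varphi_j}$ is non-negative, forcing each $\craket{\phi_i^\dagger|\varphi_j} = 0$; symmetric sharpness then also gives $\craket{\varphi_j^\dagger|\phi_i} = 0$. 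In particular no $\cket{\phi_i}$ can coincide with any $\cket{\varphi_j}$, so $\Phi_1 \cap \Phi_2 = \emptyset$ and $|\Phi_1 \cup \Phi_2| = k_1 + k_2$, and the natural effects $\cra{\phi_i^\dagger}, \cra{\varphi_j^\dagger}$ already produce the correct Kronecker-$\delta$ probabilities on every element of $\Phi_1 \cup \Phi_2$.

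I would then propose the candidate measurement
\begin{equation*}
 \Pi \coloneqq \{\cra{\phi_i^\dagger}\}_{i=1}^{k_1} \cup \{\cra{\varphi_j^\dagger}\}_{j=1}^{k_2} \cup \{\cra{e_0}\}, \quad \cra{e_0} \coloneqq \cra{\gdis} - \cra{\chi_{\Phi_1}^\dagger} - \cra{\chi_{\Phi_2}^\dagger}.
\end{equation*}
The effects sum to $\cra{\gdis}$ by construction, and the maximal effects $\cra{\phi_i^\dagger}, \cra{\varphi_j^\dagger}$ are automatically feasible, so by Eq.~\eqref{eq:process_eff_sum} it suffices to show that the residual $\cra{e_0}$ is a feasible effect; once this is done, $\Pi$ becomes a valid measurement perfectly distinguishing $\Phi_1 \cup \Phi_2$ and the lemma follows.

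The main obstacle is therefore to prove $\cra{e_0} \in \EffF_A$. My plan is to extend $\Phi_1$ to an MPDS $\tilde\Phi_1 = \Phi_1 \cup \Phi_1'$ with $|\Phi_1'| = \NA - k_1$ (Lemma~\ref{lemma:Rank}); Lemma~\ref{lemma:MPDSMeas} then gives $\cra{\gdis} = \cra{\chi_{\Phi_1}^\dagger} + \cra{\chi_{\Phi_1'}^\dagger}$, so $\cra{e_0} = \cra{\chi_{\Phi_1'}^\dagger} - \cra{\chi_{\Phi_2}^\dagger}$, a difference of two feasible effects coming from PDSs. The hypothesis yields $\cra{\chi_{\Phi_1'}^\dagger}\cket{\chi_{\Phi_2}} = \craket{\gdis|\chi_{\Phi_2}}$, so by the face characterization (Lemma~\ref{lemma:FacePhi}) $\cket{\chi_{\Phi_2}} \in \Face_{\Phi_1'}$, and all of $\Phi_2$ lives inside the complementary face generated by $\Phi_1'$. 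The technical heart of the argument is to promote this face inclusion into the stronger cone inequality $\cra{\chi_{\Phi_2}^\dagger} \le \cra{\chi_{\Phi_1'}^\dagger}$ in $\Eff_A$. I would transfer the question to the state side using the linearity of the $\dagger$-correspondence on extended states to rewrite $\cra{e_0} = \cra{(\chi_{\Phi_1'} - \chi_{\Phi_2})^\dagger}$, and then show $\cket{\chi_{\Phi_1'}} - \cket{\chi_{\Phi_2}} \in \St_A$ by combining the matching $\cra{\gdis}$-weights of $\cket{\chi_{\Phi_1'}}$ and $\cket{\chi_{\Phi_2}}$ inside $\Face_{\Phi_1'}$ with the MPDS-independence of $\cket{\chi_{\tilde\Phi}}$ (Lemma~\ref{lemma:Chi}), and, if needed, invoking the complete mixing postulate to rule out any negative residue on directions outside the relevant faces. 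Once $\cra{e_0} \in \Eff_A$ is in hand, $\cra{e_0} \le \cra{\gdis}$ is automatic from its definition, so $\Pi$ is a valid measurement and the proof is complete.
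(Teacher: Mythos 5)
Your reduction of the lemma to the single claim $\cra{e_0} \coloneqq \cra{\gdis} - \cra{\chi_{\Phi_1}^\dagger} - \cra{\chi_{\Phi_2}^\dagger} \in \Eff_A$ is sound, and the preliminary observations (each $\craket{\phi_i^\dagger|\varphi_j}=0$ by non-negativity, hence disjointness) are fine. The gap is that the claim itself is never actually proved, and the route you sketch for it is circular. You invoke Lemma~\ref{lemma:FacePhi} to place $\cket{\chi_{\Phi_2}}$ in $\Face_{\Phi_1'}$, but the proof of Lemma~\ref{lemma:FacePhi} rests on Lemma~\ref{lemma:PDSComplementLe}, which rests on Lemma~\ref{lemma:PDSComplementKer}, which is proved \emph{using} Lemma~\ref{lemma:PDSJoin} --- the very statement at issue. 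Even setting circularity aside, membership in $\Face_{\Phi_1'}$ only yields $\delta\cket{\chi_{\Phi_2}} \le \cket{\chi_{\Phi_1'}}$ for \emph{some} $\delta \in \Realpp$; promoting this to $\delta = 1$, i.e.\ to the cone inequality $\cra{\chi_{\Phi_2}^\dagger} \le \cra{\chi_{\Phi_1'}^\dagger}$, is precisely the hard content, and the sentence ``combining the matching $\cra{\gdis}$-weights \ldots\ and, if needed, invoking the complete mixing postulate'' is not an argument. (Note also that summing the single-state bound $\cket{\varphi_j} \le \cket{\chi_{\Phi_1'}}$ over $j$ does not give $\sum_j \cket{\varphi_j} \le \cket{\chi_{\Phi_1'}}$.) A further sign that something is off: this lemma lives in the subsection that uses only the symmetric sharpness postulate, whereas your argument leans on complete mixing via Lemmas~\ref{lemma:Chi} and \ref{lemma:FacePhi}.

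The paper avoids the global feasibility question entirely by induction: it adjoins the elements of $\Phi_2$ one at a time, setting $\Psi_l \coloneqq \Phi_1 \cup \{\cket{\phi_{k+1}},\ldots,\cket{\phi_{k+l}}\}$. At each step the residual effect $\cra{e} \coloneqq \cra{\gdis} - \cra{\chi_{\Psi_l}^\dagger}$ is feasible by Lemma~\ref{lemma:PDSChi} applied to the \emph{already established} PDS $\Psi_l$, and the new state $\cket{\phi_{k+l+1}}$ is distinguished by $\cra{e}$ itself (since $\craket{\chi_{\Psi_l}^\dagger|\phi_{k+l+1}}=0$), not by its own maximal effect. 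Thus one never needs $\cra{\gdis} - \cra{\chi_{\Psi_{l+1}}^\dagger}$ to be feasible \emph{before} concluding that $\Psi_{l+1}$ is a PDS. If you want to salvage your one-shot measurement, you would first have to prove the cone inequality $\cra{\chi_{\Phi_1}^\dagger} + \cra{\chi_{\Phi_2}^\dagger} \le \cra{\gdis}$ by some independent means; the inductive bootstrap is the mechanism the paper uses to get exactly that conclusion (as a corollary, via Lemma~\ref{lemma:PDSChi} applied to $\Phi_1 \cup \Phi_2$) rather than as an input.
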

\begin{proof}
 If $|\Phi_1| = 0$ or $|\Phi_2| = 0$ holds,
 then the lemma is obvious.
 Suppose $|\Phi_1| \ge 1$ and $|\Phi_2| \ge 1$.
 Let $\Phi_1 \coloneqq \{ \cket{\phi_i} \}_{i=1}^k$ and
 $\Phi_2 \coloneqq \{ \cket{\phi_i} \}_{i=k+1}^m$ with $0 < k < m$.
 Also, let $\Psi_l \coloneqq \{ \cket{\phi_i} \}_{i=1}^{k+l}$
 for each $l \in \{ 0,1,\ldots,m-k \}$.
 To prove that $\Psi_{m-k} = \Phi_1 \cup \Phi_2$ is a PDS,
 we proceed by induction on $l$.
 $\Psi_0 = \Phi_1 \in \PDS_A$ obviously holds.
 Suppose $\Psi_l \in \PDS_A$ with $0 \le l < m - k$.
 Let $t \coloneqq k + l + 1$.
 Since $\cket{\phi_t} \in \Face_{\Phi_2}$ holds from $\cket{\phi_t} \in \Phi_2$,
 Lemma~\ref{lemma:FaceZero} gives $\craket{\chi_{\Phi_1}^\dagger|\phi_t} = 0$.
 Also, from $\Phi_2 \in \PDS_A$,
 $\craket{\phi_{k+i}^\dagger|\phi_t} = 0$ holds for any $i \in \{1,\ldots,l\}$.
 Thus, we have
 \begin{alignat}{1}
  \includegraphics[scale=1.0]{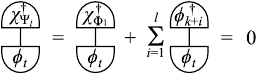} ~\raisebox{1.0em}{.}
 \end{alignat}
 Let $\cra{e} \coloneqq \cra{\gdis} - \cra{\chi_{\Psi_l}^\dagger}$;
 then, from Lemma~\ref{lemma:PDSChi}, $\cra{e} \in \EffF_A$ holds.
 This implies that
 $\Pi \coloneqq \{ \cra{\phi_1^\dagger}, \ldots, \cra{\phi_{k+l}^\dagger}, \cra{e} \}$
 is a measurement.
 We also have
 \begin{alignat}{1}
  \includegraphics[scale=1.0]{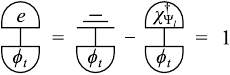} ~\raisebox{1.0em}{.}
 \end{alignat}
 Thus, $\Psi_{l+1}$ is perfectly distinguished by $\Pi$ and thus a PDS.
 Therefore, $\Psi_{m-k}$ is a PDS.
\end{proof}

\subsection{Results about symmetric sharpness and complete mixing}

We here discuss an OPT satisfying
the \poslink{pos:SymSharp}{symmetric sharpness} and
\poslink{pos:CompletelyMixed}{complete mixing} postulates.

\begin{lemma} \label{lemma:Chi}
 $\cket{\chi_\Phi} = \cket{\chi_\Psi}$ holds for any $\Phi, \Psi \in \MPDS_A$.
\end{lemma}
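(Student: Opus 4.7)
The plan is to reduce the equality of $\cket{\chi_\Phi}$ and $\cket{\chi_\Psi}$ to the one-sided inequality $\cket{\chi_\Psi}\le\cket{\chi_\Phi}$ in the state cone, and then close the gap with the discard effect. The engine is the \poslink{pos:CompletelyMixed}{complete mixing postulate}: whenever a state has strictly positive overlap with every maximal effect, its face is all of $\St_A$, so any other state can be absorbed into it after a small scaling.

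First I combine Lemma~\ref{lemma:ChiOne} and Lemma~\ref{lemma:Rank} to get $\craket{e|\chi_\Phi}=1$ for every $\cra{e}\in\EffM_A$; no maximal effect annihilates $\cket{\chi_\Phi}$, so the \poslink{pos:CompletelyMixed}{complete mixing postulate} makes $\cket{\chi_\Phi}$ completely mixed. In particular $\cket{\chi_\Psi}\in\Face_{\cket{\chi_\Phi}}=\St_A$, so there is some $\delta\in\Realpp$ with $\delta\cket{\chi_\Psi}\le\cket{\chi_\Phi}$. Define $\delta^\star := \sup\{\delta\in\Realpp : \delta\cket{\chi_\Psi}\le\cket{\chi_\Phi}\}$. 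Applying any $\cra{e}\in\EffM_A$ to this inequality and using Lemma~\ref{lemma:ChiOne} again yields $\delta\le 1$, hence $\delta^\star\le 1$; closedness of $\St_A$ guarantees that the supremum is attained, so $\cket{\omega}:=\cket{\chi_\Phi}-\delta^\star\cket{\chi_\Psi}$ lies in $\St_A$.

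The key step is to exclude $\delta^\star<1$ by contradiction. If this held, every $\cra{e}\in\EffM_A$ would give $\craket{e|\omega}=1-\delta^\star>0$, so by the \poslink{pos:CompletelyMixed}{complete mixing postulate} $\cket{\omega}$ would itself be completely mixed; then some $\delta'\in\Realpp$ would satisfy $\delta'\cket{\chi_\Psi}\le\cket{\omega}$, which rearranges to $(\delta^\star+\delta')\cket{\chi_\Psi}\le\cket{\chi_\Phi}$ with $\delta^\star+\delta'>\delta^\star$, violating maximality. Hence $\delta^\star=1$, so $\cket{\omega}=\cket{\chi_\Phi}-\cket{\chi_\Psi}$ is a state, and $\craket{\gdis|\omega}=\NA-\NA=0$ by Lemma~\ref{lemma:Rank}. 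Since a state vanishing on $\cra{\gdis}$ is the zero state, we conclude $\cket{\chi_\Phi}=\cket{\chi_\Psi}$.

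The main obstacle is setting up the contradiction loop cleanly: one needs the closedness of $\St_A$ to turn the supremum into an actual state $\cket{\omega}$, and then the crucial observation that this residue still has strictly positive value on every maximal effect, so that \poslink{pos:CompletelyMixed}{complete mixing} can be reapplied to $\cket{\omega}$ and produce the forbidden improvement $\delta^\star+\delta'$. Once that feedback is in place, the final step via $\cra{\gdis}$ is immediate.
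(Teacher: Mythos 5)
Your proof is correct and follows essentially the same route as the paper's: extract the maximal scaling factor between the two invariant-state candidates, use the \poslink{pos:CompletelyMixed}{complete mixing postulate} together with Lemma~\ref{lemma:ChiOne} to force that factor to equal $1$, and then kill the residue via $\craket{\gdis|\cdot}=0$ and Lemma~\ref{lemma:Rank}. The only differences are cosmetic (the roles of $\Phi$ and $\Psi$ are swapped, the complete-mixing step is phrased as a contradiction rather than directly concluding the residue is not completely mixed, and you add a preliminary step showing $\cket{\chi_\Phi}$ is completely mixed, which the paper sidesteps by allowing the scaling factor to start at $0$).
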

\begin{proof}
 Let $p$ be the maximum value of $p' \in \Real_+$ satisfying
 $p' \cket{\chi_\Phi} \le \cket{\chi_\Psi}$.
 Also, let
 \begin{alignat}{1}
  \includegraphics[scale=1.0]{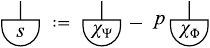} ~\raisebox{.5em}{.}
  \label{fig:derive_chi_proof_s}
 \end{alignat}
 Since $\cket{\chi_\Phi} \not\in \Face_\cket{s}$ holds from the definition of $p$,
 $\cket{s}$ is not completely mixed.
 Thus, from the \poslink{pos:CompletelyMixed}{complete mixing postulate},
 there exists a maximal effect $\cra{e}$ satisfying $\craket{e|s} = 0$.
 Therefore, we have
 \begin{alignat}{1}
  \includegraphics[scale=1.0]{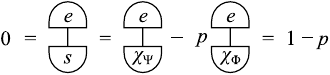} ~\raisebox{1.3em}{;}
 \end{alignat}
 i.e., $p = 1$,
 where the last equality follows from
 $\craket{e|\chi_\Psi} = \craket{e|\chi_\Phi} = 1$
 by Lemma~\ref{lemma:ChiOne}.
 Substituting $p = 1$ into Eq.~\eqref{fig:derive_chi_proof_s} gives
 \begin{alignat}{1}
  \includegraphics[scale=1.0]{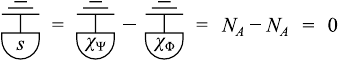} ~\raisebox{1.0em}{,}
 \end{alignat}
 where the second equality follows from Lemma~\ref{lemma:Rank}.
 Hence, we have $\cket{s} = \cket{\emptyset}$,
 which means $\cket{\chi_\Phi} = \cket{\chi_\Psi}$.
\end{proof}
By lemma~\ref{lemma:Chi}, we know that
$\cket{\chi_\Phi}$ with $\Phi \in \MPDS_A$ depends only on $A \in \System$
and not on $\Phi$.
Let us denote such $\cket{\chi_\Phi}$ by $\cket{\chi_A}$ or simply $\cket{\chi}$
and refer to it as the \termdef{invariant state} of $A$.
Clearly, $\cra{\chi_A^\dagger} = \cra{\gdis_A}$ holds.

\begin{ex}[quantum theory]
 $\cket{\chi_A}$ is the identity matrix of order $\NA$.
\end{ex}

\begin{lemma} \label{lemma:ChiCM}
 The invariant state is completely mixed.
\end{lemma}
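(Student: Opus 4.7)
The plan is to apply the complete mixing postulate directly, using Lemma~\ref{lemma:ChiOne} as the key input. Recall that Postulate~\ref{postulate:CompletelyMixed} says a state is completely mixed whenever no maximal effect assigns it probability zero. So to establish that $\cket{\chi_A}$ is completely mixed, I only need to verify that $\craket{e|\chi_A} \neq 0$ for every $\cra{e} \in \EffM_A$.

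First, fix an arbitrary MPDS $\Phi \in \MPDS_A$, so that by Lemma~\ref{lemma:Chi} (and the definition of the invariant state) $\cket{\chi_A} = \cket{\chi_\Phi}$. Then, for any maximal effect $\cra{e} \in \EffM_A$, Lemma~\ref{lemma:ChiOne} gives $\craket{e|\chi_\Phi} = 1$. Hence $\craket{e|\chi_A} = 1 \neq 0$ for every maximal effect, and the complete mixing postulate immediately yields that $\cket{\chi_A}$ is completely mixed.

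There is essentially no obstacle here beyond invoking the right earlier lemma: Lemma~\ref{lemma:ChiOne} already does all the work of computing $\craket{e|\chi_\Phi}$ by rewriting $\cra{e} = \cra{\phi^\dagger}$ for the unique $\cket{\phi} \in \StNP_A$ associated with $\cra{e}$ and using $\sum_i \cra{\phi_i^\dagger} = \cra{\gdis}$. The only subtlety worth a single sentence is noting that the value is independent of the choice of MPDS $\Phi$ used to represent $\cket{\chi_A}$, which is guaranteed precisely by Lemma~\ref{lemma:Chi}.
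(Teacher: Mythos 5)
Your proof is correct, but it takes a different route from the paper's. You verify the hypothesis of the \poslink{pos:CompletelyMixed}{complete mixing postulate}: by Lemma~\ref{lemma:ChiOne}, every maximal effect $\cra{e}$ satisfies $\craket{e|\chi_A}=1\neq 0$, so the postulate immediately forces $\cket{\chi_A}$ to be completely mixed. The paper instead argues directly from the definition of an interior point: it writes an arbitrary $\cket{\rho}\in\St_A$ as a positive combination $\sum_i c_i\cket{\psi_i}$ of normalized pure states, extends each $\cket{\psi_i}$ to an MPDS $\Phi$, and uses $\cket{\psi_i}\le\cket{\chi_\Phi}=\cket{\chi_A}$ (Lemma~\ref{lemma:Chi}) to conclude $\cket{\rho}\in\Face_{\cket{\chi_A}}$, i.e., $\Face_{\cket{\chi_A}}=\St_A$. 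Your argument is shorter and leans on the postulate as a black box; the paper's is self-contained at the level of the cone structure, exhibiting the face explicitly, and would survive even if the complete mixing postulate were weakened to something that no longer directly certifies $\cket{\chi_A}$. Both are sound given what is available at that point in the paper, and your remark that well-definedness of $\cket{\chi_A}$ rests on Lemma~\ref{lemma:Chi} is the right caveat.
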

\begin{proof}
 Any $\cket{\rho} \in \St_A$ has the form $\cket{\rho} = \sum_{i=1}^l c_i \cket{\psi_i}$
 with $c_1,\ldots,c_l \in \Realpp$ and $\cket{\psi_1},\ldots,\cket{\psi_l} \in \StNP_A$.
 For each $i \in \{1,\ldots,l\}$, consider an MPDS $\Phi$ including $\cket{\psi_i}$;
 then, it follows that $\cket{\psi_i} \le \cket{\chi_\Phi} = \cket{\chi_A}$ holds,
 which leads to $\cket{\rho} \in \Face_\cket{\chi_A}$.
 Therefore, $\cket{\chi_A}$ is completely mixed.
\end{proof}

We will say that a set of $k$ PDSs $\{ \Phi_i \in \PDS_A \}_{i=1}^k$
\termdef{can compose an MPDS} if
$\Phi_1,\ldots,\Phi_k$ are disjoint and $\bigcup_{i=1}^k \Phi_i$ is an MPDS.
In particular, we will call two PDSs $\Phi$ and $\Psi$ \termdef{complementary}
if $\{ \Phi, \Psi \}$ can compose an MPDS.
We see at once that, for any $\Phi \in \PDS_A$,
there exists $\Psi \in \PDS_A$ such that $\Phi$ and $\Psi$ are complementary.

\begin{lemma} \label{lemma:PDSkMPDS}
 Consider a set of $k$ PDSs $\tPhi \coloneqq \{ \Phi_i \in \PDS_A \}_{i=1}^k$.
 Then, $\tPhi$ can compose an MPDS if and only if
 $\sum_{i=1}^k \cket{\chi_{\Phi_i}} = \cket{\chi}$ holds.
\end{lemma}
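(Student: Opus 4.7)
The plan for the forward direction is essentially bookkeeping: if $\bigcup_{i=1}^k \Phi_i$ is an MPDS with the $\Phi_i$ pairwise disjoint, then summing the definition $\cket{\chi_{\Phi_i}} = \sum_{\cket{\phi} \in \Phi_i} \cket{\phi}$ over $i$ gives $\sum_i \cket{\chi_{\Phi_i}} = \cket{\chi_{\bigcup_i \Phi_i}} = \cket{\chi_A}$, where the last equality is Lemma \ref{lemma:Chi}. One line should suffice.

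For the converse, suppose $\sum_{i=1}^k \cket{\chi_{\Phi_i}} = \cket{\chi_A}$. I would first apply the linear map $\dagger \colon \Vec_A \to \Vec_A^*$ (well-defined by Lemma \ref{lemma:SharpMix}) to both sides, obtaining $\sum_{i=1}^k \cra{\chi_{\Phi_i}^\dagger} = \cra{\chi_A^\dagger} = \cra{\gdis_A}$. Evaluating this identity on an arbitrary $\cket{\phi} \in \Phi_j$ decomposes the probability $\craket{\gdis_A|\phi} = 1$ as a sum of terms each lying in $[0,1]$, since $\cra{\chi_{\Phi_i}^\dagger} \in \EffF_A$ by Lemma \ref{lemma:PDSChi}. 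The $j$th term equals $1$, because for elements of any PDS the relations $\craket{\phi_l^\dagger|\phi_m} = \delta_{l,m}$ hold (extend $\Phi_j$ to an MPDS and invoke Lemma \ref{lemma:MPDSMeas}). Hence $\craket{\chi_{\Phi_i}^\dagger|\phi} = 0$ for every $i \neq j$. This simultaneously forces disjointness of the $\Phi_i$---a common element would contribute $1$ to two distinct summands while the total is $1$---and, by linearity in the right slot, $\craket{\chi_{\Phi_i}^\dagger|\chi_{\Phi_j}} = 0$ whenever $i \neq j$.

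With pairwise orthogonality in hand I would then feed Lemma \ref{lemma:PDSJoin} inductively. Assuming $\Psi_m \coloneqq \Phi_1 \cup \cdots \cup \Phi_m \in \PDS_A$ with $\cket{\chi_{\Psi_m}} = \sum_{i \le m} \cket{\chi_{\Phi_i}}$ (valid by disjointness), linearity of $\dagger$ and the orthogonality above give $\craket{\chi_{\Psi_m}^\dagger|\chi_{\Phi_{m+1}}} = 0$, so Lemma \ref{lemma:PDSJoin} promotes $\Psi_{m+1}$ to a PDS. Finally, applying $\cra{\gdis_A}$ to the hypothesis yields $\sum_{i=1}^k |\Phi_i| = \craket{\gdis_A|\chi_A} = \NA$, so $\bigcup_i \Phi_i$ is a PDS of cardinality $\NA$; it must be maximal, since any strict extension would embed into an MPDS of size exceeding $\NA$, contradicting Lemma \ref{lemma:Rank}.

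The only moving part I expect to require care is the identity $\craket{\phi_l^\dagger|\phi_m} = \delta_{l,m}$ for elements of a general (possibly non-maximal) PDS, which is not stated as a standalone lemma but follows cleanly by MPDS extension plus Lemma \ref{lemma:MPDSMeas}. Beyond that, the argument is a compact chain of linearity of $\dagger$ together with Lemmas \ref{lemma:PDSChi}, \ref{lemma:PDSJoin}, and \ref{lemma:Rank}.
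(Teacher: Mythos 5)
Your proof is correct and follows essentially the same route as the paper's: both directions reduce to extracting the pairwise orthogonality $\craket{\chi_{\Phi_i}^\dagger|\chi_{\Phi_j}} = 0$ ($i \neq j$) from the hypothesis by a positivity-plus-normalization argument and then applying Lemma~\ref{lemma:PDSJoin} recursively. The only differences are cosmetic — the paper pairs $\cra{\chi_{\Phi_j}^\dagger}$ directly with $\cket{\chi}$ instead of dualizing the sum and testing against individual pure states, and you spell out the disjointness and maximality checks that the paper leaves implicit.
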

By this lemma, $\Phi, \Psi \in \PDS_A$ are complementary if and only if
$\cket{\chi_\Phi} + \cket{\chi_\Psi} = \cket{\chi}$ holds.
\begin{proof}
 The case $k \le 1$ is obvious; suppose $k > 1$.

 ``Only if'':
 Since $\Phi_1,\ldots,\Phi_k$ are disjoint and
 $\Phi \coloneqq \bigcup_{i=1}^k \Phi_i$ is an MPDS,
 we have $\sum_{i=1}^k \cket{\chi_{\Phi_i}} = \cket{\chi_\Phi} = \cket{\chi}$.

 ``If'':
 From $\sum_{i=1}^k \cket{\chi_{\Phi_i}} = \cket{\chi}$, we have that,
 for any $j \in \{1,\ldots,k\}$,
 \begin{alignat}{3}
  \sum_{i=1}^k \craket{\chi_{\Phi_j}^\dagger|\chi_{\Phi_i}}
  &= \craket{\chi_{\Phi_j}^\dagger|\chi}
  &&= |\Phi_j| &&= \craket{\chi_{\Phi_j}^\dagger|\chi_{\Phi_j}},
 \end{alignat}
 where the second equality follows from Lemma~\ref{lemma:ChiOne}.
 Thus, $\craket{\chi_{\Phi_j}^\dagger|\chi_{\Phi_i}} = 0$ holds
 for any distinct $i,j \in \{1,\ldots,k\}$.
 Therefore, by applying Lemma~\ref{lemma:PDSJoin} recursively,
 we can see that $\tPhi$ can compose an MPDS.
\end{proof}

\begin{lemma} \label{lemma:PDSComplementKer}
 If two PDSs $\Phi$ and $\Psi$ satisfy $\cket{\chi_\Phi} \in \ker_\Psi$,
 then there exists a PDS that includes $\Phi$ and is complementary to $\Psi$.
\end{lemma}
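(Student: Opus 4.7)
My plan is to construct $\Phi'$ in two stages: first to establish that $\Phi\cup\Psi$ is itself already a PDS, and then to extend this union to some MPDS $\Xi$ and take $\Phi'\coloneqq\Xi\setminus\Psi$. By Lemma~\ref{lemma:PDSkMPDS}, the requirement that $\Phi'$ and $\Psi$ be complementary reduces to $\Phi'\cup\Psi$ being an MPDS, so the construction succeeds as soon as $\Phi\subseteq\Phi'$ and $\Phi'\cap\Psi=\emptyset$ are verified.

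For the first stage I want to invoke Lemma~\ref{lemma:PDSJoin}, whose hypothesis is $\craket{\chi_\Phi^\dagger|\chi_\Psi}=0$. The given condition $\cket{\chi_\Phi}\in\ker_\Psi$, combined with the rewritten form of the kernel noted just after Lemma~\ref{lemma:PerpPDS}, delivers $\craket{\chi_\Psi^\dagger|\chi_\Phi}=0$, which is the opposite direction. I bridge this by expanding $\cket{\chi_\Phi}=\sum_i\cket{\phi_i}$ and $\cket{\chi_\Psi}=\sum_j\cket{\psi_j}$ into their normalized pure summands, so that
\[
\craket{\chi_\Phi^\dagger|\chi_\Psi}=\sum_{i,j}\craket{\phi_i^\dagger|\psi_j}=\sum_{i,j}\craket{\psi_j^\dagger|\phi_i}=\craket{\chi_\Psi^\dagger|\chi_\Phi}=0,
\]
where the middle equality is the symmetry clause of the \poslink{pos:SymSharp}{symmetric sharpness postulate}. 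Lemma~\ref{lemma:PDSJoin} then delivers $\Phi\cup\Psi\in\PDS_A$; the disjointness $\Phi\cap\Psi=\emptyset$ is automatic, since any common element $\cket{\phi}$ would contribute $\craket{\phi^\dagger|\phi}=1$ to a sum of nonnegative terms that vanishes.

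For the second stage, I extend $\Phi\cup\Psi$ to an MPDS $\Xi$, which is possible because iteratively adjoining normalized pure states that remain perfectly distinguishable from the current set must terminate by Lemma~\ref{lemma:Rank}. Setting $\Phi'\coloneqq\Xi\setminus\Psi$ then gives $\Phi\subseteq\Phi'$, $\Phi'\cap\Psi=\emptyset$, and $\Phi'\cup\Psi=\Xi\in\MPDS_A$, so $\Phi'$ is complementary to $\Psi$ and contains $\Phi$. That $\Phi'$ is itself a PDS is immediate from Lemma~\ref{lemma:MPDSMeas}: the unique maximal measurement $\{\cra{\xi^\dagger}\}_{\xi\in\Xi}$ distinguishes $\Xi$, and restricting to the effects indexed by $\Phi'$ (together with the residual effect $\cra{\gdis}-\sum_{\xi\in\Phi'}\cra{\xi^\dagger}$) yields a measurement that perfectly distinguishes $\Phi'$. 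The only genuinely substantive step is the symmetry manipulation exploiting the \poslink{pos:SymSharp}{symmetric sharpness postulate} to convert the asymmetric hypothesis $\cket{\chi_\Phi}\in\ker_\Psi$ into the symmetric condition demanded by Lemma~\ref{lemma:PDSJoin}; everything else is bookkeeping, so I do not anticipate any serious obstacle.
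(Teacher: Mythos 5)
Your proof is correct and follows essentially the same route as the paper's: Lemma~\ref{lemma:PerpPDS} yields $\craket{\chi_\Psi^\dagger|\chi_\Phi}=0$, Lemma~\ref{lemma:PDSJoin} yields $\Phi\cup\Psi\in\PDS_A$, and the desired PDS is obtained by completing to an MPDS and removing $\Psi$ (the paper phrases this as $\Phi\cup\Phi'$ with $\Phi'$ complementary to $\Phi\cup\Psi$, which is the same set). The symmetric-sharpness manipulation you single out as the one substantive step is actually dispensable: Lemma~\ref{lemma:PDSJoin} can be applied directly with $\Phi_1=\Psi$ and $\Phi_2=\Phi$, since its hypothesis is exactly $\craket{\chi_\Psi^\dagger|\chi_\Phi}=0$ and its conclusion is symmetric in the two arguments.
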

\begin{proof}
 Using Lemma~\ref{lemma:PerpPDS}, we have $\craket{\chi_{\Psi}^\dagger|\chi_\Phi} = 0$.
 From this and Lemma~\ref{lemma:PDSJoin}, $\Phi \cup \Psi$ is a PDS
 such that $\cket{\chi_{\Phi \cup \Psi}} = \cket{\chi_\Phi} + \cket{\chi_\Psi}$.
 Let $\Phi'$ be a PDS complementary to $\Phi \cup \Psi$;
 then, $\craket{\chi_{\Phi'}^\dagger|\chi_\Phi} = 0$ holds,
 and thus, from Lemma~\ref{lemma:PDSJoin}, $\Phi \cup \Phi'$ is a PDS including $\Phi$.
 Since $\cket{\chi_{\Phi \cup \Phi'}} + \cket{\chi_\Psi} =
 \cket{\chi_\Phi} + \cket{\chi_\Psi} + \cket{\chi_{\Phi'}}
 = \cket{\chi_{\Phi \cup \Psi}} + \cket{\chi_{\Phi'}} = \cket{\chi}$ holds
 from Lemma~\ref{lemma:PDSkMPDS},
 $\Phi \cup \Phi'$ is complementary to $\Psi$.
\end{proof}

\begin{lemma} \label{lemma:PDSComplementLe}
 Consider $\Psi \in \PDS_A$.
 Then, any $\cket{\phi} \in \StNP_A \cap \ker_\Psi$ satisfies
 $\cket{\phi} \le \cket{\chi} - \cket{\chi_\Psi}$.
\end{lemma}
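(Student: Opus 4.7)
The plan is to apply Lemma~\ref{lemma:PDSComplementKer} with the singleton PDS $\Phi_0 \coloneqq \{ \cket{\phi} \}$. First I would observe that $\Phi_0 \in \PDS_A$ since $\cket{\phi}$ is a normalized pure state, and that $\cket{\chi_{\Phi_0}} = \cket{\phi} \in \ker_\Psi$ by hypothesis. The premise of Lemma~\ref{lemma:PDSComplementKer} is therefore met, yielding a PDS $\Phi \supseteq \Phi_0$ that is complementary to $\Psi$.

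Next I would invoke Lemma~\ref{lemma:PDSkMPDS} on the pair $\{ \Phi, \Psi \}$: since $\Phi$ and $\Psi$ are complementary they compose an MPDS, so
\begin{alignat}{1}
 \cket{\chi_{\Phi}} + \cket{\chi_{\Psi}} = \cket{\chi},
\end{alignat}
which I rearrange to $\cket{\chi_{\Phi}} = \cket{\chi} - \cket{\chi_{\Psi}}$.

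Finally, since $\cket{\phi} \in \Phi$, writing $\Phi = \{ \cket{\phi} \} \cup \{ \cket{\psi_i} \}_{i}$ gives
\begin{alignat}{1}
 \cket{\chi_{\Phi}} - \cket{\phi} = \sum_{i} \cket{\psi_i} \in \St_A,
\end{alignat}
so $\cket{\phi} \le \cket{\chi_{\Phi}} = \cket{\chi} - \cket{\chi_{\Psi}}$, as required.

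There is no real obstacle here; the proof is essentially an immediate consequence of Lemma~\ref{lemma:PDSComplementKer} together with Lemma~\ref{lemma:PDSkMPDS}, so the only delicate point is to verify that the singleton $\{ \cket{\phi} \}$ is a legitimate PDS whose $\cket{\chi}$ lies in $\ker_\Psi$, which is immediate from the hypothesis $\cket{\phi} \in \StNP_A \cap \ker_\Psi$.
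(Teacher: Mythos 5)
Your proof is correct and follows essentially the same route as the paper's: apply Lemma~\ref{lemma:PDSComplementKer} to extend $\{\cket{\phi}\}$ to a PDS $\Phi$ complementary to $\Psi$, then use Lemma~\ref{lemma:PDSkMPDS} to identify $\cket{\chi_\Phi}$ with $\cket{\chi} - \cket{\chi_\Psi}$ and conclude $\cket{\phi} \le \cket{\chi_\Phi}$. The only difference is that you spell out the routine checks (that the singleton is a PDS and that $\cket{\chi_\Phi} - \cket{\phi} \in \St_A$) which the paper leaves implicit.
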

\begin{proof}
 From Lemma~\ref{lemma:PDSComplementKer},
 there exists a PDS $\Phi$ that includes $\cket{\phi}$ and
 is complementary to $\Psi$.
 Therefore, $\cket{\phi} \le \cket{\chi_{\Phi}} = \cket{\chi} - \cket{\chi_\Psi}$
 holds, where the equality follows from Lemma~\ref{lemma:PDSkMPDS}.
\end{proof}

\begin{lemma} \label{lemma:FacePhi}
 For $\Phi \in \PDS_A$ and $\cket{\rho} \in \St_A$, we have
 \begin{alignat}{1}
  \cket{\rho} \in \Face_\Phi &\quad\Leftrightarrow\quad
  \craket{\chi_\Phi^\dagger|\rho} = \craket{\gdis|\rho}.
  \label{eq:Face_Phi_distinguish}
 \end{alignat}
\end{lemma}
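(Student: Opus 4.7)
The plan is to handle the two directions of the equivalence separately, with the forward direction being essentially an immediate application of a previous lemma and the reverse direction requiring one extra piece of machinery (a complementary PDS).

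For the $\Rightarrow$ direction, I would apply Lemma~\ref{lemma:FaceOne} to the effect $\cra{\chi_\Phi^\dagger}$ and the state $\cket{\chi_\Phi}$. Lemma~\ref{lemma:PDSChi} guarantees that $\cra{\chi_\Phi^\dagger} \in \EffF_A$, and a direct expansion using $\craket{\phi_i^\dagger|\phi_j} = \delta_{i,j}$ (from the \poslink{pos:SymSharp}{symmetric sharpness postulate} and Lemma~\ref{lemma:MPDSMeas}) shows $\craket{\chi_\Phi^\dagger|\chi_\Phi} = |\Phi| = \craket{\gdis|\chi_\Phi}$. Lemma~\ref{lemma:FaceOne} then immediately propagates this equality to every $\cket{\rho} \in \Face_\cket{\chi_\Phi} = \Face_\Phi$.

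For the $\Leftarrow$ direction, I would introduce a PDS $\Psi$ complementary to $\Phi$ (which exists, as observed before Lemma~\ref{lemma:PDSkMPDS}). By Lemma~\ref{lemma:PDSkMPDS}, $\cket{\chi_\Phi} + \cket{\chi_\Psi} = \cket{\chi}$; applying the linear extension of $\dagger$ gives $\cra{\chi_\Phi^\dagger} + \cra{\chi_\Psi^\dagger} = \cra{\chi^\dagger} = \cra{\gdis}$. Subtracting from the hypothesis yields $\craket{\chi_\Psi^\dagger|\rho} = 0$. Now decompose $\cket{\rho} = \sum_{i=1}^l c_i \cket{\psi_i}$ with $c_i \in \Realpp$ and $\cket{\psi_i} \in \StNP_A$ (the trivial case $\cket{\rho} = \cket{\emptyset}$ is immediate). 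Since $\cra{\chi_\Psi^\dagger}$ is a feasible effect and each $\cket{\psi_i}$ is feasible, every term $c_i \craket{\chi_\Psi^\dagger|\psi_i}$ is nonnegative, so the vanishing sum forces $\craket{\chi_\Psi^\dagger|\psi_i} = 0$ for each $i$. Lemma~\ref{lemma:PerpPDS} then places each $\cket{\psi_i}$ in $\ker_\Psi$, and Lemma~\ref{lemma:PDSComplementLe} upgrades this to the bound $\cket{\psi_i} \le \cket{\chi} - \cket{\chi_\Psi} = \cket{\chi_\Phi}$. Summing gives $\cket{\rho} \le \bigl(\sum_i c_i\bigr) \cket{\chi_\Phi}$, so $\delta \cket{\rho} \le \cket{\chi_\Phi}$ with $\delta = (\sum_i c_i)^{-1} \in \Realpp$, establishing $\cket{\rho} \in \Face_\Phi$.

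The main obstacle is the $\Leftarrow$ direction, where one needs to pass from a scalar equation on the full state $\cket{\rho}$ to a bound holding separately on every pure component in its decomposition. The key move that unlocks this is splitting the hypothesis via a complementary PDS so that the relevant scalar becomes a sum of nonnegative terms (forcing each component to lie in $\ker_\Psi$), which lets Lemma~\ref{lemma:PDSComplementLe} feed back an actual order-theoretic inequality $\cket{\psi_i} \le \cket{\chi_\Phi}$ that can be combined linearly.
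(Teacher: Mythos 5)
Your proposal is correct and follows essentially the same route as the paper: the forward direction via Lemma~\ref{lemma:FaceOne} applied to $\cra{\chi_\Phi^\dagger}$ and $\cket{\chi_\Phi}$, and the reverse direction by passing to a complementary PDS $\Psi$, deducing $\craket{\chi_\Psi^\dagger|\rho}=0$, and using Lemma~\ref{lemma:PDSComplementLe} on the pure components of $\cket{\rho}$. The only cosmetic difference is that you extract $\craket{\chi_\Psi^\dagger|\psi_i}=0$ from nonnegativity of the terms in the decomposition, where the paper invokes Lemma~\ref{lemma:FaceZero} on $\StNP_A \cap \Face_\cket{\rho}$; both are equivalent.
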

\begin{proof}
 $\Rightarrow$:
 From $\craket{\chi_\Phi^\dagger|\chi_\Phi} = \craket{\gdis|\chi_\Phi}$ and
 Lemma~\ref{lemma:FaceOne}, we have $\craket{\chi_\Phi^\dagger|\rho} = \craket{\gdis|\rho}$.

 $\Leftarrow$:
 The case $\cket{\rho} = \cket{\emptyset}$ is obvious,
 so assume $\cket{\rho} \neq \cket{\emptyset}$.
 Let $\Psi$ be a PDS complementary to $\Phi$.
 For any $\cket{\psi} \in \StNP_A \cap \Face_\cket{\rho}$,
 from $\craket{\chi_\Psi^\dagger|\rho} = 0$ and Lemma~\ref{lemma:FaceZero},
 $\craket{\chi_\Psi^\dagger|\psi} = 0$ holds.
 Thus, from Lemma~\ref{lemma:PDSComplementLe},
 $\cket{\psi} \le \cket{\chi} - \cket{\chi_\Psi} = \cket{\chi_\Phi}$ holds,
 which yields $\cket{\psi} \in \Face_\Phi$.
 Since $\cket{\rho}$ can be represented by a weighted sum of normalized pure states
 in $\Face_\Phi$, $\cket{\rho} \in \Face_\Phi$ holds.
\end{proof}

\begin{lemma} \label{lemma:KernelPhi}
 For two PDSs $\Phi, \Psi \in \PDS_A$, $\ker_\Phi = \Face_\Psi$ holds if and only if
 $\Phi$ and $\Psi$ are complementary.
\end{lemma}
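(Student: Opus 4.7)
The plan is to prove both directions by exploiting the dagger characterizations already available: $\cket{\rho} \in \ker_\Phi \Leftrightarrow \craket{\chi_\Phi^\dagger|\rho} = 0$ (Lemma~\ref{lemma:PerpPDS}) and $\cket{\rho} \in \Face_\Psi \Leftrightarrow \craket{\chi_\Psi^\dagger|\rho} = \craket{\gdis|\rho}$ (Lemma~\ref{lemma:FacePhi}), together with the characterization of complementarity as $\cket{\chi_\Phi} + \cket{\chi_\Psi} = \cket{\chi}$ (Lemma~\ref{lemma:PDSkMPDS}) and the identity $\cra{\chi_A^\dagger} = \cra{\gdis_A}$ noted after Lemma~\ref{lemma:Chi}.

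For the ``if'' direction, my first step is to assume $\Phi$ and $\Psi$ are complementary and apply the linear dagger map to $\cket{\chi_\Phi} + \cket{\chi_\Psi} = \cket{\chi}$, obtaining the effect identity $\cra{\chi_\Phi^\dagger} + \cra{\chi_\Psi^\dagger} = \cra{\gdis}$. Composing both sides with an arbitrary state $\cket{\rho}$ then converts $\craket{\chi_\Phi^\dagger|\rho} = 0$ into $\craket{\chi_\Psi^\dagger|\rho} = \craket{\gdis|\rho}$ and vice versa, so Lemmas~\ref{lemma:PerpPDS} and \ref{lemma:FacePhi} yield $\ker_\Phi = \Face_\Psi$ immediately.

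For the ``only if'' direction, my strategy is to reduce to the forward direction by constructing a complementary extension of $\Psi$ and then showing the extension is trivial. Since every element of $\Psi$ lies in $\Face_\Psi = \ker_\Phi$, I can invoke Lemma~\ref{lemma:PDSComplementKer} on each element (or directly on $\cket{\chi_\Psi}$ after noting $\craket{\chi_\Phi^\dagger|\chi_\Psi} = 0$) to produce a PDS $\Psi'$ that contains $\Psi$ and is complementary to $\Phi$. Applying the already-proved ``if'' direction to the pair $(\Phi,\Psi')$ gives $\Face_{\Psi'} = \ker_\Phi = \Face_\Psi$, so it only remains to argue $\Psi' = \Psi$.

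The main obstacle is this last step, and I expect it to yield via a direct contradiction. Suppose there were $\cket{\phi} \in \Psi' \setminus \Psi$. On the one hand, $\cket{\phi} \in \Psi' \subseteq \Face_{\Psi'} = \Face_\Psi$, so Lemma~\ref{lemma:FacePhi} forces $\craket{\chi_\Psi^\dagger|\phi} = \craket{\gdis|\phi} = 1$. On the other hand, since $\Psi \cup \{\cket{\phi}\} \subseteq \Psi' \in \PDS_A$, the element $\cket{\phi}$ is perfectly distinguishable from every $\cket{\psi} \in \Psi$, so Lemma~\ref{lemma:Perp} gives $\craket{\psi^\dagger|\phi} = 0$ for each such $\cket{\psi}$, and summing yields $\craket{\chi_\Psi^\dagger|\phi} = 0$. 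The contradiction rules out such a $\cket{\phi}$, so $\Psi' = \Psi$ and $\Psi$ is complementary to $\Phi$, completing the proof.
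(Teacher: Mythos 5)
Your proposal is correct and follows essentially the same route as the paper: the ``if'' direction uses $\cket{\chi_\Phi}+\cket{\chi_\Psi}=\cket{\chi}$ together with Lemmas~\ref{lemma:PerpPDS} and \ref{lemma:FacePhi}, and the ``only if'' direction extends $\Psi$ to a PDS $\Psi'$ complementary to $\Phi$ via Lemma~\ref{lemma:PDSComplementKer} (applied to $\cket{\chi_\Psi}\in\ker_\Phi$, which is the correct variant of your two options) and derives a contradiction from any element of $\Psi'\setminus\Psi$. The only cosmetic difference is that you route the contradiction through the already-proved ``if'' direction applied to $(\Phi,\Psi')$, whereas the paper observes directly that such an element lies in $\ker_\Phi$ but not in $\Face_\Psi$; the substance is identical.
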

\begin{proof}
 ``Only if'':
 Since $\cket{\chi_\Psi} \in \Face_\Psi = \ker_\Phi$ holds,
 it follows from Lemma~\ref{lemma:PDSComplementKer} that
 there exists a PDS $\Psi'$ that includes $\Psi$ and is complementary to $\Phi$.
 If $\Psi \neq \Psi'$ holds, then
 there must exist $\cket{\psi} \in \Psi'$ such that $\cket{\psi} \not\in \Psi$.
 Such $\cket{\psi}$ satisfies $\cket{\psi} \in \ker_\Phi$ and $\cket{\psi} \not\in \Face_\Psi$,
 which contradicts $\ker_\Phi = \Face_\Psi$.
 Therefore, $\Psi = \Psi'$ holds, i.e., $\Phi$ and $\Psi$ are complementary.

 ``If'':
 Since $\cket{\chi_\Psi} + \cket{\chi_\Phi} = \cket{\chi}$ holds from Lemma~\ref{lemma:PDSkMPDS},
 $\craket{\chi_\Psi^\dagger|\rho} + \craket{\chi_\Phi^\dagger|\rho} = \craket{\gdis|\rho}$
 holds for any $\cket{\rho} \in \St_A$.
 Thus, we have
 \begin{alignat}{2}
  \cket{\rho} \in \ker_\Phi
  &\quad\Leftrightarrow\quad \craket{\chi_\Phi^\dagger|\rho} = 0 \nonumber \\
  &\quad\Leftrightarrow\quad \craket{\chi_\Psi^\dagger|\rho} = \craket{\gdis|\rho}
  &&\quad\Leftrightarrow\quad \cket{\rho} \in \Face_\Psi,
 \end{alignat}
 which follows from Lemmas~\ref{lemma:PerpPDS} and \ref{lemma:FacePhi}.
 Therefore, $\ker_\Phi = \Face_\Psi$ holds.
\end{proof}
Lemma~\ref{lemma:KernelPhi} still holds if we exchange $\Phi$ and $\Psi$.
Thus, for any two PDSs $\Phi, \Psi \in \PDS_A$,
$\ker_\Phi = \Face_\Psi$ and $\Face_\Phi = \ker_\Psi$ are obviously equivalent.

\section{Symmetric properties of state space} \label{sec:derive_symmetry}

In this section, we show Property~(\hyperlink{property:diamondsuit}{$\diamondsuit$})
in an OPT that satisfies
the \poslink{pos:SymSharp}{symmetric sharpness},
\poslink{pos:CompletelyMixed}{complete mixing},
and \poslink{pos:Filter}{filtering} postulates.
The main results in this section are:
\begin{enumerate}
 \item Every state has a spectral decomposition,
       which means that any $\cket{\rho} \in \St_A$ can be expressed in the form
       $\cket{\rho} = \sum_{i=1}^\NA p_i \cket{\phi_i}$
       with $p_1,\ldots,p_\NA \in \Real_+$ and
       $\{ \cket{\phi_i} \}_{i=1}^\NA \in \MPDS_A$ (Proposition~\ref{pro:Spectral}).
 \item $\St_A$ is a symmetric cone (Theorem~\ref{thm:SymmetricCone}).
 \item $\Eff_A$ is the dual cone of $\St_A$ (Theorem~\ref{thm:Eff}).
\end{enumerate}

\subsection{Symmetric cones} \label{subsec:derive_abstract_symmetric}

We here review the definitions of symmetric cones.
A convex cone $\mC$ in a real vector space $\Vec$ is called a \termdef{symmetric cone}
if $\mC$ is self-dual and homogeneous, whose definitions are given below.

First, we will recall the definition of self-duality.
Let $\mC$ be a convex cone in a real vector space $\Vec$.
\begin{alignat}{1}
 \mC^* &\coloneqq \{ f \in \Vec^* : \forall x \in \mC, f(x) \ge 0 \}
 \label{eq:derive_C_ast}
\end{alignat}
is called the \termdef{dual cone} of $\mC$,
where $\Vec^*$ is the dual vector space of $\Vec$.
One can easily verify that $\mC^*$ is a closed convex cone.
For any inner product of $\Vec$, denoted by $\braket{~,~}$,
there exists an isomorphism $\#:\Vec^* \ni f \mapsto f^\# \in \Vec$
satisfying $f(x) = \braket{f^\#,x}$ for any $x \in \Vec$ and $f \in \Vec^*$.
Let
\begin{alignat}{1}
 \mC^\star &\coloneqq \{ f^\# \in \Vec : f \in \mC^* \};
 \label{eq:derive_C_star}
\end{alignat}
then, $\mC^\star = \{ x \in \Vec : \forall y \in \mC, \braket{x, y} \ge 0 \}$
obviously holds.
$\mC^*$ and $\mC^\star$ are isomorphic as convex cones.
Also, the restriction of $\#$ to $\mC^*$ is an isomorphism from $\mC^*$ to $\mC^\star$.
If $\mC^\star = \mC$ holds for some inner product $\braket{~,~}$,
then $\mC$ is called \termdef{self-dual} with respect to $\braket{~,~}$.
In this case, $\mC \cong \mC^*$ obviously holds.
In particular, the state space $\St_A$ is self-dual if
there exists an inner product $\braket{~,~}$ such that
\begin{alignat}{1}
 \St_A &= \{ \cket{\ol{x}} \in \Vec_A : \forall \cket{\rho} \in \St_A,
 \braket{\ol{x}, \rho} \ge 0 \},
 \label{eq:derive_StA_selfdual}
\end{alignat}
where $\braket{\ol{x},\ol{y}}$ $~(\cket{\ol{x}},\cket{\ol{y}} \in \Vec_A)$
is a simple notation for $\braket{\cket{\ol{x}},\cket{\ol{y}}}$.

Next, we will give the definition of homogeneity.
A convex cone $\mC$ is called \termdef{homogeneous}
if, for any two interior points $x,y \in \mC$, there exists
an automorphism $g_{x,y}$ on $\mC$ such that $g_{x,y}(x) = y$.

\subsection{Projection processes} \label{subsec:derive_symmetry_projection}

For any $\Phi \in \PDS_A$,
a process $P_\Phi \in \Proc_{A \to A}$ is called
a \termdef{projection process} onto $\Face_\Phi$ if
\begin{alignat}{1}
 \includegraphics[scale=1.0]{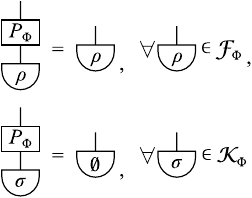}
 \label{eq:derive_Projection}
\end{alignat}
holds.

\begin{ex}[quantum theory]
 For any PDS $\Phi \coloneqq \{ \cket{\phi_i} \}_{i=1}^k$,
 a projection process onto $\Face_\Phi$ is expressed by
 $P_\Phi \circ \cket{\rho} = \cket{\chi_\Phi} \cdot \cket{\rho} \cdot \cket{\chi_\Phi}$.
 One can easily see that $\cket{\chi_\Phi}$ is a projection matrix.
\end{ex}

The following lemma guarantees the existence of projection processes.
\begin{lemma} \label{lemma:Projection}
 For any $\Phi \in \PDS_A$, there exists a projection process $P_\Phi$ onto $\Face_\Phi$.
\end{lemma}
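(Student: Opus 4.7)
The plan is to construct $P_\Phi$ as a sequential composition of zero-filters along a PDS complementary to $\Phi$. Since any PDS extends to an MPDS, there is a PDS $\Psi \coloneqq \{\cket{\psi_1},\ldots,\cket{\psi_m}\}$ complementary to $\Phi$, and I set
\[
P_\Phi \;\coloneqq\; F_\cket{\psi_m}^{0} \circ \cdots \circ F_\cket{\psi_1}^{0},
\]
a feasible process by composition of feasible filters supplied by the \poslink{pos:Filter}{filtering postulate}.

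The easy half---that $P_\Phi$ acts as the identity on $\Face_\Phi$---follows from Lemma~\ref{lemma:KernelPhi}: any $\cket{\rho} \in \Face_\Phi = \ker_\Psi$ satisfies $\sum_i \craket{\psi_i^\dagger|\rho} = \craket{\chi_\Psi^\dagger|\rho} = 0$ by Lemma~\ref{lemma:PerpPDS}, and nonnegativity of each summand forces $\cket{\rho} \in \ker_\cket{\psi_i}$ for every $i$, so every filter in the composition acts trivially by Eq.~\eqref{eq:derive_filter}.

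For the annihilation on $\ker_\Phi$ I would first establish a key sublemma: for every $\cket{\psi} \in \StNP_A$, $F_\cket{\psi}^0$ maps all of $\St_A$ into $\ker_\cket{\psi}$. Consider the effect $\cra{e} \coloneqq \cra{\psi^\dagger} \circ F_\cket{\psi}^0$; it vanishes on $\cket{\psi}$ (since the filter kills $\cket{\psi}$) and on every state in $\ker_\cket{\psi}$ (where the filter acts trivially and Lemma~\ref{lemma:Perp} applies). Extending $\cket{\psi}$ to an MPDS, all other members lie in $\ker_\cket{\psi}$, whence $\craket{e|\chi_A} = 0$; since $\cket{\chi_A}$ is completely mixed by Lemma~\ref{lemma:ChiCM}, Lemma~\ref{lemma:FaceZero} forces $\cra{e} = \cra{\emptyset}$, and Lemma~\ref{lemma:Perp} then yields $F_\cket{\psi}^0 \circ \cket{\rho} \in \ker_\cket{\psi}$ for every $\cket{\rho}$.

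The main obstacle is closing the iteration: I must show that each filter $F_\cket{\psi_l}^0$ also preserves $\ker_\cket{\psi_i}$ for every $i \ne l$, i.e.\ that $\cra{\psi_i^\dagger} \circ F_\cket{\psi_l}^0 = \cra{\psi_i^\dagger}$. The natural strategy mirrors the sublemma: the two effects agree on every element of the MPDS $\Phi \cup \Psi$---by the sublemma's argument at $\cket{\psi_l}$ and by the trivial action of the filter on $\ker_\cket{\psi_l}$ at the remaining points---hence on $\cket{\chi_A}$, and a complete-mixing argument through Lemma~\ref{lemma:FaceZero} should then force equality, provided one can establish a definite sign for the signed difference of the two effects. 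This positivity step, automatic in the quantum model, is the principal technical hurdle in the general OPT setting. Once it is in place, iterating the sublemma sends $P_\Phi \circ \cket{\rho}$ into $\bigcap_i \ker_\cket{\psi_i} = \ker_\Psi = \Face_\Phi$; in particular any $\cket{\rho} \in \ker_\Phi = \Face_\Psi$ is mapped into $\Face_\Phi \cap \Face_\Psi$, which contains only $\cket{\emptyset}$: by Lemma~\ref{lemma:FacePhi} any $\cket{\sigma}$ in both faces satisfies $\craket{\chi_\Phi^\dagger|\sigma} + \craket{\chi_\Psi^\dagger|\sigma} = 2\craket{\gdis|\sigma}$, but by Lemma~\ref{lemma:PDSkMPDS} combined with linearity of $\dagger$ the left-hand side equals $\craket{\gdis|\sigma}$, forcing $\craket{\gdis|\sigma} = 0$ and hence $\cket{\sigma} = \cket{\emptyset}$.
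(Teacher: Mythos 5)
Your construction of $P_\Phi$ and your argument that it acts as the identity on $\Face_\Phi$ coincide with the paper's. The gap you flag in the second half is genuine: knowing that two effects agree on every element of an MPDS only pins down their value on $\cket{\chi_A}$, and since $\dim~\Vec_A$ generally exceeds $\NA$ this does not determine the effects; the complete-mixing argument via Lemma~\ref{lemma:FaceZero} requires the \emph{difference} of the two effects to again be an effect, and nothing in the postulates supplies that sign. So the iteration step $\cra{\psi_i^\dagger}\circ F^0_{\cket{\psi_l}} = \cra{\psi_i^\dagger}$ is not established, and without it your claim that $P_\Phi$ maps $\St_A$ into $\bigcap_i \ker_{\cket{\psi_i}}$ does not follow. (Your sublemma that $F^0_{\cket{\psi}}$ maps $\St_A$ into $\ker_{\cket{\psi}}$ is correct, but it turns out not to be needed.)

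The paper closes the argument with a monotonicity trick that avoids tracking the filters' action on arbitrary states altogether. For $\cket{\sigma}\in\ker_\Phi = \Face_\Psi$ (Lemma~\ref{lemma:KernelPhi}) there is $\delta\in\Realpp$ with $\delta\cket{\sigma}\le\cket{\chi_\Psi}$; since any diagram preserves the partial order on $\Vec_A$, this gives $\delta\, P_\Phi\circ\cket{\sigma} \le P_\Phi\circ\cket{\chi_\Psi} = \sum_{i} P_\Phi\circ\cket{\psi_i}$, and each summand vanishes because the filter $F^0_{\cket{\psi_i}}$ kills $\cket{\psi_i}$ while every other filter in the chain leaves it fixed (distinct members of a PDS are mutually perfectly distinguishable, so Eq.~\eqref{eq:derive_filter} applies directly). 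Hence $\delta\, P_\Phi\circ\cket{\sigma}\le\cket{\emptyset}$, and since $\St_A$ is a salient cone containing $P_\Phi\circ\cket{\sigma}$, this forces $P_\Phi\circ\cket{\sigma}=\cket{\emptyset}$. The point is that the filters only ever need to be evaluated on the elements of $\Psi$ themselves, which is exactly the data the filter definition provides. Note also that even on your own route you would need this same monotonicity argument to show $P_\Phi$ preserves $\Face_\Psi$ in the final intersection step --- at which point the sublemma and the problematic preservation step become superfluous.
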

\begin{proof}
 If $\Phi$ is an MPDS, then $\id_A$ is a projection process onto $\Face_\Phi = \St_A$,
 and thus the lemma is obvious.
 Suppose that $\Phi$ is not an MPDS.
 Let $\Phi \coloneqq \{ \cket{\phi_i} \}_{i=1}^k$.
 Also, let $\Psi \coloneqq \{ \cket{\phi_i} \}_{i=k+1}^\NA \in \PDS_A$
 be complementary to $\Phi$.
 It suffices to show that
 $P_\Phi \coloneqq F_{\cket{\phi_\NA}}^0 \circ \cdots
 \circ F_{\cket{\phi_{k+2}}}^0 \circ F_{\cket{\phi_{k+1}}}^0$
 is a projection process onto $\Face_\Phi$,
 where $F_{\cket{\phi_\NA}}^0,\ldots,F_{\cket{\phi_{k+2}}}^0,F_{\cket{\phi_{k+1}}}^0$
 are filters (see Eq.~\eqref{eq:derive_filter}).
 Arbitrarily choose $\cket{\rho} \in \Face_\Phi$ and
 $i \in \{ k+1,\ldots,\NA \}$.
 From $\craket{\phi_i^\dagger|\chi_\Phi} = 0$ and Lemma~\ref{lemma:FaceZero},
 $\craket{\phi_i^\dagger|\rho} = 0$ holds.
 Thus, from Lemma~\ref{lemma:Perp},
 $\cket{\phi_i}$ and $\cket{\rho}$ are perfectly distinguishable.
 Therefore, we have $F_{\cket{\phi_i}}^0 \circ \cket{\rho} = \cket{\rho}$,
 which gives $P_\Phi \circ \cket{\rho} = \cket{\rho}$.
 We also arbitrarily choose $\cket{\sigma} \in \ker_\Phi$.
 From Lemma~\ref{lemma:KernelPhi}, we have $\cket{\sigma} \in \Face_\Psi$,
 i.e., $\delta \cket{\sigma} \le \cket{\chi_\Psi}$ holds for some $\delta \in \Realpp$.
 Thus, $\delta P_\Phi \circ \cket{\sigma} \le P_\Phi \circ \cket{\chi_\Psi} =
 \sum_{j=k+1}^\NA P_\Phi \circ \cket{\phi_j} = \cket{\emptyset}$ obviously holds,
 which gives $P_\Phi \circ \cket{\sigma} = \cket{\emptyset}$.
\end{proof}

\subsection{Spectral decomposition} \label{subsec:derive_symmetry_spectral}

We here show that any state and any extended state have spectral decompositions.
The following lemma is useful for proving the existence of a spectral decomposition
of a state.
\begin{lemma} \label{lemma:StDecomp}
 Consider a PDS $\Phi \in \PDS_A \setminus \MPDS_A$
 (where $\setminus$ denotes the set difference operator).
 Then, for any $\cket{\rho} \in \ker_\Phi$,
 there exist $p \in \Real_+$, $\cket{\phi} \in \StNP_A \cap \ker_\Phi$,
 and $\cket{\rho'} \in \ker_\Phi$ satisfying
 \begin{alignat}{1}
  \includegraphics[scale=1.0]{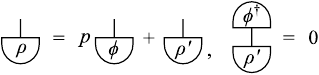} ~\raisebox{1.3em}{.}
  \label{eq:spectral_stdecomp}
 \end{alignat}
 Furthermore, $\Phi' \coloneqq \Phi \cup \{ \cket{\phi} \} \in \PDS_A$
 and $\cket{\rho'} \in \ker_{\Phi'}$ hold.
\end{lemma}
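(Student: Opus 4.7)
If $\cket{\rho} = \cket{\emptyset}$, take $p = 0$ and any $\cket{\phi} \in \StNP_A \cap \ker_\Phi$; such a $\cket{\phi}$ exists because $\Phi$ is non-maximal, so Lemma~\ref{lemma:PDSComplementKer} supplies a nonempty complementary PDS $\Psi$ whose elements lie in $\ker_\Phi$. Otherwise, assume $\cket{\rho} \neq \cket{\emptyset}$ and fix such a $\Psi$; by Lemma~\ref{lemma:KernelPhi} we have $\ker_\Phi = \Face_\Psi$, and this identification will be used throughout.

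Define $p \coloneqq \sup\{p' \ge 0 : p'\cket{\phi'} \le \cket{\rho} \text{ for some } \cket{\phi'} \in \StNP_A \cap \ker_\Phi\}$. The feasible set is closed (using closedness of $\StN_A$, of $\StP_A$, and of $\St_A$ as a convex cone) and bounded above (applying $\cra{\gdis}$ yields $p' \le \craket{\gdis|\rho}$), so the sup is attained at some pair $(p, \cket{\phi})$. Put $\cket{\rho'} \coloneqq \cket{\rho} - p\cket{\phi}$; then $\cket{\rho'} \in \St_A$ by the constraint, and $\cket{\rho'} \in \ker_\Phi = \Face_\Psi$ because $\delta\cket{\rho} \le \cket{\chi_\Psi}$ for some $\delta > 0$ implies $\delta\cket{\rho'} \le \cket{\chi_\Psi}$. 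Granted $\craket{\phi^\dagger|\rho'} = 0$, the remaining claims follow immediately: $\cket{\phi} \in \ker_\Phi$ yields $\craket{\chi_\Phi^\dagger|\phi} = 0$ via Lemma~\ref{lemma:PerpPDS}, so Lemma~\ref{lemma:PDSJoin} applied to $\Phi$ and $\{\cket{\phi}\}$ gives $\Phi' = \Phi \cup \{\cket{\phi}\} \in \PDS_A$; and $\craket{\chi_{\Phi'}^\dagger|\rho'} = \craket{\chi_\Phi^\dagger|\rho'} + \craket{\phi^\dagger|\rho'} = 0$ combined with Lemma~\ref{lemma:PerpPDS} gives $\cket{\rho'} \in \ker_{\Phi'}$.

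The main obstacle is thus proving $\craket{\phi^\dagger|\rho'} = 0$ from the maximality of $p$. Assuming the contrary, decompose $\cket{\rho'} = \sum_j c_j \cket{\psi_j}$ with $c_j > 0$ and $\cket{\psi_j} \in \StNP_A \cap \ker_\Phi$ (such a decomposition is available since $\cket{\rho'} \in \Face_\Psi$ is a positive combination of its normalized pure elements); the hypothesis $\craket{\phi^\dagger|\rho'} > 0$ forces some $\cket{\psi} \coloneqq \cket{\psi_j}$ to satisfy $\craket{\phi^\dagger|\psi} > 0$ together with $c\cket{\psi} \le \cket{\rho'}$ for some $c > 0$, whence $p\cket{\phi} + c\cket{\psi} \le \cket{\rho}$. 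My plan is to leverage the \poslink{pos:Filter}{filtering postulate}: the reversible filters $F_\cket{\phi}^\epsilon$ for $\epsilon \in (0,1]$ furnish a continuous family of processes that rescale $\cket{\phi}$ and act as the identity on $\ker_\cket{\phi}$; by composing such filters (and their inverses) with the submixture $p\cket{\phi} + c\cket{\psi}$ and re-decomposing the result inside $\Face_\Psi$ via a spectral extraction in the two-dimensional cone it spans, I expect to produce a pure state $\cket{\phi''} \in \StNP_A \cap \ker_\Phi$ and a coefficient $p'' > p$ with $p''\cket{\phi''} \le \cket{\rho}$, contradicting maximality. Translating the familiar fact that a nontrivial mixture of two mutually non-distinguishable pure states admits a pure component of weight strictly exceeding the larger mixing weight into purely operational language---using only the \poslink{pos:SymSharp}{symmetric sharpness}, \poslink{pos:CompletelyMixed}{complete mixing}, and \poslink{pos:Filter}{filtering} postulates, without presupposing the Euclidean Jordan algebra structure being derived---is the technical heart of the argument.
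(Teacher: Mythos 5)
There is a genuine gap, and you have identified it yourself: the entire argument reduces to showing $\craket{\phi^\dagger|\rho'} = 0$ for the maximizing pure component, and at that point you offer only a plan that you ``expect'' to work. That plan does not go through as described. The defining property of a filter $F_\cket{\phi}^\epsilon$ constrains its action only on $\cket{\phi}$ itself and on states in $\ker_\cket{\phi}$; the state $\cket{\psi}$ you extract from $\cket{\rho'}$ satisfies $\craket{\phi^\dagger|\psi} > 0$, so it is precisely \emph{not} in $\ker_\cket{\phi}$, and the postulates say nothing about how $F_\cket{\phi}^\epsilon$ acts on it. Moreover, the fact you are trying to import --- that a mixture of two non-distinguishable pure states contains a pure component of weight strictly exceeding the larger mixing weight --- is a spectral statement about the cone, and Lemma~\ref{lemma:StDecomp} is exactly the tool the paper uses to \emph{establish} spectral structure (Proposition~\ref{pro:Spectral} and, downstream, self-duality). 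Deriving it first, from the three postulates alone, is the hard content, and nothing in your sketch supplies it. A secondary problem: your supremum ranges over $\cket{\phi'} \in \StNP_A$, and its attainment requires compactness of the set of normalized pure states; at this stage of the derivation only closedness of $\StN_A$ (hence of $\St_A$) is assumed, and the set of atomic elements of a closed cone need not be closed, so even the existence of the maximizer $(p,\cket{\phi})$ is not justified.

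The paper sidesteps both difficulties by running the extraction in the opposite direction: it takes $p$ maximal with $\cket{\rho} \ge p\,\cket{\chi_\Psi}$ (an optimization over a single real parameter against the fixed state $\cket{\chi_\Psi}$, so only closedness of $\St_A$ is needed), shows via the projection process $P_\Psi$ that the residual $\cket{\sigma} = \cket{\rho} - p\cket{\chi_\Psi}$ makes $\cket{\sigma} + \cket{\chi_\Phi}$ non--completely-mixed, and then obtains the pure state $\cket{\phi}$ from the \poslink{pos:CompletelyMixed}{complete mixing postulate} as (the dagger of) a maximal effect annihilating $\cket{\sigma} + \cket{\chi_\Phi}$. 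Orthogonality $\craket{\phi^\dagger|\rho'} = 0$ then follows by direct computation from $\craket{\phi^\dagger|\sigma} = 0$ and $\craket{\phi^\dagger|\chi_\Psi} = 1$, with positivity of $\cket{\rho'}$ coming from Lemma~\ref{lemma:PDSComplementLe}. In short, the pure state is produced by the complete mixing postulate rather than by an optimization over pure states; you would need to restructure your argument along these lines (or find a genuinely new route to the orthogonality claim) for the proof to stand.
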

\begin{proof}
 Let $\Psi$ be a PDS complementary to $\Phi$ and
 $p$ be the maximum value of $p' \in \Real_+$ satisfying
 $\cket{\rho} \ge p' \cket{\chi_\Psi}$.
 Also, let $\cket{\sigma} \coloneqq \cket{\rho} - p \cket{\chi_\Psi}$.
 We have
 $0 \le \craket{\chi_\Phi^\dagger|\sigma} \le \craket{\chi_\Phi^\dagger|\rho} = 0$,
 i.e., $\cket{\sigma} \in \ker_\Phi$.

 Firstly, we prove that $\cket{\sigma} + \cket{\chi_\Phi}$ is not completely mixed.
 Assume, by contradiction, that $\cket{\sigma} + \cket{\chi_\Phi}$ is completely mixed;
 then, there exists $\delta \in \Realpp$ such that
 \begin{alignat}{1}
  \includegraphics[scale=1.0]{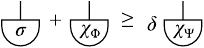} ~\raisebox{.5em}{.}
  \label{eq:spec_sigma}
 \end{alignat}
 Since $\Face_\Phi = \ker_\Psi$ and $\ker_\Phi = \Face_\Psi$ hold
 from Lemma~\ref{lemma:KernelPhi},
 we have $\cket{\sigma} \in \ker_\Phi = \Face_\Psi$,
 $\cket{\chi_\Phi} \in \Face_\Phi = \ker_\Psi$,
 and $\cket{\chi_\Psi} \in \Face_\Psi$.
 Thus, $P_\Psi \circ \cket{\sigma} = \cket{\sigma}$,
 $P_\Psi \circ \cket{\chi_\Phi} = \cket{\emptyset}$,
 and $P_\Psi \circ \cket{\chi_\Psi} = \cket{\chi_\Psi}$ hold,
 where $P_\Psi$ is a projection process onto $\Face_\Psi$.
 This yields
 \begin{alignat}{1}
  \includegraphics[scale=1.0]{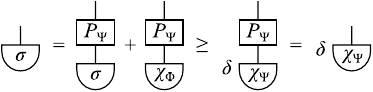} ~\raisebox{1.0em}{,}
 \end{alignat}
 where the inequality follows from Eq.~\eqref{eq:spec_sigma}.
 Therefore, we have
 \begin{alignat}{1}
  \includegraphics[scale=1.0]{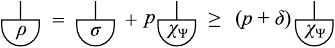} ~\raisebox{.5em}{,}
 \end{alignat}
 which contradicts the definition of $p$.
 Hence, $\cket{\sigma} + \cket{\chi_\Phi}$ is not completely mixed.

 Secondly, we show that
 there exist $\cket{\phi} \in \StNP_A \cap \ker_\Phi$
 and $\cket{\rho'} \in \ker_\Phi$ satisfying Eq.~\eqref{eq:spectral_stdecomp}.
 From the \poslink{pos:CompletelyMixed}{complete mixing postulate},
 there exists $\cket{\phi} \in \StNP_A$ satisfying
 $\cra{\phi^\dagger} [\cket{\sigma} + \cket{\chi_\Phi}] = 0$,
 i.e., $\craket{\phi^\dagger|\sigma} = \craket{\phi^\dagger|\chi_\Phi} = 0$.
 From $\craket{\phi^\dagger|\chi_\Phi} = 0$, we have $\cket{\phi} \in \ker_\Phi$.
 From Lemma~\ref{lemma:PDSComplementLe},
 $\cket{\phi} \le \cket{\chi_\Psi}$ holds,
 and thus $\cket{\rho} = \cket{\sigma} + p \cket{\chi_\Psi} \ge
 p \cket{\chi_\Psi} \ge p \cket{\phi}$ holds.
 Let $\cket{\rho'} \coloneqq \cket{\rho} - p \cket{\phi} \in \St_A$;
 then, $\cket{\rho'} \in \ker_\Phi$ holds from $\cket{\rho} \in \ker_\Phi$.
 Moreover, we have
 \begin{alignat}{1}
  \includegraphics[scale=1.0]{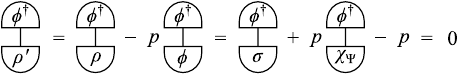} ~\raisebox{1.3em}{,}
 \end{alignat}
 where the last equality follows from the fact that,
 from Lemma~\ref{lemma:ChiOne}, we have $\craket{\phi^\dagger|\chi} = 1$
 and thus $\craket{\phi^\dagger|\chi_\Psi}
 = \craket{\phi^\dagger|\chi} - \craket{\phi^\dagger|\chi_\Phi} = 1$.
 Therefore, Eq.~\eqref{eq:spectral_stdecomp} holds.

 Finally, we show $\Phi' \in \PDS_A$ and $\cket{\rho'} \in \ker_{\Phi'}$.
 From $\craket{\phi^\dagger|\chi_\Phi} = 0$ and Lemma~\ref{lemma:PDSJoin},
 we have $\Phi' \in \PDS_A$.
 Also, since we have $\craket{\chi_{\Phi'}^\dagger|\rho'} = \craket{\chi_\Phi^\dagger|\rho'}
 + \craket{\phi^\dagger|\rho'} = 0$,
 $\cket{\rho'} \in \ker_{\Phi'}$ holds from Lemma~\ref{lemma:PerpPDS}.
\end{proof}

Now, we are in a position to show the following proposition.
\begin{proposition}[Spectral decompositions of states] \label{pro:Spectral}
 For any $\cket{\rho} \in \St_A$,
 there exist $p_1,\ldots,p_\NA \in \Real_+$ and
 $\{ \cket{\phi_i} \}_{i=1}^\NA \in \MPDS_A$ satisfying
 \begin{alignat}{1}
  \includegraphics[scale=1.0]{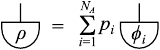} ~\raisebox{.5em}{.}
  \label{eq:spectral}
 \end{alignat}
\end{proposition}
The decomposition shown in Eq.\eqref{eq:spectral} is called a \termdef{spectral decomposition}
of $\cket{\rho}$.
\begin{proof}
 Let $\cket{\rho_1} \coloneqq \cket{\rho}$.
 We show that, for each $k \in \{1,\ldots,\NA\}$,
 $\cket{\rho_k}$ can be expressed in the form
 \begin{alignat}{1}
  \includegraphics[scale=1.0]{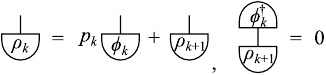} ~\raisebox{1.3em}{.}
  \label{eq:spectral_main}
 \end{alignat}
 We now proceed by induction on $k$.
 Let $\Phi_0$ be the empty set.
 Since $\Phi_0 \in \PDS_A \setminus \MPDS_A$ and
 $\cket{\rho_1} \in \St_A = \ker_{\Phi_0}$ hold,
 Lemma~\ref{lemma:StDecomp} gives that
 there exist $p_1 \in \Real_+$, $\cket{\phi_1} \in \StNP_A$,
 and $\cket{\rho_2} \in \St_A$ satisfying Eq.~\eqref{eq:spectral_main} with $k = 1$.
 $\cket{\rho_2} \in \ker_{\Phi_1}$ also holds, where $\Phi_1 \coloneqq \{ \cket{\phi_1} \}$.
 Consider the case $k \in \{2,\ldots,\NA\}$;
 let $\Phi_k \coloneqq \Phi_{k-1} \cup \{\cket{\phi_k}\} = \{ \cket{\phi_i} \}_{i=1}^k$.
 Since $\Phi_{k-1} \in \PDS_A \setminus \MPDS_A$ and
 $\cket{\rho_k} \in \ker_{\Phi_{k-1}}$ hold,
 from Lemma~\ref{lemma:StDecomp},
 there exist $p_k \in \Real_+$, $\cket{\phi_k} \in \StNP_A \cap \ker_{\Phi_{k-1}}$,
 and $\cket{\rho_{k+1}} \in \ker_{\Phi_k}$ satisfying Eq.~\eqref{eq:spectral_main}.

 From $|\Phi_\NA| = \NA$, $\Phi_\NA$ is obviously an MPDS.
 Thus, from $\cket{\rho_{\NA+1}} \in \ker_{\Phi_\NA} = \{ \cket{\emptyset} \}$,
 $\cket{\rho_{\NA+1}} = \cket{\emptyset}$ holds.
 By recursively applying Eq.~\eqref{eq:spectral_main}, we have
 \begin{alignat}{1}
  \includegraphics[scale=1.0]{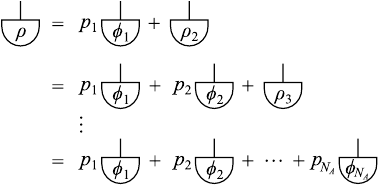} ~\raisebox{.5em}{.}
  \label{eq:spectral_proof_rho_iterate}
 \end{alignat}
\end{proof}

Using Proposition~\ref{pro:Spectral}, we can show that
every extended state has also a spectral decomposition.
\begin{proposition}[Spectral decompositions of extended states]
 \label{pro:SpectralVec} 
 For any $\cket{\ol{v}} \in \Vec_A$,
 there exist $c_1,\ldots,c_\NA \in \Real$ and
 $\{ \cket{\phi_i} \}_{i=1}^\NA \in \MPDS_A$ satisfying
 \begin{alignat}{1}
  \includegraphics[scale=1.0]{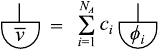} ~\raisebox{.5em}{.}
  \label{eq:spectral_vec}
 \end{alignat}
 \end{proposition}
The decomposition shown in Eq.\eqref{eq:spectral_vec} is called a \termdef{spectral decomposition}
of $\cket{\ol{v}}$.
\begin{proof}
 $\cket{\ol{v}} \in \Vec_A$ can be expressed in the form
 $\cket{\ol{v}} = \cket{v_+} - \cket{v_-}$ for some $\cket{v_+}, \cket{v_-} \in \St_A$.
 Arbitrarily choose $q \in \Real_+$ such that $q \cket{\chi} \ge \cket{v_-}$.
 Note that since, from Lemma~\ref{lemma:ChiCM}, $\cket{\chi}$ is completely mixed,
 such $q$ exists.
 Since $\cket{\rho} \coloneqq \cket{\ol{v}} + q \cket{\chi}
 = \cket{v_+} + [q \cket{\chi} - \cket{v_-}] \ge \cket{v_+}$ holds,
 $\cket{\rho} \in \St_A$ holds.
 From Proposition~\ref{pro:Spectral}, $\cket{\rho}$ has a spectral decomposition
 as in Eq.~\eqref{eq:spectral}
 with $p_1,\ldots,p_\NA \in \Real_+$ and $\{ \cket{\phi_i} \}_{i=1}^\NA \in \MPDS_A$.
 Thus, letting $c_i \coloneqq p_i - q$, we have
 \begin{alignat}{1}
  \includegraphics[scale=1.0]{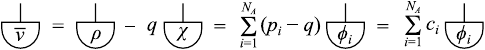} ~\raisebox{.5em}{.}
 \end{alignat}
\end{proof}

\begin{ex}[quantum theory]
 A spectral decomposition of a state (resp. extended state)
 is a spectral decomposition of a positive semidefinite matrix
 (resp. Hermitian matrix).
\end{ex}

\subsection{Self-duality} \label{subsec:SelfDual}

Using Proposition~\ref{pro:Spectral}, we will derive that each state space is self-dual.

\begin{lemma} \label{lemma:InnerProd}
 The binary operation $\braket{~,~}$ on $\Vec_A$ defined by
 \begin{alignat}{3}
  \braket{\ol{v}, \ol{w}}
  &\coloneqq \craket{\ol{v}^\dagger|\ol{w}},
  &\quad \cket{\ol{v}}, \cket{\ol{w}} \in \Vec_A
  \label{eq:derive_inner_prod}
 \end{alignat}
 is an inner product.
\end{lemma}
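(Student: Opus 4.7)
The plan is to verify the three defining properties of an inner product, namely bilinearity, symmetry, and positive definiteness, for $\braket{\ol{v}, \ol{w}} \coloneqq \craket{\ol{v}^\dagger|\ol{w}}$.

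Bilinearity is almost immediate. The paper has already established (just after Lemma~\ref{lemma:SharpMix}) that the map $\dagger: \Vec_A \to \Vec_A^*$ given by $\cket{\ol{v}} \mapsto \cra{\ol{v}^\dagger}$ is linear. Composing with the pairing $\cra{e}\cket{\ol{w}}$, which is itself linear in $\cket{\ol{w}}$ and in $\cra{e}$, yields bilinearity of $\braket{\cdot,\cdot}$ in both arguments.

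For symmetry, I would first use Proposition~\ref{pro:SpectralVec} to write arbitrary extended states in the form $\cket{\ol{v}} = \sum_i c_i \cket{\psi_i}$ and $\cket{\ol{w}} = \sum_j d_j \cket{\phi_j}$ with $c_i, d_j \in \Real$ and $\cket{\psi_i}, \cket{\phi_j} \in \StNP_A$. Applying the definition of $\dagger$ on extended states together with bilinearity gives
\begin{alignat}{1}
 \braket{\ol{v},\ol{w}} = \sum_{i,j} c_i d_j \craket{\psi_i^\dagger|\phi_j},
 \qquad
 \braket{\ol{w},\ol{v}} = \sum_{i,j} c_i d_j \craket{\phi_j^\dagger|\psi_i}.
\end{alignat}
The \poslink{pos:SymSharp}{symmetric sharpness postulate} guarantees $\craket{\psi_i^\dagger|\phi_j} = \craket{\phi_j^\dagger|\psi_i}$ termwise, so the two sums agree and symmetry holds.

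For positive definiteness, I would specialize the spectral decomposition so that the pure states constituting $\cket{\ol{v}}$ come from a single MPDS: by Proposition~\ref{pro:SpectralVec}, $\cket{\ol{v}} = \sum_{i=1}^\NA c_i \cket{\phi_i}$ with $\{ \cket{\phi_i} \}_{i=1}^\NA \in \MPDS_A$. Lemma~\ref{lemma:MPDSMeas} tells us that $\{ \cra{\phi_i^\dagger} \}_{i=1}^\NA$ perfectly distinguishes this MPDS, so $\craket{\phi_i^\dagger|\phi_j} = \delta_{i,j}$. Consequently
\begin{alignat}{1}
 \braket{\ol{v},\ol{v}} = \sum_{i,j=1}^\NA c_i c_j \craket{\phi_i^\dagger|\phi_j} = \sum_{i=1}^\NA c_i^2 \ge 0.
\end{alignat}
Equality forces every $c_i = 0$, and hence $\cket{\ol{v}} = \cket{\emptyset}$. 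The main (mild) obstacle is the last step: one has to know that the coefficients in a spectral decomposition over a fixed MPDS are uniquely pinned down, but this is immediate from applying $\cra{\phi_j^\dagger}$ to $\cket{\ol{v}}$ and using orthogonality, so nothing substantive is required beyond the lemmas already in hand.
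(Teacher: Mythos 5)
Your proposal is correct and follows essentially the same route as the paper: linearity from the linearity of $\dagger$ and the pairing, symmetry by decomposing both arguments into normalized pure states and applying the symmetric sharpness postulate termwise, and positive definiteness via a spectral decomposition over an MPDS giving $\braket{\ol{v},\ol{v}} = \sum_i c_i^2$. The final worry about uniqueness of the coefficients is unnecessary — once the fixed decomposition has all $c_i = 0$, the identity $\cket{\ol{v}} = \sum_i c_i \cket{\phi_i}$ already yields $\cket{\ol{v}} = \cket{\emptyset}$ — but this does not affect correctness.
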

\begin{proof}
 It suffices to show that
 $\braket{~,~}$ satisfies
 (1)symmetry: $\braket{\ol{v},\ol{w}} = \braket{\ol{w},\ol{v}}$
 $~(\forall \cket{\ol{v}},\cket{\ol{w}} \in \Vec_A)$,
 (2)linearity in the second argument: $\braket{\ol{v},a_1 \cket{\ol{w_1}} + a_2 \cket{\ol{w_2}}}
 = a_1 \braket{\ol{v},\ol{w_1}} + a_2 \braket{\ol{v},\ol{w_2}}$
 $~(\forall a_1,a_2 \in \Real, \cket{\ol{v}},\cket{\ol{w_1}},\cket{\ol{w_2}} \in \Vec_A)$,
 and (3)positive-definiteness:
 $\braket{\ol{v},\ol{v}} > 0$
 $~(\forall \cket{\ol{v}} \in \Vec_A \setminus \{ \cket{\emptyset }\})$.

 (1):
 $\cket{\ol{v}} \in \Vec_A$ and $\cket{\ol{w}} \in \Vec_A$ can be expressed in the form
 $\cket{\ol{v}} = \sum_{i=1}^l c_i \cket{\phi_i}$
 and $\cket{\ol{w}} = \sum_{j=1}^t d_j \cket{\varphi_j}$,
 where $c_1,\ldots,c_l, d_1,\ldots,d_t \in \Real$ and
 $\cket{\phi_1},\ldots,\cket{\phi_l}, \cket{\varphi_1},\ldots,\cket{\varphi_t} \in \StNP_A$
 hold.
 We have
 \begin{alignat}{1}
  \includegraphics[scale=1.0]{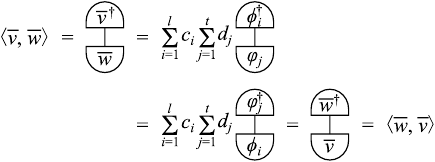} ~\raisebox{1.0em}{.}
 \end{alignat}

 (2): For any $\cket{\ol{v}} \in \Vec_A$,
 since $\cra{\ol{v}^\dagger} \in \Vec_A^*$ is a linear functional on $\Vec_A$,
 the map $\Vec_A \ni \cket{\ol{w}} \mapsto \braket{\ol{v},\ol{w}}
 = \craket{\ol{v}^\dagger|\ol{w}} \in \Real$ is obviously linear.

 (3): For any $\cket{\ol{v}} \in \Vec_A \setminus \{ \cket{\emptyset} \}$,
 which has a spectral decomposition of the form $\cket{\ol{v}} = \sum_{i=1}^\NA c_i \cket{\phi_i}$
 with $c_1,\ldots,c_\NA \in \Real$ and $\{ \cket{\phi_i} \}_{i=1}^\NA$, we have
 \begin{alignat}{1}
  \includegraphics[scale=1.0]{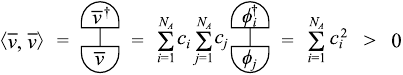} ~\raisebox{0.8em}{.}
 \end{alignat}
\end{proof}

\begin{ex}[quantum theory]
 $\braket{\ol{v},\ol{w}} \coloneqq \craket{\ol{v}^\dagger|\ol{w}}
 = \Tr~[\cket{\ol{v}} \cdot \cket{\ol{w}}]$
 $~(\cket{\ol{v}}, \cket{\ol{w}} \in \Vec_A)$ is an inner product.
\end{ex}

\begin{proposition} \label{pro:SelfDual} 
 For any system $A$, $\St_A$ is self-dual with respect to the inner product
 $\braket{~,~}$ of Eq.~\eqref{eq:derive_inner_prod}.
\end{proposition}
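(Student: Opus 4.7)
The plan is to prove the two inclusions $\St_A \subseteq \St_A^\star$ and $\St_A^\star \subseteq \St_A$ separately, where $\St_A^\star = \{ \cket{\ol{x}} \in \Vec_A : \forall \cket{\rho} \in \St_A, \braket{\ol{x},\rho} \ge 0 \}$.

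For the inclusion $\St_A \subseteq \St_A^\star$, I would take arbitrary $\cket{\rho},\cket{\sigma} \in \St_A$ and decompose them as nonnegative combinations of normalized pure states, $\cket{\rho} = \sum_j c_j \cket{\phi_j}$ and $\cket{\sigma} = \sum_i d_i \cket{\varphi_i}$ with $c_j, d_i \in \Real_+$ and $\cket{\phi_j},\cket{\varphi_i} \in \StNP_A$ (such decompositions exist because every state is a nonnegative combination of normalized pure states, as noted after Carath\'{e}odory's theorem in the excerpt). Then by definition of $\dagger$ on extended states, $\cra{\rho^\dagger} = \sum_j c_j \cra{\phi_j^\dagger}$, so $\braket{\rho,\sigma} = \sum_{i,j} c_j d_i \craket{\phi_j^\dagger|\varphi_i} \ge 0$ because each $\craket{\phi_j^\dagger|\varphi_i}$ is a probability (the maximal effect $\cra{\phi_j^\dagger}$ applied to the normalized pure state $\cket{\varphi_i}$).

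For the reverse inclusion $\St_A^\star \subseteq \St_A$, I would use Proposition~\ref{pro:SpectralVec}: any $\cket{\ol{x}} \in \Vec_A$ admits a spectral decomposition $\cket{\ol{x}} = \sum_{i=1}^\NA c_i \cket{\phi_i}$ with $c_i \in \Real$ and $\{\cket{\phi_i}\}_{i=1}^\NA \in \MPDS_A$. The key observation is that by Lemma~\ref{lemma:MPDSMeas}, $\{\cra{\phi_i^\dagger}\}_{i=1}^\NA$ is the maximal measurement perfectly distinguishing this MPDS, so $\craket{\phi_i^\dagger|\phi_j} = \delta_{i,j}$. Assuming $\cket{\ol{x}} \in \St_A^\star$, apply the defining inequality to each $\cket{\phi_j} \in \St_A$: $0 \le \braket{\ol{x},\phi_j} = \sum_i c_i \craket{\phi_i^\dagger|\phi_j} = c_j$. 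Hence every coefficient is nonnegative, which means $\cket{\ol{x}} = \sum_i c_i \cket{\phi_i}$ is a nonnegative combination of (normalized pure) states and therefore lies in the convex cone $\St_A$.

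I do not anticipate any major obstacle: the heavy lifting has already been done in Proposition~\ref{pro:SpectralVec} (spectral decomposition of extended states) and Lemma~\ref{lemma:MPDSMeas} (uniqueness of the maximal measurement distinguishing an MPDS, which forces the orthogonality relation $\craket{\phi_i^\dagger|\phi_j} = \delta_{i,j}$). The only subtle point to handle carefully is to verify that the spectral coefficients $c_j$ are indeed recovered by $\braket{\ol{x},\phi_j}$, which follows immediately from the orthogonality relation and the linearity of $\dagger$ on $\Vec_A$ established in the paragraph following Lemma~\ref{lemma:SharpMix}.
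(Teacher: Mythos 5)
Your proposal is correct and follows essentially the same route as the paper: both directions hinge on the spectral decomposition of extended states (Proposition~\ref{pro:SpectralVec}) together with the orthogonality relation $\craket{\phi_i^\dagger|\phi_j} = \delta_{i,j}$ from Lemma~\ref{lemma:MPDSMeas}, which gives $\braket{\ol{x},\phi_j} = c_j$. The only cosmetic difference is that the paper proves the hard inclusion by contraposition (if $\cket{\ol{x}} \notin \St_A$ then some eigenvalue $c_i$ is negative, so $\braket{\ol{x},\phi_i} < 0$) whereas you argue it directly, and the paper dispatches the easy inclusion with the one-line observation $\cra{\ol{v}^\dagger} \in \Eff_A$ rather than expanding both states into pure components.
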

\begin{proof}
 Arbitrarily choose $\cket{\ol{v}} \in \Vec_A$.
 It suffices to show that
 $\cket{\ol{v}} \in \St_A$ holds if and only if
 $\braket{\ol{v},\rho} \ge 0$ holds for any $\cket{\rho} \in \St_A$.
 The ``only if'' part is obvious from $\cra{\ol{v}^\dagger} \in \Eff_A$.
 To prove the ``if'' part,
 we assume $\cket{\ol{v}} \not\in \St_A$ and show that
 there exists $\cket{\rho} \in \St_A$ such that $\braket{\ol{v},\rho} < 0$.
 $\cket{\ol{v}}$ has a spectral decomposition
 $\cket{\ol{v}} = \sum_{i=1}^\NA c_i \cket{\phi_i}$
 with $c_1,\ldots,c_\NA \in \Real$ and $\{ \cket{\phi_i} \}_{i=1}^\NA \in \MPDS_A$.
 From $\cket{\ol{v}} \not\in \St_A$,
 $c_i < 0$ holds for some $i \in \{1,\ldots,\NA\}$.
 For such $i$, we have $\braket{\ol{v},\phi_i} = \craket{\ol{v}^\dagger|\phi_i} = c_i < 0$.
 Thus, the ``if'' part is proved.
\end{proof}

\subsection{Homogeneity and symmetry}

We will derive that each state space is homogeneous and thus symmetric.
Homogeneity is easily derived from Proposition~\ref{pro:Spectral} and
the \poslink{pos:Filter}{filtering postulate}.

\begin{proposition} \label{pro:Homogeneous} 
 For any system $A$, $\St_A$ is homogeneous.
\end{proposition}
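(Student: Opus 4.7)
The strategy is to pivot through the invariant state $\cket{\chi_A}$: I will construct, for every completely mixed state $\cket{\rho}$, a cone automorphism $g_{\cket{\rho}}$ of $\St_A$ with $g_{\cket{\rho}}[\cket{\rho}] = \cket{\chi_A}$. Given two completely mixed states $\cket{\rho},\cket{\sigma}$, the composite $g_{\cket{\rho},\cket{\sigma}} \coloneqq g_{\cket{\sigma}}^{-1} \circ g_{\cket{\rho}}$ is then a cone automorphism of $\St_A$ satisfying $g_{\cket{\rho},\cket{\sigma}}[\cket{\rho}] = \cket{\sigma}$, which is exactly the homogeneity condition.

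By Proposition~\ref{pro:Spectral}, write $\cket{\rho} = \sum_{i=1}^\NA p_i \cket{\phi_i}$ with $\Phi \coloneqq \{\cket{\phi_i}\}_{i=1}^\NA \in \MPDS_A$. Since $\cket{\rho}$ is completely mixed, the elementary converse of Postulate~\ref{postulate:CompletelyMixed} (recorded immediately after its statement) gives $p_i = \craket{\phi_i^\dagger|\rho} > 0$ for every $i$. Set $p_* \coloneqq \min_i p_i > 0$ and $c_i \coloneqq p_*/p_i \in (0,1]$. The \poslink{pos:Filter}{filtering postulate} then supplies a reversible filter $F_{\cket{\phi_i}}^{c_i}$ with an inverse process $\tF_i \in \Proc_{A \to A}$ satisfying $\tF_i \circ F_{\cket{\phi_i}}^{c_i} \eqlocal \id_A \eqlocal F_{\cket{\phi_i}}^{c_i} \circ \tF_i$. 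I define
\begin{equation*}
 g_{\cket{\rho}} \coloneqq p_*^{-1}\,F_{\cket{\phi_\NA}}^{c_\NA} \circ \cdots \circ F_{\cket{\phi_1}}^{c_1}
\end{equation*}
as a linear map on $\Vec_A$. Both $g_{\cket{\rho}}$ and its linear inverse $p_*\,\tF_1 \circ \cdots \circ \tF_\NA$ are positive scalars times compositions of processes, and hence each sends $\St_A$ into $\St_A$; therefore $g_{\cket{\rho}}$ is a cone automorphism of $\St_A$. Because distinct elements of an MPDS are pairwise perfectly distinguishable, $\cket{\phi_i} \in \ker_{\cket{\phi_j}}$ whenever $i \neq j$, and Eq.~\eqref{eq:derive_filter} yields $F_{\cket{\phi_j}}^{c_j} \circ \cket{\phi_i} = \cket{\phi_i}$ for $i \neq j$ and $F_{\cket{\phi_j}}^{c_j} \circ \cket{\phi_j} = c_j \cket{\phi_j}$. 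Applying the filters term by term and invoking Lemma~\ref{lemma:Chi} then gives
\begin{equation*}
 g_{\cket{\rho}}[\cket{\rho}] = p_*^{-1} \sum_{i=1}^\NA c_i p_i \cket{\phi_i} = \sum_{i=1}^\NA \cket{\phi_i} = \cket{\chi_\Phi} = \cket{\chi_A},
\end{equation*}
as required.

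The main subtlety is certifying that each filter (and hence $g_{\cket{\rho}}$) extends to a genuine cone automorphism of $\St_A$, rather than merely a positive linear endomorphism. This is exactly what the reversibility clause of the \poslink{pos:Filter}{filtering postulate} delivers: it supplies an inverse \emph{process} $\tF_i$ whose action automatically preserves $\St_A$. The restriction $c_i \in (0,1]$, enforced by the choice $p_* = \min_i p_i$, is what makes every required filter legitimately available, and the rescaling by $p_*^{-1}$ is what lands the orbit exactly on $\cket{\chi_A}$ instead of a nonzero scalar multiple of it. Once these structural points are granted, the identity $g_{\cket{\rho}}[\cket{\rho}] = \cket{\chi_A}$ is immediate from the defining property of a filter and the mutual distinguishability of an MPDS.
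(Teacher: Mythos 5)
Your proof is correct and follows essentially the same route as the paper's: both pivot through the invariant state $\cket{\chi_A}$ by applying filters $F_{\cket{\phi_i}}^{c}$ along a spectral decomposition and then composing one such map with the inverse of another. The only difference is cosmetic plus a small technical refinement — you map $\cket{\rho}\to\cket{\chi_A}$ rather than $\cket{\chi_A}\to\cket{\rho}$, and your rescaling by $p_*=\min_i p_i$ keeps every filter parameter inside $[0,1]$, which the paper's direct use of $F_{\cket{\phi_i}}^{p_i}$ implicitly assumes.
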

\begin{proof}
 It is sufficient to show that, for any completely mixed state $\cket{\rho} \in \St_A$,
 there exists a reversible process $f_\cket{\rho} \in \Proc_{A \to A}$
 satisfying $f_\cket{\rho} \circ \cket{\chi} = \cket{\rho}$.
 Indeed, in this case, let $\tf_\cket{\rho}$ be an inverse process of $f_\cket{\rho}$;
 then, for any completely mixed states $\cket{\rho_1},\cket{\rho_2} \in \St_A$,
 $g \coloneqq f_\cket{\rho_2} \circ \tf_\cket{\rho_1}$ is an automorphism on $\St_A$
 such that $g \circ \cket{\rho_1} = \cket{\rho_2}$.
 Thus, $\St_A$ is homogeneous.

 We arbitrarily choose a completely mixed state $\cket{\rho} \in \St_A$
 and show that there exists a reversible process $f \in \Proc_{A \to A}$
 satisfying $f \circ \cket{\chi} = \cket{\rho}$.
 $\cket{\rho}$ has a spectral decomposition
 $\cket{\rho} = \sum_{i=1}^\NA p_i \cket{\phi_i}$
 with $p_1,\ldots,p_\NA \in \Real_+$ and $\{ \cket{\phi_i} \}_{i=1}^\NA \in \MPDS_A$.
 Since $\cket{\rho}$ is completely mixed,
 $p_i > 0$ holds for any $i \in \{1,\ldots,\NA\}$.
 Here, let $f \coloneqq F_\cket{\phi_\NA}^{p_\NA} \circ \cdots \circ F_\cket{\phi_2}^{p_2}
 \circ F_\cket{\phi_1}^{p_1} \in \Proc_{A \to A}$.
 Let $\tf \coloneqq \tF_\cket{\phi_1}^{p_1} \circ \tF_\cket{\phi_2}^{p_2} \circ \cdots \circ
 \tF_\cket{\phi_\NA}^{p_\NA}$,
 where $\tF_\cket{\phi_i}^{p_i}$ is an inverse process of $F_\cket{\phi_i}^{p_i}$.
 Then, we can easily see $f \circ \tf \eqlocal \id_A \eqlocal \tf \circ f$,
 i.e., $f$ is reversible.
 Moreover, since $f \circ \cket{\phi_i} = p_i \cket{\phi_i}$ holds
 for any $i \in \{1,\ldots,\NA\}$,
 we have
 \begin{alignat}{1}
  \includegraphics[scale=1.0]{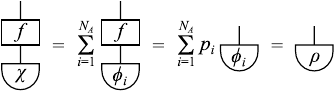} ~\raisebox{.8em}{.}
 \end{alignat}
\end{proof}

\begin{thm} \label{thm:SymmetricCone} 
 For any system $A$, $\St_A$ is a symmetric cone.
\end{thm}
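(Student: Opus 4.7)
The proof plan is essentially a one-line combination of results already established in this section. By the definition given in Subsection~\ref{subsec:derive_abstract_symmetric}, a convex cone is symmetric exactly when it is self-dual and homogeneous. The fact that $\St_A$ is a (closed, salient) convex cone in the finite-dimensional real vector space $\Vec_A$ has already been recorded in Section~\ref{sec:OPT}. Self-duality with respect to the inner product $\braket{\ol{v},\ol{w}} \coloneqq \craket{\ol{v}^\dagger|\ol{w}}$ of Eq.~\eqref{eq:derive_inner_prod} is the content of Proposition~\ref{pro:SelfDual}, and homogeneity is the content of Proposition~\ref{pro:Homogeneous}.

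Thus the plan is simply to invoke the two propositions in sequence and cite the definition. There is no remaining obstacle: the hard work---constructing a spectral decomposition to set up self-duality, and using filters combined with spectral decomposition to build, for any completely mixed state $\cket{\rho}$, a reversible process mapping $\cket{\chi}$ to $\cket{\rho}$---has already been completed in Propositions~\ref{pro:SelfDual} and \ref{pro:Homogeneous}. So the full proof is a single sentence pointing to those two propositions and the definition of a symmetric cone.
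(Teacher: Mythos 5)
Your proposal matches the paper's proof exactly: the paper also proves Theorem~\ref{thm:SymmetricCone} in one line by combining Proposition~\ref{pro:SelfDual} (self-duality) and Proposition~\ref{pro:Homogeneous} (homogeneity) with the definition of a symmetric cone from Subsection~\ref{subsec:derive_abstract_symmetric}. No gaps; nothing further is needed.
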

\begin{proof}
 Combining Propositions~\ref{pro:SelfDual} and \ref{pro:Homogeneous},
 $\St_A$ is self-dual and homogeneous.
\end{proof}

The following theorem shows that a cone is symmetric if and only if
it is the cone of squares of some EJA.
\begin{thm}[\pagetarget{thm:KoecherVinberg}{Koecher-Vinberg theorem}\cite{Koe-1957,Vin-1960}]
 \label{thm:Koecher_Vinberg} 
 For any symmetric cone $\mC$ and any interior point $e \in \mC$,
 there exists an EJA $\EJA$ with the identity element $e$
 such that $\mC$ is the cone of squares of $\EJA$.
 Furthermore, $\EJA$ is the real vector space spanned by $\mC$.
 Conversely, for any EJA, its cone of squares is a symmetric cone.
\end{thm}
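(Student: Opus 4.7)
The plan is to handle the two directions of the equivalence separately, following the classical Koecher--Vinberg strategy. For the easier converse direction, I would start from an EJA $\EJA$ with Jordan product $\circ$ and identity $e$, and introduce the quadratic representation $P(x) \coloneqq 2 L_x^2 - L_{x^2}$, where $L_x y \coloneqq x \circ y$. Equipping $\EJA$ with the trace form $\braket{x,y} \coloneqq \tr(L_{x \circ y})$ yields a positive-definite inner product satisfying the associativity property $\braket{x \circ y,z} = \braket{x,y \circ z}$. Invoking the spectral theorem for EJAs (every element decomposes into a real combination of pairwise-orthogonal primitive idempotents), one identifies the cone of squares with the set of elements whose spectrum lies in $\Real_+$, and this set is self-dual with respect to the trace form. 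Homogeneity follows because for any element $x$ in the interior of the cone, $P(x)$ is a linear automorphism of the cone with $P(x)(e) = x^2$, and the orbit of $e$ under these maps exhausts the interior.

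For the forward direction, let $\mC \subset \Vec$ be the given symmetric cone with specified interior point $e$. Self-duality furnishes an inner product $\braket{\cdot,\cdot}_0$ on $\Vec$ with $\mC = \mC^\star$. The central tool will be the characteristic function $\varphi(x) \coloneqq \int_{\mC} e^{-\braket{x,y}_0}\, dy$, which is smooth and strictly logarithmically convex on the interior of $\mC$; its Hessian and higher derivatives at any point are invariant under the stabilizer of that point inside the linear automorphism group $G(\mC)$ of the cone. After rescaling $\braket{\cdot,\cdot}_0$ so that $D^2(-\log \varphi)|_e$ matches the inner product, I would define the Jordan product by declaring $\braket{x \circ y, z}_0$ to equal an appropriately normalized value of $-D^3(\log \varphi)|_e(x,y,z)$, a form which is symmetric in all three arguments by construction. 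This immediately makes $e$ the unit, and the ambient space $\Vec$ is the real span of $\mC$ since the interior of $\mC$ is open in $\Vec$.

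The main obstacle will be verifying the Jordan identity $x \circ (x^2 \circ y) = x^2 \circ (x \circ y)$ and matching the cone of squares of the resulting algebra with $\mC$ itself. This is where the homogeneity hypothesis is essential: the transitive action of $G(\mC)$ on the interior, combined with the fact that the stabilizer of $e$ is compact (it preserves the Hessian metric at $e$), constrains the Lie algebra of $G(\mC)$ enough that the cubic form $D^3(\log \varphi)|_e$ possesses the symmetries needed for power-associativity and the Jordan identity to follow. Once the EJA structure is in place, functional calculus extracts a unique square root in the interior for every interior element, so every interior point of $\mC$ is a square; continuity extends this to the boundary, yielding $\mC = \{x^2 : x \in \EJA\}$. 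Following either Koecher's original Jordan-derivations approach or Vinberg's T-algebra formalism would supply the detailed machinery needed to make these steps rigorous.
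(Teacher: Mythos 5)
This theorem is not proved in the paper at all: it is the classical Koecher--Vinberg theorem, imported with citations to Koecher (1957) and Vinberg (1960), and used as a black box (the standard modern reference for a complete proof is Faraut and Kor\'anyi, \emph{Analysis on Symmetric Cones}, which the paper also cites). So there is no in-paper argument to compare yours against; the relevant question is only whether your sketch is a faithful outline of the known proof.

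It largely is. The converse direction (quadratic representation $P(x)=2L_x^2-L_{x^2}$, positive definiteness of the trace form, self-duality via the spectral theorem, homogeneity from $P(y^{1/2})e=y$) is exactly the standard argument and is essentially complete. For the forward direction, the characteristic-function route and the definition of the product through the totally symmetric form $D^3(\log\varphi)|_e$ is one of the two standard constructions (the other goes through the Lie algebra decomposition $\mathfrak{g}=\mathfrak{k}\oplus\mathfrak{p}$ of $G(\mC)$ and sets $L(u)$ to be the unique symmetric element with $L(u)e=u$). However, two points keep this from being a proof rather than a roadmap. First, the verification of the Jordan identity is where essentially all of the difficulty of the theorem lives, and you acknowledge but do not supply it; saying that homogeneity and compactness of the stabilizer ``constrain the Lie algebra enough'' is a statement of the goal, not an argument. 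Second, your final step is mildly circular: you invoke functional calculus to extract a square root of an arbitrary interior point of $\mC$, but functional calculus only produces a square root of elements already known to have nonnegative spectrum in the constructed algebra, i.e., elements already known to lie in the cone of squares. The standard fix is to show that the interior of $\mC$ and the interior of $\EJA^+$ are both the orbit $G(\mC)\cdot e$ (equivalently, the connected component of $e$ in the set of invertible elements), and then pass to closures. Neither issue is a wrong turn --- both are resolved in the cited references --- but as written the proposal defers precisely the nontrivial content.
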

The definitions of an EJA and its cone of squares are given in Sec.~\ref{sec:derive_EJA}.
For any system $A$, since $\St_A$ is a symmetric cone,
the Koecher-Vinberg theorem immediately yields the following theorem:
\begin{thm} \label{thm:EJA} 
 For any system $A$, there exists an EJA $\EJA_A$ such that
 $\St_A$ is the cone of squares of $\EJA_A$.
 Furthermore, $\EJA_A$ and $\Vec_A$ are the same as real vector spaces.
\end{thm}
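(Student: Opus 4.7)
The plan is to apply the Koecher-Vinberg theorem (Theorem~\ref{thm:Koecher_Vinberg}) essentially as a black box to $\St_A$, since the hard work has already been done in Theorem~\ref{thm:SymmetricCone}, which establishes that $\St_A$ is a symmetric cone. To invoke Koecher-Vinberg we need to designate an interior point of $\St_A$ to serve as the identity element of the resulting EJA, and the canonical choice is the invariant state $\cket{\chi_A}$, which was shown to be completely mixed in Lemma~\ref{lemma:ChiCM} and is therefore an interior point of $\St_A$ by definition.

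Concretely, first I would invoke Theorem~\ref{thm:SymmetricCone} to conclude that $\St_A$ is a symmetric cone in $\Vec_A$. Then I would pick $\cket{\chi_A}$ as an interior point, citing Lemma~\ref{lemma:ChiCM}, and apply Theorem~\ref{thm:Koecher_Vinberg} to produce an EJA $\EJA_A$ with identity element $\cket{\chi_A}$ whose cone of squares coincides with $\St_A$.

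For the second assertion, the Koecher-Vinberg theorem also states that the underlying real vector space of $\EJA_A$ is the linear span of its cone of squares, i.e.\ the span of $\St_A$. By the definition of $\Vec_A$ in Subsec.~\ref{subsubsec:OPT_OPT_eq}, this span is exactly $\Vec_A$. Hence $\EJA_A$ and $\Vec_A$ agree as real vector spaces. There is no real obstacle here beyond citing the right earlier results; the entire content of the theorem is already packaged inside Theorems~\ref{thm:SymmetricCone} and \ref{thm:Koecher_Vinberg} together with the observation that $\cket{\chi_A}$ is an interior point of $\St_A$.
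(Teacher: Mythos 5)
Your proposal is correct and follows essentially the same route as the paper, which derives Theorem~\ref{thm:EJA} directly from Theorem~\ref{thm:SymmetricCone} and the Koecher-Vinberg theorem (whose statement in the paper already includes both the existence of the EJA for any chosen interior point and the fact that the EJA is the span of the cone). Your additional observation that $\cket{\chi_A}$ can serve as the interior point is a natural and valid choice, consistent with how the paper later normalizes $\EJA_A$ in Lemma~\ref{lemma:DiamondEJA}.
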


\subsection{Effect space is dual cone of state space}

\begin{thm} \label{thm:Eff} 
 For any system $A$, $\Eff_A$ is the dual cone of $\St_A$, i.e., $\Eff_A = \St_A^*$ holds.
\end{thm}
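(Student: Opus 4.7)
The plan is to prove $\Eff_A = \St_A^*$ by two inclusions. The inclusion $\Eff_A \subseteq \St_A^*$ is immediate: any composition $\craket{e|\rho}$ with $\cra{e} \in \Eff_A$ and $\cket{\rho} \in \St_A$ is a scalar, hence belongs to $\Scalar = \Real_+$, so each effect defines a non-negative linear functional on states.

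For the reverse inclusion $\St_A^* \subseteq \Eff_A$, my strategy is to combine three tools established earlier in this section: the inner product $\braket{\ol v,\ol w} = \craket{\ol v^\dagger|\ol w}$ of Lemma~\ref{lemma:InnerProd}, the self-duality of $\St_A$ under this inner product from Proposition~\ref{pro:SelfDual}, and the spectral decomposition of Proposition~\ref{pro:Spectral}. Given any $\cra{\ol y} \in \St_A^* \subseteq \Vec_A^*$, the Riesz correspondence on the finite-dimensional inner-product space $\Vec_A$ yields a unique $\cket{\ol v} \in \Vec_A$ with $\cra{\ol y}\cket{\ol w} = \braket{\ol v,\ol w}$ for every $\cket{\ol w} \in \Vec_A$. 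The hypothesis that $\cra{\ol y}$ is non-negative on $\St_A$ then reads $\braket{\ol v,\rho} \ge 0$ for all $\cket{\rho} \in \St_A$, and self-duality forces $\cket{\ol v} \in \St_A$. Applying Proposition~\ref{pro:Spectral} I would write $\cket{\ol v} = \sum_{i=1}^{\NA} p_i \cket{\phi_i}$ with $p_i \in \Real_+$ and $\{\cket{\phi_i}\}_{i=1}^{\NA} \in \MPDS_A$; by linearity of the extension $\dagger : \Vec_A \to \Vec_A^*$, one has $\cra{\ol v^\dagger} = \sum_i p_i \cra{\phi_i^\dagger}$, and evaluating on an arbitrary state together with Eq.~\eqref{eq:eff_unique} identifies $\cra{\ol y}$ with $\sum_i p_i \cra{\phi_i^\dagger}$ as an effect. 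Since each $\cra{\phi_i^\dagger}$ is a maximal effect by Lemma~\ref{lemma:MPDSMeas} and $\Eff_A$ is a convex cone closed under non-negative linear combinations, the conclusion $\cra{\ol y} \in \Eff_A$ follows.

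The step that I expect to need the most care is the translation of the abstract linear functional $\cra{\ol y}$ into the extended effect $\cra{\ol v^\dagger}$ via the Riesz map: it implicitly uses positive-definiteness of $\braket{\,,\,}$ from Lemma~\ref{lemma:InnerProd} together with $\dim \Vec_A = \dim \Vec_A^*$ to see that $\dagger$ is a linear bijection between $\Vec_A$ and $\Vec_A^*$. Once this identification is in hand, self-duality supplies the state $\cket{\ol v}$, spectral decomposition supplies the explicit non-negative combination of maximal effects, and the cone structure of $\Eff_A$ closes the argument.
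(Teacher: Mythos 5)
Your proposal is correct and takes essentially the same route as the paper's proof: both inclusions are handled identically, with the reverse inclusion obtained by the Riesz representation of the functional with respect to the inner product of Lemma~\ref{lemma:InnerProd}, followed by Proposition~\ref{pro:SelfDual} to land the representing vector in $\St_A$ and the map $\dagger$ to convert it into an effect. Your extra appeal to the spectral decomposition is just an explicit way of seeing that $\cket{\ol{v}}^\dagger$ is a non-negative combination of maximal effects and hence lies in the cone $\Eff_A$, which the paper absorbs into the earlier definition of $\dagger$ on states in Eq.~\eqref{eq:derive_dagger_rho}.
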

\begin{proof}
 Since $\craket{e'|\rho} \ge 0$ $~(\forall \cket{\rho} \in \St_A)$ holds
 for any $\cra{e'} \in \Eff_A$,
 $\Eff_A \subseteq \St_A^*$ is obvious.
 It remains to prove $\Eff_A \supseteq \St_A^*$.
 Arbitrarily choose $\cra{\ol{e}} \in \St_A^*$.
 Let $\braket{~,~}$ be the inner product defined by Eq.~\eqref{eq:derive_inner_prod}.
 Since $\Vec_A^*$ is the dual vector space of $\Vec_A$,
 there exists $\cket{\ol{e}^\#} \in \Vec_A$ such that
 $\craket{\ol{e}|\ol{x}} = \braket{\ol{e}^\#,\ol{x}}$ $~(\forall \cket{\ol{x}} \in \Vec_A)$.
 Since $\braket{\ol{e}^\#,\rho} \ge 0$ holds for any $\cket{\rho} \in \St_A$,
 from Proposition~\ref{pro:SelfDual}, we have $\cket{\ol{e}^\#} \in \St_A$.
 Consider $\cra{\ol{e}^{\#\dagger}} \coloneqq \cket{\ol{e}^\#}^\dagger \in \Eff_A$;
 then, from Eq.~\eqref{eq:derive_inner_prod},
 $\craket{\ol{e}|\rho} = \braket{\ol{e}^\#,\rho} = \craket{\ol{e}^{\#\dagger}|\rho}$
 holds for any $\cket{\rho} \in \St_A$,
 which yields $\cra{\ol{e}} = \cra{\ol{e}^{\#\dagger}} \in \Eff_A$.
 Therefore, $\Eff_A \supseteq \St_A^*$ holds.
\end{proof}
Theorems~\ref{thm:SymmetricCone} and \ref{thm:Eff}
immediately yield Property~(\hyperlink{property:diamondsuit}{$\diamondsuit$}).
It follows from $\St_A \cong \St_A^*$ and this theorem
that $\St_A$ and $\Eff_A$ are isomorphic as symmetric cones.
The restriction of $\dagger:\Vec_A \to \Vec_A^*$ to $\St_A$
is an isomorphism from $\St_A$ to $\Eff_A$,
whose inverse is the restriction of $\#:\Vec_A^* \to \Vec_A$ to $\Eff_A$.
We abuse notation by using the same symbol $\dagger$ for the map $\#$.
Let $\cket{\ol{e}^\dagger} \coloneqq \cra{\ol{e}}^\dagger \in \Vec_A$
for any $\cra{\ol{e}} \in \Vec_A^*$.
Clearly, $\craket{\ol{e}|\ol{x}} = \braket{\ol{e}^\dagger,\ol{x}}$ holds
for any $\cket{\ol{x}} \in \Vec_A$ and $\cra{\ol{e}} \in \Vec_A^*$.

\section{Euclidean Jordan Algebras (EJAs)} \label{sec:derive_EJA}

This section will be devoted to presenting some basic properties of EJAs
that are needed in the remainder of this paper.

\subsection{EJAs} \label{subsec:derive_EJA_EJA}

We first review the definitions of Jordan algebras, EJAs, and their cones of squares.
A vector space $\EJA$ over some field $\Field$ is called a \termdef{Jordan algebra} if
$\EJA$ is equipped with a commutative bilinear map $(x, y) \mapsto x \bullet y$
satisfying
\begin{alignat}{2}
 x^2 \bullet (y \bullet x) &= (x^2 \bullet y) \bullet x, &\quad \forall x, y &\in \EJA,
 \label{eq:EJA_times}
\end{alignat}
where $x^2 \coloneqq x \bullet x$.
The operator $\bullet$ is said to be the \termdef{Jordan product}.
Note that the Jordan product is not associative in general;
Eq.~\eqref{eq:EJA_times} is less restrictive than the associative condition
(i.e., $x \bullet (y \bullet z) = (x \bullet y) \bullet z$ for any $x,y,z \in \EJA$).
A finite-dimensional Jordan algebra $\EJA$ over $\Real$ is called an \termdef{EJA} if it is equipped
with an inner product $\braket{~,~}$ satisfying
\begin{alignat}{2}
 \braket{x \bullet y, z} &= \braket{y, x \bullet z}, &\quad \forall x, y, z &\in \EJA.
 \label{eq:EJA_inner}
\end{alignat}
Clearly, any EJA $\EJA$ is a real Hilbert space with the inner product $\braket{~,~}$.
We call
\begin{alignat}{1}
 \EJA^+ &\coloneqq \{ x^2 : x \in \EJA \}
 \label{eq:derive_EJA_EJAp}
\end{alignat}
the \termdef{cone of squares} of $\EJA$.

These are two typical examples of EJAs:
\begin{itemize}
 \item An $n$-dimensional real vector space, $\Real^n$,
       with the componentwise product as the Jordan product
       and the usual inner product.
       The Jordan product is commutative and associative, and thus Eq.~\eqref{eq:EJA_times} holds.
       The cone of squares of $\Real^n$ is the nonnegative orthant $\Real_+^n$.
       The state space, $\St_A$, of a classical system $A$ is isomorphic to $\Real_+^\NA$.
 \item The space, $\mS(\Complex^n)$, of all complex Hermitian matrices of order $n$
       equipped with the Jordan product $x \bullet y \coloneqq (x \cdot y + y \cdot x) / 2$
       and the inner product $\braket{x, y} \coloneqq \Tr(x \cdot y)$,
       where $\cdot$ denotes the matrix product.
       One can easily verify that the Jordan product is commutative and
       satisfies Eq.~\eqref{eq:EJA_times},
       but it is not associative if $n$ is larger than 1.
       The cone of squares of $\mS(\Complex^n)$ is $\mS_+(\Complex^n)$.
       The state space, $\St_A$, of a fully quantum system $A$ is isomorphic to $\mS_+(\Complex^\NA)$.
\end{itemize}

\subsection{Fundamental properties of EJAs} \label{subsec:derive_EJA_property}

We next present some fundamental properties of EJAs.
Many proofs are omitted since they can be found in, e.g., Ref.~\cite{Far-1994}
or can be easily obtained.
We will use the notation such as $\cket{\ol{v}}$ to denote an element of an EJA $\EJA$.
If $\cket{\ol{v}}$ is in $\EJA^+$, then we will often denote it by $\cket{v}$.
We will also use the simple notation
$\braket{\ol{x},\ol{y}}$ $~(\cket{\ol{x}},\cket{\ol{y}} \in \EJA)$
for $\braket{\cket{\ol{x}},\cket{\ol{y}}}$.

The element, denoted by $\cket{\chi}$, of $\EJA$ that satisfies
$\cket{\chi} \bullet \cket{\ol{v}} = \cket{\ol{v}}$
for any $\cket{\ol{v}} \in \EJA$ is called the \termdef{identity element} of $\EJA$.
$\cket{\chi} \in \EJA^+$ obviously holds.
The Koecher-Vinberg theorem states that the cone of squares $\EJA^+$ of
any EJA $\EJA$ is a symmetric cone;
conversely, for a given symmetric cone $\mC$ and its interior point $\cket{\chi} \in \EJA^+$,
there exists an EJA $\EJA$ with the identity element $\cket{\chi}$ that satisfies $\EJA^+ = \mC$.
Two elements $\cket{\rho},\cket{\sigma} \in \EJA^+$
are called \termdef{orthogonal} if $\braket{\rho,\sigma} = 0$ holds,
which is equivalent to $\cket{\rho} \bullet \cket{\sigma} = \cket{\emptyset}$.
$\cket{p} \in \EJA$ is called an \termdef{idempotent} if $\cket{p}^2 = \cket{p}$ holds.
One can obviously see that any idempotent is in $\EJA^+$
and that the zero element $\cket{\emptyset}$ and
the identity element $\cket{\chi}$ are idempotents.
We say that a nonzero idempotent is \termdef{primitive} if it cannot be expressed as
the sum of two nonzero idempotents.
Any idempotent can be decomposed into the sum of mutually orthogonal primitive idempotents.
A set of mutually orthogonal primitive idempotents, $\Phi \coloneqq \{ \cket{\phi_i} \}_{i=1}^n$,
is called a \termdef{Jordan frame} if it satisfies $\sum_{i=1}^n \cket{\phi_i} = \cket{\chi}$.
For any set of mutually orthogonal primitive idempotents
$\Phi \coloneqq \{ \cket{\phi_i} \}_{i=1}^k$,
there exists a Jordan frame $\Phi'$ with $\Phi' \supseteq \Phi$.
Each Jordan frame of $\EJA$ has the same number of elements, called the \termdef{rank} of $\EJA$ and
denoted by $\rank~\EJA$.
The dimension of $\EJA$ (as a real vector space) is denoted by $\dim~\EJA$.
For any $\cket{\ol{v}} \in \EJA$, there exist a Jordan frame
$\{ \cket{\phi_i} \}_{i=1}^n$ $~(n \coloneqq \rank~\EJA)$ and real numbers $c_1,\ldots,c_n \in \Real$,
called the \termdef{eigenvalues}, such that
\begin{alignat}{1}
 \cket{\ol{v}} &= \sum_{i=1}^n c_i \cket{\phi_i}.
 \label{eq:spectral_EJA}
\end{alignat}
Such a representation is called a \termdef{spectral decomposition} of $\cket{\ol{v}}$.
The eigenvalues are uniquely determined by $\cket{\ol{v}}$.
$\cket{\ol{v}}$ is in $\EJA^+$ if and only if the eigenvalues of $\cket{\ol{v}}$
are all nonnegative.
The sum of the eigenvalues of $\cket{\ol{v}}$, $\sum_{i=1}^n c_i$, is called the \termdef{trace} of
$\cket{\ol{v}}$ and denoted by $\tr~\cket{\ol{v}}$.
The number of nonzero eigenvalues of $\cket{\ol{v}}$
is called the \termdef{rank} of $\cket{\ol{v}}$
and denoted by $\rank~\cket{\ol{v}}$.
Since, from the Koecher-Vinberg theorem, $\EJA^+$ is a convex cone,
the properties of convex cones can be applied to $\EJA^+$.
For example, the partial ordering on $\EJA$ is defined as in Eq.~\eqref{eq:le}.
$\cket{\psi} \in \EJA^+$ is pure if and only if $\rank~\cket{\psi} \le 1$ holds.
$\cket{\rho} \in \EJA^+$ is completely mixed if and only if
$\rank~\cket{\rho} = \rank~\EJA$ holds.

A vector subspace of $\EJA$, $\EJA'$, is called a \termdef{subalgebra} of $\EJA$ if
$\EJA'$ is closed under the Jordan product $\bullet$,
i.e., $\cket{\ol{v}} \bullet \cket{\ol{w}} \in \EJA'$ holds
for any $\cket{\ol{v}}, \cket{\ol{w}} \in \EJA'$.
$\EJA'$ is itself an EJA.
The identity element, $\cket{\chi'}$, of $\EJA'$ is an idempotent of $\EJA$.
For any idempotent $\cket{p}$ of $\EJA$,
$\EJA_\cket{p} \coloneqq \{ \cket{\ol{v}} \in \EJA : \cket{p} \bullet \cket{\ol{v}} = \cket{\ol{v}} \}$
is the Peirce 1-subalgebra of $\EJA$ with the identity element $\cket{p}$, and
\begin{alignat}{1}
 \EJA_\cket{p}^{+} &= \EJA^+ \cap \EJA_\cket{p}
 = \{ \cket{\rho} \in \EJA^+ : \cket{p} \bullet \cket{\rho} = \cket{\rho} \}.
 \label{eq:EJA_subalgebra}
\end{alignat}
Equivalently, $\EJA_\cket{p}^{+}$ is the face of $\EJA^+$ generated by $\cket{p}$.
If $\EJA'$ is a subalgebra of $\EJA$ with identity element $\cket{\chi'}$, then
$\EJA' \subseteq \EJA_\cket{\chi'}$; equality need not hold in general.
The equalities $\EJA'=\EJA_\cket{\chi'}$ and $\EJA'{^+}=\EJA_\cket{\chi'}^+$ hold when
$\EJA' = \EJA_\cket{\chi'}$, in particular when $\EJA'$ is a direct-summand subalgebra.
Two subalgebras $\{ \cket{\emptyset} \}$ and $\EJA$ itself are called \termdef{trivial}.
Two subalgebras, $\EJA_1$ and $\EJA_2$, of $\EJA$
are said to be \termdef{orthogonal} if
$\cket{\ol{x}} \bullet \cket{\ol{y}} = \cket{\emptyset}$ holds
for any $\cket{\ol{x}} \in \EJA_1$ and $\cket{\ol{y}} \in \EJA_2$.
A necessary and sufficient condition for two subalgebras $\EJA_1$ and $\EJA_2$ to be orthogonal
is that $\cket{\rho} \bullet \cket{\sigma} = \cket{\emptyset}$
(i.e., $\braket{\rho,\sigma} = 0$) holds
for any $\cket{\rho} \in \EJA_1^+$ and $\cket{\sigma} \in \EJA_2^+$.

\subsection{Direct sum decomposition of EJAs} \label{subsec:derive_EJA_oplus}

We will introduce the direct sum decomposition of EJAs.
Let us consider $k$ ($k \ge 1$) mutually orthogonal non-trivial subalgebras,
$\EJA_1,\ldots,\EJA_k$, of $\EJA$.
We say that $\EJA$ is a \termdef{direct sum} of $\EJA_1,\ldots,\EJA_k$,
denoted by $\EJA = \bigoplus_{i=1}^k \EJA_i$,
if any $\cket{\ol{v}} \in \EJA$ can be expressed in the form
\begin{alignat}{1}
 \cket{\ol{v}} &= \sum_{i=1}^k \cket{\ol{v_i}}, \quad \cket{\ol{v_i}} \in \EJA_i.
 \label{eq:EJA_oplus_rho_vec}
\end{alignat}
In this case, we have
\begin{alignat}{2}
 \rank~\EJA &= \sum_{i=1}^k \rank~\EJA_i, &\quad
 \dim~\EJA &= \sum_{i=1}^k \dim~\EJA_i.
 \label{eq:derive_EJA_rank_dim2}
\end{alignat}
$\EJA = \bigoplus_{i=1}^k \EJA_i$ is called a
\termdef{direct sum decomposition} of $\EJA$.
$\cket{\ol{v_1}},\ldots,\cket{\ol{v_k}}$ of Eq.~\eqref{eq:EJA_oplus_rho_vec} are
uniquely determined by $\cket{\ol{v}}$;
indeed, by multiplying both sides of Eq.~\eqref{eq:EJA_oplus_rho_vec}
by the identity element of $\EJA_i$, denoted by $\cket{\chi_i}$,
we obtain $\cket{\ol{v_i}} = \cket{\chi_i} \bullet \cket{\ol{v}}$.
The identity element $\cket{\chi}$ of $\EJA$ is obviously decomposed as
$\cket{\chi} = \sum_{i=1}^k \cket{\chi_i}$.
$\cket{\ol{v}} \in \EJA^+$ holds if and only if $\cket{\ol{v_i}} \in \EJA_i^+$ holds for each $i$.
$\EJA^+ = \left( \bigoplus_{i=1}^k \EJA_i \right)^+$ is often rewritten as
$\bigoplus_{i=1}^k \EJA_i^+$,
which is referred to as a \termdef{direct sum decomposition} of $\EJA^+$.
For each nonzero pure element $\cket{\psi} \in \EJA^+$,
there exists one and only one $i \in \{1,\ldots,k\}$ such that $\cket{\psi} \in \EJA_i^+$.
The operator $\oplus$ is commutative and associative.

An EJA $\EJA$ with nonzero rank is called \termdef{simple} if
it cannot be expressed by a direct sum of two non-trivial subalgebras of $\EJA$%
\footnote{Another equivalent definition of a simple EJA is that it has no non-trivial ideals,
where a subalgebra $\EJA'$ of $\EJA$ is called an \termdef{ideal} if
$\cket{\ol{x}} \bullet \cket{\ol{y}} \in \EJA'$ holds
for any $\cket{\ol{x}} \in \EJA'$ and $\cket{\ol{y}} \in \EJA$.}.
In this case, its cone of squares $\EJA^+$ is called \termdef{irreducible}.
For a simple EJA $\EJA$, any inner product
$\braket{\ol{v},\ol{w}}$ $~(\cket{\ol{v}}, \cket{\ol{w}} \in \EJA)$
is proportional to $\tr[\cket{\ol{v}} \bullet \cket{\ol{w}}]$.
For each EJA $\EJA$, $\EJA = \bigoplus_{i=1}^k \EJA_i$
(or $\EJA^+ = \bigoplus_{i=1}^k \EJA_i^+$)
is called a \termdef{simple decomposition} if $\EJA_1,\ldots,\EJA_k$ are all simple.
In this case, the decomposition of $\cket{\ol{v}} \in \EJA$ in Eq.~\eqref{eq:EJA_oplus_rho_vec}
is also called the \termdef{simple decomposition}.
Any EJA with nonzero rank is expressed as a direct sum of simple EJAs in a unique way%
\footnote{It should be noted that the direct sum of spaces of complex Hermitian matrices,
$\tStMat \coloneqq \bigoplus_{i=1}^k \mS(\Complex^{n_i})$, is an EJA.
However, strictly speaking, according to our definition,
$\bigoplus_{i=1}^k \mS(\Complex^{n_i})$ is not a simple decomposition of $\tStMat$.
Indeed, while each element of $\tStMat$ must be a matrix of order $\sum_{i=1}^k n_i$,
each element of $\mS(\Complex^{n_i})$ is a matrix of order $n_i$,
and hence $\mS(\Complex^{n_i})$ is not a subset of $\tStMat$.
One can easily see that $\tStMat$ has the simple decomposition of the form
$\tStMat = \bigoplus_{i=1}^k \tStMat_i$,
where $\tStMat_i$ is a subalgebra of $\tStMat$ satisfying
$\tStMat_i \cong \mS(\Complex^{n_i})$.}.

It is well known that there are five types of simple EJAs:
\begin{thm}[Jordan-von Neumann-Wigner theorem \cite{Jor-Neu-Wig-1934}]
 \label{thm:Neumann} 
 Every simple EJA is isomorphic to one of the followings%
 \footnote{Since $\Spin_3 \cong \mS(\Real^2)$, $\Spin_4 \cong \mS(\Complex^2)$,
 and $\mS(\Oct^2) \cong \Spin_{10}$ hold,
 $\Spin_3, \Spin_4$, and $\mS(\Oct^2)$ were excluded to avoid overlapping.}:
 \begin{enumerate}
  \item $\mS(\Field^n)$ with $n \ge 1$ and
        $\Field \in \{\Real, \Complex, \Hermite\}$,
        where $\mS(\Field^n)$ is the set of all Hermitian matrices on the vector space $\Field^n$
        and $\Hermite$ is the set of all quaternions.
  \item Spin factors $\Spin_s$ with $s \ge 5$, where $s$ is the dimension.
  \item $\mS(\Oct^3)$, where $\Oct$ is the set of all octonions.
 \end{enumerate}
\end{thm}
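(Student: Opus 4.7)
The plan is to derive this classical classification via the Peirce decomposition of $\EJA$ relative to a Jordan frame, and then to invoke Hurwitz's theorem on normed composition algebras. Fix a simple EJA $\EJA$ with $n \coloneqq \rank~\EJA$ and a Jordan frame $\{\cket{\phi_i}\}_{i=1}^n$. For each pair $i \le j$ I would define the Peirce subspace
\[
\EJA_{ij} \coloneqq \bigl\{\cket{\ol{v}} \in \EJA : \cket{\phi_k} \bullet \cket{\ol{v}} = \tfrac{1}{2}(\delta_{ik} + \delta_{jk})\cket{\ol{v}} \text{ for all } k\bigr\}.
\]
A direct computation from the Jordan identity \eqref{eq:EJA_times} yields the decomposition $\EJA = \bigoplus_{i \le j}\EJA_{ij}$ together with the multiplication rules $\EJA_{ij}\bullet\EJA_{jk}\subseteq\EJA_{ik}$ for distinct $i,j,k$ and $\EJA_{ij}\bullet\EJA_{kl}=\{\cket{\emptyset}\}$ when $\{i,j\}\cap\{k,l\}=\emptyset$. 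Simplicity of $\EJA$ forces each $\cket{\phi_i}$ to be primitive with $\EJA_{ii}=\Real\cket{\phi_i}$, and it also rules out $\EJA_{ij}=\{\cket{\emptyset}\}$ for some $i\ne j$, since otherwise the frame would split into two orthogonal blocks and induce a non-trivial direct sum decomposition of $\EJA$.

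The first real step is to show that all off-diagonal Peirce subspaces share a common dimension $d$. Using the quadratic representation of a unit-norm element of $\EJA_{ij}$, I would construct an order-two automorphism of $\EJA$ exchanging $\cket{\phi_i}$ and $\cket{\phi_j}$; this automorphism transports $\EJA_{ik}$ onto $\EJA_{jk}$ for every third index $k$, establishing the uniform dimension.

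With $(n,d)$ fixed, the classification proceeds by cases. For $n=1$, trivially $\EJA = \Real\cket{\phi_1} \cong \mS(\Real^1) = \Real$. For $n=2$, $\EJA$ has dimension $2+d$, and the interaction of $\cket{\phi_1},\cket{\phi_2}$ with $\EJA_{12}$ realizes exactly the spin-factor axioms, giving $\EJA\cong\Spin_{2+d}$; this exhausts $\mS(\Real^2),\mS(\Complex^2),\mS(\Hermite^2),\mS(\Oct^2)$ at $d=1,2,4,8$, and the remaining spin factors $\Spin_s$ with $s\ge 5$ for the other values of $d$. For $n\ge 3$ I would fix three distinct indices and pull the Peirce multiplication $\EJA_{ij}\times\EJA_{jk}\to\EJA_{ik}$ back to a single $d$-dimensional ``coordinate'' space $\mathbf{K}$; the Jordan axiom forces the induced product on $\mathbf{K}$ to be alternative and compatible with a quadratic norm, so Hurwitz's theorem restricts $\mathbf{K}$ to one of $\Real,\Complex,\Hermite,\Oct$ (i.e.\ $d\in\{1,2,4,8\}$), and $\EJA$ then reconstructs as $\mS(\mathbf{K}^n)$.

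The hard part will be twofold: rigorously verifying the Peirce multiplication rules and the uniform-dimension claim, and then extracting from the Jordan axiom the alternative composition-algebra structure on $\mathbf{K}$ needed to apply Hurwitz. A final subtlety rules out $\mathbf{K}=\Oct$ for $n\ge 4$: applying \eqref{eq:EJA_times} to three octonionic matrix units $e_{ij},e_{jk},e_{kl}$ with four distinct indices detects the octonionic associator, which is generically nonzero, so only $n=3$ survives in the octonion case. This is essentially the strategy of \cite{Jor-Neu-Wig-1934}; a modern presentation along these lines appears in \cite{Far-1994}, which has already been cited in Sec.~\ref{sec:derive_EJA}.
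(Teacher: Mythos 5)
The paper does not prove Theorem~\ref{thm:Neumann} at all: it is imported as a classical black-box result with a citation to \cite{Jor-Neu-Wig-1934}, and nothing in Secs.~\ref{sec:derive_PDS}--\ref{sec:derive_Quantum} depends on the internals of its proof (only on the resulting rank/dimension table, Table~\ref{table:EJA}). Your outline is therefore not an alternative to anything in the paper; it is a reconstruction of the standard classification argument, and as an outline it is sound: Peirce decomposition relative to a Jordan frame, the multiplication rules $\EJA_{ij}\bullet\EJA_{jk}\subseteq\EJA_{ik}$, the uniform dimension $d$ of the off-diagonal Peirce spaces via frame-exchanging automorphisms, the rank-$1$ and rank-$2$ (spin factor) cases, coordinatization plus Hurwitz for rank $\ge 3$, and the exclusion of $\Oct$ in rank $\ge 4$ by associativity are exactly the skeleton of the Jordan--von Neumann--Wigner proof as presented in \cite{Far-1994}. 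Two small points where the sketch is looser than it should be: $\EJA_{ii}=\Real\cket{\phi_i}$ follows from primitivity of $\cket{\phi_i}$ together with the Euclidean (formally real) property, not from simplicity; and ruling out $\EJA_{ij}=\{\cket{\emptyset}\}$ requires the connectivity argument on the equivalence classes of frame indices generated by nonvanishing Peirce spaces, since a single vanishing $\EJA_{ij}$ does not by itself split the algebra. The genuinely hard content --- verifying the Peirce rules from Eq.~\eqref{eq:EJA_times}, the coordinatization theorem, and the composition-algebra structure feeding into Hurwitz --- is correctly identified but deferred, which is appropriate given that the paper itself treats the whole theorem as citable background.
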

Table~\ref{table:EJA} shows the ranks and dimensions of these simple EJAs.
It follows that they are classified by their ranks and dimensions.
\begin{table}[h]
 \centering
 \caption{Ranks and dimensions of simple EJAs}
 \label{table:EJA}
 \begin{tabular}{ccc}
  \hline
  Type & Rank & Dimension \\ \hline
  $\mS(\Real^n)$ & $n$ & $n(n+1)/2$ \\
  $\mS(\Complex^n)$ & $n$ & $n^2$ \\
  $\mS(\Hermite^n)$ & $n$ & $n(2n-1)$ \\
  $\Spin_s$ & $2$ & $s$ \\
  $\mS(\Oct^3)$ & $3$ & $27$ \\ \hline
 \end{tabular}
\end{table}

\subsection{State spaces as cones of squares of EJAs} \label{subsec:derive_EJA_St}

In Secs.~\ref{sec:derive_PDS} and \ref{sec:derive_symmetry},
we showed that an OPT with
the \poslink{pos:SymSharp}{symmetric sharpness},
\poslink{pos:CompletelyMixed}{complete mixing},
and \poslink{pos:Filter}{filtering} postulates
satisfies Property~(\hyperlink{property:diamondsuit}{$\diamondsuit$}).
In this subsection, we will show that the converse is almost true.
Specifically, we show that an OPT with Property~(\hyperlink{property:diamondsuit}{$\diamondsuit$})
satisfies all the properties (i.e., all lemmas, propositions, and theorems)
except Lemma~\ref{lemma:Projection} that we have derived
in Secs.~\ref{sec:derive_PDS} and \ref{sec:derive_symmetry}.
We here assume that Property~(\hyperlink{property:diamondsuit}{$\diamondsuit$}) holds and
do not assume that the three postulates hold.

We begin with some preliminaries.
\begin{lemma} \label{lemma:DiamondEJA}
 If Property~(\hyperlink{property:diamondsuit}{$\diamondsuit$}) holds, then,
 for any system $A$, there exists an EJA $\EJA_A$ with $\EJA_A^+ = \St_A$ such that
 \begin{alignat}{2}
  \craket{\gdis|\rho} &= \tr~\cket{\rho},
  &\quad \forall \cket{\rho} &\in \St_A.
  \label{eq:derive_EJA_normal_unittrace}
 \end{alignat}
\end{lemma}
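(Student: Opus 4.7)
The plan is to apply the Koecher-Vinberg theorem to $\St_A$ with a carefully chosen identity element, so that the resulting EJA trace coincides with $\cra{\gdis_A}$.

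First I would observe that $\cra{\gdis_A}$ is an interior point of $\Eff_A$: for every $\cra{e} \in \Eff_A$, Eq.~\eqref{eq:procecc_Proc_def} provides $\delta \in \Realpp$ with $\delta \cra{e} \in \EffF_A$, whence $\delta \cra{e} \le \cra{\gdis_A}$ by Eq.~\eqref{eq:process_eff_feasible}. By Property~\poslink{property:diamondsuit}{($\diamondsuit$)} and Theorem~\ref{thm:Eff}, $\St_A$ is a symmetric cone with $\Eff_A = \St_A^*$, so the self-duality isomorphism $\#:\Eff_A \to \St_A$ sends this interior point to an interior point $\chi_A \coloneqq \cra{\gdis_A}^\#$ of $\St_A$. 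Applying Theorem~\ref{thm:Koecher_Vinberg} to the pair $(\St_A, \chi_A)$ then yields an EJA $\EJA_A$ with identity $\chi_A$, underlying vector space $\Vec_A$, and $\EJA_A^+ = \St_A$.

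The remaining task is to show $\craket{\gdis_A | \rho} = \tr~\cket{\rho}$. The canonical trace-form inner product $\braket{x, y}_\tau \coloneqq \tr(x \bullet y)$ on any EJA satisfies $\braket{\chi_A, \rho}_\tau = \tr(\chi_A \bullet \rho) = \tr~\cket{\rho}$ and also makes $\St_A = \EJA_A^+$ self-dual. If this coincides with the self-duality inner product used to define $\#$, then $\craket{\gdis_A | \rho} = \braket{\cra{\gdis_A}^\#, \rho}_\tau = \braket{\chi_A, \rho}_\tau = \tr~\cket{\rho}$, which is the required identity.

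The hard part will be this last compatibility, since Property~\poslink{property:diamondsuit}{($\diamondsuit$)} only guarantees \emph{some} self-duality inner product for $\St_A$, and one must ensure that the trace form of the constructed EJA can be taken as this inner product. I would handle this by decomposing $\St_A$ into simple components and invoking the standard fact that on a simple EJA any Euclidean inner product making the cone self-dual is proportional to the trace form. Since the self-duality inner product is itself free to be rescaled independently on each simple component while preserving self-duality, there is exactly enough freedom to match the two inner products component-wise; this matching in turn pins down the correct choice of $\chi_A$ within each simple component, completing the proof.
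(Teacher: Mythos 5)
The first half of your argument is the same as the paper's: take the representer $\cket{\chi'}$ of $\cra{\gdis}$ under the self-dualizing inner product $\braket{~,~}'$ and apply Theorem~\ref{thm:Koecher_Vinberg} with $\cket{\chi'}$ as the identity element. The problem is the final step, which you correctly identify as the hard part but do not actually close. The ``standard fact'' you invoke is false: on the simple EJA $\mS(\Complex^n)$ the form $g(X,Y) \coloneqq \Tr(AXAY)$, with $A$ positive definite and not proportional to $\ident_n$, is a genuine inner product that makes $\mS_+(\Complex^n)$ self-dual (because $AXA \succeq 0$ iff $X \succeq 0$), yet it is not proportional to $\Tr(XY)$. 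The correct statement is that any inner product satisfying the associativity condition \eqref{eq:EJA_inner} on a simple EJA is proportional to the trace form; self-duality alone is not enough, so your componentwise-proportionality claim needs the associativity supplied by the Koecher--Vinberg construction, not merely Property~($\diamondsuit$).

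Second, even granting $\braket{\ol{x_i},\ol{y_i}}' = c_i \tr'[\cket{\ol{x_i}} \bullet' \cket{\ol{y_i}}]$ on each simple component, your matching argument is circular: rescaling the inner product moves the representer $\chi_A = \cra{\gdis}^{\#}$, which changes the EJA that Koecher--Vinberg assigns to $(\St_A,\chi_A)$, which changes the trace form you are trying to match. A consistent fixed point does exist, but the naive choice (take the new inner product to be the trace form of the first EJA, i.e.\ rescale by $c_i^{-1}$) fails: it yields the representer $\sum_i c_i \cket{\chi'_i}$, whose associated trace is $\sum_i c_i^{-1}\tr'$, whereas $\craket{\gdis|\rho} = \sum_i c_i \tr'~\cket{\rho_i}$ --- the scaling goes the wrong way. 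The paper resolves this without any fixed-point search by leaving the inner product and $\cket{\chi'}$ alone and instead rescaling the Jordan product componentwise, $\cket{\ol{x}} \bullet \cket{\ol{y}} \coloneqq \sum_i c_i \cket{\ol{x_i}} \bullet' \cket{\ol{y_i}}$. This keeps the cone of squares equal to $\St_A$, shifts the identity to $\sum_i c_i^{-1}\cket{\chi'_i}$, and rescales the trace to $\tr~\cket{\ol{x}} = \sum_i c_i \tr'~\cket{\ol{x_i}}$, after which $\craket{\gdis|\ol{x}} = \braket{\chi',\ol{x}}' = \sum_i c_i\tr'~\cket{\ol{x_i}} = \tr~\cket{\ol{x}}$ is a one-line verification. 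You should restructure your last step along these lines.
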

\begin{proof}
 $\St_A$ is self-dual with respect to some inner product $\braket{~,~}'$.
 Since $\Eff_A$ is the dual cone of $\St_A$ and $\cra{\gdis}$ is completely mixed,
 there exists a completely mixed state $\cket{\chi'} \in \St_A$
 such that $\craket{\gdis|\ol{x}} = \braket{\chi',\ol{x}}'$
 for any $\cket{\ol{x}} \in \Vec_A$.

 First, we consider the EJA $\EJA'_A$ with the identity element $\cket{\chi'}$
 that satisfies $\EJA'_A{}^+ = \St_A$.
 From the Koecher-Vinberg theorem, such an EJA exists.
 Let $\bullet'$ and $\tr'$ be, respectively, the Jordan product and the trace in $\EJA'_A$.
 $\EJA'_A$ has the simple decomposition of the form $\EJA'_A = \bigoplus_{i=1}^k \EJA'_{(i)}$.
 Arbitrarily choose $\cket{\ol{x}},\cket{\ol{y}} \in \Vec_A$, which have
 the simple decompositions $\cket{\ol{x}} = \sum_{i=1}^k \cket{\ol{x_i}}$ and
 $\cket{\ol{y}} = \sum_{i=1}^k \cket{\ol{y_i}}$
 with $\cket{\ol{x_i}},\cket{\ol{y_i}} \in \EJA'_{(i)}$ $~(\forall i \in \{1,\ldots,k\})$.
 Since $\braket{\ol{x_i},\ol{y_j}}' = 0$ holds for any distinct $i$ and $j$,
 $\braket{\ol{x},\ol{y}}' = \sum_{i=1}^k \braket{\ol{x_i},\ol{y_i}}'$ holds.
 Since, for each $i$, $\braket{\ol{x_i},\ol{y_i}}'$ is proportional to
 $\tr'[\cket{\ol{x_i}} \bullet' \cket{\ol{y_i}}]$,
 \begin{alignat}{1}
  \braket{\ol{x},\ol{y}}' &= \sum_{i=1}^k c_i \cdot \tr'[\cket{\ol{x_i}} \bullet' \cket{\ol{y_i}}]
 \end{alignat}
 holds with some constants $c_1,\ldots,c_k \in \Realpp$.

 Next, we consider the EJA, $\EJA_A$, with the Jordan product $\bullet$
 defined as
 \begin{alignat}{2}
  \cket{\ol{x}} \bullet \cket{\ol{y}} &\coloneqq
  \sum_{i=1}^k c_i \cket{\ol{x_i}} \bullet' \cket{\ol{y_i}},
  &\quad &\forall \cket{\ol{x}}, \cket{\ol{y}} \in \EJA_A,
  \label{eq:derive_DiamondEJA_bullet}
 \end{alignat}
 where $\cket{\ol{x}}$ and $\cket{\ol{y}}$ have the simple decompositions
 $\cket{\ol{x}} = \sum_{i=1}^k \cket{\ol{x_i}}$ and
 $\cket{\ol{y}} = \sum_{i=1}^k \cket{\ol{y_i}}$
 with $\cket{\ol{x_i}},\cket{\ol{y_i}} \in \EJA'_{(i)}$ $~(i \in \{1,\ldots,k\})$.
 One can easily verify that $\EJA_A$ has the simple decomposition of the form
 $\EJA_A = \bigoplus_{i=1}^k \EJA_{(i)}$
 with $\EJA_{(i)}^+ = \EJA'_{(i)}{}^+$.
 $\EJA_A^+ = \St_A = \EJA'_A{}^+$ obviously holds.
 Note that since $\EJA_{(i)}$ and $\EJA'_{(i)}$ are equal as real Hilbert spaces,
 $\cket{\ol{v}}$ is in $\EJA_{(i)}$ if and only if it is in $\EJA'_{(i)}$.
 For each $i \in \{1,\ldots,k\}$, let $\cket{\chi'_i}$ denote the identity element of $\EJA'_{(i)}$;
 then, it follows from Eq.~\eqref{eq:derive_DiamondEJA_bullet} that
 $c_i^{-1} \cket{\chi'_i}$ is the identity element of $\EJA_{(i)}$.
 Thus, $\cket{\chi} \coloneqq \sum_{i=1}^k c_i^{-1} \cket{\chi'_i}$
 is the identity element of $\EJA_A$.
 Arbitrarily choose a Jordan frame $\{ \cket{\phi'_s} \}_{s=1}^n$ of $\EJA'_A$,
 where $n \coloneqq \rank~\EJA'_A$.
 $\cket{\phi'_s}$ is obviously a pure state.
 Let $\cket{\phi_s} \coloneqq c_{i_s}^{-1} \cket{\phi'_s}$,
 where $i_s \in \{1,\ldots,k\}$ is the value satisfying $\cket{\phi'_s} \in \EJA'_{(i_s)}$.
 Then, since $\sum_{s=1}^n \cket{\phi_s} = \sum_{i=1}^k c_i^{-1} \cket{\chi'_i} = \cket{\chi}$
 and $\cket{\phi_s} \bullet \cket{\phi_s}
 = c_{i_s}^{-2} \cket{\phi'_s} \bullet \cket{\phi'_s}
 = c_{i_s}^{-1} \cket{\phi'_s} \bullet' \cket{\phi'_s} = \cket{\phi_s}$ hold,
 $\{ \cket{\phi_s} \}_{s=1}^n$ is a Jordan frame of $\EJA_A$.
 Let $\tr$ denote the trace in $\EJA_A$; then,
 any primitive idempotent $\cket{\phi_s} \in \EJA_{(i_s)}$ of $\EJA_A$
 satisfies $\tr~\cket{\phi_s} = 1 = \tr'~\cket{\phi'_s} = c_{i_s} \cdot \tr'~\cket{\phi_s}$.
 Arbitrarily choose $\cket{\ol{x}} \in \EJA_A$,
 which has the simple decomposition $\cket{\ol{x}} = \sum_{i=1}^k \cket{\ol{x_i}}$
 with $\cket{\ol{x_i}} \in \EJA_{(i)}$.
 Then, we have
 \begin{alignat}{3}
  \tr~\cket{\ol{x}} &= \sum_{i=1}^k \tr~\cket{\ol{x_i}}
  &&= \sum_{i=1}^k c_i \cdot \tr'~\cket{\ol{x_i}}
  \label{eq:derive_EJAA_tr}
 \end{alignat}
 and thus
 \begin{alignat}{3}
  \craket{\gdis|\ol{x}} &= \braket{\chi',\ol{x}}'
  &&= \sum_{i=1}^k c_i \cdot \tr'[\cket{\chi'_i} \bullet' \cket{\ol{x_i}}]
  \nonumber \\
  & &&= \sum_{i=1}^k c_i \cdot \tr'~\cket{\ol{x_i}}
  = \tr~\cket{\ol{x}}.
 \end{alignat}
 Therefore, Eq.~\eqref{eq:derive_EJA_normal_unittrace} holds.
\end{proof}

For each system $A$, we will choose an EJA $\EJA_A$ satisfying $\EJA_A^+ = \St_A$ and
Eq.~\eqref{eq:derive_EJA_normal_unittrace}.
We here present some basic properties of the EJA $\EJA_A$.
From Eq.~\eqref{eq:derive_EJA_normal_unittrace}, for $\cket{\rho} \in \St_A$,
$\cket{\rho} \in \StN_A$ and $\tr~\cket{\rho} = 1$ are obviously equivalent.
Let $\braket{~,~}$ be the inner product of $\EJA_A$ defined as
$\braket{\ol{x},\ol{y}} \coloneqq \tr[\cket{\ol{x}} \bullet \cket{\ol{y}}]$;
then, we have
\begin{alignat}{3}
 \craket{\gdis|\ol{x}} &= \tr~\cket{\ol{x}} &&= \braket{\chi,\ol{x}},
 &\quad &\forall \cket{\ol{x}} \in \EJA_A.
 \label{eq:derive_EJAA_braket}
\end{alignat}
We will equip $\Vec_A$ with the Jordan product $\bullet$ and the inner product $\braket{~,~}$
to identify $\Vec_A$ with the EJA $\EJA_A$.
Any extended state in $\Vec_A$ has a spectral decomposition.
Since a state $\cket{\psi}$ is pure if and only if $\rank~\cket{\psi} \le 1$ holds,
$\cket{\psi} \in \StNP_A$ holds if and only if $\cket{\psi}$ is a primitive idempotent.
Since $\Eff_A$ is a symmetric cone, by equipping $\Vec_A^*$ with an appropriate Jordan product
(and an inner product), we can regard $\Vec_A^*$ as an EJA with
the identity element $\cra{\gdis}$ that satisfies $\Vec_A^*{}^+ = \Eff_A$.
We define the isomorphism
$\dagger:\Vec_A \ni \cket{\ol{x}} \to \cra{\ol{x}^\dagger} \in \Vec_A^*$ to satisfy
\begin{alignat}{2}
 \craket{\ol{x}^\dagger|\ol{y}} &= \braket{\ol{x},\ol{y}}
 = \tr[\cket{\ol{x}} \bullet \cket{\ol{y}}],
 &\quad &\forall \cket{\ol{x}},\cket{\ol{y}} \in \EJA_A.
\end{alignat}
Then, it follows from Eq.~\eqref{eq:derive_EJAA_braket} that
$\cra{\gdis} = \cra{\chi^\dagger}$ holds.
Also, we use the same notation, $\dagger:\Vec_A^* \to \Vec_A$,
for the inverse of $\dagger:\Vec_A \to \Vec_A^*$.

\begin{ex}[quantum theory]
 The identity element of $\Vec_A$ is $\cket{\chi_A} = \ident_\NA$.
 The Jordan product of $\cket{\ol{v}}, \cket{\ol{w}} \in \Vec_A$ is
 $\cket{\ol{v}} \bullet \cket{\ol{w}} =
 [\cket{\ol{v}} \cdot \cket{\ol{w}} + \cket{\ol{w}} \cdot \cket{\ol{v}}] / 2$.
 $\tr~\cket{\ol{v}}$ is equal to the trace of the matrix $\cket{\ol{v}}$,
 i.e., $\Tr~\cket{\ol{v}}$.
 $\braket{\ol{v},\ol{w}} = \Tr[\cket{\ol{v}} \bullet \cket{\ol{w}}]
 = \Tr[\cket{\ol{v}} \cdot \cket{\ol{w}}]$ holds.
\end{ex}

\begin{lemma} \label{lemma:DiamondFrame}
 Assume that Property~(\hyperlink{property:diamondsuit}{$\diamondsuit$}) holds.
 Let $\EJA_A$ be an EJA satisfying $\EJA_A^+ = \St_A$ and
 Eq.~\eqref{eq:derive_EJA_normal_unittrace}.
 Then, a set of normalized pure states is an MPDS
 if and only if it is a Jordan frame.
\end{lemma}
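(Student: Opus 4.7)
By Lemma~\ref{lemma:DiamondEJA} and the subsequent discussion, the elements of $\StNP_A$ are exactly the primitive idempotents of $\EJA_A$, and under the $\dagger$ identification the feasible effects correspond to elements of $\EJA_A^+$ bounded above by $\cket{\chi}$, with $\cra{\gdis} \leftrightarrow \cket{\chi}$. For the ``if'' direction, given a Jordan frame $\{\cket{\phi_i}\}_{i=1}^n$, I would form $\Pi \coloneqq \{\cra{\phi_i^\dagger}\}_{i=1}^n$; linearity of $\dagger$ combined with $\sum_i \cket{\phi_i} = \cket{\chi}$ yields $\sum_i \cra{\phi_i^\dagger} = \cra{\gdis}$, so $\Pi \in \Meas_A$ by Eq.~\eqref{eq:process_eff_sum}, and Jordan-orthogonality of the frame gives $\craket{\phi_i^\dagger|\phi_j} = \braket{\phi_i,\phi_j} = \delta_{ij}$, so $\Pi$ perfectly distinguishes the set. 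For maximality, any putative extra $\cket{\psi} \in \StNP_A$ would come with a measurement $\{\cra{e_k}\}_{k=1}^m$ ($m \ge n+1$) for which $\craket{e_i|\phi_i} = 1$ and $\craket{e_{n+1}|\psi} = 1$; from $\sum_k \cra{e_k} = \cra{\gdis}$ and nonnegativity of each term one obtains $\craket{e_{n+1}|\phi_j} = 0$ for every $j \le n$, so $\tr~\cket{e_{n+1}^\dagger} = \sum_j \braket{e_{n+1}^\dagger,\phi_j} = 0$, forcing $\cra{e_{n+1}} = \cra{\emptyset}$ and contradicting $\craket{e_{n+1}|\psi} = 1$.

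For the ``only if'' direction, let $\{\cket{\phi_i}\}_{i=1}^n$ be an MPDS with distinguishing measurement $\{\cra{e_k}\}_{k=1}^m$. The same trace argument gives $\craket{e_i|\phi_j} = \delta_{ij}$ for $i,j\le n$. The key EJA fact I would prove next is: if $\cket{s} \in \EJA_A^+$ with $\cket{s} \le \cket{\chi}$ satisfies $\braket{s,\phi} = 1$ for a primitive idempotent $\cket{\phi}$, then $\cket{s} \ge \cket{\phi}$. To prove it, self-duality of $\St_A$ applied to $\cket{\chi}-\cket{s}, \cket{\phi} \in \EJA_A^+$ with inner product $0$ forces $(\cket{\chi}-\cket{s})\bullet\cket{\phi} = 0$, and the Peirce decomposition of $\EJA_A$ with respect to $\cket{\phi}$ (together with the fact that an element of $\EJA_A^+$ whose Peirce-$1$ component vanishes must have vanishing Peirce-$\tfrac12$ component) places $\cket{\chi}-\cket{s}$ in the subalgebra $\EJA_{\cket{\chi}-\cket{\phi}}^+$, whence $\cket{s}-\cket{\phi} \in \EJA_A^+$. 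Applying this with $\cket{s} = \cket{e_i^\dagger}$ gives $\sum_i \cket{\phi_i} \le \sum_i \cket{e_i^\dagger} \le \cket{\chi}$, and pairing with $\cket{\phi_j}$ combined with the nonnegativity of each $\braket{\phi_i,\phi_j}$ (self-duality) forces $\braket{\phi_i,\phi_j} = 0$ for $i \ne j$, establishing mutual Jordan-orthogonality.

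To close the argument, I would show $\sum_i \cket{\phi_i} = \cket{\chi}$ by contradiction: if $\cket{f} \coloneqq \cket{\chi} - \sum_i \cket{\phi_i}$ were nonzero, its spectral decomposition would produce a primitive idempotent $\cket{\psi}$ with $c\cket{\psi} \le \cket{f}$ for some $c > 0$; since $\braket{f,\phi_j} = 0$ for all $j$, self-duality gives $\braket{\psi,\phi_j} = 0$ as well. Then $\{\cket{\phi_1},\ldots,\cket{\phi_n},\cket{\psi}\}$ is a set of mutually Jordan-orthogonal primitive idempotents, which extends to a full Jordan frame of $\EJA_A$, and the already-proved ``if'' direction applied to that frame supplies a measurement perfectly distinguishing the augmented set, contradicting maximality of the MPDS. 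The main obstacle is the Peirce-decomposition step: it is the only place where one genuinely uses the rigid interplay between the order structure and the Jordan product in $\EJA_A$, promoting the duality equality $\braket{s,\phi} = 1$ to the order inequality $\cket{s} \ge \cket{\phi}$, and every other step is then a short trace or self-duality computation.
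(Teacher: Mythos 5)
Your proof is correct and follows essentially the same route as the paper: the ``if'' direction builds the measurement $\{\cra{\phi_i^\dagger}\}$ from the Jordan frame via linearity of $\dagger$ and rules out an extra state by a trace/positivity argument, while the ``only if'' direction establishes $\cket{e_i^\dagger} \ge \cket{\phi_i}$, deduces mutual orthogonality from $\braket{\phi_i,\phi_j} \le \craket{e_i|\phi_j} = 0$, and extends the orthogonal family to a Jordan frame to contradict maximality. The only real difference is the domination step: the paper gets $\cket{e_i^\dagger} \ge \cket{\phi_i}$ by spectrally decomposing $\cket{\chi} - \cket{e_i^\dagger}$ and completing the resulting idempotents together with $\cket{\phi_i}$ to a Jordan frame, whereas you promote $\braket{e_i^\dagger,\phi_i}=1$ to the order inequality via the Peirce decomposition with respect to $\cket{\phi_i}$ --- both are valid, with the paper's version avoiding explicit Peirce theory.
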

\begin{proof}
 Consider a set of normalized pure states, $\Phi \coloneqq \{ \cket{\phi_i} \}_{i=1}^k$.
 Let $n \coloneqq \rank~\EJA_A$.

 ``If'':
 Assume that $\Phi$ is a Jordan frame; in this case, $k = n$ holds.
 Let $\Pi \coloneqq \{ \cra{\phi_i^\dagger} \}_{i=1}^n$; then,
 from $\sum_{i=1}^n \cra{\phi_i^\dagger} = \left[ \sum_{i=1}^n \cket{\phi_i} \right]^\dagger
 = \cket{\chi}^\dagger = \cra{\gdis}$ and
 $\craket{\phi_i^\dagger|\phi_i} = 1 = \craket{\gdis|\phi_i}$ $~(\forall i \in \{1,\ldots,n\})$,
 $\Pi$ is a measurement that perfectly distinguishes between $\Phi$.
 Also, since $\sum_{i=1}^n \cket{\phi_i} = \cket{\chi}$ is completely mixed,
 any maximal effect, $\cra{e}$, satisfies $\craket{e|\chi} > 0$.
 Thus, there exists no normalized pure state that is
 perfectly distinguishable from every state in $\Phi$.
 Therefore, $\Phi$ is an MPDS.

 ``Only if'':
 Assume that $\Phi$ is an MPDS.
 Let $\{ \cra{e_i} \}_{i=1}^k$ be a maximal measurement that
 perfectly distinguishes between $\Phi$.
 For each $i$, $\cket{a_i} \coloneqq \cket{\chi} - \cket{e_i^\dagger}$ has
 a spectral decomposition of the form $\cket{a_i} = \sum_{j=1}^l c_j \cket{\psi_j}$
 where $c_1,\ldots,c_l$ are in $\Realpp$ and $\{ \cket{\psi_j} \}_{j=1}^l$ is
 a subset of some Jordan frame.
 From $\braket{a_i,\phi_i} = \craket{\gdis|\phi_i} - \craket{e_i|\phi_i} = 0$,
 $\cket{\psi_1},\ldots,\cket{\psi_l}$ are orthogonal to $\cket{\phi_i} \eqqcolon \cket{\psi_{l+1}}$.
 Since $\cket{\psi_1},\ldots,\cket{\psi_{l+1}}$ are mutually orthogonal primitive idempotents,
 there exists a Jordan frame $\{ \cket{\psi_i} \}_{i=1}^n$ that includes
 $\cket{\psi_1},\ldots,\cket{\psi_{l+1}}$.
 This yields
 \begin{alignat}{4}
  \cket{e_i^\dagger} &= \cket{\chi} - \cket{a_i} &&\ge \sum_{j=l+1}^n \cket{\psi_j}
  &&\ge \cket{\psi_{l+1}} &&= \cket{\phi_i}.
 \end{alignat}
 Since $\braket{\phi_i,\phi_j} \le \craket{e_i|\phi_j} = 0$ holds
 for any distinct $i,j \in \{1,\ldots,k\}$,
 $\Phi$ is a set of mutually orthogonal primitive idempotents.
 Assume, by contradiction, that $\Phi$ is not a Jordan frame; then,
 there exists a Jordan frame $\Phi_\mathrm{ex}$ that truly includes $\Phi$.
 From the proof of the ``if'' part, the Jordan frame $\Phi_\mathrm{ex}$ is an MPDS,
 which contradicts that $\Phi$ is an MPDS.
 Therefore, $\Phi$ is a Jordan frame.
\end{proof}

It is easily seen from the proof of this lemma that
$\cket{\rho} \in \St_A$ and $\cket{\sigma} \in \St_A$ are
perfectly distinguishable (i.e., $\cket{\rho} \in \ker_\cket{\sigma}$)
if and only if $\braket{\rho,\sigma} = 0$ holds.
$\NA = \rank~\EJA_A$ holds and any MPDS has $\NA$ elements.
It also follows that, for any MPDS $\Phi \coloneqq \{ \cket{\phi_i} \}_{i=1}^\NA$,
$\Pi \coloneqq \{ \cra{\phi_i^\dagger} \}_{i=1}^\NA$ is the unique measurement
that perfectly distinguishes between $\Phi$.
Eq.~\eqref{eq:spectral_vec} obviously holds from Eq.~\eqref{eq:spectral_EJA}.
The identity element $\cket{\chi}$ is the invariant state.
Any PDS $\Phi \coloneqq \{ \cket{\phi_i} \}_{i=1}^k$ satisfies
$\braket{\phi_i,\phi_j} = \delta_{i,j}$, and thus $\cket{\chi_\Phi}$ is an idempotent.
Any maximal effect is pure and expressed in the form $\cra{\psi^\dagger}$
with $\cket{\psi} \in \StNP_A$,
which satisfies $\craket{\psi^\dagger|\psi} = 1$.

With the above preliminaries, we can easily show that
an OPT with Property~(\hyperlink{property:diamondsuit}{$\diamondsuit$})
satisfies all the properties except Lemma~\ref{lemma:Projection} that we have derived
in Secs.~\ref{sec:derive_PDS} and \ref{sec:derive_symmetry}.
To show that all the properties derived in Sec.~\ref{sec:derive_PDS} hold,
it suffices to show that the \poslink{pos:SymSharp}{symmetric sharpness} and
\poslink{pos:CompletelyMixed}{complete mixing} postulates hold.
As for the \poslink{pos:SymSharp}{symmetric sharpness postulate},
it is obvious that, to each $\cket{\phi} \in \StNP_A$,
there corresponds one and only one maximal effect, $\cra{\phi^\dagger}$,
giving unit probability
and that $\craket{\varphi^\dagger|\psi} = \braket{\varphi,\psi} = \braket{\psi,\varphi}
= \craket{\psi^\dagger|\varphi}$ holds
for any $\cket{\varphi},\cket{\psi} \in \StNP_A$.
As for the \poslink{pos:CompletelyMixed}{complete mixing postulate},
since any $\cket{\rho} \in \St_A$ that is not completely mixed
has a spectral decomposition in the form $\cket{\rho} = \sum_{i=1}^\NA p_i \cket{\phi_i}$
with $p_1,\ldots,p_\NA \in \Real_+$, $p_\NA = 0$, and
$\{ \cket{\phi_i} \}_{i=1}^\NA \in \MPDS_A$,
the maximal effect $\cra{\phi_\NA^\dagger}$ satisfies
$\craket{\phi_\NA^\dagger|\rho} = \braket{\phi_\NA,\rho} = 0$.
All the properties except Lemma~\ref{lemma:Projection} in Sec.~\ref{sec:derive_symmetry}
are immediately obtained by
Property~(\hyperlink{property:diamondsuit}{$\diamondsuit$}).

\begin{lemma} \label{lemma:FaceBullet} 
 For any $\Phi \in \PDS_A$,
 let $\EJA_\Phi$ be the subalgebra of $\EJA_A$ with the identity element $\cket{\chi_\Phi}$,
 i.e., $\EJA_\Phi \coloneqq \{ \cket{\ol{v}} \in \EJA_A :
 \cket{\chi_\Phi} \bullet \cket{\ol{v}} = \cket{\ol{v}} \}$.
 Then, we have $\Face_\Phi = \EJA_\Phi^+$.
\end{lemma}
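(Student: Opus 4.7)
My plan is to prove the two inclusions $\Face_\Phi \subseteq \EJA_\Phi^+$ and $\EJA_\Phi^+ \subseteq \Face_\Phi$ separately, reducing both to the Peirce-type identity $\cket{\chi_\Phi} \bullet \cket{\rho} = \cket{\rho}$ that defines $\EJA_\Phi^+$ via Eq.~\eqref{eq:EJA_subalgebra}. First I note that by Lemma~\ref{lemma:DiamondFrame} the elements of $\Phi$ are mutually orthogonal primitive idempotents of $\EJA_A$, so $\cket{\chi_\Phi}$ is itself an idempotent and $\cket{\chi_\Phi'} \coloneqq \cket{\chi} - \cket{\chi_\Phi}$ is an idempotent orthogonal to $\cket{\chi_\Phi}$. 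Using Eq.~\eqref{eq:derive_EJAA_braket} and linearity of $\dagger$, the criterion $\craket{\chi_\Phi^\dagger|\rho} = \craket{\gdis|\rho}$ provided by Lemma~\ref{lemma:FacePhi} rewrites as $\braket{\chi_\Phi', \rho} = 0$, i.e.\ orthogonality of $\cket{\rho}$ to $\cket{\chi_\Phi'}$ in the EJA inner product.

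For the inclusion $\Face_\Phi \subseteq \EJA_\Phi^+$, I would take $\cket{\rho} \in \Face_\Phi$ and apply Proposition~\ref{pro:Spectral} to write $\cket{\rho} = \sum_i c_i \cket{\psi_i}$ with $c_i \ge 0$ and $\{\cket{\psi_i}\}$ a Jordan frame. Self-duality of $\EJA_A^+$ makes each term in $0 = \braket{\chi_\Phi', \rho} = \sum_i c_i \braket{\chi_\Phi', \psi_i}$ nonnegative, forcing $\braket{\chi_\Phi', \psi_i} = 0$ whenever $c_i > 0$. Invoking the equivalence between inner-product orthogonality and Jordan-product orthogonality recalled in Sec.~\ref{subsec:derive_EJA_property}, this gives $\cket{\chi_\Phi'} \bullet \cket{\psi_i} = \cket{\emptyset}$, and subtracting from the identity relation $\cket{\chi} \bullet \cket{\psi_i} = \cket{\psi_i}$ yields $\cket{\chi_\Phi} \bullet \cket{\psi_i} = \cket{\psi_i}$ for every $i$ contributing to $\cket{\rho}$. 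Summing then gives $\cket{\chi_\Phi} \bullet \cket{\rho} = \cket{\rho}$, so $\cket{\rho} \in \EJA_\Phi \cap \St_A = \EJA_\Phi^+$ by Eq.~\eqref{eq:EJA_subalgebra}.

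For the reverse inclusion, I would use that $\cket{\chi_\Phi}$ is the identity element of the subalgebra $\EJA_\Phi$ and is therefore an interior point of the symmetric cone $\EJA_\Phi^+$ (e.g.\ from a Jordan-frame expansion inside $\EJA_\Phi$): for any $\cket{\rho} \in \EJA_\Phi^+$ there exists $\delta \in \Realpp$ such that $\cket{\chi_\Phi} - \delta \cket{\rho} \in \EJA_\Phi^+$. Since $\EJA_\Phi^+ \subseteq \EJA_A^+ = \St_A$, this same inequality holds in $\EJA_A$, giving $\delta \cket{\rho} \le \cket{\chi_\Phi}$ and hence $\cket{\rho} \in \Face_\Phi$ directly from the definition of the face. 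The only non-routine ingredient is the passage from $\braket{\chi_\Phi', \psi_i} = 0$ to $\cket{\chi_\Phi'} \bullet \cket{\psi_i} = \cket{\emptyset}$, which I expect to be the conceptual heart of the argument; it is however already available as a standard EJA fact in Sec.~\ref{subsec:derive_EJA_property}, so I anticipate no serious obstacle.
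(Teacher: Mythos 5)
Your proof is correct and rests on the same pivot as the paper's: translating membership in $\Face_\Phi$ into the orthogonality condition $\braket{\chi - \chi_\Phi,\,\rho} = 0$ and then into the Jordan-product condition $\cket{\chi_\Phi} \bullet \cket{\rho} = \cket{\rho}$ via the equivalence, recalled in Sec.~\ref{subsec:derive_EJA_property}, between inner-product orthogonality and Jordan-product orthogonality for elements of $\EJA_A^+$. The paper packages this as a single chain of reversible equivalences (starting from $\Face_\Phi = \ker_\Psi$ for a complementary PDS $\Psi$, Lemma~\ref{lemma:KernelPhi}, and applying the orthogonality fact directly to $\cket{\chi_\Psi}$ and $\cket{\rho}$), so your term-by-term spectral-decomposition argument for the forward inclusion and the interior-point argument for the reverse inclusion are both sound but avoidable.
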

\begin{proof}
 Let $\Psi$ be a PDS complementary to $\Phi$; then, we have
 \begin{alignat}{1}
  \cket{\rho} \in \Face_\Phi
  &\quad\Leftrightarrow\quad \braket{\chi_\Psi,\rho} = 0 \nonumber \\
  &\quad\Leftrightarrow\quad \cket{\chi_\Psi} \bullet \cket{\rho} = \cket{\emptyset} \nonumber \\
  &\quad\Leftrightarrow\quad \cket{\chi_\Phi} \bullet \cket{\rho} = \cket{\rho} \nonumber \\
  &\quad\Leftrightarrow\quad \cket{\rho} \in \EJA_\Phi^+,
 \end{alignat}
 where the first line follows from $\Face_\Phi = \ker_\Psi$.
 The third line follows from $\cket{\chi_\Psi} + \cket{\chi_\Phi} = \cket{\chi}$ and
 $\cket{\chi} \bullet \cket{\rho} = \cket{\rho}$,
 and the last line follows from
 $\EJA_\Phi^+ = \{ \cket{\sigma} \in \EJA_A^+ : \cket{\chi_\Phi} \bullet \cket{\sigma} = \cket{\sigma} \}$.
 Thus, we have $\Face_\Phi = \EJA_\Phi^+$.
\end{proof}

\section{Derivation of quantum theory} \label{sec:derive_Quantum}

In this section, we consider an OPT having
Property~(\hyperlink{property:diamondsuit}{$\diamondsuit$})
and the \poslink{pos:ProcEq}{local equality postulate}.
We investigated the properties of individual systems in the previous sections;
we will investigate the structure of composite systems in this section.

We first derive the following properties:
\begin{enumerate}
 \item $D_{A \otimes B} = D_A D_B$ holds (Lemma~\ref{lemma:DAB}).
 \item $N_{A \otimes B} = \NA \NB$ holds (Lemma~\ref{lemma:NAB}).
 \item The operation $\dagger$ distributes over $\otimes$ (Lemma~\ref{lemma:DaggerOtimes}).
 \item When $\St_A$ and $\St_B$ have simple decompositions
       $\St_A = \bigoplus_{i=1}^{k_A} \EJA_{A,i}^+$ and
       $\St_B = \bigoplus_{j=1}^{k_B} \EJA_{B,j}^+$,
       $\St_{A \otimes B}$ has the simple decomposition
       $\St_{A \otimes B} = \bigoplus_{i=1}^{k_A} \bigoplus_{j=1}^{k_B}
       \EJA_{A,i}^+ \otimes \EJA_{B,j}^+$
       (Lemmas~\ref{lemma:OplusOtimes} and \ref{lemma:OtimesSimple}),
       where $\EJA_{A,i}^+ \otimes \EJA_{B,j}^+$ will be defined in
       Subsec.~\ref{subsec:derive_Quantum_otimes}.
\end{enumerate}
Subsequently, we derive
Properties~(\hyperlink{property:A}{A})--(\hyperlink{property:C}{C}) listed
in Subsec.~\ref{subsec:derive_abstract_stream}
based on the above stated properties.

\subsection{Results about local equality} \label{subsec:derive_Quantum_Process}

In this subsection,
we show that the following properties are derived only from
the \poslink{pos:ProcEq}{local equality postulate}
(without considering Property~(\hyperlink{property:diamondsuit}{$\diamondsuit$})):
\begin{enumerate}[label=(\roman*)]
 \item $D_{A \otimes B} = D_A D_B$ holds.
 \item The parallel composition of two pure states is pure,
       and the parallel composition of two pure effects is pure.
\end{enumerate}
%


\subsubsection{$D_{A \otimes B} = D_A D_B$}

Let us fix a basis of $\Vec_A$, $\{ \cket{\ol{w_i}} \}_{i=1}^{D_A}$.
Since $\Vec_A^*$ is the dual vector space of $\Vec_A$,
there exists a basis, $\{ \cra{\ol{v_i}} \}_{i=1}^{D_A}$, of $\Vec_A^*$
satisfying $\craket{\ol{v_i}|\ol{w_j}} = \delta_{i,j}$.
Also, let $\cket{\ol{\eta}_A} \in \Vec_{A \otimes A}$ and
$\cra{\ol{\varepsilon}_A} \in \Vec_{A \otimes A}^*$ be defined by
\begin{alignat}{2}
 \cket{\ol{\eta}_A} &\coloneqq \sum_{i=1}^{D_A} \cket{\ol{w_i}} \otimes \cket{\ol{w_i}},
 &\quad
 \cra{\ol{\varepsilon}_A} &\coloneqq \sum_{i=1}^{D_A} \cra{\ol{v_i}} \otimes \cra{\ol{v_i}}.
 \label{eq:derive_quant_etaepsilon}
\end{alignat}

\begin{ex}[fully quantum theory]
 Let $\{ \ket{s} \}_{s=1}^\NA$ be an ONB of the complex Hilbert space $\Complex^\NA$.
 Also, let $\cket{\ol{w_{s,t}}} \in \Vec_A$ be defined as
 \begin{alignat}{1}
  \cket{\ol{w_{s,t}}} &\coloneqq
  \begin{dcases}
   \ket{s}\bra{s}, & s = t, \\
   \frac{1}{\sqrt{2}}(\ket{s}\bra{t} + \ket{t}\bra{s}), & s < t, \\
   \frac{\i}{\sqrt{2}}(\ket{s}\bra{t} - \ket{t}\bra{s}), & s > t \\
  \end{dcases}
  \label{eq:derive_quant_cons}
 \end{alignat}
 and $\cra{\ol{w_{s,t}}^\dagger} \in \Vec_A^*$ be the same matrix as $\cket{\ol{w_{s,t}}}$,
 where $\i \coloneqq \sqrt{-1}$.
 Then, $\mB \coloneqq \{ \cket{\ol{w_{s,t}}} \}_{(s,t)=(1,1)}^{(\NA,\NA)}$
 and $\mB^* \coloneqq \{ \cra{\ol{w_{s,t}}^\dagger} \}_{(s,t)=(1,1)}^{(\NA,\NA)}$
 are, respectively, ONBs of $\Vec_A$ and $\Vec_A^*$
 $\craket{\ol{w_{s,t}}^\dagger|\ol{w_{s',t'}}} = \delta_{s,s'} \delta_{t,t'}$ obviously holds.
 Consider the case $\NA = 2$; in this case, $D_A = 4$ holds.
 Substituting $\mB$ and $\mB^*$, respectively, into
 $\{ \cket{\ol{w_i}} \}_{i=1}^4$ and $\{ \cra{\ol{v_i}} \}_{i=1}^4$
 in Eq.~\eqref{eq:derive_quant_etaepsilon} gives
 \begin{alignat}{2}
  \cket{\ol{\eta}_A} &=
  \begin{bmatrix}
   1 & 0 & 0 & 0 \\
   0 & 0 & 1 & 0 \\
   0 & 1 & 0 & 0 \\
   0 & 0 & 0 & 1 \\
  \end{bmatrix}
  &&= \cra{\ol{\varepsilon}_A}.
  \label{eq:derive_eta_quant}
 \end{alignat}
 Note that since this matrix is not positive semidefinite,
 $\cket{\ol{\eta}_A} \not\in \St_{A \otimes A}$ and
 $\cra{\ol{\varepsilon}_A} \not\in \Eff_{A \otimes A}$ hold.
\end{ex}

\begin{lemma} \label{lemma:EtaEpsilon} 
 \begin{alignat}{1}
  \includegraphics[scale=1.0]{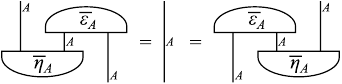} ~\raisebox{.5em}{.}
  \label{eq:derive_varepsilon_eta_id}
 \end{alignat}
\end{lemma}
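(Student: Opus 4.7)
The plan is to prove the snake (yanking) identities by checking them pointwise on a basis of $\Vec_A$ and then invoking the local equality postulate to upgrade from state-by-state agreement to genuine equality of processes.

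First I will reduce the claim. Eq.~\eqref{eq:derive_varepsilon_eta_id} asserts an equation between two processes in $\Proc_{A \to A}$. By Postulate~\ref{postulate:ProcEq} combined with Eq.~\eqref{eq:eqlocal_frho}, it suffices to show that the two sides agree when composed with every state $\cket{\rho} \in \St_A$; extending by linearity (via the distributivity law of Eq.~\eqref{eq:fsum_u}), this is equivalent to checking agreement on every $\cket{\ol{x}} \in \Vec_A$, and hence on each of the basis vectors $\cket{\ol{w_j}}$ that were fixed just before the lemma.

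Then I will compute. Unfolding the definitions of $\cket{\ol{\eta}_A}$ and $\cra{\ol{\varepsilon}_A}$ from Eq.~\eqref{eq:derive_quant_etaepsilon} and distributing the sequential and parallel compositions over the sums, the left snake applied to $\cket{\ol{w_j}}$ becomes
\[
\sum_{i=1}^{D_A} \sum_{k=1}^{D_A} \craket{\ol{v_k}|\ol{w_i}}\,\craket{\ol{v_k}|\ol{w_j}}\, \cket{\ol{w_i}}.
\]
The duality relation $\craket{\ol{v_k}|\ol{w_m}} = \delta_{k,m}$ collapses both sums, leaving $\cket{\ol{w_j}} = \id_A \circ \cket{\ol{w_j}}$. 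If Eq.~\eqref{eq:derive_varepsilon_eta_id} packages the mirror identity $(\cra{\ol{\varepsilon}_A} \otimes \id_A) \circ (\id_A \otimes \cket{\ol{\eta}_A}) = \id_A$ as well, the same calculation with the roles of the two $A$-wires swapped disposes of it immediately.

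The main obstacle is conceptual rather than computational: we need the local equality postulate to turn agreement on single-system inputs into full process equality. Without Postulate~\ref{postulate:ProcEq}, Eq.~\eqref{eq:eq_ffp} would force us to verify the snake identity on states of $A \otimes E$ for every ancilla $E$ and thereby obscure its essentially algebraic content. With the postulate in hand, however, the proof reduces to the familiar linear-algebraic fact that a basis paired with its dual basis yields the identity operator via the pairing, and no further input from Property~(\hyperlink{property:diamondsuit}{$\diamondsuit$}) is required.
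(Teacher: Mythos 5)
Your proposal is correct and follows essentially the same route as the paper: reduce to state-by-state agreement via the local equality postulate and Eq.~\eqref{eq:eqlocal_frho}, expand in the basis $\{\cket{\ol{w_i}}\}$ paired with its dual basis $\{\cra{\ol{v_i}}\}$, and collapse the sums using $\craket{\ol{v_i}|\ol{w_{i'}}} = \delta_{i,i'}$. The only cosmetic difference is that you verify the identity on basis vectors and extend by linearity, whereas the paper writes a general $\cket{\rho} = \sum_i \rho_i \cket{\ol{w_i}}$ and computes directly; these are interchangeable.
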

\begin{proof}
 From the \poslink{pos:ProcEq}{local equality postulate} and Eq.~\eqref{eq:eqlocal_frho},
 it suffices to show, for any $\cket{\rho} \in \St_A$,
 \begin{alignat}{1}
  \includegraphics[scale=1.0]{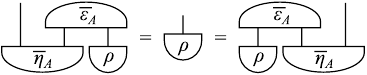} ~\raisebox{.5em}{.}
  \label{eq:derive_varepsilon_eta_id_proof}
 \end{alignat}
 Since $\cket{\rho}$ can be expressed in the form
 $\cket{\rho} = \sum_{i=1}^{D_A} \rho_i \cket{\ol{w_i}}$ with $\rho_i \in \Real$,
 we have
 \begin{alignat}{1}
  \includegraphics[scale=1.0]{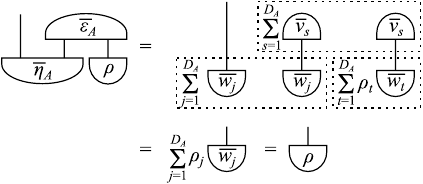} ~\raisebox{.5em}{,}
  \label{eq:derive_varepsilon_eta_id_proof2}
 \end{alignat}
 where the first and second equalities follow from Eq.~\eqref{eq:derive_quant_etaepsilon}
 and $\craket{\ol{v_i}|\ol{w_{i'}}} = \delta_{i,{i'}}$, respectively.
 This proves the first equality of Eq.~\eqref{eq:derive_varepsilon_eta_id_proof}.
 The second equality of Eq.~\eqref{eq:derive_varepsilon_eta_id_proof} can be proved
 in the same way.
\end{proof}

\begin{lemma} \label{lemma:DAB} 
 $D_{A \otimes B} = D_A D_B$ holds for any systems $A$ and $B$.
\end{lemma}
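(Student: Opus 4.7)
The plan is to show that the $D_A D_B$ product vectors $\{\cket{\ol{w_i^A}} \otimes \cket{\ol{w_j^B}}\}_{i,j}$ form a basis of $\Vec_{A \otimes B}$; linear independence will yield $D_{A \otimes B} \ge D_A D_B$, and the snake identity of Lemma~\ref{lemma:EtaEpsilon} (whose proof already invoked the local equality postulate) will yield the matching upper bound. The bases $\{\cket{\ol{w_i^A}}\}$, $\{\cket{\ol{w_j^B}}\}$ of $\Vec_A$, $\Vec_B$ and their duals $\{\cra{\ol{v_i^A}}\}$, $\{\cra{\ol{v_j^B}}\}$ are fixed as in the build-up to Eq.~\eqref{eq:derive_quant_etaepsilon}.

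For linear independence, I would pair the product vectors with the product dual functionals. Using associativity/interchange of parallel composition of scalars together with $\craket{\ol{v_i^A}|\ol{w_k^A}} = \delta_{ik}$ and $\craket{\ol{v_j^B}|\ol{w_l^B}} = \delta_{jl}$, one obtains
\begin{equation}
 (\cra{\ol{v_i^A}} \otimes \cra{\ol{v_j^B}}) \circ (\cket{\ol{w_k^A}} \otimes \cket{\ol{w_l^B}}) = \delta_{ik}\,\delta_{jl},
\end{equation}
so the $D_A D_B$ product vectors are linearly independent in $\Vec_{A \otimes B}$, forcing $D_{A \otimes B} \ge D_A D_B$.

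For the upper bound I would take an arbitrary $\cket{\ol{\rho}} \in \Vec_{A \otimes B}$ and insert a ``snake'' on its $A$-output. Concretely, tensor $\cket{\ol{\rho}}$ with $\cket{\ol{\eta}_A}$ on the right and then contract the original $A$-output against the first $A$-wire of $\cket{\ol{\eta}_A}$ via $\cra{\ol{\varepsilon}_A}$; by Lemma~\ref{lemma:EtaEpsilon} (i.e., Eq.~\eqref{eq:derive_varepsilon_eta_id}) this composite equals $\cket{\ol{\rho}}$, up to a swap of tensor factors. Substituting the defining expansion $\cket{\ol{\eta}_A} = \sum_i \cket{\ol{w_i^A}} \otimes \cket{\ol{w_i^A}}$ and pulling the sum outside the diagram gives
\begin{equation}
 \cket{\ol{\rho}} = \sum_{i=1}^{D_A} \cket{\ol{\sigma}_i} \otimes \cket{\ol{w_i^A}},
\end{equation}
where $\cket{\ol{\sigma}_i} \in \Vec_B$ is the partial contraction of $\cra{\ol{\varepsilon}_A}$ against $\cket{\ol{\rho}}$ and one copy of $\cket{\ol{w_i^A}}$. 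Since $\{\cket{\ol{w_j^B}}\}$ is a basis of $\Vec_B$, each $\cket{\ol{\sigma}_i}$ expands uniquely as $\sum_j c_{ij} \cket{\ol{w_j^B}}$, so $\cket{\ol{\rho}} = \sum_{i,j} c_{ij}\, \cket{\ol{w_i^A}} \otimes \cket{\ol{w_j^B}}$. Hence $D_{A \otimes B} \le D_A D_B$, and combining the two bounds gives the claim.

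The main obstacle is bookkeeping rather than anything conceptual: one must keep track of which wires the snake identifies, invoke the swap process so that the surviving $A$-wire is recognized in the ``correct'' tensor slot of $\Vec_{A \otimes B}$, and then distribute the finite sum through the surrounding diagram (licensed by Eq.~\eqref{eq:fsum_u} and the linearity of the diagram functional). All of this is routine once the snake identity is available, which is exactly where the local equality postulate is spent; no further physical input is needed.
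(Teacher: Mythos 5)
Your proof is correct, and for the upper bound it takes a genuinely more direct route than the paper. The lower bound is identical: both you and the paper pair the product vectors $\cket{\ol{w_i}}\otimes\cket{\ol{w'_j}}$ with the product dual functionals to get linear independence, hence $D_{A\otimes B}\ge D_A D_B$. For the upper bound the paper does \emph{not} expand bipartite states directly; instead it bounds $\dim \Vec_{A\to B}\le D_A D_B$ by a separate appeal to the local equality postulate (an extended process is determined by its matrix elements $\cra{e_j}\circ\ol{f}\circ\cket{\rho_i}$), and then uses the maps $F$ and $G$ built from $\cket{\ol{\eta}_A}$ and $\cra{\ol{\varepsilon}_A}$, together with Lemma~\ref{lemma:EtaEpsilon}, to show $F\circ G=\mathrm{id}$ and hence that $G:\Vec_{A\otimes B}\to\Vec_{A\to B}$ is injective, closing the chain $D_AD_B\le\dim\Vec_{A\otimes B}\le\dim\Vec_{A\to B}\le D_AD_B$. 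You instead insert the snake on the $A$-wire of an arbitrary $\cket{\ol{\rho}}\in\Vec_{A\otimes B}$ and read off $\cket{\ol{\rho}}=\sum_i\cket{\ol{w_i}}\otimes\cket{\ol{\sigma}_i}$ with $\cket{\ol{\sigma}_i}=[\cra{\ol{v_i}}\otimes\id_B]\circ\cket{\ol{\rho}}\in\Vec_B$, so the product vectors span and $D_{A\otimes B}\le D_AD_B$ follows at once; this spends the local equality postulate only once (inside Lemma~\ref{lemma:EtaEpsilon}) and in fact proves directly the basis statement that the paper only records as a corollary of the lemma. What your route does not deliver, and the paper's detour does, is the isomorphism $\Vec_{A\to B}\cong\Vec_{A\otimes B}$ via $F$ and $G$ (the Choi-type correspondence), which the paper explicitly reuses afterwards (e.g.\ in transporting the Jordan product to $\Vec_{A\to B}$ and in Theorem~\ref{thm:Proc}); so if your proof were adopted, that isomorphism would still need to be established separately.
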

\begin{proof}
 From the \poslink{pos:ProcEq}{local equality postulate},
 each extended process $\ol{f} \in \Vec_{A \to B}$ is identified by
 a set of scalars $\{ \cra{e_j} \circ \ol{f} \circ \cket{\rho_i} \}_{(i,j)=(1,1)}^{(D_A,D_B)}$,
 where $\{ \cket{\rho_i} \in \St_A \}_{i=1}^{D_A}$ and $\{ \cra{e_j} \in \Eff_B \}_{j=1}^{D_B}$ are,
 respectively, sets of some fixed states and effects.
 Thus, we have $\dim~\Vec_{A \to B} \le D_A D_B$.
 One can also see $\dim~\Vec_{A \otimes B} \ge D_A D_B$.
 Indeed, if $\{ \cket{\ol{w_i}} \}_{i=1}^{D_A}$ and $\{ \cket{\ol{w'_j}} \}_{j=1}^{D_B}$
 are, respectively, bases of $\Vec_A$ and $\Vec_B$, then
 $\{ \cket{\ol{w_i}} \otimes \cket{\ol{w'_j}} \}_{(i,j)=(1,1)}^{(D_A,D_B)}$
 is a set of linearly independent extended states of $\Vec_{A \otimes B}$.
 We only need to show $\dim~\Vec_{A \otimes B} \le \dim~\Vec_{A \to B}$;
 indeed, in this case, we have
 $D_A D_B \le \dim~\Vec_{A \otimes B} \le \dim~\Vec_{A \to B} \le D_A D_B$
 and thus $D_{A \otimes B} = \dim~\Vec_{A \otimes B} = D_A D_B$.

 We will show $\dim~\Vec_{A \otimes B} \le \dim~\Vec_{A \to B}$.
 Consider the linear map $F:\Vec_{A \to B} \ni \ol{f} \mapsto
 (\id_A \otimes \ol{f}) \circ \cket{\ol{\eta}_A} \in \Vec_{A \otimes B}$,
 i.e.,
 \begin{alignat}{1}
  \includegraphics[scale=1.0]{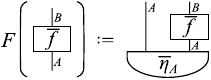} ~\raisebox{.5em}{,}
  \label{eq:derive_ProcSt_proc2st}
 \end{alignat}
 and the linear map $G:\Vec_{A \otimes B} \ni \cket{\ol{x}} \mapsto
 [\cra{\ol{\varepsilon}_A} \otimes \id_B] \circ [\id_A \otimes \cket{\ol{x}}] \in \Vec_{A \to B}$,
 i.e.,
 \begin{alignat}{1}
  \includegraphics[scale=1.0]{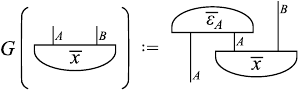} ~\raisebox{.5em}{.}
  \label{eq:derive_ProcSt_st2proc}
 \end{alignat}
 We have that, for any $\cket{\ol{x}} \in \Vec_{A \otimes B}$,
 \begin{alignat}{1}
  \includegraphics[scale=1.0]{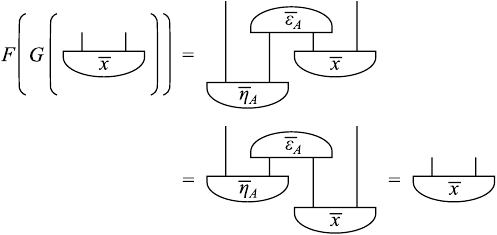} ~\raisebox{1.3em}{,}
  \label{eq:derive_ProcSt_FG}
 \end{alignat}
 where the second equality follows from
 $[\id_{A \otimes A} \circ \cket{\ol{\eta_A}}] \otimes [\cket{\ol{x}} \circ \id_I]
 = [\cket{\ol{\eta_A}} \circ \id_I] \otimes [\id_{A \otimes B} \circ \cket{\ol{x}}]$
 (see Eq.~\eqref{eq:process_fg})
 and the last equality follows from Lemma~\ref{lemma:EtaEpsilon}.
 Thus, $F[G(\Endash)]$ is the identity map on $\Vec_{A \otimes B}$,
 which gives that $G$ must be injective.
 Therefore, $\dim~\Vec_{A \otimes B} \le \dim~\Vec_{A \to B}$ holds.
\end{proof}

It follows from this lemma that $\Vec_{A \to B}$ and $\Vec_{A \otimes B}$
are isomorphic as real vector spaces for any systems $A$ and $B$.
One can easily see that $F$ and $G$ defined above are isomorphisms between these vector spaces.
Note that if $\Vec_{A \otimes B}$ is an EJA with a Jordan product $\bullet$,
then $\Vec_{A \to B}$ equipped with the Jordan product $\bullet_\mathrm{p}$
that is defined by $\ol{f} \bullet_\mathrm{p} \ol{g} \coloneqq G[F(\ol{f}) \bullet F(\ol{g})]$
$~(\ol{f},\ol{g} \in \Vec_{A \to B})$ is also an EJA.
In this case, $F$ and $G$ are also isomorphisms between these EJAs.

It follows from this lemma that,
for two bases $\{ \cket{\ol{w_i}} \}_{i=1}^{D_A}$ of $\Vec_A$
and $\{ \cket{\ol{w'_j}} \}_{j=1}^{D_B}$ of $\Vec_B$,
$\{ \cket{\ol{w_i}} \otimes \cket{\ol{w'_j}} \}_{(i,j)=(1,1)}^{(D_A,D_B)}$
is a basis of $\Vec_{A \otimes B}$.
Thus, any $\cket{\ol{x}} \in \Vec_{A \otimes B}$ is expressed in the form
\begin{alignat}{2}
 \cket{\ol{x}} &= \sum_{i=1}^{D_A} \sum_{j=1}^{D_B} c_{i,j} \cket{\ol{w_i}} \otimes \cket{\ol{w'_j}},
 &&\quad c_{i,j} \in \Real.
\end{alignat}
Also, one can easily see that,
for any $\cket{\rho},\cket{\rho'} \in \St_{A \otimes B}$,
we have
\begin{alignat}{1}
 \includegraphics[scale=1.0]{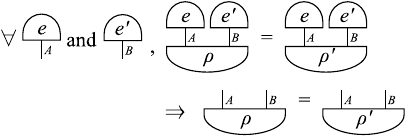} ~\raisebox{.5em}{,}
 \label{eq:derive_local_discriminability_St}
\end{alignat}
which indicates that any bipartite state can be identified from the statistics of
local measurements on the individual systems.
This property, often referred to as local tomography or local distinguishability,
has been discussed since at least the 1980s
(see, e.g., \cite{Ber-Can-Cor-Liv-1980,Woo-1990,Mer-1998}).
It is known that local tomography is equivalent to the relation $D_{A \otimes B} = D_A D_B$
and that, in any OPT that satisfies local tomography,
the \poslink{pos:ProcEq}{local equality postulate} holds \cite{Chi-Dar-Per-2010}.
Thus, from Lemma~\ref{lemma:DAB}, the \poslink{pos:ProcEq}{local equality postulate},
local tomography, and the relation $D_{A \otimes B} = D_A D_B$ are all equivalent.

\subsubsection{Composition of two pure states/effects is pure}

\begin{lemma} \label{lemma:PureOtimes} 
 The parallel composition of two pure states is pure.
 Also, the parallel composition of two pure effects is pure.
\end{lemma}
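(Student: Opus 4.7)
The plan is to establish that $\cket{\psi}\otimes\cket{\varphi}$ is atomic in $\St_{A\otimes B}$, i.e., that $\Face_{\cket{\psi}\otimes\cket{\varphi}}=\Real_+\cdot(\cket{\psi}\otimes\cket{\varphi})$. Since every pure state is a positive scalar multiple of a normalized pure state and atomicity is preserved under positive scaling, I may assume $\cket{\psi}\in\StNP_A$ and $\cket{\varphi}\in\StNP_B$. Fix $\cket{\sigma}\in\St_{A\otimes B}$ with $\cket{\sigma}\le\cket{\psi}\otimes\cket{\varphi}$; I will show $\cket{\sigma}=p\cdot\cket{\psi}\otimes\cket{\varphi}$ for some $p\in[0,1]$.

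Define the linear map $\Psi:\Vec_A^*\to\Vec_B$ by $\Psi(\cra{e_A})\coloneqq(\cra{e_A}\otimes\id_B)\cket{\sigma}$. Applying the positive process $\cra{e_A}\otimes\id_B$ (for $\cra{e_A}\in\Eff_A$) to $\cket{\psi}\otimes\cket{\varphi}-\cket{\sigma}\in\St_{A\otimes B}$ yields $\Psi(\cra{e_A})\le\craket{e_A|\psi}\cket{\varphi}$ in $\St_B$, so $\Psi(\cra{e_A})\in\Face_{\cket{\varphi}}$. Purity of $\cket{\varphi}$ gives $\Face_{\cket{\varphi}}=\{q\cket{\varphi}:q\in\Real_+\}$, hence there is a scalar $\lambda(\cra{e_A})\in\Real_+$ with $\Psi(\cra{e_A})=\lambda(\cra{e_A})\cket{\varphi}$.

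Writing every extended effect as a difference of feasible ones extends $\lambda$ to a linear functional on $\Vec_A^*$, hence an element $\cket{\ell}\in(\Vec_A^*)^*\cong\Vec_A$ with $\lambda(\cra{e_A})=\craket{e_A|\ell}$. Positivity $\lambda(\cra{e_A})\ge 0$ for $\cra{e_A}\in\Eff_A$, combined with self-duality of $\St_A$ (Proposition~\ref{pro:SelfDual}), gives $\cket{\ell}\in\St_A$; the bound $\lambda(\cra{e_A})\le\craket{e_A|\psi}$ yields $\cket{\ell}\le\cket{\psi}$; and purity of $\cket{\psi}$ then forces $\cket{\ell}=p\cket{\psi}$ with $p=\tr\cket{\ell}=\craket{\gdis_{A\otimes B}|\sigma}\in[0,1]$. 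Thus $(\cra{e_A}\otimes\cra{e_B})\cket{\sigma}=p\craket{e_A|\psi}\craket{e_B|\varphi}=(\cra{e_A}\otimes\cra{e_B})(p\cket{\psi}\otimes\cket{\varphi})$ for all $\cra{e_A}$ and $\cra{e_B}$, and local tomography (Eq.~\eqref{eq:derive_local_discriminability_St}, which follows from the local equality postulate via Lemma~\ref{lemma:DAB}) concludes $\cket{\sigma}=p\cdot\cket{\psi}\otimes\cket{\varphi}$.

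The claim for pure effects will follow by the dual argument: given $\cra{f}\le\cra{\psi^\dagger}\otimes\cra{\varphi^\dagger}$ in $\Eff_{A\otimes B}$, the map $\Vec_A\to\Vec_B^*$, $\cket{\rho}\mapsto\cra{f}\circ(\cket{\rho}\otimes\id_B)$, is bounded above by the pure effect $\craket{\psi^\dagger|\rho}\cra{\varphi^\dagger}$, and the same reasoning using purity of $\cra{\varphi^\dagger}$ in $\Eff_B$ together with self-duality of $\Eff_A$ (Theorem~\ref{thm:Eff}) yields $\cra{f}=q\cra{\psi^\dagger}\otimes\cra{\varphi^\dagger}$. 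The main delicate step in both arguments is converting the positive linear functional $\lambda$ into an element of the state space (and likewise for effects) via self-duality; without Property~(\hyperlink{property:diamondsuit}{$\diamondsuit$}) one could not force $\cket{\ell}$ to lie in $\St_A$ and thereby invoke purity of $\cket{\psi}$ to pin down $\lambda$.
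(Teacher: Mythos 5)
Your proof is correct, but it takes a genuinely different route from the paper's in the one step that matters. Both arguments reduce to showing that anything dominated by $\cket{\psi}\otimes\cket{\varphi}$ is proportional to it, both apply $\cra{e_A}\otimes\id_B$ and use purity of the $B$-factor to produce the coefficient function $\cra{e_A}\mapsto\lambda(\cra{e_A})$, and both finish with local tomography. Where you diverge is in pinning down $\lambda$: you linearize it to an element $\cket{\ell}\in\Vec_A$ and invoke Property~($\diamondsuit$) (self-duality of $\St_A$ and $\Eff_A=\St_A^*$, i.e.\ biduality of the cone) to conclude $\cket{\ell}\in\St_A$ before applying purity of $\cket{\psi}$. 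The paper instead simply applies $\id_A\otimes\cra{\gdis_B}$ to the decomposition, uses purity of $\cket{\psi}$ on the resulting $A$-marginal to get a single scalar $q$, and combines the two marginal computations to obtain $\lambda(\cra{e_A})=s\,\craket{e_A|\psi}$ directly --- no appeal to self-duality at all. This matters because the subsection containing the lemma explicitly claims it follows from the local equality postulate alone, a generality your argument sacrifices (your closing remark that Property~($\diamondsuit$) is indispensable for forcing $\cket{\ell}\in\St_A$ is true of your construction but not of the lemma itself). The same caveat applies to your effect half, where writing pure effects as $\cra{\psi^\dagger}\otimes\cra{\varphi^\dagger}$ presupposes the dagger correspondence; the paper's symmetric marginal trick avoids this too. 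Within the context where the lemma is ultimately used (an OPT with Property~($\diamondsuit$) and local equality) your proof is perfectly valid, and the linear-functional viewpoint is a clean way to see why the coefficient must be an effect evaluation; it just buys less than the paper's more elementary argument.
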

\begin{proof}
 Since the case of effects can be treated similarly,
 we only prove the case of states.
 Arbitrarily choose $\cket{\psi} \in \StP_A$ and $\cket{\psi'} \in \StP_B$.
 We can express $\cket{\psi} \otimes \cket{\psi'}$ in the following form:
 \begin{alignat}{2}
  \cket{\psi} \otimes \cket{\psi'} &= \sum_{i=1}^l \cket{\varphi_i},
  &\quad \cket{\varphi_i} &\in \StP_{A \otimes B}.
  \label{eq:derive_DAB_sigma}
 \end{alignat}
 To prove $\cket{\psi} \otimes \cket{\psi'} \in \StP_{A \otimes B}$,
 we only need to show that $\cket{\varphi_1} \propto \cket{\psi} \otimes \cket{\psi'}$
 always holds.

 The case $\cket{\psi'} = \cket{\emptyset}$ is obvious;
 assume $\cket{\psi'} \neq \cket{\emptyset}$.
 Arbitrarily choose $\cra{e} \in \Eff_A$.
 Applying $\cra{e} \otimes \id_B$ to Eq.~\eqref{eq:derive_DAB_sigma}
 and using $\cket{\psi'} \in \StP_B$, one can see that
 \begin{alignat}{1}
  \includegraphics[scale=1.0]{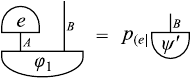}
  \label{eq:derive_pure_otimes_proof1}
 \end{alignat}
 holds for some $p_\cra{e} \in \Real_+$.
 In the same way, applying $\id_A \otimes \cra{\gdis_B}$ to Eq.~\eqref{eq:derive_DAB_sigma}
 and using $\cket{\psi} \in \StP_A$, one can see that
 \begin{alignat}{1}
  \includegraphics[scale=1.0]{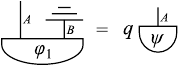}
  \label{eq:derive_pure_otimes_proof2}
 \end{alignat}
 holds for some $q \in \Real_+$.
 From Eqs.~\eqref{eq:derive_pure_otimes_proof1} and \eqref{eq:derive_pure_otimes_proof2},
 we have
 \begin{alignat}{2}
  \includegraphics[scale=1.0]{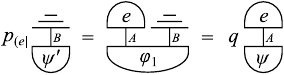} ~\raisebox{.5em}{.}
 \end{alignat}
 Thus, Eq.~\eqref{eq:derive_pure_otimes_proof1} gives
 \begin{alignat}{2}
  \includegraphics[scale=1.0]{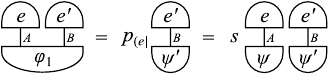}
 \end{alignat}
 for any $\cra{e'} \in \Eff_B$, where $s \coloneqq q \cdot \craket{\gdis_B|\psi'}^{-1}$.
 Since this equation holds for any effects $\cra{e}$ and $\cra{e'}$,
 $\cket{\varphi_1} = s \cket{\psi} \otimes \cket{\psi'}$ holds
 from Eq.~\eqref{eq:derive_local_discriminability_St}.
 Therefore, we have $\cket{\varphi_1} \propto \cket{\psi} \otimes \cket{\psi'}$.
\end{proof}

\subsection{Basic properties of composite systems} \label{subsec:derive_Quantum_basic}

In what follows, we consider an OPT
enjoying Property~(\hyperlink{property:diamondsuit}{$\diamondsuit$}) and
the \poslink{pos:ProcEq}{local equality postulate}.
For any system $A$, let $\EJA_A$ be an EJA satisfying $\EJA_A^+ = \St_A$ and
Eq.~\eqref{eq:derive_EJA_normal_unittrace}.

\begin{lemma} \label{lemma:CompositeMPDS} 
 For any $\{ \cket{\phi_i} \}_{i=1}^\NA \in \MPDS_A$ and
 $\{ \cket{\varphi_j} \}_{j=1}^\NB \in \MPDS_B$,
 we have $\Phi \coloneqq \{ \cket{\phi_i} \otimes \cket{\varphi_j} \}_{(i,j)=(1,1)}^{(\NA,\NB)}
 \in \MPDS_{A \otimes B}$.
 Furthermore, $\{ \cra{\phi_i^\dagger} \otimes \cra{\varphi_j^\dagger} \}_{(i,j)=(1,1)}^{(\NA,\NB)}$
 is the maximal measurement that perfectly distinguishes between $\Phi$.
\end{lemma}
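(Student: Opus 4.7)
The plan is to build the MPDS structure of $\Phi$ in three moves: first confirm that $\Phi$ is a PDS with the candidate measurement $\Pi \coloneqq \{\cra{\phi_i^\dagger} \otimes \cra{\varphi_j^\dagger}\}_{(i,j)=(1,1)}^{(\NA,\NB)}$, then upgrade each product effect to a genuine maximal effect of $A \otimes B$, and finally deduce maximality of $\Phi$ from this identification.

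The first move is immediate from the material already in hand. Each $\cket{\phi_i} \otimes \cket{\varphi_j}$ is pure by Lemma~\ref{lemma:PureOtimes} and normalized, since $\craket{\gdis_{A \otimes B}|\phi_i \otimes \varphi_j} = \craket{\gdis_A|\phi_i}\craket{\gdis_B|\varphi_j} = 1$. Lemma~\ref{lemma:MPDSMeas} applied to each factor MPDS gives $\sum_i \cra{\phi_i^\dagger} = \cra{\gdis_A}$ and $\sum_j \cra{\varphi_j^\dagger} = \cra{\gdis_B}$, so $\sum_{i,j} \cra{\phi_i^\dagger} \otimes \cra{\varphi_j^\dagger} = \cra{\gdis_A} \otimes \cra{\gdis_B} = \cra{\gdis_{A \otimes B}}$, meaning $\Pi$ is a measurement; the product structure then yields $\craket{\phi_i^\dagger \otimes \varphi_j^\dagger | \phi_{i'} \otimes \varphi_{j'}} = \delta_{i,i'}\delta_{j,j'}$, so $\Pi$ perfectly distinguishes $\Phi$.

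The technical heart of the proof is to show that each $\cra{e_{ij}} \coloneqq \cra{\phi_i^\dagger} \otimes \cra{\varphi_j^\dagger}$ is actually a maximal effect, equal to $(\cket{\phi_i} \otimes \cket{\varphi_j})^\dagger$. The effect version of Lemma~\ref{lemma:PureOtimes} makes $\cra{e_{ij}}$ pure, and Property~(\hyperlink{property:diamondsuit}{$\diamondsuit$}) applied to $A \otimes B$ via Theorem~\ref{thm:Eff} and Koecher--Vinberg makes $\Eff_{A \otimes B}$ the cone of squares of an EJA on $\Vec_{A \otimes B}^*$; hence every nonzero pure effect has the form $c \cdot \cra{f}$ for some $c > 0$ and some maximal $\cra{f} \in \EffM_{A \otimes B}$. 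Two feasibility evaluations will then pin down $c$: applying $\cra{e_{ij}} = c \cra{f}$ to the normalized pure state $\cket{f^\dagger}$ and using $\cra{e_{ij}} \le \cra{\gdis}$ gives $c = \craket{e_{ij}|f^\dagger} \le 1$, while applying it to $\cket{\phi_i} \otimes \cket{\varphi_j}$ and using $\cra{f} \le \cra{\gdis}$ gives $1 = \craket{e_{ij}|\phi_i \otimes \varphi_j} = c \cdot \craket{f|\phi_i \otimes \varphi_j} \le c$. Thus $c = 1$, so $\cra{e_{ij}}$ itself is maximal, and the \poslink{pos:SymSharp}{symmetric sharpness postulate} forces $\cra{e_{ij}} = (\cket{\phi_i} \otimes \cket{\varphi_j})^\dagger$.

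With this identification, maximality of $\Phi$ follows by contradiction: if some $\cket{\psi} \in \StNP_{A \otimes B}$ were perfectly distinguishable from every element of $\Phi$, Lemma~\ref{lemma:Perp} would force $\craket{e_{ij}|\psi} = 0$ for every $(i,j)$, and summing over $(i,j)$ would give $\craket{\gdis_{A \otimes B}|\psi} = 0$, contradicting $\cket{\psi} \in \StN_{A \otimes B}$. Hence $\Phi$ is an MPDS, and Lemma~\ref{lemma:MPDSMeas} immediately yields the uniqueness of $\Pi$ as its maximal measurement. The main obstacle is the scaling step of the third paragraph: the tensor-product EJA structure connecting $\EJA_{A \otimes B}$ to $\EJA_A$ and $\EJA_B$, and the dimension count $N_{A \otimes B} = \NA \NB$ of Lemma~\ref{lemma:NAB}, are not yet at our disposal at this point of the paper, so the two feasibility inequalities evaluated on two carefully chosen normalized pure states are precisely what pin down $c = 1$ using only Property~(\hyperlink{property:diamondsuit}{$\diamondsuit$}).
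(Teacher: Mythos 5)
Your proposal is correct and follows essentially the same route as the paper: establish that $\Pi$ is a measurement perfectly distinguishing the PDS $\Phi$ of pure product states, identify each $\cra{\phi_i^\dagger} \otimes \cra{\varphi_j^\dagger}$ with the maximal effect $[\cket{\phi_i} \otimes \cket{\varphi_j}]^\dagger$, and derive maximality of $\Phi$ by summing $\craket{e_{i,j}|\psi} = 0$ to contradict normalization of a hypothetical extra state. The only difference is cosmetic: the paper asserts the identification $\cra{e_{i,j}} = \cra{\tphi_{i,j}^\dagger}$ in one line, whereas you spell out the normalization constant $c = 1$ via the two feasibility evaluations — a legitimate filling-in of a step the paper leaves implicit.
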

\begin{proof}
 We have
 \begin{alignat}{1}
  \includegraphics[scale=1.0]{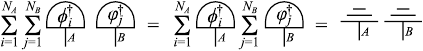} ~\raisebox{.5em}{,}
  \label{eq:derive_composite_MPDS_proof}
 \end{alignat}
 which indicates that $\Pi \coloneqq \{ \cra{e_{i,j}} \coloneqq
 \cra{\phi_i^\dagger} \otimes \cra{\varphi_j^\dagger} \}_{(i,j)=(1,1)}^{(\NA,\NB)}$
 is a measurement.
 Let $\cket{\tphi_{i,j}} \coloneqq \cket{\phi_i} \otimes \cket{\varphi_j}$; then,
 $\craket{e_{i,j}|\tphi_{i,j}} = 1$ holds, and thus $\Pi$ perfectly distinguishes between $\Phi$.
 Since $\cket{\tphi_{i,j}} \in \StNP_{A \otimes B}$ holds
 from Lemma~\ref{lemma:PureOtimes} and
 $\craket{\gdis_{A \otimes B}|\tphi_{i,j}} = \craket{\gdis_A|\phi_i} \craket{\gdis_B|\varphi_j} = 1$,
 $\Phi$ is a PDS.
 It remains to prove that $\Phi$ is an MPDS.
 Assume, by contradiction, that $\Phi$ is not an MPDS;
 then, there exists $\cket{\psi} \in \StNP_{A \otimes B}$
 that is perfectly distinguishable from $\cket{\chi_\Phi}$.
 $\cra{e_{i,j}}$ is a feasible pure effect that satisfies $\craket{e_{i,j}|\tphi_{i,j}} = 1$,
 and thus $\cra{e_{i,j}} = \cra{\tphi_{i,j}^\dagger}$ holds.
 From $\craket{e_{i,j}|\psi} = \braket{\tphi_{i,j},\psi} = 0$,
 we have $\craket{\gdis|\psi} = \sum_{i=1}^\NA \sum_{j=1}^\NB \craket{e_{i,j}|\psi} = 0$,
 which contradicts $\cket{\psi} \in \StNP_{A \otimes B}$.
 This completes our proof.
\end{proof}

From this lemma, we can easily see that,
for any two maximal measurements $\{ \cra{e_i} \}_{i=1}^\NA \in \Meas_A$ and
$\{ \cra{e'_j} \}_{j=1}^\NB \in \Meas_B$,
$\{ \cra{e_i} \otimes \cra{e'_j} \}_{(i,j)=(1,1)}^{(\NA,\NB)}$ is a maximal measurement
and that, for any two maximal effects $\cra{e_i} \in \EffM_A$ and $\cra{e'_j} \in \EffM_B$,
the effect $\cra{e_i} \otimes \cra{e'_j}$ is maximal.
This lemma also shows
\begin{alignat}{2}
 \cket{\chi_{A \otimes B}}
 &= \sum_{i=1}^\NA \sum_{j=1}^\NB \cket{\phi_i} \otimes \cket{\varphi_j}
 &&= \cket{\chi_A} \otimes \cket{\chi_B},
\end{alignat}
where $\{ \cket{\phi_i} \}_{i=1}^\NA \in \MPDS_A$ and
$\{ \cket{\varphi_j} \}_{j=1}^\NB \in \MPDS_B$ hold.

\begin{lemma} \label{lemma:NAB} 
 $N_{A \otimes B} = \NA \NB$ holds for any systems $A$ and $B$.
\end{lemma}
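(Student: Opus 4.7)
The proof should be essentially immediate given what has already been established. The plan is to combine Lemma~\ref{lemma:CompositeMPDS} with Lemma~\ref{lemma:Rank}.

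First, I would invoke Lemma~\ref{lemma:CompositeMPDS}: taking any MPDS $\{\cket{\phi_i}\}_{i=1}^{\NA} \in \MPDS_A$ and any MPDS $\{\cket{\varphi_j}\}_{j=1}^{\NB} \in \MPDS_B$, the parallel-composed set $\Phi \coloneqq \{\cket{\phi_i} \otimes \cket{\varphi_j}\}_{(i,j)=(1,1)}^{(\NA,\NB)}$ is an MPDS of $A \otimes B$ with exactly $\NA \NB$ elements. Thus an MPDS of size $\NA \NB$ exists in $A \otimes B$.

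Next, I would apply Lemma~\ref{lemma:Rank} to the composite system: every MPDS of $A \otimes B$ has exactly $N_{A \otimes B}$ elements. (This lemma is available here since, as noted in Subsec.~\ref{subsec:derive_EJA_St}, Property~(\hyperlink{property:diamondsuit}{$\diamondsuit$}) implies all the results of Sec.~\ref{sec:derive_PDS} except Lemma~\ref{lemma:Projection}.) Combining this with the previous step yields $|\Phi| = N_{A \otimes B}$, and hence $N_{A \otimes B} = \NA \NB$.

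There is no real obstacle — the work has already been done in Lemma~\ref{lemma:CompositeMPDS}, which required both Lemma~\ref{lemma:PureOtimes} (purity of tensor products of pure states, which uses local equality) and the symmetric sharpness machinery. All that remains is to observe that the size of this product MPDS must equal the common size of every MPDS in $A \otimes B$.
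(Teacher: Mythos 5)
Your proposal is correct and follows essentially the same route as the paper: Lemma~\ref{lemma:CompositeMPDS} produces an MPDS of $A \otimes B$ with $\NA\NB$ elements, and Lemma~\ref{lemma:Rank} (valid for the composite system under Property~(\hyperlink{property:diamondsuit}{$\diamondsuit$})) forces every MPDS of $A \otimes B$ to have exactly $N_{A \otimes B}$ elements. The paper leaves the appeal to Lemma~\ref{lemma:Rank} implicit; you have merely made it explicit.
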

\begin{proof}
 From Lemma~\ref{lemma:CompositeMPDS},
 $\Phi \coloneqq \{ \cket{\phi_i} \otimes \cket{\varphi_j} \}_{(i,j)=(1,1)}^{(\NA,\NB)}$
 is an MPDS for any $\{ \cket{\phi_i} \}_{i=1}^\NA \in \MPDS_A$ and
 $\{ \cket{\varphi_j} \}_{j=1}^\NB \in \MPDS_B$.
 Thus, $N_{A \otimes B} = |\Phi| = \NA \NB$ holds.
\end{proof}

\begin{lemma} \label{lemma:DaggerOtimes} 
 For any $\cket{\rho} \in \St_A$ and $\cket{\sigma} \in \St_B$, we have
 \begin{alignat}{1}
  \includegraphics[scale=1.0]{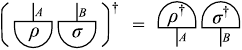} ~\raisebox{.5em}{.}
  \label{eq:derive_dagger_otimes}
 \end{alignat}
\end{lemma}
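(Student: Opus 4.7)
The plan is to reduce the lemma to the case where $\cket{\rho}$ and $\cket{\sigma}$ are normalized pure states, where Lemma~\ref{lemma:CompositeMPDS} already gives the answer, and then extend by linearity of the dagger map. Since we have spectral decompositions (Proposition~\ref{pro:Spectral}), we can write $\cket{\rho} = \sum_{i=1}^{\NA} p_i \cket{\phi_i}$ and $\cket{\sigma} = \sum_{j=1}^{\NB} q_j \cket{\varphi_j}$ with $\{\cket{\phi_i}\}_{i=1}^{\NA} \in \MPDS_A$ and $\{\cket{\varphi_j}\}_{j=1}^{\NB} \in \MPDS_B$ and nonnegative coefficients. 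By bilinearity of the parallel composition $\otimes$ on extended states, we then have $\cket{\rho} \otimes \cket{\sigma} = \sum_{i,j} p_i q_j \, \cket{\phi_i} \otimes \cket{\varphi_j}$.

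Next, I would invoke Lemma~\ref{lemma:CompositeMPDS}, which tells us that $\{\cket{\phi_i} \otimes \cket{\varphi_j}\}_{(i,j)=(1,1)}^{(\NA,\NB)}$ is an MPDS of $A \otimes B$ and that the unique maximal measurement distinguishing it is $\{\cra{\phi_i^\dagger} \otimes \cra{\varphi_j^\dagger}\}_{(i,j)=(1,1)}^{(\NA,\NB)}$. By the \poslink{pos:SymSharp}{symmetric sharpness postulate} combined with Lemma~\ref{lemma:MPDSMeas}, the maximal effect of $A \otimes B$ corresponding to the normalized pure state $\cket{\phi_i} \otimes \cket{\varphi_j}$ is precisely $\cra{\phi_i^\dagger} \otimes \cra{\varphi_j^\dagger}$, i.e.\ $(\cket{\phi_i} \otimes \cket{\varphi_j})^\dagger = \cra{\phi_i^\dagger} \otimes \cra{\varphi_j^\dagger}$. (Here Lemma~\ref{lemma:PureOtimes} is used implicitly to ensure the tensor product really is a normalized pure state so that the dagger notation applies.)

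Finally, I would use the linearity of $\dagger:\Vec_{A \otimes B} \to \Vec_{A \otimes B}^*$ established in the discussion following Lemma~\ref{lemma:SharpMix}, together with the bilinearity of $\otimes$ on extended effects, to conclude
\begin{alignat*}{1}
 (\cket{\rho} \otimes \cket{\sigma})^\dagger
 &= \sum_{i,j} p_i q_j \,(\cket{\phi_i} \otimes \cket{\varphi_j})^\dagger
 = \sum_{i,j} p_i q_j \,\cra{\phi_i^\dagger} \otimes \cra{\varphi_j^\dagger}
 = \left( \sum_i p_i \cra{\phi_i^\dagger} \right) \otimes \left( \sum_j q_j \cra{\varphi_j^\dagger} \right)
 = \cra{\rho^\dagger} \otimes \cra{\sigma^\dagger},
\end{alignat*}
where the last equality uses the definition of $\cra{\rho^\dagger}$ and $\cra{\sigma^\dagger}$ from Eq.~\eqref{eq:derive_dagger_rho}.

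The only subtle point is verifying that the linearity of $\dagger$ applies in the composite system $A \otimes B$: this is not an extra assumption, because the construction of $\dagger$ on $\Vec_X$ for any single system $X$ (and the proof that it is well-defined and linear) was carried out in full generality after Lemma~\ref{lemma:SharpMix}, applied now with $X = A \otimes B$. Once that is observed, the argument is essentially a bookkeeping exercise and no further obstacle arises.
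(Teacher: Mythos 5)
Your proposal is correct and follows essentially the same route as the paper: spectral decompositions of $\cket{\rho}$ and $\cket{\sigma}$, the identity $[\cket{\phi_i}\otimes\cket{\varphi_j}]^\dagger=\cra{\phi_i^\dagger}\otimes\cra{\varphi_j^\dagger}$ extracted from Lemma~\ref{lemma:CompositeMPDS}, and linearity of $\dagger$ together with bilinearity of $\otimes$. The paper simply cites the proof of Lemma~\ref{lemma:CompositeMPDS} for that identity, whereas you rederive it from the lemma's statement via symmetric sharpness and Lemma~\ref{lemma:MPDSMeas}; this is a valid and equivalent justification.
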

\begin{proof}
 $\cket{\rho}$ and $\cket{\sigma}$, respectively, have spectral decompositions
 of the form $\cket{\rho} = \sum_{i=1}^\NA p_i \cket{\phi_i}$ and
 $\cket{\sigma} = \sum_{j=1}^\NB q_j \cket{\varphi_j}$
 with $p_1,\ldots,p_\NA, q_1,\ldots,q_\NB \in \Real_+$,
 $\{ \cket{\phi_i} \}_{i=1}^\NA \in \MPDS_A$,
 and $\{ \cket{\varphi_j} \}_{j=1}^\NB \in \MPDS_B$.
 From the proof of Lemma~\ref{lemma:CompositeMPDS},
 $[\cket{\phi_i} \otimes \cket{\varphi_j}]^\dagger
 = \cra{\phi_i^\dagger} \otimes \cra{\varphi_j^\dagger}$ holds.
 Thus, since $\dagger$ is linear, we have
 \begin{alignat}{1}
  \includegraphics[scale=1.0]{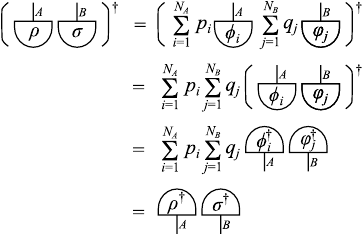} ~\raisebox{.5em}{.}
  \label{eq:derive_dagger_otimes_proof}
 \end{alignat}
\end{proof}

Lemma~\ref{lemma:DaggerOtimes} implies that $\dagger$ distributes over $\otimes$.
This result can be easily generalized to extended states and effects,
i.e., we have
\begin{alignat}{2}
 [\cket{\ol{x}} \otimes \cket{\ol{y}}]^\dagger
 &= \cra{\ol{x}^\dagger} \otimes \cra{\ol{y}^\dagger}
 &\quad &\forall \cket{\ol{x}} \in \Vec_A, \cket{\ol{y}} \in \Vec_B,
 \nonumber \\
 [\cra{\ol{v}} \otimes \cra{\ol{w}}]^\dagger
 &= \cket{\ol{v}^\dagger} \otimes \cket{\ol{w}^\dagger}
 &\quad &\forall \cra{\ol{v}} \in \Vec_A^*, \cra{\ol{w}} \in \Vec_B^*.
\end{alignat}

\subsection{Simple decompositions of state spaces of composite systems} \label{subsec:derive_Quantum_otimes}


Consider two subalgebras $\EJA'_A$ of $\Vec_A$ and $\EJA'_B$ of $\Vec_B$.
Let $\cket{\chi'_A}$ and $\cket{\chi'_B}$ denote
the identity elements of $\EJA'_A$ and $\EJA'_B$, respectively.
One can easily verify that
$\cket{\chi'} \coloneqq \cket{\chi'_A} \otimes \cket{\chi'_B}$ is an idempotent
of $\Vec_{A \otimes B}$.
Indeed, it follows that there exist
$\Phi_A \coloneqq \{ \cket{\phi_i} \}_{i=1}^k \in \PDS_A$
satisfying $\cket{\chi_{\Phi_A}} = \cket{\chi'_A}$
and $\Phi_B \coloneqq \{ \cket{\varphi_j} \}_{j=1}^l \in \PDS_B$ satisfying
$\cket{\chi_{\Phi_B}} = \cket{\chi'_B}$
and that $\Psi \coloneqq \{ \cket{\phi_i} \otimes \cket{\varphi_j} \}_{(i,j)=(1,1)}^{(k,l)}$
is a PDS of $A \otimes B$ that satisfies $\cket{\chi_\Psi} = \cket{\chi'}$.
Thus, $\cket{\chi'}$ is an idempotent.
Therefore, we can consider the Peirce 1-subalgebra of $\Vec_{A \otimes B}$
determined by the identity element $\cket{\chi'}$.
We will denote this subalgebra by $\EJA'_A \otimes \EJA'_B$.
By definition,
\begin{alignat}{1}
 \EJA'_A \otimes \EJA'_B &= \{ \cket{\ol{x}} \in \Vec_{A \otimes B} :
 \cket{\chi'} \bullet \cket{\ol{x}} = \cket{\ol{x}} \}
\end{alignat}
holds.
Let $\EJA'_A{}^+ \otimes \EJA'_B{}^+ \coloneqq (\EJA'_A \otimes \EJA'_B)^+$;
then, $\EJA'_A{}^+ \otimes \EJA'_B{}^+ = \Face_\cket{\chi'}$ holds
from Lemma~\ref{lemma:FaceBullet}.

\begin{lemma} \label{lemma:OplusOtimes} 
 When $\St_A$ and $\St_B$ have direct sum decompositions
 $\St_A = \bigoplus_{i=1}^{k_A} \EJA_{A,i}^+$
 and $\St_B = \bigoplus_{j=1}^{k_B} \EJA_{B,j}^+$,
 $\St_{A \otimes B}$ has a direct sum decomposition
 \begin{alignat}{1}
  \St_{A \otimes B} &= \bigoplus_{i=1}^{k_A} \bigoplus_{j=1}^{k_B} \EJA_{A,i}^+ \otimes \EJA_{B,j}^+.
  \label{eq:derive_StAB}
 \end{alignat}
\end{lemma}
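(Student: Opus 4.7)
The plan is to identify the subalgebras $\EJA_{A,i} \otimes \EJA_{B,j}$ of $\Vec_{A \otimes B}$ via their identity idempotents $p_{i,j} \coloneqq \cket{\chi_{A,i}} \otimes \cket{\chi_{B,j}}$ (writing $\cket{\chi_{A,i}}$ and $\cket{\chi_{B,j}}$ for the identity elements of $\EJA_{A,i}$ and $\EJA_{B,j}$) and then verify the two defining conditions of a direct sum decomposition: pairwise orthogonality of the subalgebras, and expressibility of every element of $\Vec_{A \otimes B}$ as a sum of elements drawn from them. First I would note that the $p_{i,j}$ form a resolution of $\cket{\chi_{A \otimes B}}$ into mutually orthogonal idempotents: bilinearity of $\otimes$ combined with $\cket{\chi_{A \otimes B}} = \cket{\chi_A} \otimes \cket{\chi_B}$ (Subsec.~\ref{subsec:derive_Quantum_basic}) yields $\sum_{i,j} p_{i,j} = \cket{\chi_{A \otimes B}}$; each $p_{i,j}$ is an idempotent by the paragraph preceding the lemma (applied to the PDS $\Phi_{A,i} \otimes \Phi_{B,j}$); and Lemma~\ref{lemma:DaggerOtimes} gives $\braket{p_{i,j}, p_{i',j'}} = \braket{\chi_{A,i}, \chi_{A,i'}}\,\braket{\chi_{B,j}, \chi_{B,j'}}$, which vanishes for $(i,j) \neq (i',j')$ since the summands in the hypothesized direct sums are mutually orthogonal subalgebras.

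Next I would verify orthogonality of the subalgebras $\EJA_{A,i} \otimes \EJA_{B,j}$. By the cone-based criterion from Subsec.~\ref{subsec:derive_EJA_property} together with Lemma~\ref{lemma:FaceBullet}, it suffices to show $\braket{\rho, \sigma} = 0$ whenever $\cket{\rho} \in \Face_{p_{i,j}}$ and $\cket{\sigma} \in \Face_{p_{i',j'}}$ with $(i,j) \neq (i',j')$. Choosing $\delta, \delta' > 0$ with $\delta \cket{\rho} \le p_{i,j}$ and $\delta' \cket{\sigma} \le p_{i',j'}$, and using monotonicity of $\braket{\cdot,\cdot}$ on the cone (from $\Eff_{A \otimes B} = \St_{A \otimes B}^*$, Theorem~\ref{thm:Eff}), I obtain $\delta \delta' \braket{\rho, \sigma} \le \braket{p_{i,j}, p_{i',j'}} = 0$, hence $\braket{\rho, \sigma} = 0$.

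Then I would establish that every $\cket{\ol{x}} \in \Vec_{A \otimes B}$ decomposes as $\sum_{i,j} \cket{\ol{x_{i,j}}}$ with $\cket{\ol{x_{i,j}}} \in \EJA_{A,i} \otimes \EJA_{B,j}$. By the basis construction in the proof of Lemma~\ref{lemma:DAB}, $\Vec_{A \otimes B}$ is linearly spanned by products $\cket{\ol{a}} \otimes \cket{\ol{b}}$; the hypothesized direct sums give $\cket{\ol{a}} \otimes \cket{\ol{b}} = \sum_{i,j} \cket{\ol{a_i}} \otimes \cket{\ol{b_j}}$ with $\cket{\ol{a_i}} \in \EJA_{A,i}$ and $\cket{\ol{b_j}} \in \EJA_{B,j}$. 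Writing $\cket{\ol{a_i}} = \cket{a_i^+} - \cket{a_i^-}$ and $\cket{\ol{b_j}} = \cket{b_j^+} - \cket{b_j^-}$ as differences of cone elements and using that $\cket{\chi_{A,i}}$ and $\cket{\chi_{B,j}}$ are interior points of $\EJA_{A,i}^+$ and $\EJA_{B,j}^+$ (by Koecher--Vinberg), one has $\cket{a_i^\pm} \le C_1 \cket{\chi_{A,i}}$ and $\cket{b_j^\pm} \le C_2 \cket{\chi_{B,j}}$; since $\otimes$ preserves the order on states (tensor products of states are states), $\cket{a_i^\pm} \otimes \cket{b_j^\pm} \le C_1 C_2 p_{i,j}$, placing each product in $\Face_{p_{i,j}} = (\EJA_{A,i} \otimes \EJA_{B,j})^+$, so by linearity each $\cket{\ol{a_i}} \otimes \cket{\ol{b_j}}$ lies in the subalgebra $\EJA_{A,i} \otimes \EJA_{B,j}$. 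Combined with the orthogonality above and non-triviality ($p_{i,j} \neq \cket{\emptyset}$), this yields $\Vec_{A \otimes B} = \bigoplus_{i,j} \EJA_{A,i} \otimes \EJA_{B,j}$; restricting to cones of squares gives the stated decomposition of $\St_{A \otimes B}$.

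The main obstacle is the decomposition step. One cannot compute the Jordan product on $\Vec_{A \otimes B}$ in terms of those on $\Vec_A$ and $\Vec_B$ (even in the quantum-theory case, the product of tensor products of Hermitian matrices does not factor), so membership in the subalgebra $\EJA_{A,i} \otimes \EJA_{B,j} = \{\cket{\ol{v}} : p_{i,j} \bullet \cket{\ol{v}} = \cket{\ol{v}}\}$ is not verifiable by direct algebraic manipulation. The device that circumvents this is the order-theoretic face characterization $\EJA_\Phi^+ = \Face_\Phi$ supplied by Lemma~\ref{lemma:FaceBullet}, which converts the Peirce-type membership condition into a dominance condition that interacts transparently with the order-preservation of $\otimes$ and with the interior-point status of the identity idempotents.
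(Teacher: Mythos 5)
Your proposal is correct and follows essentially the same route as the paper's proof: orthogonality of the subalgebras $\EJA_{A,i} \otimes \EJA_{B,j}$ via the factorization $\braket{p_{i,j},p_{i',j'}} = \braket{\chi_{A,i},\chi_{A,i'}}\braket{\chi_{B,j},\chi_{B,j'}}$ together with dominance by the (completely mixed) sub-identities, and the spanning property via the tensor-product basis supplied by Lemma~\ref{lemma:DAB}. You in fact spell out one step the paper leaves implicit --- why $\cket{\ol{a_i}} \otimes \cket{\ol{b_j}}$ lands in the subalgebra, via decomposition into cone elements and the face characterization of Lemma~\ref{lemma:FaceBullet} --- which is a welcome addition rather than a deviation.
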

\begin{proof}
 Equation~\eqref{eq:derive_StAB} holds if and only if
 \begin{alignat}{1}
  \Vec_{A \otimes B} &= \bigoplus_{i=1}^{k_A} \bigoplus_{j=1}^{k_B} \EJA_{(i,j)},
  \quad \EJA_{(i,j)} \coloneqq \EJA_{A,i} \otimes \EJA_{B,j}
  \label{eq:derive_VecAB}
 \end{alignat}
 holds.
 The case $k_A = k_B = 1$ is obvious since $\EJA_{(1,1)}$ is the EJA
 with the identity element $\cket{\chi_{A \otimes B}}$,
 i.e., $\EJA_{(1,1)} = \Vec_{A \otimes B}$ holds.
 Assume $k_A \ge 2$ or $k_B \ge 2$.
 To show Eq.~\eqref{eq:derive_VecAB}, it suffices to show the followings:
 (a) the EJAs $\{ \EJA_{(i,j)} \}_{(i,j)=(1,1)}^{(k_A,k_B)}$ are mutually orthogonal
 and
 (b) any $\cket{\ol{x}} \in \Vec_{A \otimes B}$ can be expressed in the form
 $\cket{\ol{x}} = \sum_{i=1}^{k_A} \sum_{j=1}^{k_B} \cket{\ol{x_{i,j}}}$
 with $\cket{\ol{x_{i,j}}} \in \EJA_{(i,j)}$.

 First, we prove (a).
 Arbitrarily choose $i,i' \in \{1,\ldots,k_A\}$ and $j,j' \in \{1,\ldots,k_B\}$
 such that $i \neq i'$ or $j \neq j'$ holds;
 then, it suffices to show that $\braket{\rho,\rho'} = 0$ holds
 for any $\cket{\rho} \in \EJA_{(i,j)}^+$ and $\cket{\rho'} \in \EJA_{(i',j')}^+$.
 Let $\cket{\chi_{A,i}}$, $\cket{\chi_{A,i'}}$, $\cket{\chi_{B,j}}$, and $\cket{\chi_{B,j'}}$ be,
 respectively, the identity elements of $\EJA_{A,i}$, $\EJA_{A,i'}$, $\EJA_{B,j}$, and $\EJA_{B,j'}$;
 then, $\craket{\chi_{A,i}^\dagger|\chi_{A,i'}} = 0$ and
 $\craket{\chi_{B,j}^\dagger|\chi_{B,j'}} = 0$ obviously hold.
 Arbitrarily choose $\cket{\rho} \in \EJA_{(i,j)}^+$ and $\cket{\rho'} \in \EJA_{(i',j')}^+$.
 Since $\cket{\chi_{A,i}} \otimes \cket{\chi_{B,j}} \in \EJA_{(i,j)}^+$ is completely mixed,
 there exists $p \in \Realpp$ such that
 $p \cket{\rho} \le \cket{\chi_{A,i}} \otimes \cket{\chi_{B,j}}$.
 Similarly, there exists $p' \in \Realpp$ such that
 $p' \cket{\rho'} \le \cket{\chi_{A,i'}} \otimes \cket{\chi_{B,j'}}$.
 Thus, we have
 %
 \begin{alignat}{3}
  pp' \braket{\rho,\rho'} &= pp' \craket{\rho^\dagger|\rho'} &&\le
  \craket{\chi_{A,i}^\dagger|\chi_{A,i'}} \craket{\chi_{B,j}^\dagger|\chi_{B,j'}}
  &&= 0,
 \end{alignat}
 i.e., $\braket{\rho,\rho'} = 0$.

 Next, we prove (b).
 For each $i$ and $j$, let $d_{A,i} \coloneqq \dim~\EJA_{A,i}$ and $d_{B,j} \coloneqq \dim~\EJA_{B,j}$.
 From Eq.~\eqref{eq:derive_EJA_rank_dim2},
 $\sum_{i=1}^{k_A} d_{A,i} = D_A$ and
 $\sum_{j=1}^{k_B} d_{B,j} = D_B$ hold.
 Choose a basis, $\{ \cket{\ol{w_{i,s}}} \}_{(i,s)={(1,1)}}^{(k_A,d_{A,i})}$, of $\Vec_A$
 satisfying $\cket{\ol{w_{i,s}}} \in \EJA_{A,i}$ for each $i$ and $s$,
 and a basis, $\{ \cket{\ol{w'_{j,t}}} \}_{(j,t)={(1,1)}}^{(k_B,d_{B,j})}$, of $\Vec_B$
 satisfying $\cket{\ol{w'_{j,t}}} \in \EJA_{B,j}$ for each $j$ and $t$.
 $\Psi \coloneqq \{ \cket{\ol{w_{i,j;s,t}}} \coloneqq
 \cket{\ol{w_{i,s}}} \otimes \cket{\ol{w'_{j,t}}} \}_{(i,j,s,t)=(1,1,1,1)}^{(k_A,k_B,d_{A,i},d_{B,j})}$
 is a set of linearly independent extended states.
 From $|\Psi| = \sum_{i=1}^{k_A} \sum_{j=1}^{k_B} d_{A,i} d_{B,j} = D_A D_B$ and
 Lemma~\ref{lemma:DAB}, $|\Psi| = D_{A \otimes B}$ holds,
 which implies that $\Psi$ is a basis of $\Vec_{A \otimes B}$.
 Thus, any $\cket{\ol{x}} \in \Vec_{A \otimes B}$ can be expressed in the form
 \begin{alignat}{2}
  \cket{\ol{x}} = \sum_{i=1}^{k_A} \sum_{j=1}^{k_B} \cket{\ol{x_{i,j}}},
  &\quad \cket{\ol{x_{i,j}}} &\coloneqq
  \sum_{s=1}^{d_{A,i}} \sum_{t=1}^{d_{B,j}} c^{(i,j)}_{s,t} \cket{\ol{w_{i,j;s,t}}},
 \end{alignat}
 where $c^{(i,j)}_{s,t} \in \Real$.
 The inclusion $\cket{\ol{w_{i,j;s,t}}} \in \EJA_{(i,j)}$ is justified as follows.
 Let $\cket{\chi_{ij}} \coloneqq \cket{\chi_{A,i}} \otimes \cket{\chi_{B,j}}$ be the identity element of $\EJA_{(i,j)}$.
 If $\cket{x} \in \EJA_{A,i}^+$ and $\cket{y} \in \EJA_{B,j}^+$, then for some $r,s>0$ we have
 $\cket{x} \le r\cket{\chi_{A,i}}$ and $\cket{y} \le s\cket{\chi_{B,j}}$,
 hence $\cket{x} \otimes \cket{y} \le rs\cket{\chi_{ij}}$ and
 $\cket{x} \otimes \cket{y} \in \EJA_{(i,j)}^+$.
 Since $\EJA_{A,i}$ and $\EJA_{B,j}$ are spanned by their positive cones,
 $\cket{\ol{w_{i,j;s,t}}} \in \EJA_{(i,j)}$ holds for any $i, j, s,$ and $t$.
 Thus, we have $\cket{\ol{x_{i,j}}} \in \EJA_{(i,j)}$.
\end{proof}

This proof tells us that the dimension of $\EJA_{(i,j)}$ is $d_{A,i} d_{B,j}$,
i.e.,
\begin{alignat}{1}
 \dim(\EJA_{A,i} \otimes \EJA_{B,j}) &= (\dim~\EJA_{A,i}) \cdot (\dim~\EJA_{B,j}).
\end{alignat}
One can also easily obtain
\begin{alignat}{1}
 \rank(\EJA_{A,i} \otimes \EJA_{B,j}) &= (\rank~\EJA_{A,i}) \cdot (\rank~\EJA_{B,j}).
 \label{eq:derive_rankEJAij}
\end{alignat}
Note that Eq.~\eqref{eq:derive_StAB} can also be expressed by
\begin{alignat}{1}
 \left( \bigoplus_{i=1}^{k_A} \EJA_{A,i}^+ \right) \otimes \left( \bigoplus_{j=1}^{k_B} \EJA_{B,j}^+ \right)
 &= \bigoplus_{i=1}^{k_A} \bigoplus_{j=1}^{k_B} \EJA_{A,i}^+ \otimes \EJA_{B,j}^+.
 \label{eq:derive_StAB2}
\end{alignat}
This indicates that the operation $\otimes$ on EJAs distributes over
the operation $\oplus$ on EJAs.

We close this subsection with an important result, which will be useful in the sequel.
\begin{lemma} \label{lemma:OtimesSimple} 
 When $\St_A$ and $\St_B$ have the simple decompositions
 $\St_A = \bigoplus_{i=1}^{k_A} \EJA_{A,i}^+$
 and $\St_B = \bigoplus_{j=1}^{k_B} \EJA_{B,j}^+$,
 $\St_{A \otimes B}$ has the simple decomposition given by Eq.~\eqref{eq:derive_StAB}.
\end{lemma}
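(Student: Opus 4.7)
The direct-sum decomposition of $\St_{A\otimes B}$ has already been supplied by Lemma~\ref{lemma:OplusOtimes}, so the only remaining task is to verify that each summand $\EJA_{A,i}\otimes\EJA_{B,j}$ is a simple EJA. I will fix $i,j$, set $\EJA\coloneqq\EJA_{A,i}\otimes\EJA_{B,j}$, and argue by contradiction: suppose $\EJA=\bigoplus_{k=1}^{K}\mathcal{F}_k$ is a direct-sum decomposition into non-trivial subalgebras with $K\ge 2$, and let $\cket{c_1}$ be the identity element of $\mathcal{F}_1$, which is then a non-trivial central idempotent of $\EJA$.

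The main construction is the continuous bilinear function
\[
 f(\cket{\phi},\cket{\varphi}) \coloneqq \braket{c_1,\cket{\phi}\otimes\cket{\varphi}},
\]
defined for primitive idempotents $\cket{\phi}\in\EJA_{A,i}$ and $\cket{\varphi}\in\EJA_{B,j}$. By Lemma~\ref{lemma:PureOtimes} each $\cket{\phi}\otimes\cket{\varphi}$ is pure in $\St_{A\otimes B}$, and since its trace is $\tr\cket{\phi}\cdot\tr\cket{\varphi}=1$ it is a primitive idempotent of $\EJA$ and therefore lies in exactly one summand $\mathcal{F}_k$. Because $\cket{c_1}$ acts as the identity on $\mathcal{F}_1$ and as zero on every other $\mathcal{F}_k$, this forces $f(\cket{\phi},\cket{\varphi})\in\{0,1\}$, with $f=1$ iff $\cket{\phi}\otimes\cket{\varphi}\in\mathcal{F}_1$.

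The next step is to invoke Theorem~\ref{thm:Neumann}: inspecting the five families one by one, the set of primitive idempotents of a simple EJA is a connected manifold---a projective space over $\Real$, $\Complex$, $\Hermite$, or $\Oct$, or a sphere $S^{s-2}$ with $s\ge 3$, or a single point in the rank-one case. Hence the domain of $f$ is a product of two connected sets and is connected, while $f$ is continuous; so $f$ must be constant, either identically $0$ or identically $1$.

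The final step consists of ruling out both constant values. If $f\equiv 1$, then $\cket{c_1}\bullet(\cket{\phi}\otimes\cket{\varphi})=\cket{\phi}\otimes\cket{\varphi}$ for every pair of primitive idempotents; since spectral decompositions in the two factors show that such pure tensors span $\EJA$ linearly (as in the proof of Lemma~\ref{lemma:OplusOtimes}), this forces $\cket{c_1}\bullet z=z$ for every $z\in\EJA$, so $\mathcal{F}_1=\EJA$, contradicting $K\ge 2$. If $f\equiv 0$, the same spreading argument gives $\braket{c_1,z}=0$ for every $z\in\EJA$, hence $\braket{c_1,c_1}=0$ and $\cket{c_1}=\cket{\emptyset}$, contradicting the non-triviality of $\mathcal{F}_1$. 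I expect the one genuinely delicate step to be the connectedness claim for primitive idempotents of a simple EJA, but the classification in Theorem~\ref{thm:Neumann} is already available, so it reduces to a case-by-case inspection of a short list.
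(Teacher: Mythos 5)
Your proof is correct, but it reaches the conclusion by a genuinely different mechanism than the paper. Both arguments share the same skeleton: reduce to showing each summand $\EJA_{A,i}\otimes\EJA_{B,j}$ is simple, argue by contradiction against a splitting into non-trivial orthogonal subalgebras, and use Lemma~\ref{lemma:PureOtimes} to conclude that a tensor product of primitive idempotents is again a primitive idempotent and hence lives in exactly one summand of the hypothetical splitting. The paper then picks, in each factor, a single ``generic'' normalized pure state having strictly positive overlap with every member of a fixed Jordan frame (a finite, case-by-case check against Theorem~\ref{thm:Neumann}); the positivity of the overlaps drags the entire product Jordan frame into the summand containing the tensor of the two generic states, forcing the other summand to have rank zero. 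You instead observe that the pairing of the central idempotent $\cket{c_1}$ with pure tensors is a continuous $\{0,1\}$-valued function on a connected domain --- the product of the primitive-idempotent manifolds of the two simple factors --- hence constant, and both constant values are absurd once one notes that pure tensors of primitive idempotents span $\EJA_{A,i}\otimes\EJA_{B,j}$ (via spectral decompositions in each factor, as in Lemma~\ref{lemma:OplusOtimes}). The paper's route is more elementary, staying entirely within finitely many overlap computations; yours isolates what is arguably the conceptual reason a tensor product of simple factors cannot split, namely connectedness, and needs no explicit witness state. Both versions lean on the Jordan--von Neumann--Wigner classification for their case check, though your connectedness claim could alternatively be obtained classification-free from transitivity of the identity component of the automorphism group of a simple EJA on primitive idempotents.
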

\begin{proof}
 From Lemma~\ref{lemma:OplusOtimes},
 $\St_{A \otimes B}$ can be expressed in the form of Eq.~\eqref{eq:derive_StAB},
 it is sufficient to show that $\EJA_{(i,j)} \coloneqq \EJA_{A,i} \otimes \EJA_{B,j}$
 is simple for each $i \in \{1,\ldots,\NA\}$ and $j \in \{1,\ldots,\NB\}$.
 Consider arbitrary fixed $i$ and $j$.
 Assume, by contradiction, that $\EJA_{(i,j)}$ is not simple,
 i.e., $\EJA_{(i,j)}$ can be decomposed into a direct sum
 $\EJA_{(i,j)} = \EJA_0 \oplus \EJA_1$,
 where $\EJA_0$ and $\EJA_1$ are EJAs with nonzero ranks.
 Let $\{ \cket{\phi_l} \}_{l=1}^n$ $~(n \coloneqq \rank~\EJA_{A,i})$ and
 $\{ \cket{\phi'_{l'}} \}_{l'=1}^{n'}$ $~(n' \coloneqq \rank~\EJA_{B,j})$ be,
 respectively, Jordan frames of $\EJA_{A,i}$ and $\EJA_{B,j}$.
 Then, one can easily see that
 $\Phi \coloneqq \{ \cket{\tphi_{l,l'}} \coloneqq
 \cket{\phi_l} \otimes \cket{\phi'_{l'}} \}_{(l,l')=(1,1)}^{(n,n')}$
 is a Jordan frame of $\EJA_{(i,j)}$.
 Since both $\EJA_{A,i}$ and $\EJA_{B,j}$ are simple,
 there exist two normalized pure states $\cket{\varphi} \in \EJA_{A,i}^+$ and
 $\cket{\varphi'} \in \EJA_{B,j}^+$ such that
 \begin{alignat}{2}
  \craket{\varphi^\dagger|\phi_l} &> 0, &\quad \forall l &\in \{1,\ldots,n\}, \nonumber \\
  \craket{\varphi'{}^\dagger|\phi'_{l'}} &> 0, &\quad \forall l' &\in \{1,\ldots,n'\},
 \end{alignat}
 which can be verified for each simple EJA classified by
 the Jordan-von Neumann-Wigner theorem.
 Let $\cket{\tvarphi} \coloneqq \cket{\varphi} \otimes \cket{\varphi'}
 \in \EJA_{(i,j)}^+$; then,
 from Lemma~\ref{lemma:PureOtimes}, $\cket{\tvarphi}$ is pure.
 Since any nonzero pure state $\cket{\psi} \in \EJA_{(i,j)}^+$ satisfies
 either $\cket{\psi} \in \EJA_0^+$ or $\cket{\psi} \in \EJA_1^+$ (but not both),
 we can permute $\EJA_0$ and $\EJA_1$, if necessary, so that $\cket{\tvarphi} \in \EJA_0^+$ holds.
 Then, we have that, for any $l \in \{1,\ldots,n\}$ and $l' \in \{1,\ldots,n'\}$,
 \begin{alignat}{2}
  \craket{\tvarphi^\dagger|\tphi_{l,l'}}
  &= \craket{\varphi^\dagger|\phi_l} \craket{\varphi'{}^\dagger|\phi'_{l'}}
  &&> 0,
 \end{alignat}
 i.e., $\cket{\tphi_{l,l'}} \in \EJA_0^+$.
 Thus, $\Phi \subset \EJA_0^+$ holds,
 which gives $\rank~\EJA_0 = \rank~\EJA_{(i,j)}$
 and $\rank~\EJA_1 = \rank~\EJA_{(i,j)} - \rank~\EJA_0 = 0$.
 This contradicts $\rank~\EJA_1 \neq 0$,
 and hence we conclude that every $\EJA_{(i,j)}$ is simple.
\end{proof}

\subsection{State space is isomorphic to direct sum of spaces of complex positive semidefinite matrices}%
\label{subsec:derive_Quantum_hilbert}

In this subsection, we derive Properties~(\hyperlink{property:A}{A}) and (\hyperlink{property:B}{B})
given in Subsec.~\ref{subsec:derive_abstract_stream}.

\begin{thm} \label{thm:Complex} 
 For any system $A$,
 $\St_A \cong \bigoplus_{i=1}^k \mS_+(\Complex^{n_i})$
 holds for some natural numbers $k,n_1,\ldots,n_k$ with $\sum_{i=1}^k n_i = \NA$
 \footnote{\label{fn:Sel}%
 Theorem~\ref{thm:Complex} is somewhat similar to Theorem~4.14 in Ref.~\cite{Sel-Sca-Coe-2018}.
 We here point out that there seems to be a gap in the proof of the latter.
 This proof shows that if $\St_A$ is simple, then $\St_A \cong \mS_+(\Complex^\NA)$ holds;
 however, it does not rule out the possibility of, for example,
 $\St_A \cong \mS_+(\Hermite^2) \oplus \Spin_{10}$.
 (Note that $\mS_+(\Hermite^2) \oplus \Spin_{10}$ has the same rank and dimension as
 $\mS_+(\Complex^4)$.)}.
\end{thm}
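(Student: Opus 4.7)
The plan is to invoke Theorem~\ref{thm:EJA} to obtain an EJA $\EJA_A$ with $\EJA_A^+ = \St_A$, take its simple decomposition $\EJA_A = \bigoplus_{i=1}^k \EJA_{A,i}$ guaranteed by the Jordan--von Neumann--Wigner theorem, and then force each simple summand to be of the form $\mS(\Complex^{n_i})$. Once this is done, $\St_A = \bigoplus_{i=1}^k \EJA_{A,i}^+ \cong \bigoplus_{i=1}^k \mS_+(\Complex^{n_i})$, and the rank identity $\NA = \sum_{i} n_i$ follows from $\rank \EJA_A = \sum_i \rank \EJA_{A,i}$ (Eq.~\eqref{eq:derive_EJA_rank_dim2}) together with Lemma~\ref{lemma:DiamondFrame} (which gives $\rank \EJA_A = \NA$).

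The crux is an elimination argument carried out \emph{one simple summand at a time}, using the self-composition $A \otimes A$ as leverage. Fix a summand $\EJA_{A,i}$ with $r \coloneqq \rank \EJA_{A,i}$ and $d \coloneqq \dim \EJA_{A,i}$. By Lemma~\ref{lemma:OtimesSimple} applied to $B = A$, the Jordan subalgebra $\EJA_{A,i} \otimes \EJA_{A,i}$ of $\EJA_{A \otimes A}$ is itself simple; by Eq.~\eqref{eq:derive_rankEJAij} and the dimension identity in Subsec.~\ref{subsec:derive_Quantum_otimes}, it has rank $r^2$ and dimension $d^2$. Since, by Table~\ref{table:EJA}, a simple EJA is determined up to isomorphism by the pair $(\text{rank}, \text{dimension})$, the theorem reduces to the arithmetic check: which entries of Theorem~\ref{thm:Neumann} have the property that $(r^2, d^2)$ is again an entry? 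A direct case sweep settles this. For $\mS(\Real^n)$ with $n \ge 2$, matching $d^2 = n^2(n+1)^2/4$ against the rank-$n^2$ dimensions $n^2(n^2+1)/2$, $n^4$, $n^2(2n^2-1)$ forces $n = 1$. For $\mS(\Hermite^n)$ with $n \ge 2$, the analogous three equations in $(2n-1)^2$ again force $n = 1$. For $\Spin_s$ with $s \ge 5$, the only integer root of $s^2 \in \{10, 16, 28\}$ is $s = 4 < 5$. For $\mS(\Oct^3)$, the pair $(9, 729)$ matches no entry. Only $\mS(\Complex^n)$ survives, since $\mS(\Complex^n) \otimes \mS(\Complex^n)$ has rank $n^2$ and dimension $n^4$, matching $\mS(\Complex^{n^2})$.

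The main obstacle is not the routine case analysis but the observation that it must be performed summand-by-summand, as the footnote to the theorem emphasises. The naive strategy of matching only the total rank and dimension of $\EJA_A$ against Table~\ref{table:EJA} fails because, for instance, $\mS(\Hermite^2) \oplus \Spin_{10}$ shares the rank and dimension of $\mS(\Complex^4)$. What rescues the argument is that Lemma~\ref{lemma:OtimesSimple} is sensitive enough to apply to each irreducible piece separately: the operational tensor product must map each simple summand to a simple EJA, and the complex matrix algebras are the only family of simple EJAs closed under this operation. Once this is recognised, the remainder of the proof is bookkeeping.
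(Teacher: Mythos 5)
Your proposal is correct and follows essentially the same route as the paper's own proof: take the simple decomposition of $\EJA_A$, apply Lemma~\ref{lemma:OtimesSimple} to the tensor square of each simple summand, and eliminate $\mS(\Real^n)$, $\mS(\Hermite^n)$, $\Spin_s$, and $\mS(\Oct^3)$ by comparing the rank--dimension pair $(r^2,d^2)$ against Table~\ref{table:EJA}. Your explicit arithmetic in the case sweep is slightly more detailed than the paper's, but the argument is the same.
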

\begin{proof}
 We will use the Jordan-von Neumann-Wigner theorem.
 $\St_A$ has the simple decomposition of the form $\St_A = \bigoplus_{i=1}^{k} \EJA_{(i)}^+$.
 We only need to prove $\EJA_{(i)} \cong \mS(\Complex^{n_i})$ for
 each $i \in \{1,\ldots,k\}$, where $n_i \coloneqq \rank~\EJA_{(i)}$.

 Consider arbitrary fixed $i$.
 Let $n \coloneqq n_i$.
 In the case of $n = 1$, $\EJA_{(i)} \cong \mS(\Complex)$ obviously holds from
 $\mS(\Real) \cong \mS(\Complex) \cong \mS(\Hermite)$,
 so assume $n \ge 2$.
 $\EJA_{(i,i)} \coloneqq \EJA_{(i)} \otimes \EJA_{(i)}$ is
 simple from Lemma~\ref{lemma:OtimesSimple}.
 Let $r \coloneqq \rank~\EJA_{(i,i)} = n^2$ and
 $d \coloneqq \dim~\EJA_{(i,i)} = (\dim~\EJA_{(i)})^2$.
 We prove, using Table~\ref{table:EJA}, that
 $\mS(\Oct^3)$, $\mS(\Real^n)$, $\mS(\Hermite^n)$, and $\Spin_s$ with $s \ge 5$
 are ruled out.
 Firstly, considering the case $\EJA_{(i)} \cong \mS(\Oct^3)$,
 one can see from $n = 3$ that the simple EJA $\EJA_{(i,i)}$ must have $r = 9$ and $d = 27^2$,
 which contradicts Table~\ref{table:EJA}.
 Secondly, we consider the case $\EJA_{(i)} \cong \mS(\Real^n)$,
 which leads to $r = n^2$ and $d = n^2(n+1)^2/4$,
 and easily see that this case is ruled out.
 In the same way, considering the case $\EJA_{(i)} \cong \mS(\Hermite^n)$,
 which leads to $r = n^2$ and $d = n^2(2n-1)^2$,
 one can see that this case is also ruled out.
 Finally, we consider the case $\EJA_{(i)} \cong \Spin_s$ with $s \ge 5$,
 which gives, from $n = 2$, $r = 4$ and $d = s^2$.
 Since $\EJA_{(i,i)}$ is not isomorphic to $\mS(\Hermite^4)$,
 this case is ruled out.
 Thus, we conclude that $\EJA_{(i)} \cong \mS(\Complex^n)$ must hold.
\end{proof}

Combining Lemma~\ref{lemma:OtimesSimple} and Theorem~\ref{thm:Complex}
gives the following theorem.
\begin{thm} \label{thm:ComplexOtimes} 
 For any systems $A$ and $B$
 with $\St_A \cong \bigoplus_{i=1}^{k_A} \mS_+(\Complex^{m_i})$ and
 $\St_B \cong \bigoplus_{j=1}^{k_B} \mS_+(\Complex^{n_j})$,
 we have $\St_{A \otimes B} \cong \bigoplus_{i=1}^{k_A} \bigoplus_{j=1}^{k_B}
 \mS_+(\Complex^{m_in_j})$.
\end{thm}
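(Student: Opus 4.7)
The plan is to combine Lemma~\ref{lemma:OtimesSimple} with the rank/dimension arithmetic underlying the proof of Theorem~\ref{thm:Complex}. Since the simple decomposition of any EJA is unique, the hypothesis $\St_A \cong \bigoplus_{i=1}^{k_A} \mS_+(\Complex^{m_i})$ identifies the simple components of $\Vec_A$ as $\EJA_{A,i} \cong \mS(\Complex^{m_i})$; likewise $\EJA_{B,j} \cong \mS(\Complex^{n_j})$. Lemma~\ref{lemma:OtimesSimple} then produces the simple decomposition $\St_{A \otimes B} = \bigoplus_{i=1}^{k_A} \bigoplus_{j=1}^{k_B} (\EJA_{A,i} \otimes \EJA_{B,j})^+$, so it remains only to identify each simple summand $\EJA_{A,i} \otimes \EJA_{B,j}$.

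For this identification, I would use Eq.~\eqref{eq:derive_rankEJAij} together with the dimension formula stated immediately below Lemma~\ref{lemma:OplusOtimes}: they give $\rank(\EJA_{A,i} \otimes \EJA_{B,j}) = m_i n_j$ and $\dim(\EJA_{A,i} \otimes \EJA_{B,j}) = m_i^2 n_j^2 = (m_i n_j)^2$. A glance at Table~\ref{table:EJA} then shows that $\mS(\Complex^{m_i n_j})$ is the unique simple EJA with this rank/dimension profile: for rank $r \geq 2$, the dimensions of $\mS(\Real^r)$ and $\mS(\Hermite^r)$ are $r(r+1)/2$ and $r(2r-1)$, both distinct from $r^2$; spin factors $\Spin_s$ with $s \geq 5$ have rank $2$ and dimension $\geq 5 \neq 4$; and $\mS(\Oct^3)$ has rank $3$ and dimension $27 \neq 9$. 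Hence $\EJA_{A,i} \otimes \EJA_{B,j} \cong \mS(\Complex^{m_i n_j})$, and taking cones of squares yields the claim.

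There is essentially no serious obstacle, since the two main ingredients---simplicity of each tensor summand and the multiplicativity of rank and dimension---are already established. The one point requiring care, already highlighted in the footnote attached to Theorem~\ref{thm:Complex}, is that one must rule out other simple EJAs from the Jordan-von Neumann-Wigner list sharing the rank/dimension profile $(m_i n_j,(m_i n_j)^2)$ of $\mS(\Complex^{m_i n_j})$; the explicit entries of Table~\ref{table:EJA} make this check routine.
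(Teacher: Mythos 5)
Your proposal is correct, and its skeleton coincides with the paper's: both first invoke Lemma~\ref{lemma:OtimesSimple} to obtain the simple decomposition $\St_{A \otimes B} = \bigoplus_{i,j} \EJA_{A,i}^+ \otimes \EJA_{B,j}^+$, and then identify each simple summand. Where you diverge is in that identification step. The paper simply applies Theorem~\ref{thm:Complex} to the composite system $A \otimes B$ itself: since $A \otimes B$ is a system of the OPT, its state space is already known to be a direct sum of cones $\mS_+(\Complex^{r})$, so by uniqueness of the simple decomposition each summand is $\mS_+(\Complex^{r})$ for some $r$, and the rank formula $\rank(\EJA_{A,i} \otimes \EJA_{B,j}) = m_i n_j$ pins down $r = m_i n_j$. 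You instead classify each summand from scratch via its rank/dimension profile $(m_i n_j, (m_i n_j)^2)$, using the dimension multiplicativity stated after Lemma~\ref{lemma:OplusOtimes} and Table~\ref{table:EJA}; your check that a simple EJA of rank $r$ and dimension $r^2$ must be $\mS(\Complex^{r})$ is correct (the cases $\mS(\Real^r)$, $\mS(\Hermite^r)$, $\Spin_s$ with $s \ge 5$, and $\mS(\Oct^3)$ are all excluded exactly as you say). Your route is slightly longer but more self-contained, essentially re-running the classification argument of Theorem~\ref{thm:Complex} locally on each tensor summand; the paper's route is shorter because it reuses that theorem wholesale. Both are valid proofs.
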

\begin{proof}
 $\St_A$ and $\St_B$ have the simple decomposition of the form
 $\St_A = \bigoplus_{i=1}^{k_A} \EJA_{A,i}^+$ with $\EJA_{A,i}^+ \cong \mS_+(\Complex^{m_i})$
 and $\St_B = \bigoplus_{j=1}^{k_B} \EJA_{B,j}^+$ with $\EJA_{B,j}^+ \cong \mS_+(\Complex^{n_j})$.
 From Lemma~\ref{lemma:OtimesSimple},
 $\St_{A \otimes B}$ has the simple decomposition
 $\St_{A \otimes B} = \bigoplus_{i=1}^{k_A} \bigoplus_{j=1}^{k_B}
 \EJA_{A,i}^+ \otimes \EJA_{B,j}^+$.
 Since $\rank(\EJA_{A,i} \otimes \EJA_{B,j}) = (\rank~\EJA_{A,i}) \cdot (\rank~\EJA_{B,j}) = m_in_j$
 holds, applying Theorem~\ref{thm:Complex}, we have
 $\EJA_{A,i}^+ \otimes \EJA_{B,j}^+ \cong \mS_+(\Complex^{m_in_j})$.
\end{proof}

As will be shown here, Theorems~\ref{thm:Complex} and \ref{thm:ComplexOtimes} allow us to
obtain a simple expression for a state (resp. extended state)
in terms of a positive semidefinite matrix (resp. Hermitian matrix).
For each system $A$, $\St_A$ has the simple decomposition of the form
$\St_A = \bigoplus_{i=1}^{k_A} \EJA_{A,i}^+$,
where $\EJA_{A,i}^+ \cong \mS_+(\Complex^{n_i})$ holds for some natural number $n_i$.
Let us choose a set of $k_A$ mutually orthogonal projection matrices of order $\NA$,
denoted by $\mP_A \coloneqq \{ P_{A,i} \}_{i=1}^{k_A}$,
with $\rank~P_{A,i} = n_i$.
$\sum_{i=1}^{k_A} P_{A,i} = \ident_\NA$ obviously holds.
$\StMat_A$ is defined as
\begin{alignat}{1}
 \StMat_A &\coloneqq \left\{ \sum_{i=1}^{k_A} H_i : H_i \in \StMat_{A,i} \right\}, \nonumber \\
 \StMat_{A,i} &\coloneqq \left\{ H \in \mS(\Complex^\NA) : P_{A,i} H P_{A,i} = H \right\}.
 \label{eq:derive_StMat}
\end{alignat}
$\StMat_A$ is an EJA with the Jordan product
$H \bullet H' = (H H' + H' H) / 2$ $~(H,H' \in \StMat_A)$.
Clearly, $\ident_\NA$ is the identity element of $\StMat_A$.
$\tr~H = \Tr~H$ also holds for any $H \in \StMat_A$.
It follows that the cone of squares, $\StMat_A^+$, of $\StMat_A$ is
the set of all positive semidefinite matrices in $\StMat_A$.
Each $\StMat_{A,i}$ is the subalgebra of $\StMat_A$ with
the identity element $P_{A,i}$.
One can easily check $\StMat_{A,i} \cong \mS(\Complex^{n_i}) \cong \EJA_{A,i}$,
$\StMat_A = \bigoplus_{i=1}^{k_A} \StMat_{A,i} \cong \Vec_A$,
and $\StMat_A^+ \cong \St_A$.
Using an isomorphism $\Mat^A$ from $\Vec_A$ to $\StMat_A$,
we can fully and faithfully represent any $\cket{\ol{x}} \in \Vec_A$ as
the corresponding Hermitian matrix $\Mat^A_\cket{\ol{x}} \in \StMat_A$.
In particular, $\Mat^A_\cket{\rho} \in \StMat_A^+$ holds
if and only if $\cket{\rho} \in \St_A$ holds.
$\Mat^A_\cket{\chi_A} = \ident_\NA$ and
$\craket{\gdis|\ol{x}} = \tr~\cket{\ol{x}} = \Tr~\Mat^A_\cket{\ol{x}}$
$~(\cket{\ol{x}} \in \Vec_A)$ obviously hold.
We often denote $\Mat^A$ briefly by $\Mat$.

Now, let us consider composite systems.
For any systems $A$ and $B$ with
$\mP_A \coloneqq \{ P_{A,i} \}_{i=1}^{k_A}$ and
$\mP_B \coloneqq \{ P_{B,j} \}_{j=1}^{k_B}$,
we can choose $\mP_{A \otimes B}$ as
$\mP_{A \otimes B} \coloneqq \{ P_{A \otimes B,(i-1)k_B+j} \coloneqq P_{A,i} \otimes P_{B,j}
\}_{(i,j)=(1,1)}^{(k_A,k_B)}$,
where the Kronecker product of two matrices $X_1$ and $X_2$ is denoted by
$X_1 \otimes X_2$.
Indeed, $\mP_{A \otimes B}$ is a set of $k_A k_B$ mutually orthogonal projection matrices
of order $\NA\NB$.
Let $\{ \cket{\ol{v_i}} \}_{i=1}^{D_A}$ and $\{ \cket{\ol{w_j}} \}_{j=1}^{D_B}$ be,
respectively, ONBs of $\Vec_A$ and $\Vec_B$.
Since $\{ \cket{\ol{v_i}} \otimes \cket{\ol{w_j}} \}_{(i,j)=(1,1)}^{(D_A,D_B)}$ is
an ONB of $\Vec_{A \otimes B}$,
$\Vec_{A \otimes B}$ is a tensor product space of real Hilbert spaces $\Vec_A$ and $\Vec_B$.
In contrast, $\StMat_{A \otimes B}$ is also a tensor product space of
real Hilbert spaces $\StMat_A$ and $\StMat_B$.
Thus, for given two isomorphisms $\Mat^A:\Vec_A \to \StMat_A$ and
$\Mat^B:\Vec_B \to \StMat_B$,
we can consider the isomorphism (as real Hilbert spaces)
$\Mat^{A \otimes B}:\Vec_{A \otimes B} \to \StMat_{A \otimes B}$ such that
\begin{alignat}{2}
 \Mat^{A \otimes B}_{\cket{\ol{x}} \otimes \cket{\ol{y}}}
 &= \Mat^A_\cket{\ol{x}} \otimes \Mat^B_\cket{\ol{y}}, &\quad
 \forall \cket{\ol{x}} \in \Vec_A, \cket{\ol{y}} \in \Vec_B.
 \label{eq:derive_StateTensor}
\end{alignat}
Note that $\Mat^{A \otimes B}$ is uniquely determined by Eq.~\eqref{eq:derive_StateTensor};
indeed, any $\cket{\ol{z}} \in \Vec_{A \otimes B}$ can be expressed in the form
$\cket{\ol{z}} = \sum_{i=1}^l \cket{\ol{x_i}} \otimes \cket{\ol{y_i}}$
with $\cket{\ol{x_i}} \in \Vec_A$ and $\cket{\ol{y_i}} \in \Vec_B$,
and thus $\Mat^{A \otimes B}_\cket{\ol{z}} =
\sum_{i=1}^l \Mat^A_\cket{\ol{x_i}} \otimes \Mat^B_\cket{\ol{y_i}}$ holds.
Without loss of generality, we can think of $\Mat^{A \otimes B}$ as
an isomorphism as EJAs.
We will choose such an isomorphism $\Mat^{A \otimes B}$ for any systems $A$ and $B$.
It is easily seen that, for any systems $A$, $B$, and $C$,
\begin{alignat}{1}
 \Mat^A_\cket{\ol{x}} \otimes \Mat^{B \otimes C}_{\cket{\ol{y}} \otimes \cket{\ol{z}}}
 &= \Mat^{A \otimes B \otimes C}_{\cket{\ol{x}} \otimes \cket{\ol{y}} \otimes \cket{\ol{z}}}
 = \Mat^{A \otimes B}_{\cket{\ol{x}} \otimes \cket{\ol{y}}} \otimes \Mat^C_\cket{\ol{z}},
 \nonumber \\
 &\quad \forall \cket{\ol{x}} \in \Vec_A, \cket{\ol{y}} \in \Vec_B, \cket{\ol{z}} \in \Vec_C
 \label{eq:derive_StateTensor2}
\end{alignat}
holds.

\subsection{Correspondence between processes and CP maps} \label{subsec:derive_Quantum_process}

We here derive Property~(\hyperlink{property:C}{C}).
Let us begin with some preliminaries.
For any system $A$, the simple decomposition of $\St_A$ can be expressed by
$\St_A = \bigoplus_{l=1}^k \EJA_{(l)}^+$
with $\EJA_{(l)} \cong \mS(\Complex^{n_l})$.
For each $l$, let $\Mat^{(l)}$ be an isomorphism from $\EJA_{(l)}$ to $\mS(\Complex^{n_l})$
and $\{ \cket{\ol{w_{l;s,t}}} \}_{(s,t)=(1,1)}^{(n_l,n_l)}$ be
the ONB of $\EJA_{(l)}$ satisfying
\begin{alignat}{1}
 \Mat^{(l)}_\cket{\ol{w_{l;s,t}}} &\coloneqq
 \begin{dcases}
  \ket{s}\bra{s}, & s = t, \\
  \frac{1}{\sqrt{2}}(\ket{s}\bra{t} + \ket{t}\bra{s}), & s < t, \\
  \frac{\i}{\sqrt{2}}(\ket{s}\bra{t} - \ket{t}\bra{s}), & s > t, \\
 \end{dcases}
 \label{eq:derive_Mat_w}
\end{alignat}
where $\{ \ket{s} \}_{s=1}^{n_l}$ is an ONB of the complex Hilbert space $\Complex^{n_l}$.
Consider the following extended state of system $A \otimes A$:
\begin{alignat}{2}
 \cket{\cup_A} &\coloneqq \sum_{l=1}^k \cket{\cup^{(l)}_A}, \nonumber \\
 \cket{\cup^{(l)}_A} &\coloneqq
 \sum_{s=1}^{n_l} \sum_{t=1}^{n_l} \gamma^{(l)}_{s,t} \cket{\ol{w_{l;s,t}}} \otimes \cket{\ol{w_{l;s,t}}}
 \in \EJA_{(l)} \otimes \EJA_{(l)},
 \label{eq:derive_quant_Bell}
\end{alignat}
where
\begin{alignat}{1}
 \gamma^{(l)}_{s,t} &\coloneqq
 \begin{dcases}
  1, & s \le t, \\
  -1, & s > t. \\
 \end{dcases}
 \label{eq:derive_gamma_lst}
\end{alignat}
Let $\Mat^{(l,l)}:\EJA_{(l)} \otimes \EJA_{(l)} \to \mS(\Complex^{n_l^2})$
be the isomorphism that satisfies $\Mat^{(l,l)}_{\cket{\ol{x}} \otimes \cket{\ol{y}}}
= \Mat^{(l)}_\cket{\ol{x}} \otimes \Mat^{(l)}_\cket{\ol{y}}$
for any $\cket{\ol{x}},\cket{\ol{y}} \in \EJA_{(l)}$; then,
from Eqs.~\eqref{eq:derive_Mat_w} and \eqref{eq:derive_quant_Bell}, we have
\begin{alignat}{2}
 \Mat^{(l,l)}_\cket{\cup^{(l)}_A}
 &= \sum_{s=1}^{n_l} \sum_{t=1}^{n_l} (\ket{s} \otimes \ket{s})(\bra{t} \otimes \bra{t})
 &&= \ket{\Gamma}\bra{\Gamma}, \nonumber \\
 \ket{\Gamma} &\coloneqq \sum_{i=1}^{n_l} \ket{i} \otimes \ket{i}.
\end{alignat}
It follows that $\Mat^{(l,l)}_\cket{\cup^{(l)}_A}$ is positive semidefinite,
and thus $\cket{\cup_A} \in \St_{A \otimes A}$ holds.
$\cket{\cup_A}$ can be expressed by
\begin{alignat}{2}
 \cket{\cup_A} &= \sum_{i=1}^{D_A} \gamma_i \cket{\ol{w_i}} \otimes \cket{\ol{w_i}},
 \label{eq:derive_quant_Bell_i}
\end{alignat}
where
\begin{alignat}{1}
 \mu(l,s,t) &\coloneqq \sum_{j=1}^{l-1} n_j^2 + (s - 1) n_l + t \in \{1,\ldots,D_A\},
 \nonumber \\
 \gamma_{\mu(l,s,t)} &\coloneqq \gamma^{(l)}_{s,t}, \nonumber \\
 \cket{\ol{w_{\mu(l,s,t)}}} &\coloneqq \cket{\ol{w_{l;s,t}}}.
 \label{eq:derive_mulst}
\end{alignat}
It follows that $\{ \cket{\ol{w_i}} \}_{i=1}^{D_A}$ is an ONB of $\Vec_A$.
$\gamma_i \in \{ 1, -1 \}$ holds from Eq.~\eqref{eq:derive_gamma_lst}.
It is noteworthy that $\cket{\cup_A}$ is somewhat similar to
$\cket{\ol{\eta}_A}$ of Eq.~\eqref{eq:derive_quant_etaepsilon};
however, $\cket{\cup_A}$ is a state, while $\cket{\ol{\eta}_A}$ is not in general a state.
Let $\cra{\cap_A} \coloneqq \cket{\cup_A}^\dagger$.

\begin{ex}[quantum theory]
 Consider a system $A$ with $\NA = 2$.
 If $A$ is classical, i.e., $\St_A \cong \mS_+(\Complex) \oplus \mS_+(\Complex)$,
 then $\cket{\cup_A}$ is expressed by
 \begin{alignat}{1}
  \Mat^{A \otimes A}_\cket{\cup_A}
  &=
  \begin{bmatrix}
   1 & 0 & 0 & 0 \\
   0 & 0 & 0 & 0 \\
   0 & 0 & 0 & 0 \\
   0 & 0 & 0 & 1 \\
  \end{bmatrix}.
 \end{alignat}
 In this case, $\cket{\ol{\eta}_A}$ is a state;
 indeed, $\cket{\ol{\eta}_A} = \cket{\cup_A}$ holds.
 If $A$ is fully quantum, i.e., $\St_A \cong \mS_+(\Complex^2)$,
 then $\cket{\cup_A}$ is expressed by
 \begin{alignat}{1}
  \Mat^{A \otimes A}_\cket{\cup_A}
  &=
  \begin{bmatrix}
   1 & 0 & 0 & 1 \\
   0 & 0 & 0 & 0 \\
   0 & 0 & 0 & 0 \\
   1 & 0 & 0 & 1 \\
  \end{bmatrix}.
 \end{alignat}
\end{ex}

\begin{lemma} \label{lemma:ZigZag} 
 \begin{alignat}{1}
  \includegraphics[scale=1.0]{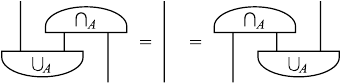} ~\raisebox{.5em}{.}
  \label{eq:derive_cup_cap_id}
 \end{alignat}
\end{lemma}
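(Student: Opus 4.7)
The plan is to use the \poslink{pos:ProcEq}{local equality postulate} to reduce the two zig-zag identities to checking action on arbitrary states (and, by symmetry between the two identities, it suffices to verify one). By Eq.~\eqref{eq:eqlocal_frho}, I only need to show that, for every $\cket{\rho} \in \St_A$, composing the zig-zag with $\cket{\rho}$ returns $\cket{\rho}$; this is the content I would verify by direct diagrammatic/algebraic computation using the basis expansion in Eq.~\eqref{eq:derive_quant_Bell_i}.

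Concretely, first I would write out
\[
 \cket{\cup_A} = \sum_{i=1}^{D_A} \gamma_i \cket{\ol{w_i}} \otimes \cket{\ol{w_i}}, \qquad
 \cra{\cap_A} = \sum_{j=1}^{D_A} \gamma_j \cra{\ol{w_j}^\dagger} \otimes \cra{\ol{w_j}^\dagger},
\]
using the definition $\cra{\cap_A} \coloneqq \cket{\cup_A}^\dagger$ together with the linearity of $\dagger$ and Lemma~\ref{lemma:DaggerOtimes}. Applying the zig-zag $[\cra{\cap_A} \otimes \id_A] \circ [\id_A \otimes \cket{\cup_A}]$ to an arbitrary $\cket{\rho} \in \St_A$ then yields
\[
 \sum_{i,j} \gamma_i \gamma_j \, \craket{\ol{w_j}^\dagger|\rho} \, \craket{\ol{w_j}^\dagger|\ol{w_i}} \, \cket{\ol{w_i}}.
\]
Here I would invoke that $\{\cket{\ol{w_i}}\}_{i=1}^{D_A}$ is an ONB of $\Vec_A$ with respect to the inner product $\braket{\ol{x},\ol{y}} = \craket{\ol{x}^\dagger|\ol{y}}$, so $\craket{\ol{w_j}^\dagger|\ol{w_i}} = \delta_{i,j}$, and $\gamma_i^2 = 1$ by Eq.~\eqref{eq:derive_gamma_lst}. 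The double sum collapses to $\sum_i \braket{\ol{w_i},\rho}\cket{\ol{w_i}}$, which is precisely the expansion of $\cket{\rho}$ in the ONB $\{\cket{\ol{w_i}}\}$, hence equal to $\cket{\rho}$. The other zig-zag follows by the same computation with the roles of the two tensor factors interchanged.

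The one subtlety worth flagging is that the basis $\{\cket{\ol{w_{l;s,t}}}\}$ is constructed inside each simple summand $\EJA_{(l)}$ of the simple decomposition $\St_A = \bigoplus_{l=1}^k \EJA_{(l)}^+$ and is defined via the specific matrix model in Eq.~\eqref{eq:derive_Mat_w}; one must check that this yields an orthonormal basis of $\Vec_A$ under $\braket{\cdot,\cdot}$, which is where the signs $\gamma^{(l)}_{s,t}$ enter (they compensate for the $\mathbf{i}$ in the antisymmetric basis elements, ensuring $\cket{\cup_A}$ is actually a state and that $\gamma_i^2=1$ makes the cross terms cancel correctly). Once this orthonormality is in hand, the computation above is immediate. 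I expect no serious obstacle beyond carefully bookkeeping the index $\mu(l,s,t)$ of Eq.~\eqref{eq:derive_mulst} so that the sum $\sum_i \gamma_i^2 \cket{\ol{w_i}}\braket{\ol{w_i},\rho}$ really does reconstitute $\cket{\rho}$.
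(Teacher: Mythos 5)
Your proposal is correct and follows essentially the same route as the paper's proof: reduce to checking the action on states via the local equality postulate and Eq.~\eqref{eq:eqlocal_frho}, expand in the ONB $\{\cket{\ol{w_i}}\}_{i=1}^{D_A}$ of Eq.~\eqref{eq:derive_mulst}, and collapse the sum using $\craket{\ol{w_j}^\dagger|\ol{w_i}} = \delta_{i,j}$ and $\gamma_i^2 = 1$. The only organizational difference is that the paper expands $\cket{\rho}$ in the basis before inserting the zig-zag, whereas you insert first and then contract; the content is identical, and your use of Lemma~\ref{lemma:DaggerOtimes} to expand $\cra{\cap_A}$ matches the paper's implicit reliance on the same fact.
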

\begin{proof}
 Since $\{ \cket{\ol{w_i}} \}_{i=1}^{D_A}$ of Eq.~\eqref{eq:derive_mulst} is an ONB of $\Vec_A$,
 any $\cket{\rho} \in \St_A$ can be expressed in the form
 $\cket{\rho} = \sum_{i=1}^{D_A} \rho_i \cket{\ol{w_i}}$ with $\rho_i \in \Real$.
 Thus, from Eq.~\eqref{eq:derive_quant_Bell_i}, we have
 \begin{alignat}{1}
  \includegraphics[scale=1.0]{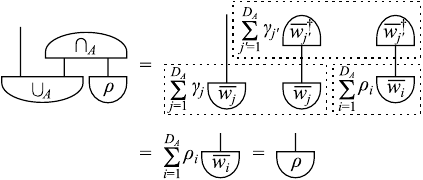} ~\raisebox{.5em}{,}
  \label{eq:derive_cup_cap_id_proof}
 \end{alignat}
 where the second equality follows from
 $\craket{\ol{w_i}^\dagger|\ol{w_{i'}}} = \delta_{i,i'}$ and $\gamma_i^2 = 1$.
 This proves, from the \poslink{pos:ProcEq}{local equality postulate},
 the first equality of Eq.~\eqref{eq:derive_cup_cap_id}.
 The same is true for the second equality of Eq.~\eqref{eq:derive_cup_cap_id}.
\end{proof}

For any systems $A$ and $B$,
let $\mathbf{CP}_{A \to B}$ be the set of all CP maps from $\StMat_A$ to $\StMat_B$
and $\StMat_{A \to B}$ be the set of all linear maps from $\StMat_A$ to $\StMat_B$.
It is easily seen that $\mathbf{CP}_{A \to B}$ is a convex cone in $\StMat_{A \to B}$.
\begin{thm} \label{thm:Proc} 
 For any systems $A$ and $B$, $\Proc_{A \to B}$ and $\mathbf{CP}_{A \to B}$ are
 isomorphic as convex cones.
 Also, there exists an isomorphism
 $\mL:\Proc_{A \to B} \ni f \mapsto \mL_f \in \mathbf{CP}_{A \to B}$
 such that
 \begin{alignat}{1}
  \Mat^B_{f \circ \cket{\rho}} = \mL_f[\Mat^A_\cket{\rho}], &\quad
  \forall f \in \Proc_{A \to B}, \cket{\rho} \in \St_A.
  \label{eq:derive_Proc}
 \end{alignat}
\end{thm}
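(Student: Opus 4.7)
The plan is to construct $\mL$ via the Choi-Jamio{\l}kowski correspondence encoded by $\cket{\cup_A}$ and $\cra{\cap_A}$, then transport everything through the EJA-isomorphism $\Mat$. First, I would define $\mL_f$ as the linear map on $\StMat_A$ satisfying Eq.~\eqref{eq:derive_Proc}: given $f \in \Proc_{A \to B}$, the assignment $\cket{\rho} \mapsto f \circ \cket{\rho}$ extends by linearity to a map $\Vec_A \to \Vec_B$, and pre/post-composition with $(\Mat^A)^{-1}$ and $\Mat^B$ yields the desired $\mL_f : \StMat_A \to \StMat_B$. The map $f \mapsto \mL_f$ is clearly linear and hence a convex-cone homomorphism from $\Proc_{A \to B}$ into the linear maps $\StMat_A \to \StMat_B$.

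Next I would verify that $\mL_f \in \mathbf{CP}_{A \to B}$. For any $k \ge 1$, choose a fully quantum auxiliary system $C$ with $N_C = k$, so that $\StMat_C = \mS(\Complex^k)$ and, by Eq.~\eqref{eq:derive_StateTensor}, $\StMat_{A \otimes C} = \StMat_A \otimes \StMat_C$. Since $f \otimes \id_C \in \Proc_{A \otimes C \to B \otimes C}$ maps $\St_{A \otimes C}$ into $\St_{B \otimes C}$, the corresponding matrix map $\mL_f \otimes \id_{\mS(\Complex^k)}$ sends positive semidefinite matrices to positive semidefinite matrices; varying $k$ proves complete positivity. Injectivity of $\mL$ is immediate from the \poslink{pos:ProcEq}{local equality postulate}: if $\mL_f = \mL_g$, then $f \circ \cket{\rho} = g \circ \cket{\rho}$ for every $\cket{\rho}$, giving $f \eqlocal g$ and hence $f = g$.

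For surjectivity, given $\Lambda \in \mathbf{CP}_{A \to B}$, form its Choi matrix $C_\Lambda \coloneqq (\id_{\StMat_A} \otimes \Lambda)[\Mat^{A \otimes A}_{\cket{\cup_A}}]$. Using the explicit block form of $\cket{\cup_A} = \sum_l \cket{\cup^{(l)}_A}$ from Eq.~\eqref{eq:derive_quant_Bell}, $\Mat^{A \otimes A}_{\cket{\cup_A}}$ is block-diagonal in the $A$-index with $l$-th diagonal block equal to $\ket{\Gamma_l}\bra{\Gamma_l}$ acting within the simple summand $\mS(\Complex^{n_l}) \otimes \mS(\Complex^{n_l})$; applying $\id \otimes \Lambda$ preserves this block structure, and each diagonal block $(\id \otimes \Lambda|_{\mS(\Complex^{n_l})})[\ket{\Gamma_l}\bra{\Gamma_l}]$ is positive semidefinite by the standard Choi theorem applied to the restriction of $\Lambda$ to that summand. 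Thus $C_\Lambda \in \StMat_{A \otimes B}^+$, and it corresponds to a genuine state $\cket{\tau_\Lambda} \in \St_{A \otimes B}$. Define
\begin{alignat}{1}
 f_\Lambda \coloneqq (\cra{\cap_A} \otimes \id_B) \circ (\id_A \otimes \cket{\tau_\Lambda}) \in \Proc_{A \to B},
\end{alignat}
pairing the input $A$ with the first output $A$ of $\cket{\tau_\Lambda}$. Using the zig-zag identity of Lemma~\ref{lemma:ZigZag}, Lemma~\ref{lemma:DaggerOtimes}, and the multiplicativity $\Mat^{A \otimes B} = \Mat^A \otimes \Mat^B$, a direct matrix computation shows that $\mL_{f_\Lambda}[\Mat^A_\cket{\rho}] = \Mat^B_{f_\Lambda \circ \cket{\rho}}$ reduces to $\Lambda[\Mat^A_\cket{\rho}]$ for every $\cket{\rho}$, so $\mL_{f_\Lambda} = \Lambda$.

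The main obstacle is the verification that $\mL_{f_\Lambda} = \Lambda$: unlike the standard fully quantum setting, $\cket{\cup_A}$ is not a single maximally entangled state but a sum of block-diagonal "Bell states," one per simple summand of $\St_A$, and $\cra{\cap_A}$ is its sharp dual. Care is needed to check that the $\cra{\cap_A}$-pairing correctly reproduces $\Lambda$'s action on every simple block, including that cross-terms between different blocks vanish. This boils down to the block-diagonal structure of $\Mat^{A \otimes A}_{\cket{\cup_A}}$ combined with the fact that CP maps between direct sums of matrix algebras decompose blockwise, and it is precisely the zig-zag identity (and ultimately the local equality postulate) that packages this into a single diagrammatic equation.
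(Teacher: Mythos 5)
Your construction is essentially the one in the paper: define $\mL_f$ by conjugating the linear extension of $f$ with $\Mat^A$ and $\Mat^B$, obtain injectivity from the \poslink{pos:ProcEq}{local equality postulate}, and obtain surjectivity by pulling a CP map $\Lambda$ back through the Choi-type state $(\id_A \otimes \mL^{-1}_\Lambda) \circ \cket{\cup_A}$, sandwiching it with $\cra{\cap_A}$, and invoking the zig-zag identity of Lemma~\ref{lemma:ZigZag}. The surjectivity half, including the blockwise positivity of $\Mat_{\cket{\cup_A}}$ and the observation that a CP map between direct sums of matrix algebras acts blockwise, matches the paper's argument.

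There is, however, one step that does not go through as written: to show $\mL_f \in \mathbf{CP}_{A \to B}$ you ``choose a fully quantum auxiliary system $C$ with $N_C = k$'' for every $k \ge 1$. Nothing in the four postulates guarantees that such systems exist in $\System$; classical probability theory satisfies all four postulates and contains no fully quantum system of rank $k \ge 2$, so the required ancilla is simply unavailable there and the quantifier ``for any $k$'' cannot be discharged. The repair is to use the composite $A \otimes A$ itself: since $f \otimes \id_A$ maps $\St_{A \otimes A}$ into $\St_{B \otimes A}$, the map $\mL_f \otimes \mL_{\id_A}$ is positive on $\StMat_{A \otimes A}^+$; applying it to $\Mat_{\cket{\cup_A}}$, whose diagonal blocks are the unnormalized maximally entangled projectors $\ket{\Gamma}\bra{\Gamma}$ of the simple summands, shows that every component Choi matrix of $\mL_f$ between simple summands is positive semidefinite, and the blockwise Choi theorem then yields complete positivity. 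This is exactly the machinery you already deploy in the surjectivity direction, so the gap is local and easily closed; with that replacement your proof coincides with the paper's.
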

\begin{proof}
 Let $\mL^{A \to B}:\Vec_{A \to B} \ni \ol{f} \mapsto \mL_{\ol{f}}
 \coloneqq \Mat^B \circ \ol{f} \circ \tilde{\Mat}^A \in \StMat_{A \to B}$,
 where $\tilde{\Mat}^A:\StMat_A \to \Vec_A$ is the inverse of
 the isomorphism $\Mat^A:\Vec_A \to \StMat_A$.
 We simply denote $\mL^{A \to B}$ by $\mL$.
 One can easily see that $\mL$ is linear and satisfies
 \begin{alignat}{2}
  \mL_{\ol{f}}[\Mat^A_\cket{\ol{x}}] &= \Mat^B_{\ol{f} \circ \cket{\ol{x}}}, &\quad
  \forall \ol{f} \in \Vec_{A \to B}, \cket{\ol{x}} &\in \Vec_A.
  \label{eq:derive_mL}
 \end{alignat}
 It follows that $\Vec_{A \to B}$ and $\StMat_{A \to B}$ are isomorphic as vector spaces
 and that $\mL:\Vec_{A \to B} \to \StMat_{A \to B}$ is an isomorphism,
 whose inverse is $\mL^{-1}:\StMat_{A \to B} \ni h \mapsto \mL^{-1}_h \coloneqq
 \tilde{\Mat}^B \circ h \circ \Mat^A \in \Vec_{A \to B}$.
 Indeed, we have that, for any $\ol{f} \in \Vec_{A \to B}$ and
 $h \in \StMat_{A \to B}$,
 \begin{alignat}{2}
  (\mL^{-1} \circ \mL)(\ol{f})
  &= \tilde{\Mat}^B \circ (\Mat^B \circ \ol{f} \circ \tilde{\Mat}^A) \circ \Mat^A
  &&= \ol{f}, \nonumber \\
  (\mL \circ \mL^{-1})(h)
  &= \Mat^B \circ (\tilde{\Mat}^B \circ h \circ \Mat^A) \circ \tilde{\Mat}^A
  &&= h.
 \end{alignat}
 What is left is to show that the restriction of $\mL$ to $\Proc_{A \to B}$,
 denoted by the same notation $\mL$,
 is an isomorphism from $\Proc_{A \to B}$ to $\mathbf{CP}_{A \to B}$.
 It suffices to show
 $\mL_f \in \mathbf{CP}_{A \to B}$ for any $f \in \Proc_{A \to B}$ and
 $\mL^{-1}_c \in \Proc_{A \to B}$ for any $c \in \mathbf{CP}_{A \to B}$.

 First, we show $\mL_f \in \mathbf{CP}_{A \to B}$ for any $f \in \Proc_{A \to B}$.
 $\mL^{A \to A}_{\id_A} \in \StMat_{A \to A}$ is the identity operator on $\StMat_A$.
 Also, we have that, for any
 $\ol{f} \in \Vec_{A \to B}$, $\ol{g} \in \Vec_{C \to D}$,
 $\cket{\ol{x}} \in \Vec_A$, and $\cket{\ol{y}} \in \Vec_C$,
 \begin{alignat}{2}
  \mL_{\ol{f} \otimes \ol{g}}[\Mat_{\cket{\ol{x}} \otimes \cket{\ol{y}}}]
  &= \Mat_{(\ol{f} \otimes \ol{g}) \circ [\cket{\ol{x}} \otimes \cket{\ol{y}}]}
  \nonumber \\
  &= \Mat_{[\ol{f} \circ \cket{\ol{x}}] \otimes [\ol{g} \circ \cket{\ol{y}}]}
  \nonumber \\
  &= \Mat_{\ol{f} \circ \cket{\ol{x}}} \otimes \Mat_{\ol{g} \circ \cket{\ol{y}}}
  \nonumber \\
  &= \mL_{\ol{f}}[\Mat_\cket{\ol{x}}] \otimes \mL_{\ol{g}}[\Mat_\cket{\ol{y}}]
  \nonumber \\
  &= (\mL_{\ol{f}} \otimes \mL_{\ol{g}})[\Mat_{\cket{\ol{x}} \otimes \cket{\ol{y}}}],
 \end{alignat}
 where the third and last lines follow from Eq.~\eqref{eq:derive_StateTensor}.
 Thus, we have
 \begin{alignat}{2}
  \mL_{\ol{f} \otimes \ol{g}} &= \mL_{\ol{f}} \otimes \mL_{\ol{g}},
  &\quad & \forall \ol{f} \in \Vec_{A \to B}, \ol{g} \in \Vec_{C \to D}.
  \label{eq:derive_mL_otimes}
 \end{alignat}
 Moreover, we have that, for any $f \in \Proc_{A \to B}$ and $E \in \System$,
 \begin{alignat}{2}
  (f \otimes \id_E) \circ \cket{\sigma} &\in \St_{B \otimes E},
  &\quad &\forall \cket{\sigma} \in \St_{A \otimes E},
 \end{alignat}
 which implies that
 $\mL_{f \otimes \id_E} = \mL_f \otimes \mL_{\id_E}$ is positive,
 and thus $\mL_f$ is CP (i.e., $\mL_f \in \mathbf{CP}_{A \to B}$).

 Next, we show $\mL^{-1}_c \in \Proc_{A \to B}$ for any $c \in \mathbf{CP}_{A \to B}$.
 Arbitrarily choose $c \in \mathbf{CP}_{A \to B}$.
 Let $\cket{\sigma_c} \coloneqq (\id_A \otimes \mL^{-1}_c) \circ \cket{\cup_A}
 \in \Vec_{A \otimes B}$, i.e.,
 \begin{alignat}{1}
  \includegraphics[scale=1.0]{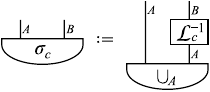} ~\raisebox{.5em}{.}
  \label{eq:derive_proc_sigma}
 \end{alignat}
 From Eqs.~\eqref{eq:derive_mL} and \eqref{eq:derive_mL_otimes},
 we have
 \begin{alignat}{2}
  \Mat_\cket{\sigma_c} &= \Mat_{(\id_A \otimes \mL^{-1}_c) \circ \cket{\cup_A}}
  \nonumber \\
  &= \mL_{\id_A \otimes \mL^{-1}_c}[\Mat_\cket{\cup_A}]
  &&= (\mL_{\id_A} \otimes c)[\Mat_\cket{\cup_A}].
 \end{alignat}
 From $\Mat_\cket{\cup_A} \in \StMat_{A \otimes A}^+$ and $c \in \mathbf{CP}_{A \to B}$,
 we have $\Mat_\cket{\sigma_c} \in \StMat_{A \otimes B}^+$,
 i.e., $\cket{\sigma_c} \in \St_{A \otimes B}$.
 Let
 \begin{alignat}{1}
  \includegraphics[scale=1.0]{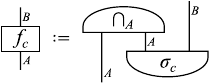} ~\raisebox{.5em}{.}
  \label{eq:derive_proc_fc}
 \end{alignat}
 Then, since $\cket{\sigma_c}$ and $\cra{\cap_A}$
 are, respectively, a state and an effect, $f_c$ is a process.
 Substituting Eq.~\eqref{eq:derive_proc_sigma} into Eq.~\eqref{eq:derive_proc_fc}
 and using Eq.~\eqref{eq:derive_cup_cap_id} yields $\mL^{-1}_c = f_c$.
 Therefore, $\mL^{-1}_c \in \Proc_{A \to B}$ holds.
\end{proof}

\begin{thm} \label{thm:ProcFeasible}
 Let $\mL:\Proc_{A \to B} \ni f \mapsto \mL_f \in \mathbf{CP}_{A \to B}$
 be an isomorphism satisfying Eq.~\eqref{eq:derive_Proc};
 then, $f \in \Proc_{A \to B}$ is deterministic if and only if
 $\mL_f$ is a TP-CP map.
 Also, $f \in \Proc_{A \to B}$ is feasible if and only if
 $\mL_f$ is a trace non-increasing CP map.
\end{thm}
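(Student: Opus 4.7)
The plan is to translate each operational condition through the isomorphism $\mL$ and the matrix representation $\Mat$, relying on two identities set up earlier: $\craket{\gdis_A|\ol{x}} = \Tr\,\Mat^A_\cket{\ol{x}}$ for every $\cket{\ol{x}} \in \Vec_A$, and $\Mat^B_{f \circ \cket{\rho}} = \mL_f[\Mat^A_\cket{\rho}]$ from Eq.~\eqref{eq:derive_Proc}. Both halves of the theorem reduce to manipulating $\Tr \circ \mL_f$ on the cone $\StMat_A^+$.

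For the deterministic half, I would use Eq.~\eqref{eq:process_test_nas} to characterize $f \in \ProcD_{A \to B}$ by $\craket{\gdis_B|f \circ \rho} = \craket{\gdis_A|\rho}$ for every $\cket{\rho} \in \St_A$. Pushing this through the two identities above turns it into $\Tr\,\mL_f[H] = \Tr\,H$ for every $H \in \StMat_A^+$; because every element of $\StMat_A$ is a difference of two elements of $\StMat_A^+$ (spectral decomposition inside each block) and both sides are linear, this extends to all of $\StMat_A$, i.e., $\mL_f$ is TP. The converse direction is the same chain read backwards.

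The forward direction of the feasible half is then quick: by Eq.~\eqref{eq:process_test_sum}, $f$ feasible means there exists $g \in \Proc_{A \to B}$ with $f + g \in \ProcD_{A \to B}$; since $\mL$ is a linear isomorphism (from the proof of Theorem~\ref{thm:Proc}), the deterministic half applied to $\mL_{f+g} = \mL_f + \mL_g$ gives $\Tr\,\mL_f[H] + \Tr\,\mL_g[H] = \Tr\,H$ on $\StMat_A^+$, and since $\mL_g \in \mathbf{CP}_{A\to B}$ is in particular positive we get $\Tr\,\mL_f[H] \le \Tr\,H$. The main obstacle is the converse: given a trace non-increasing CP map $\mL_f$, I must exhibit a process $g$ completing $f$ to a test. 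My plan is to fix any $\sigma_0 \in \StMat_B^+$ with $\Tr\,\sigma_0 = 1$ and define $c \colon \StMat_A \to \StMat_B$ by $c[H] \coloneqq (\Tr\,H - \Tr\,\mL_f[H])\,\sigma_0$. Then $\mL_f + c$ is manifestly TP by construction, and $c$ is CP because it factors as the positive linear functional $\phi(H) \coloneqq \Tr\,H - \Tr\,\mL_f[H]$ (CP as a map into the trivial system, since positive functionals into a commutative algebra are automatically CP) followed by the CP state preparation $t \mapsto t\sigma_0$. Setting $g \coloneqq \mL^{-1}_c \in \Proc_{A \to B}$ via Theorem~\ref{thm:Proc} and applying the deterministic half to $\mL_{f+g} = \mL_f + c$ shows that $f + g$ is deterministic, whence $\{f,g\}$ is a test by Eq.~\eqref{eq:process_test_sum} and $f$ is feasible. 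The subtle point I expect to check carefully is that the notion of CP is the right one for the block-diagonal algebra $\StMat_A$ rather than a full matrix algebra, but this is consistent with the tensor-product-compatible construction of $\Mat^{A \otimes B}$ in Subsec.~\ref{subsec:derive_Quantum_hilbert}, and the positive-functional-implies-CP step survives unchanged under direct sums.
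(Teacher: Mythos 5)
Your proof is correct, and the deterministic half together with the forward direction of the feasibility half matches the paper's argument in substance (the paper gets trace non-increase directly from Eq.~\eqref{eq:process_feasible_Pr} rather than routing through the deterministic half, but this is cosmetic). Where you genuinely diverge is in the converse of the feasibility claim. The paper writes $\mL_f(H) = \sum_{i=1}^k E_i H E_i^\dagger$ in Kraus form with $\sum_{i=1}^k E_i^\dagger E_i \le \ident_{\NA}$ and completes it to a TP map by appending further Kraus operators $E_{k+1},\ldots,E_l$ with $\sum_{i=1}^l E_i^\dagger E_i = \ident_{\NA}$; you instead define the complementary map $c[H] = (\Tr H - \Tr\,\mL_f[H])\,\sigma_0$ and argue it is CP because it factors through the positive (hence completely positive) functional $H \mapsto \Tr H - \Tr\,\mL_f[H]$ followed by a preparation. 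Your route is somewhat more elementary: it avoids invoking the existence of a Kraus representation for CP maps between the block-diagonal algebras $\StMat_A$ and $\StMat_B$ (a true but unproved-in-the-paper fact), needing only that positive functionals are completely positive and that $t \mapsto t\sigma_0$ with $\sigma_0 \in \StMat_B^+$ is CP, both of which survive the passage to direct sums without any extra work. The paper's Kraus completion buys a slightly more concrete picture of the complementary process $g$, but both constructions land in $\mathbf{CP}_{A \to B}$ and then conclude identically via Theorem~\ref{thm:Proc} and Eq.~\eqref{eq:process_test_sum}.
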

\begin{proof}
 From Eq.~\eqref{eq:process_test_nas},
 $f \in \ProcD_{A \to B}$ holds if and only if
 $\cra{\gdis_B} \circ f \circ \cket{\rho} = \craket{\gdis_A|\rho}$ holds
 for any $\cket{\rho} \in \St_A$.
 Since $\cra{\gdis_B} \circ f \circ \cket{\rho} = \Tr~\Mat_{f \circ \cket{\rho}}
 = \Tr~\mL_f[\Mat_\cket{\rho}]$ and
 $\craket{\gdis_A|\rho} = \Tr~\Mat_\cket{\rho}$ hold,
 $f \in \ProcD_{A \to B}$ holds if and only if $\mL_f$ is a TP-CP map.

 Assume $f \in \ProcF_{A \to B}$; then,
 from Eq.~\eqref{eq:process_feasible_Pr},
 $\Tr~\mL_f[\Mat_\cket{\rho}] = \cra{\gdis} \circ f \circ \cket{\rho} \le \craket{\gdis|\rho} =
 \Tr~\Mat_\cket{\rho}$ holds for any $\cket{\rho} \in \St_A$,
 which implies that $\mL_f$ is a trace non-increasing CP map.
 Conversely, assume that $\mL_f$ is a trace non-increasing CP map.
 $\mL_f$ can be written in the Kraus representation as
 \begin{alignat}{3}
  \mL_f(H) &= \sum_{i=1}^k E_i H E_i^\dagger,
  &\quad \forall H \in \StMat_A,
  &&\quad \sum_{i=1}^k E_i^\dagger E_i &\le \ident_\NA,
 \end{alignat}
 where $E_1,\ldots,E_k$ are $\NB \times \NA$ matrices.
 Consider $g \in \Proc_{A \to B}$ satisfying
 \begin{alignat}{3}
  \mL_g(H) &= \sum_{i=k+1}^l E_i H E_i^\dagger,
  &\quad \forall H \in \StMat_A,
  &&\quad \sum_{i=1}^l E_i^\dagger E_i &= \ident_\NA,
 \end{alignat}
 where $l \ge k$ holds and $E_{k+1},\ldots,E_l$ are $\NB \times \NA$ matrices;
 then, one can easily see that $\mL_{f + g} = \mL_f + \mL_g$ is a TP-CP map.
 Thus, $f + g \in \ProcD_{A \to B}$
 (i.e., $\{ f, g \} \in \Test_{A \to B}$) holds.
 Therefore, $f$ is feasible.
\end{proof}

\subsection{Classical and fully quantum theories} \label{subsec:derive_Quantum_ClassicalFullQuantum}

Up to now, we showed that an OPT with
the four postulates introduced in this paper is quantum theory,
which includes classical theory and fully quantum theory as special cases.
One can single out classical theory and fully quantum theory by
introducing appropriate additional postulates.
Here, to single out these theories,
we introduce postulates about perfectly distinguishable normalized pure states.
We say that a system $A$ satisfies \termdef{perfect distinguishability}
if any distinct $\cket{\psi}, \cket{\phi} \in \StNP_A$ are
perfectly distinguishable.
Also, we say that a system $A$ satisfies \termdef{indistinguishability}
if, for any $\cket{\psi}, \cket{\phi} \in \StNP_A$,
there exists $\cket{\varphi} \in \StNP_A$
such that neither $\cket{\psi}$ nor $\cket{\phi}$ is perfectly distinguishable from
$\cket{\varphi}$.
It is easily seen that, in quantum theory, a system $A$ is classical
if $A$ satisfies perfect distinguishability.
Thus, the following postulate singles out classical theory.
\begin{postulate}[\pagetarget{pos:Classical}{Perfect distinguishability}][\NumClassical]
 \label{postulate:Classical}
 Any system satisfies perfect distinguishability.
\end{postulate}
Also, in quantum theory, a system $A$ is fully quantum
if $A$ satisfies indistinguishability,
and thus the following postulate singles out fully quantum theory.
\begin{postulate}[\pagetarget{pos:FullQuantum}{Indistinguishability}][\NumFullQuantum]
 \label{postulate:FullQuantum}
 Any system satisfies indistinguishability.
\end{postulate}

\section{Conclusion}

In this study, we imposed four purely operational postulates on the framework of OPTs.
We showed that these postulates are sufficient to single out finite-dimensional quantum theory
with superselection rules.
In an OPT satisfying the first three postulates
--- the \poslink{pos:SymSharp}{symmetric sharpness},
\poslink{pos:CompletelyMixed}{complete mixing}, and \poslink{pos:Filter}{filtering} ---,
we discovered that each state space is a symmetric cone and
the corresponding effect space is its dual cone.
We showed that such an OPT is quantum theory
if it further satisfies the \poslink{pos:ProcEq}{local equality postulate}.
Moreover, classical and fully quantum theories can be, respectively, singled out from
the \poslink{pos:Classical}{perfect distinguishability} and
\poslink{pos:FullQuantum}{indistinguishability} postulates.

\begin{acknowledgments}
 I am grateful to O. Hirota, M. Sohma, T. S. Usuda, and K. Kato for valuable comments.
 I would like to thank Enago (www.enago.jp) for the English language review.
 This work was supported by JSPS KAKENHI Grant Number JP19K03658.
\end{acknowledgments}

\bibliographystyle{apsrev4-1}

\end{document}